\pdfoutput=1
\documentclass[12pt]{caltech_thesis}
\usepackage[T1]{fontenc} 
\usepackage[utf8]{inputenc}
\usepackage{csquotes} 
\usepackage[hyphens]{url}
\usepackage{lipsum}
\usepackage{amsmath,amsthm}
\usepackage{bbm}
\usepackage{graphicx}
\usepackage{hyperref}
\hypersetup{
    breaklinks = true,
    colorlinks = true,
    urlcolor = blue,
    citecolor = magenta,
    linkcolor = purple
}

\usepackage{tabularx}
\usepackage{braket}
\usepackage[disable]{todonotes}
\usepackage{enumitem}

\usepackage{newtxtext,newtxmath}

\usepackage{algorithm}
\usepackage[noend]{algpseudocode}
\theoremstyle{plain}
\newtheorem{theorem}{Theorem}

\newtheorem{lemma}{Lemma}
\newtheorem{corollary}{Corollary}
\newcommand{\tr}[1]{\operatorname{\textnormal{Tr}}\left( {#1} \right)} 

\usepackage{wrapfig2}
\usepackage{subcaption}
\graphicspath{{figures/}{figures-temp/}}

\usepackage[
    backend=biber,
    natbib=true,
    style=ieee,
    citestyle=numeric-comp,
    sortcites=true,
    sorting=noney
]{biblatex}

\usepackage[dvipsnames,svgnames,table]{xcolor}
\usepackage{cleveref}
\crefname{figure}{Fig.}{Figs.}

\newcommand{\psistate}[0]{$\ket{\psi}$}

\newcommand\extrafootertext[1]{%
    \bgroup
    \renewcommand\thefootnote{\fnsymbol{footnote}}%
    \renewcommand\thempfootnote{\fnsymbol{mpfootnote}}%
    \footnotetext[0]{#1}%
    \egroup
}

\begin{document}

\frontmatter

\title{Resource Management in Heterogeneous Quantum\\Repeater Networks}
\author{Naphan Benchasattabuse}

\degreeaward{Doctor of Philosophy}
\university{Keio University}
\address{Fujisawa, Kanagawa}
\unilogo{keio-color.png}
\graduateschool{Graduate School of Media and Governance}
\academicyear{2025}
\copyyear{2025}
\defenddate{25 June 2025}

\orcid{0000-0003-4475-3015}

\rightsstatement{All rights reserved}

\newgeometry{
  top=1in,
  bottom=1in,
  left=1.25in,
  right=1.25in
}
\maketitle
\restoregeometry

\begin{abstract}
  In this thesis, I explore whether it is possible to build a unified Quantum Internet architecture that supports different types of quantum repeaters---especially the two most distinct and seemingly incompatible ones: memory-based quantum repeaters and all-photonic, memoryless repeaters. These technologies have traditionally been developed with the aim of becoming the single dominant solution, but I ask: Can they work together in the same network? What kind of architecture would support both? And how can simulation help us understand what is needed to manage such a network at scale? To address these questions, I propose an architecture based on an existing recursive network design and programmable RuleSet-based protocols that can coordinate diverse hardware components. I introduce a new emitter-photon building block to bridge memory-based and all-photonic segments, and show how classical networking abstractions can be extended to manage quantum operations. While I have developed a simulation tool grounded in these architectural principles and validated it against existing simulators and analytical models, a full-scale investigation of the resource trade-offs and performance implications remains future work. Nevertheless, the results so far suggest that a unified, heterogeneous quantum network is not only possible but increasingly practical with current technologies---though ongoing experimental progress will be essential to fully realize this vision.
  
  \textbf{Keywords: } Quantum Networking, Quantum Repeater, Network Protocols, Quantum Algorithms, Optimization Algorithms
\end{abstract}

\begin{acknowledgements}
I am very fortunate to have Rodney Van Meter and Michal Hajdu\v{s}ek as my advisors.
Rod always reminded me not just to focus on the cool and fun ideas, but to always think of the impact and solve the real problems that would advance the field.
He taught me, and everyone else in our group, to think beyond the short term, to look ahead and identify the challenges that truly matter.
Rod has changed how I view research, how I look at problems, how I think, and how I write.
Michal was always helpful in answering my sometimes dumb, basic questions about quantum information; he helped me refine my thinking, make my ideas concrete, and polish rough concepts from scratch, and was invaluable in preparing my papers and figures.

I would like to thank my thesis committee, Rodney Van Meter, Michal Hajdu\v{s}ek, Takahiko Satoh, Shota Nagayama, Hiroyuki Kusumoto, and Hideyuki Kawashima, for their valuable advice and comments in shaping this thesis.

I would also like to express my gratitude to Andreas B\"{a}rtschi and Stephan Eidenbenz for being good company, wonderful hosts, and great mentors during my internship at Los Alamos National Laboratory.
Andreas and Stephan helped me learn to think with the rigor and formality of a computer scientist, a perspective I greatly needed.
I would also like to thank Luis Pedro Garc\'{i}a-Pintos, Nathan Lemons, John Golden, and Yi\u{g}it Suba\c{s}i for the fruitful discussions that led to a paper included in this thesis.

My PhD would not have started without the IBM Quantum Challenge 2019, and for that, I thank the organizers for an event that made me realize I wanted to pursue a PhD and a research career in the field of quantum computation.
I would also like to thank the MEXT Super Global scholarship program for supporting the second and third year of my graduate school journey, which enabled me to study without financial burden.

I would also like to express my appreciation to Ryosuke Satoh, Yasuhiro Okura, Sitong Liu, Takaaki Matsuo, Kentaro Teramoto, Shigetora Miyashita, Kento Samuel Soon, Poramet Pathumsoot, Monet Tokuyama, and all the members of AQUA for being good friends.
Thank you for the good food, for exploring the city, and for all the enjoyable times in Japan during my PhD.
I also thank my high school friends in the same journey of PhD here in Japan, Sopak Supakul and Panchanit Ubolsaard, for our occasional meetups, fun times, and mutual encouragement.

Finally, I am forever indebted and grateful to my parents, Somrak and Narongchai Benchasattabuse, and to my girlfriend, Thamolwan Pathomkasikul, for their unwavering emotional support.
I could not have navigated all the obstacles and struggles to arrive at where I am today without you.
\end{acknowledgements}

\begin{publishedcontent}[iknowwhattodo]
\nocite{naphan-engineering-challenges-in-rgs,naphan-half-rgs,cocori-quisp,rdv-qi-architecture,naphan-rgs-tqe,naphan-integrating-purification-rgs,joaquin-michal-cross-validation-quantum-network-simulators,naphan-spo,naphan-lower-bound-qaoa}
\end{publishedcontent}

\tableofcontents
\listoffigures
\listoftables

\printnomenclature

\mainmatter
\makeatletter
\def\clearforchapter{\cleardoublepage}
\makeatother

\clearpage
\chapter{Introduction}
\label{chapter:introduction}

\section{Motivations}

Quantum communication is an indispensable element of quantum information science.
It enables the exchange of quantum states and, more importantly, the creation of entangled states shared across distant parties.
Such entanglement is essential for scaling up quantum computers, enhancing quantum sensing technologies, strengthening privacy in communications and securing remote or cloud resources, and may ultimately reshape our relationship with the Internet.
While significant progress has been made in the physics of entanglement distribution, building quantum networks in practice also requires tight coordination between quantum and classical components, as well as mechanisms to accommodate diverse and evolving technologies.
In particular, how the classical side of the network should coordinate, control, and manage resources while remaining flexible enough to support future hardware presents significant engineering challenges.

In this thesis, I will describe our progress in developing architectures, communication protocols, and resource management strategies for entanglement-based quantum networks.
By approaching these systems from an engineering perspective, we begin to see which aspects need to be revisited and worked out to move from theory to practice.
Drawing on classical network design, emphasizing the role of classical communication exchanges, and considering the realities of engineering practical systems, I have identified and made progress in addressing the following questions.

\subsection{How do we design a unified network architecture that supports heterogeneous quantum repeaters?}
Quantum repeaters are the backbone of long-distance quantum communication~\cite{azuma-rmp-repeater-review}, yet they come in diverse forms, ranging from memory-based repeaters with quantum memories to emerging all-photonic designs that rely entirely on flying qubits. These differences pose a challenge: how can we design a unified, end-to-end network architecture that supports such fundamentally distinct technologies? Unlike classical networks, where early flexibility gave way to the rigidity of the TCP/IP stack---especially the narrow waist of the IP layer, which has become nearly impossible to change without breaking the Internet---we have the opportunity to avoid locking ourselves into a premature standard. To accommodate heterogeneity and future evolution, we need abstractions and interfaces that support interoperability, modularity, and adaptability across the stack. What architectural principles can enable this? What abstractions should we define to bridge fundamentally different hardware assumptions? How can we design a quantum network stack that remains flexible as the technology landscape continues to evolve?

\subsection{How can simulation guide the design and resource management of scalable quantum networks?}
Simulation has long been a powerful tool for evaluating complex systems, especially when analytical models fall short or when experimentation is costly or infeasible. In quantum networking, where physical testbeds are still limited in scale and flexibility, simulation becomes a key enabler for exploring network behavior, validating architectural choices, and refining protocol designs. Yet quantum systems introduce unique challenges---such as non-determinism, error correlations, and the need to track entangled states---which make classical simulation nontrivial. A good simulation platform must strike a balance between physical fidelity and computational efficiency, enabling meaningful predictions while remaining usable at large scale. What models and abstractions are necessary to capture the key behaviors of quantum networks? How do we validate simulations against real experiments? What kind of metrics, outputs, or insights should simulations deliver to inform the next generation of quantum network design?

\subsection{How can all-photonic quantum repeaters become practical and interoperable with memory-based designs?}
All-photonic quantum repeaters~\cite{azuma-rgs} offer a compelling vision: high-rate entanglement distribution without relying on fragile or expensive quantum memories. While traditionally considered futuristic, advances in deterministic photon sources and photonic entanglement generation have brought them closer to experimental reality. However, engineering challenges remain---particularly when moving beyond idealized settings to realistic regimes that include loss, decoherence, and imperfect hardware. Moreover, most existing designs treat all-photonic repeaters as self-contained chains, often disconnected from broader network architectures based on memory-equipped nodes. To harness their advantages in practice, we need strategies to improve their robustness and integrate them with existing memory-based infrastructures. Can all-photonic repeaters operate effectively in high-error environments? How do we connect them to end nodes with memory-based capabilities, and what protocols or abstractions enable seamless interoperability? What new hybrid architectures become possible when both paradigms work together?

\section{Overview}
\label{sec:introduction_outline}

This thesis begins by reviewing the mathematical foundations and concepts of quantum state representation and error modeling used in quantum networking (\cref{chapter:preliminaries}), followed by an overview of quantum networks and the engineering challenges in their realization (\cref{chapter:quantum-networks}).
The work then presents a unified network architecture for memory-based quantum repeaters, building on the principles of quantum recursive network architecture and demonstrating its effectiveness through simulation (\cref{chapter:architecture-memory}).
This framework is subsequently extended to support all-photonic quantum repeaters---previously thought to be incompatible with memory-based systems---by detailing their integration into a single heterogeneous network.
This extension provides new protocol designs, addresses engineering challenges lacking in prior conceptual models, and examines resource and error management to move all-photonic designs toward practical implementation (\cref{chapter:architecture-all-photonic}).
With these architectural foundations established, the thesis shifts its focus to applications, using quantum optimization algorithms as a case study to examine how their runtime behavior translates into demands on network resources (\cref{chapter:qaoa-and-network-resources}).
The thesis concludes with a discussion of open problems and future directions (\cref{chapter:discussion-and-conclusion}).

In the following, I provide a chapter-by-chapter summary that details the background, methodology, and main results of this thesis. Each chapter's introduction will provide more in-depth motivation and context for the topics discussed.

\textit{\Cref{chapter:preliminaries}: \nameref*{chapter:preliminaries}}

This chapter provides essential background knowledge.
It will briefly review key concepts from quantum information theory, entanglement and Bell pairs, quantum channels, Clifford groups, and fidelity of quantum states.
Furthermore, it will introduce fundamental building blocks relevant to quantum networks, including an overview of stabilizer formalism, graph states, and the basics of quantum error correction codes that underpin advanced repeater designs.

\textit{\Cref{chapter:quantum-networks}: \nameref*{chapter:quantum-networks}}

This chapter delves into the foundational principles of quantum repeater networks, establishing the state of the art and the core motivations for this work.
The promise of a global Quantum Internet is driven by revolutionary applications—from quantum key distribution (QKD)~\cite{bb84-conf,ekertQuantumCryptographyBased1991} and distributed sensing~\cite{degen-sensing,proctor-quantum-sensing} to private delegated computation~\cite{broadbent-bfk-protocol,morimaeBlindQuantumComputation2013,fitzsimonsPrivateQuantumComputation2017} and large-scale quantum computing—all of which rely on distributing entanglement as their key enabling task.
However, the fundamental obstacle of photon loss limits the distance of direct communication, making quantum repeaters indispensable.
While repeaters solve the distance problem using core operations like entanglement swapping, this process in turn introduces accumulated operational errors that must be managed.
To provide context for the solutions developed in this thesis, this chapter surveys these challenges and provides a classification of different quantum repeater generations (1G, 2G, 3G)~\cite{muralidharan-repeater-generation}, outlining their distinct approaches to error management and setting the stage for the architectural work that follows.

\textit{\Cref{chapter:architecture-memory}: \nameref*{chapter:architecture-memory}}

While experimental progress in quantum networking has been advancing rapidly~\cite{valivarthiTeleportationSystemsQuantum2020,davisEntanglementSwappingSystems2025,davisTeleportationEntanglementFermilab2024,ruskucMultiplexedEntanglementMultiemitter2025,pompiliRealizationMultinodeQuantum2021,hermansQubitTeleportationNonneighbouring2022,vanleentEntanglingSingleAtoms2022,krutyanskiy-northup-ion-trap-entanglement,senaRobustHighFidelityQuantum2025,valivarthiQuantumTeleportationMetropolitan2016a,sunQuantumTeleportationIndependent2016,chungDesignImplementationIllinois2022,alshowkanReconfigurableQuantumLocal2021,craddockAutomatedDistributionPolarizationEntangled2024,liuCreationMemoryMemory2024,knautEntanglementNanophotonicQuantum2024}, the architectural work required to build scalable, interoperable networks has lagged behind, with most frameworks focusing on single-network scenarios rather than a complete, end-to-end design~\cite{bacciottiniLeveragingInternetPrinciples2025a,beauchampModularQuantumNetwork2025a,dahlbergLinkLayerProtocol2019,gauthierArchitectureControlEntanglement2023,gauthierControlArchitectureEntanglement2023a,illianoQuantumInternetProtocol2022,kumarQuantumInternetTechnologies2025a,liSurveyQuantumInternet2024,dahlberg-netqasm}.
To address this research gap, this chapter details a comprehensive architecture for memory-based networks.
The framework builds upon the foundational principles of the Quantum Recursive Network Architecture (QRNA) by Van Meter, Touch, and Horsman~\cite{vanmeterRecursiveQuantumRepeater2011}, which adapts the recursive design of the classical Internet~\cite{touchDynamicRecursiveUnified2011,touchRNAMetaprotocol2008,touchRecursiveNetworkArchitecture2006a} to the quantum domain.
The core idea of this recursion is to use layers of abstraction to manage complexity; for instance, an entire network at a lower layer can be represented as a single virtual node at a higher layer.
This allows network details to be hidden when crossing administrative boundaries and greatly simplifies protocol reasoning.
The chapter details the collaborative work that established these principles in a cohesive framework, presented in two complementary papers at the \textit{2022 IEEE International Conference on Quantum Computing and Engineering (QCE)}.

The first of these papers, led by Rodney Van Meter, introduced the overarching \emph{A Quantum Internet Architecture}~\cite{rdv-qi-architecture}.
The second, led by Ryosuke Satoh, detailed the software realization of this architecture---the Quantum Routing Software Architecture (QRSA)---and its implementation in the Quantum Internet Simulation Package (QuISP)~\cite{cocori-quisp}.
Built on the OMNeT++ discrete-event simulator, QuISP implements the principles of QRNA and QRSA using modular, task-oriented software components that differ from a classical OSI-like layered model often utilized in other models of quantum network protocols~\cite{dahlbergLinkLayerProtocol2019}.
My specific contribution within this effort was to concretize the framework by formalizing the RuleSet protocol's inter-node messages and execution logic, as well as improving the quantum state and error models for the QuISP simulator.

To validate this simulation-based approach, the final part of the chapter presents a rigorous cross-validation study of QuISP against another leading simulator, SeQUeNCe~\cite{wuSeQUeNCeCustomizableDiscreteevent2021}, developed by the University of Chicago and Argonne National Laboratory.
In this collaborative work, led by Joaquin Chung and Michal Hajdu\v{s}ek and published at the \textit{IEEE INFOCOM 2025 - IEEE Conference on Computer Communications, Workshop on Quantum Networked Applications and Protocols}, we show that despite different protocol implementations, both simulators agree on network behavior where analytical models are available~\cite{joaquin-michal-cross-validation-quantum-network-simulators}.
This result establishes the credibility of our architectural model and the simulation tools used to evaluate it.

\textit{\Cref{chapter:architecture-all-photonic}: \nameref*{chapter:architecture-all-photonic}}

While most repeater designs rely on quantum memories, the distinct paradigm of memoryless, all-photonic quantum repeaters~\cite{azuma-rgs,borregaard-3g-repeater} offers a compelling alternative, particularly for applications like QKD where end nodes can process incoming photons without storage.
However, a key research issue has been that these schemes have been studied in isolation, with significant engineering challenges and a lack of a clear path for integration with memory-based networks.
This chapter presents my novel contributions, detailed in a series of lead-author publications~\cite{naphan-engineering-challenges-in-rgs,naphan-half-rgs,naphan-rgs-tqe,naphan-integrating-purification-rgs} with my collaborators Michal Hajdu\v{s}ek and Rodney Van Meter, that aim to make all-photonic repeaters practical and interoperable, thereby realizing the vision of a truly heterogeneous repeater network that is a central theme of this thesis.

The investigation begins with our work published in \textit{IEEE Network}~\cite{naphan-engineering-challenges-in-rgs}, which provides a tutorial on all-photonic repeaters and a systematic analysis of the engineering challenges that are often overlooked.
This work identified key issues in timing synchronization, integration with memory-equipped end nodes, and the classical communication overhead required for each Bell pair.
A key result was showing that the classical post-processing could be made dramatically more efficient, reducing communication costs by orders of magnitude.

Building on this analysis, the chapter introduces a novel building block, the half-Repeater Graph State (half-RGS), which is constructed from deterministic quantum emitter interfaces~\cite{buterakos-graph-generation,hilaire-rgs-optimizing-gen-time}.
This modular design, published at the \textit{2024 IEEE International Conference on Quantum Computing and Engineering (QCE)}~\cite{naphan-half-rgs}, solves nearly all of the challenges raised in our \textit{IEEE Network} paper.
It simplifies photonic state generation, resolves the issue of tight global timing, and provides a natural interface to memory-based systems, allowing an all-photonic chain to be abstracted as a single virtual link within the QRNA model.
Finally, the chapter details how this building block can be used to solve a decade-long open problem raised in the original all-photonic repeater paper by Azuma, Tamaki, and Lo~\cite{azuma-rgs}: how to incorporate entanglement purification.
In a paper to be published at the \textit{2025 IEEE International Conference on Quantum Computing and Engineering (QCE)}~\cite{naphan-integrating-purification-rgs}, we provide a proof-of-concept analysis showing that by using optimistic purification~\cite{razavi-optimistic-purification-pumping,hartmann-optimistic-purification,mobayenjarihani-optimistic-purification-qce}, this primitive creates a purification-enhanced RGS scheme that improves fidelity without sacrificing the high generation rates that make all-photonic repeaters attractive.

\textit{\Cref{chapter:qaoa-and-network-resources}: \nameref*{chapter:qaoa-and-network-resources}}

With the architectural framework established, a critical research issue remains: understanding the workloads that a quantum network must support.
It is widely believed that distributed quantum computation will be necessary to scale quantum computing beyond a single device~\cite{barral-review-distributed-qc,marcello-angela-sara-distributed-qc-survey}, but there is a lack of concrete traffic models for such scenarios.
Focusing on optimization as a key application area where quantum advantage is anticipated with a few hundred logical qubits~\cite{abbasChallengesOpportunitiesQuantum2024}, this chapter bridges the gap between network architecture and application requirements.

The chapter first details a novel optimization algorithm based on a subdivided phase oracle, a work with my collaborators Takahiko Satoh, Michal Hajdu\v{s}ek, and Rodney Van Meter, published at the \textit{2022 IEEE International Conference on Quantum Computing and Engineering (QCE)}~\cite{naphan-spo}.
This algorithm utilizes knowledge of the problem's objective value distribution to modify the standard Grover search, and can be understood as a specific instance of the broader QAOA framework.

The main contribution, however, is a rigorous analysis of QAOA itself.
In a work with my collaborators Andreas Bärtschi, Luis Pedro García-Pintos, John Golden, Nathan Lemons, and Stephan Eidenbenz, published in \textit{Physical Review A}~\cite{naphan-lower-bound-qaoa}, we utilize speed limit results from quantum annealing~\cite{luis-pedro-annealing-lower-bounds} to derive some of the first analytical lower bounds on QAOA's runtime.
The key result of this work is showing that the required runtime is not monolithic; it can range from constant-time for easy problems to polynomial or even square-root scaling for unstructured search.
Using these findings, I argue about the workload implications for a distributed datacenter model.
This chapter discusses how factors like problem partitioning, the choice of fault-tolerant computational model, and trust assumptions fundamentally alter the nature of the network traffic and the type of entangled resources required, providing crucial insights that inform the design of realistic traffic models.

\textit{\Cref{chapter:discussion-and-conclusion}: \nameref*{chapter:discussion-and-conclusion}}

Finally, I summarize the primary contributions of this thesis, revisits the main research questions in light of the presented results, and discusses the broader implications of this work for the development of the Quantum Internet.
It will also outline promising directions for future research and address remaining open challenges.
This structured approach is intended to guide the reader from fundamental principles to the specific contributions of this thesis and their significance for the field.

\subsection{How to Read This Thesis}
\label{subsec:introduction_how_to_read}

This thesis is written with the assumption that the reader is already familiar with the basics of quantum information and quantum computing, including concepts such as manipulating quantum states, quantum circuit notation, mixed states, and state fidelities. Prior extensive experience with quantum networking, classical networking theory, or specific repeater architectures is not a strict requirement, as this work aims to briefly cover the necessary foundational concepts for these areas.

Each chapter, particularly those detailing our contributions, endeavors to begin with a high-level intuition and motivation before delving into the technical specifics. This approach is intended to allow readers interested primarily in the overarching ideas and outcomes to access them readily, while still providing the necessary depth for those wishing to understand the detailed mechanisms. Readers with significant prior experience in quantum networks, particularly those familiar with repeater generations and basic protocols, may find it productive to start with \cref{chapter:architecture-memory} where our primary architectural contributions begin, after reviewing the preliminaries in \cref{chapter:preliminaries} and \cref{chapter:quantum-networks} as needed.

\clearpage
\chapter{Preliminaries on stabilizer formalism and graph states}
\label{chapter:preliminaries}

This thesis is written with the goal of being accessible to a broad audience in STEM, especially those with backgrounds in engineering, computer science, physics, or related fields at the senior undergraduate level.
I aim to assume as little prior knowledge as possible, while still moving efficiently through the material.
Readers are expected to be comfortable with basic quantum information concepts---such as representing quantum states with state vectors and density matrices, interpreting quantum circuits, and understanding the action of common quantum gates.
In terms of mathematics, I assume familiarity with foundational linear algebra concepts like linear independence, orthogonality, bases, eigenvalues, and eigenvectors, as well as a working knowledge of probability theory and set notation.
Some basic understanding of graph theory will also be helpful---knowing what a graph is, how adjacency matrices or lists work, and concepts like paths and spanning trees.
For any essential ideas that are easy to forget or less commonly used outside of specialized work, I will provide a brief reminder in the relevant sections, so that readers do not need to pause to look them up.

\section{Useful mathematical and quantum information concepts}

In this section, I will give a brief review of some relevant concepts and definitions we will use to derive my results in this thesis.

\subsection{Entanglement and Bell pairs}

Assuming a basic familiarity with writing quantum states in bra-ket notation~\cite{mike-ike-book}, let us briefly review the concept of entanglement.
Entanglement is a property of two or more qubits in which measurements in certain bases will always give correlated results.
That is, knowing the results of a measurement on a subset of the qubits can determine the measurement outcomes for the remaining unmeasured qubits.
The smallest unit of entanglement, and one that will appear repeatedly in this thesis, is the \emph{Bell pair}---an entangled quantum state of two qubits.
And the most common form of Bell pair is the $\ket{\Phi^+}$ which is given by
\begin{equation}
    \ket{\Phi^+} = \frac{1}{2}\left(\ket{00} + \ket{11}\right).
\end{equation}
As we can see, measuring any of the two qubits of $\ket{\Phi^+}$ in the Z basis will determine the results of the other qubit.

A two-qubit system has four basis states.
While quantum states are typically written using the computational basis, sometimes it is more useful to use a different, fully entangled basis known as the \emph{Bell basis}.
The four basis states of the Bell basis are given by
\begin{align}
    \ket{\Phi^+} &= \frac{1}{\sqrt{2}}\left(\ket{00} + \ket{11}\right), \label{eq:bellpair-1}\\
    \ket{\Phi^-} &= \frac{1}{\sqrt{2}}\left(\ket{00} - \ket{11}\right), \label{eq:bellpair-2}\\
    \ket{\Psi^+} &= \frac{1}{\sqrt{2}}\left(\ket{01} + \ket{10}\right), \label{eq:bellpair-3}\\
    \ket{\Psi^-} &= \frac{1}{\sqrt{2}}\left(\ket{01} - \ket{10}\right). \label{eq:bellpair-4}
\end{align}
This basis is useful because the four states can be transformed into one another (ignoring the global phase) by applying a single-qubit Pauli gate to just one of the two qubits.
For example, the relationships between the states include
\begin{equation}
    (Z \otimes I)\ket{\Phi^+} = \ket{\Phi^-}, \quad (X \otimes I)\ket{\Phi^+} = \ket{\Psi^+}, \quad \text{and} \quad (Y \otimes I)\ket{\Phi^+} = i\ket{\Psi^-}.
\label{eq:bellbasis-relationship}
\end{equation}

\subsection{Quantum channels}

When dealing with real-world quantum systems, errors from noise and environmental interactions will inevitably alter the quantum states.
A \emph{quantum channel} is the formalism used to characterize and describe these transformations.
Because a channel can change a pure state into a statistical mixture of possible outcomes, it is necessary to use the density matrix representation, $\rho$, to describe the state of the system.

Formally, a quantum channel is a completely positive and trace-preserving (CPTP) map, $\mathcal{E}$.
The most common way to represent this map is via the operator-sum, or Kraus, representation, given by
\begin{equation}
    \mathcal{E}(\rho)=\sum_k A_k \rho A_k^{\dagger},
\end{equation}
where the $\{A_k\}$ are the set Kraus operators for the channel.
To ensure that the total probability is conserved, these operators must satisfy the trace-preserving condition $\sum_k A_k^{\dagger} A_k=I$.

Conceptually, it is often simpler to think of each Kraus operator $A_k$ as representing a possible error that can occur, with the probability of that specific outcome being $\tr{A_k \rho A_k^{\dagger}}$.
It is important to note that the Kraus representation for a given channel is not unique, thus the probability of any single event $k$ is not a fixed property of the channel, but is dependent on both the chosen set of operators and the input state $\rho$.

In this thesis, we will limit our discussion primarily to \emph{Pauli channels}.
For this type of channel, the errors can be described as combinations of single-qubit Pauli gates acting on the state with certain probabilities.
An example of a single-qubit Pauli channel acting on a state $\rho$ is
\begin{equation}
    \mathcal{E}(\rho) = (1 - p_x - p_y - p_z) \rho + p_x X \rho X + p_y Y \rho Y + p_z Z \rho Z,
\end{equation}
where $p_x$, $p_y$, and $p_z$ denote the probability that an $X$, $Y$, or $Z$ error occurs.
This model can be extended to multi-qubit states; for a general two-qubit state, there are 15 possible Pauli error terms.

\subsection{Fidelity of quantum states}

To quantify how close an experimentally prepared quantum state is to an ideal target state, we use a metric called \emph{fidelity}.
It provides a single number, between 0 and 1, that captures how well a state has been prepared or preserved, with a fidelity of 1 indicating a perfect match.

For the comparison between a potentially mixed quantum state $\rho$ and a pure target state $\ket{\psi}$, the fidelity is formally defined as the overlap between the two states.
This can be expressed in two equivalent ways:
\begin{equation}
    F(\rho, \ket{\psi}) = \bra{\psi}\rho\ket{\psi} = \mathrm{Tr}(\rho \ket{\psi}\bra{\psi}).
\end{equation}
Note that other definitions of fidelity exist in the literature, such as the square root of this value or a more general form for comparing two mixed states~\cite{mike-ike-book}.
However, for the purposes of this thesis, we will limit ourselves to the definition above.

As the focus of this thesis is the distribution of Bell pairs, it is useful to consider this definition in that context.
Given that the four Bell states in \cref{eq:bellpair-1,eq:bellpair-2,eq:bellpair-3,eq:bellpair-4} form a complete and orthogonal basis, any single-qubit Pauli error acting on a Bell pair simply transforms it into another Bell state.
Consequently, the fidelity of a noisy Bell pair with respect to a target state (e.g., $\ket{\Phi^+}$) is equivalent to the probability of measuring it in the Bell basis and obtaining the expected outcome.

\subsection{Clifford groups and Clifford gates}
\label{subsec:preliminaries:clifford}

Assuming that we know the basic quantum gates used in simple quantum programs.
Many of the familiar gates, Hadamard (H), not (X), phase (S), and controlled-not (CX/CNOT) gates, are classified as Clifford gates.
So what are Clifford gates and what is a Clifford group?

A \emph{group} (in the Clifford group) is a set $G$ with an operation that takes two elements of $G$ and produce a new element inside $G$.
The group operation is a function $G \times G \rightarrow G$ that needs to also satisfy certain properties; (1) associativity, (2) identity, and (3) inverse.
Some examples of groups with their accompanied operation are integers with addition, non-zero real numbers with multiplication, and a set of bijective function (or permutation) over $n$ items.

For the purpose of this thesis, we only need to understand the key consequence of Clifford operators forming a group---that is, circuits composed entirely of Clifford gates can often be simplified or ``compressed.''
Specifically, a sequence of Clifford operations applied to the initial state $\ket{0}^{\otimes n}$ can only produce quantum states that belong to a small, finite subset of the vast $n$-qubit Hilbert space.
This special property gives rise to two powerful descriptive tools that are used extensively in this thesis: the stabilizer formalism and the equivalent graph state formalism, which are covered next.

\section{Stabilizer formalism}
\label{sec:stabilizer-formalism}

A very useful class of quantum states that we will see repeatedly in this thesis is the class of \emph{stabilizer states}~\cite{gottesman-phd-thesis}.
These states possess structure that allows for efficient classical description and simulation under certain operations~\cite{aaronson-chp-sim}.
Stabilizer states are states that can be prepared by applying a quantum circuit composed entirely of Clifford gates to the all-zero initial state $\ket{0}^{\otimes n}$.
The circuits that prepare stabilizer states are called stabilizer circuits.
Unlike arbitrary quantum states---which generally require classical resources scaling exponentially with the number of qubits $n$ to store their amplitudes---$n$-qubit stabilizer states admit an efficient classical representation.
Specifically, an $n$-qubit stabilizer state \psistate{} is uniquely defined as the simultaneous +1 eigenstate of a set of $n$ independent, commuting Pauli operators $\{g_1, g_2, \ldots, g_n \}$, known as the generating set or the \emph{stabilizer generators} (or simply the generators).
\nomenclature{Stabilizer formalism}{A compact way to describe and efficiently simulate a class of quantum states, known as stabilizer states, which are simultaneously +1 eigenstates of a set of commuting Pauli operators}

Formally, the set of all Pauli operators that \emph{stabilize} a given state---that is, operators that map the state back to itself---form what is called the stabilizer group $\mathcal{S}$ of \psistate.
This group, $\mathcal{S}$, is an abelian subgroup of the $n$-qubit Pauli group $\mathcal{P}_n$.
For a pure $n$-qubit stabilizer state, $\mathcal{S}$ contains exactly $2^n$ elements.
But rather than listing all of them, we can describe the group efficiently using its generating set (the generators) given by
\begin{equation}
    \mathcal{S} = \left\langle g_1, g_2, \ldots, g_n \right\rangle.
\end{equation}
When we want to be explicit about which state the stabilizer group stabilizes, we will use a subscript to denote the state for which the stabilizers stabilize, e.g., $\mathcal{S}_{\ket{\psi}}$.

Note that, in many contexts where strict mathematical precision is not required, people may refer to the generating set simply as the stabilizer set, or simply the stabilizers.
Stabilizer formalism plays a crucial part in various areas of quantum information science, including quantum error correction~\cite{gottesman-phd-thesis}, efficient simulation of Clifford circuits~\cite{gottesman-heisenberg-representation,aaronson-chp-sim}, and quantum communication and networking---the primary focus of this thesis.

In a nutshell, the stabilizer formalism is a compact way to describe stabilizer states by writing down their stabilizer generators.

Stabilizer states and circuits are utilized (sometimes in conjunction with oracle) in elementary topics of quantum information science such as quantum teleportation~\cite{bennett-teleportation}, the Bernstein-Vazirani~\cite{bernstein-vazirani-algorithm}, Simon~\cite{simon-algorithm} and Deutsch-Jozsa~\cite{deutsch-jozsa-algorithm} algorithms or quantum key distribution~\cite{bb84-conf,ekertQuantumCryptographyBased1991}.
For example, the Bell state and 5-qubit GHZ state,
\begin{align}
    \ket{\Phi^+} &= \frac{1}{\sqrt{2}} (\ket{00} + \ket{11}), \\
    \ket{\text{GHZ}}_5 &= \frac{1}{\sqrt{2}} (\ket{00000} + \ket{11111}),
\end{align}
can be described with the stabilizer groups defined by
\begin{gather}
        \mathcal{S}_{\ket{\Phi^+}} = \left\langle XX, ZZ \right\rangle, \\
        \mathcal{S}_{\ket{\text{GHZ}}_5} =
            \left\langle\begin{array}{c}
                X X X X X, \\
                Z Z I I I, \\
                I Z Z I I, \\
                I I Z Z I, \\
                I I I Z Z
            \end{array}\right\rangle
\end{gather}
respectively.
(Note that I omitted the tensor product symbols.)
The two stabilizer groups can also be written with a more compact way, dropping the $I$ terms and denoting the index of the qubit that the operator acts on, as
\begin{gather}
    \mathcal{S}_{\ket{\Phi^+}} = \left\langle X_0 X_1, Z_0 Z_1 \right\rangle, \\
    \mathcal{S}_{\ket{\text{GHZ}}_5} = \left\langle X_0 X_1 X_2 X_3 X_4, Z_0 Z_1, Z_1 Z_2, Z_2 Z_3, Z_3 Z_4 \right\rangle,
\end{gather}
respectively.
This short notation will be mainly used throughout this thesis.

\subsection{Representing subspaces with the stabilizer formalism}
\label{subsec:preliminaries:stabilizer-subspace}

Beyond describing unique quantum states, the true power of the stabilizer formalism lies in its ability to describe \emph{subspaces} of the total Hilbert space. This is the foundation of its use in quantum error correction~\cite{gottesman-phd-thesis}.
If an $n$-qubit system is described by a stabilizer group with only $n-k$ independent generators, this group does not stabilize a single vector but rather a $2^k$-dimensional subspace.
This subspace is large enough to encode $k$ logical qubits, providing a framework for storing and protecting quantum information.

The set of all operators that commute with the stabilizer group $\mathcal{S}$ is called its \emph{normalizer}.
Operators that are in this normalizer but not in $\mathcal{S}$ itself are known as \emph{logical operators}.
For each logical qubit, there is a pair of such logical Pauli operators, typically denoted ($X_L, Z_L$), that act on the encoded information without disturbing the stabilizing conditions.

A simple but illustrative example is the three-qubit bit-flip code, which uses $n=3$ physical qubits to encode $k=1$ logical qubit.
The logical basis states are defined as $\ket{0}_L = \ket{000}$ and $\ket{1}_L = \ket{111}$.
The stabilizer group for this code's subspace must stabilize both of these states.
Two independent generators for this group are $g_1 = Z_1 Z_2$ and $g_2 = Z_2 Z_3$.
The stabilizer group is $\mathcal{S} = \langle Z_1 Z_2, Z_2 Z_3 \rangle$.
The logical operators are operators that transform the logical states; for instance, the logical X operator, $X_L = X_1 X_2 X_3$, flips $\ket{0}_L \leftrightarrow \ket{1}_L$.
Notice that $X_L$ commutes with both $g_1$ and $g_2$.
This relationship is often summarized using the \emph{$[[n, k, d]]$ notation}, where $n$ is the number of physical qubits, $k$ is the number of logical qubits, and $d$ is the code distance (the weight of the smallest logical operator).
The three-qubit bit-flip code is therefore an $[[3, 1, 1]]$ code.
This is because while the code can detect any single-qubit X (bit-flip) error, a single-qubit Z (phase-flip) error, such as $Z_1$, acts as a logical Z operator and is therefore undetectable by the code's stabilizers.

\subsection{Measurements on stabilizer states}
\label{subsec:preliminaries:stabilizer-measurement}

A powerful feature of the stabilizer formalism is that the effect of single-qubit Pauli measurements can be tracked efficiently.
The state resulting from such a measurement is also a stabilizer state (or a state within a stabilizer-defined subspace), and we can determine its new set of generators by applying a simple update rule.
There are two main cases to consider, depending on whether the measurement operator $P$ commutes or anti-commutes with every generator in the stabilizer group $\mathcal{S}$.

\textbf{Case 1: $P$ commutes with all generators of $\mathcal{S}$.}
When the measurement operator commutes with the entire stabilizer group, it belongs to the normalizer of $\mathcal{S}$. This leads to two distinct sub-cases.

\emph{Sub-case 1a: Measuring a stabilizer.}
If $P$ is an element of the stabilizer group itself (up to a phase), the measurement outcome is deterministic.
Applying $P$ to the state \psistate{} will always yield the eigenvalue associated with that stabilizer (typically +1 by convention).
The state after measurement is unchanged, as are its stabilizer generators.
For example, measuring the operator $X_1 X_2$ on the Bell state $\ket{\Phi^+}$, which is stabilized by $\langle X_1 X_2, Z_1 Z_2 \rangle$, will always yield the outcome +1, leaving the state as $\ket{\Phi^+}$.

\emph{Sub-case 1b: Measuring a logical operator.}
If the state resides in a code space (defined by $n-k$ generators for $k>0$) and $P$ is a logical operator (i.e., it commutes with $\mathcal{S}$ but is not an element of $\mathcal{S}$), the measurement corresponds to reading out a logical qubit.
In this scenario, the measurement outcome is dependent on the actual state.
The measurement projects the logical qubit into one of its basis states and reduces the dimension of the code space by half.
The new stabilizer group is formed by adding the measured operator, $\pm P$, to the generating set, increasing the number of generators by one.
For example, consider the [[3, 1, 1]] code space stabilized by $\langle Z_1 Z_2, Z_2 Z_3 \rangle$. Measuring the logical Z operator, $P = Z_1$, on the logical state $\ket{+}_L = (\ket{000} + \ket{111})/\sqrt{2}$ will yield +1 or -1 with equal probability.
If the outcome is +1, the state collapses to $\ket{000}$, and the new stabilizer set becomes $\langle Z_1 Z_2, Z_2 Z_3, +Z_1 \rangle$, which is equivalent to $\langle Z_1, Z_2, Z_3 \rangle$.

\textbf{Case 2: $P$ anti-commutes with some generators of $\mathcal{S}$.}
In this case, the measurement outcome is always random, with a 50\% probability of being +1 and 50\% for -1.
The measurement projects the state onto an eigenstate of $P$, so the post-measurement state is different, and its stabilizer group must be updated.
The update rule is as follows:
\begin{enumerate}
    \item Find any generator $g_i$ in the generating set that anti-commutes with the measurement operator $P$.
    \item Replace $g_i$ in the generating set with $\pm P$, where the sign is determined by the classical measurement outcome (+1 or -1).
    \item For every other generator $g_j$ ($j \neq i$) that also anti-commutes with $P$, replace it with the product $g_j \cdot g_i$. This new generator will now commute with $P$.
    \item All generators that already commuted with $P$ are left unchanged.
\end{enumerate}

As an example, let us measure the first qubit of a Bell state $\ket{\Phi^+}$ in the Z-basis, so $P=Z_1$.
The initial generators are $g_1 = X_1 X_2$ and $g_2 = Z_1 Z_2$.
The measurement operator $Z_1$ anti-commutes with $g_1$ but commutes with $g_2$.
Following the rule, we replace $g_1$ with $\pm Z_1$ (depending on the outcome).
If the outcome is +1, the new generating set is $\langle +Z_1, Z_1 Z_2 \rangle$.
This set can be simplified by multiplying the second generator by the first, giving the equivalent set $\langle Z_1, Z_2 \rangle$, which stabilizes the state $\ket{00}$.
This ability to efficiently calculate the post-measurement state is what makes the classical simulation of Clifford circuits possible.

\section{Graph states}
\label{sec:preliminaries:graph-states}

Another useful representation of quantum states is the graph state formalism~\cite{hein-graph-state-arxiv,hein-graph-state-pra}.
They are quantum states represented as mathematical graphs $G(V, E)$, where $V$ is the set of vertices and $E$ is the set of edges.
In the graph state formalism, a vertex represents a qubit initialized in $\ket{+}$ while an edge represents an entanglement relation between two qubits given by the controlled-phase gate (CZ).
\nomenclature{Graph states}{Quantum states represented as mathematical graphs where vertices symbolize qubits, and edges represent entanglement relations between qubits. Graph states are a subset of stabilizer states}
Formally, a graph state $\ket{G(V, E)}$ represented by a graph $G$ is given by
\begin{equation}
    \ket{G(V, E)} = \prod_{\{u, v\} \in E} CZ(u, v) \ket{+}^{\otimes |V|}.
\label{eq:vanila-graph-state-definition}
\end{equation}

Interestingly, graph states are a subset of stabilizer states.
They can be described via the stabilizer formalism with stabilizer generators given by
\begin{equation}
    g_j = X_j \bigotimes_{v \in N_j} Z_v
\label{eq:graph-state-stabilizer-generators}
\end{equation}
associated with each vertex $j \in V$ in the graph $G$, where $N_j$ denotes the set of neighbors of vertex $j$.
Furthermore, any stabilizer state is equivalent to (possibly non-unique) graph states up to local Clifford gates---that is, applications of single-qubit Clifford gates take the graph state to the stabilizer state.
In other words, we can extend the graph state definition to include a single layer of local Clifford operators applied on qubits of the graph state $\ket{G; \underline{C}}$ with
\begin{equation}
    \ket{\psi} = \ket{G; \underline{C}} = \ket{G; C_1, C_2, \ldots, C_n} = \bigotimes_{i=1}^n C_i \ket{G},
\label{eq:graph-state-with-side-effects}
\end{equation}
where $C_i \in \mathcal{C}_1$ is a single qubit Clifford gate acting on qubit $i$, with $\mathcal{C}_1$ being the local Clifford group (single qubit Clifford gates).
We call the Clifford operator acting on each vertex of the graph the ``Clifford side effect'' or just ``side effect'' when the context is clear.
These side effects also known as vertex operators in some literature~\cite{anders-briegel-graph-state-simulator}.
An example of a stabilizer state with two graph representations is shown in \cref{fig:graph-state-example}.

\begin{figure}[htb]
    \centering
    \includegraphics[width=\textwidth]{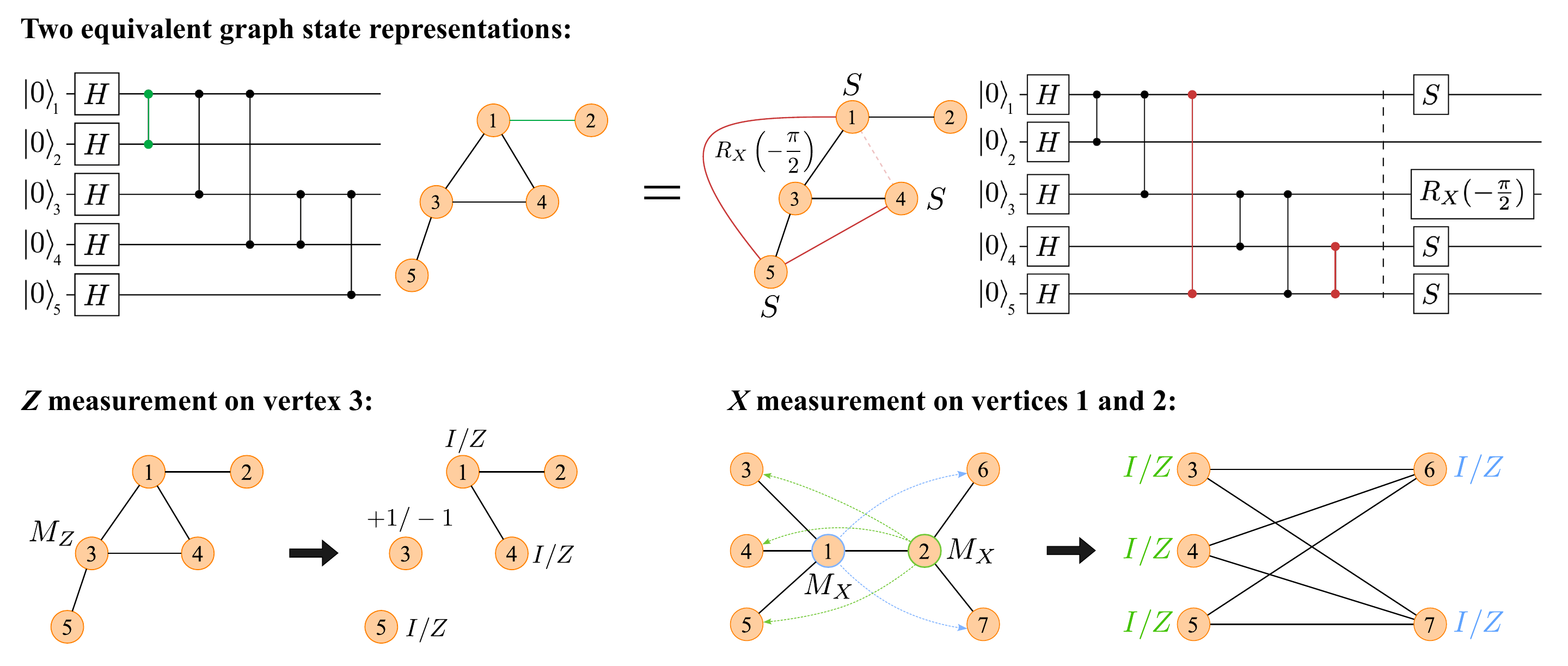}
    \caption[Examples of graph states]{Two graph state representations of the same quantum state are depicted at the top. Here, vertices represent qubits in the quantum circuit, and the application of a controlled-phase gate between two qubits corresponds to an edge between the corresponding vertices. An example of such an edge is highlighted in green in the top left. The graph state on the top right is obtained by altering the description through local complementation on vertex 3. This process deletes any existing edges between neighboring vertices of vertex 3, such as the edge $(1,4)$ in this case, and introduces new edges that were previously absent, highlighted in red as $(1,5)$ and $(4,5)$. The application of the depicted Clifford operations ensures that despite the differing graph representations, the quantum states remain identical in both cases. Visualization of a Z measurement with side effects is presented in the bottom left, while the impact of two X measurements (XX measurement) on a different graph is illustrated in the lower right. The Clifford side effects ($I/Z$) of qubits 3, 4, and 5 depend on the measurement outcome of qubit 2, while those of qubits 6 and 7 hinge on the outcome of qubit 1, as indicated by the blue and green arrows. (From~\cite{naphan-engineering-challenges-in-rgs}.)}
    \label{fig:graph-state-example}
\end{figure}

In many works that use graph states as a tool for analysis, the focus is on the graph's structure itself, as this is often sufficient for proving key properties of the state.
In this context, it is common to work under the principle of local Clifford (LC) equivalence, where two graph states are considered equivalent if they can be transformed into one another by applying only single-qubit Clifford gates.
This effectively means we can disregard the Clifford side effects and work with the simpler state definition in \cref{eq:vanila-graph-state-definition}, treating the side effects as details that can be corrected later.
By extension, the related concept of local unitary (LU) equivalence is also used in certain scenarios.

In this thesis, when I talk about graph states, I will be mainly using the extended definition in \cref{eq:graph-state-with-side-effects} since working on tracking an exact state will be easier to reason about.
A few times when we only want to focus on the structure of the graph---when concerning only whether or not the state are still entangled or certain paths between pair of vertices in the graph---I will drop the consideration of side effects and focus on the equivalent properties of the graph state.

\subsection{Local complementations}

As we have seen briefly earlier, a stabilizer state may have multiple graph state representations.
One useful operation that can be performed on a graph is the \emph{local complementation}~\cite{bouchet-local-complementation}.
By definition, local complementation operation $\tau_a(G)$ is an operation on a graph $G$ specified by a vertex $a$ which transform the graph by complementing a subgraph induced by the neighborhood $N_a$ of $a$---that is, the edge between each pair of neighbors of $a$ is toggled.
In simple words, for each pair of neighbors ($u$, $v$) of $a$, if there exists an edge $\{u, v\} \in E$, we remove it, while if $u$ and $v$ are not previously connect by an edge, we add edge $\{u, v\}$ to the edge set.
An example of local complementation is shown in \cref{fig:graph-state-example}.

Performing local complementation on graph states also changes the Clifford side effects on qubits not just the graph connectivity, as given by (recall gate definitions in \cref{subsec:preliminaries:clifford})
\begin{equation}
    \ket{\tau_a(G)} = \sqrt{-iX} \bigotimes_{v \in N_a} \sqrt{iZ} \ket{G}.
\label{eq:local-complementation}
\end{equation}
And for its action on the graph states with side effects, we get
\begin{equation}
    \ket{\tau_a(G); \underline{C}} = \left( \bigotimes_{j \in V} C_j \right) \left( \sqrt{-iX} \bigotimes_{v \in N_a} \sqrt{iZ} \right) \ket{G}.
\end{equation}
Notice the order of operations in the above equation, that the Clifford side effects are applied \emph{after} the operations introduced via local complementation.
Although we use the term ``apply'' or ``application'' with local complementation, in this context (and in most context of quantum information science), it is a means to change the classical state description and no actual physical operations are performed on the state.
In this view, local complementation is a tool that we utilize to change our mental description of the state (or the tracked state), and only when we want to use the state after, i.e., then we need to account for these side effects.

Local complementation is a very powerful tool in changing the graph representation of the state and is crucial in optimizing graph state preparations, which has wide range of applications in quantum information science, including quantum computations~\cite{raussendorf-mbqc-pra,raussendorf-mbqc-prl}, compilations of quantum programs~\cite{sitong-substrate-scheduler,krishnan-vijayan-jabalizer}, and quantum networking applications~\cite{azuma-rgs,borregaard-3g-repeater,sen-goodenough-towsley-subgraph-complementation-in-star-network,ghanbari-hoi-kwong-rgs-optimization}.

\subsection{Measurements on graph states}

Performing Pauli basis measurements on graph states preserve the graph states representation.
This seems like a trivial assertion, since performing Pauli basis measurements on a stabilizer state gives us another stabilizer state and we have already established that any stabilizer state can be represented with graph states.
But conveniently, the action of Pauli measurements on graph states---when considering at the level of local Clifford equivalent---can be explained simply with operations on graphs (vertex deletion and local complementation).

First, let us look at the action of Z basis measurement on graph states without any side effects.
The resulting graph after the measurement is the same graph but with the measured vertex removed.
(Or if we consider non-destructive projective measurement, we delete all the edges incident to the measured vertex.)
To see this, we can rewrite the graph state as 
\begin{align}
    \ket{G} 
        &= \prod_{\{u, v\} \in E} CZ(u, v) \ket{+}^{\otimes |V|} \\
        &= \left(\prod_{v \in N_a} CZ(a, v)\right) \left(\prod_{\substack{\{u, v\} \in E \\ u, v \neq a}} CZ(u, v)\right) \ket{+}^{\otimes |V|} \\
        &= \left(\prod_{v \in N_a} CZ(a, v)\right) \ket{G - a} \\
        &= \ket{0}_a \ket{G - a} + \ket{1}_a \prod_{v \in N_a} Z_v \ket{G - a}.
\end{align}
We can now see that the effect of measuring $a$ in the Z basis has the same effect as deleting vertex $a$ from the graph while the side effects, depending on the measurement outcome, are possible $Z$ on the neighbor vertices.

We can also derive the resulting state after Z measurement $M_{a, Z}$ on qubit $a$ with stabilizer formalism.
To see this, recall the generator set defined from each vertex (\cref{eq:graph-state-stabilizer-generators}).
Notice that every generator $g_i$, where $i \neq a$, commutes with the measurement operator $P^a_{Z, \pm}$.
Only $g_a$ anticommutes with the measurement, thus, we can replace $g_a$ in the generator set with $\pm Z_a$ where the sign depends on the measurement outcome.
For each $g_v$ with $v \in N_a$ in the original graph, we can then update them to remove the $Z_a$ term by multiplying it with the newly added $\pm Z_a$ term.
Notice that the updated generators $g \in \Set{\pm Z_a g_v | v \in N_a}$ will have a minus sign in front of them if we had obtain -1 as the measurement result.
This is the same as if we define the generator set of the graph $G - a$ with \cref{eq:graph-state-stabilizer-generators} but with $Z_v$ acting on $g_v$ for $v \in N_a$ of the original graph.

Now that we have covered Z measurement, let us start with the intuition for Y and X measurements.
Recall that we can use local complementation to change the representation of the graph, and more importantly, change the side effects of vertices.
The idea is, say we want to perform X measurement on qubit $a$, to transform the graph state with series of local complementations such that the side effect $C_a$ of qubit $a$ satisfies
\begin{equation}
    P^a_{X, \pm} C_a = C'_a P^a_{Z, \pm},
\end{equation}
where $C'_a$ is a Clifford operator to fix the resulting projection from Z axis back to X axis.
A straightforward example of $C_a$ would be $H$.
The Clifford fix $C'_a$ can usually be omitted when we are only interested in the outcome or when we use destructive measurements.

Using the above intuition, we can define the actions of projective measurements in the Z, Y, and X basis as 
\begin{align}
    P^{a}_{Z, \pm} \ket{G} &= \ket{Z, \pm} \otimes U^a_{Z, \pm} \ket{G - a}, \\
    P^{a}_{Y, \pm} \ket{G} &= \ket{Y, \pm} \otimes U^a_{Y, \pm} \ket{\tau_a(G) - a}, \\
    P^{a}_{X, \pm} \ket{G} &= \ket{X, \pm} \otimes U^a_{X, \pm} \ket{\tau_b(\tau_a \circ \tau_b(G) - a)},
\end{align}
where $P^a_{B, \pm}$ denotes measurement on qubit $a$ with basis $B$ and with eigenvalue result $\pm1$, $b$ denotes a vertex chosen from neighborhood of $a$ assuming $a$ is not an isolated vertex and $U^a_{B, \pm}$ denotes added side effects introduced by the measurements.
These side effects are given by~\cite{hein-graph-state-pra,hein-graph-state-arxiv}
\begin{align}
    U^a_{Z, +} &= I, &U^a_{Z, -} &= \bigotimes_{v \in N_a} Z_v,\\
    U^a_{Y, +} &= \bigotimes_{v \in N_a} \sqrt{-iZ}_v, &U^a_{Y, -} &= \bigotimes_{v \in N_a} \sqrt{+iZ}_v,\\
    U^a_{X, +} &= \sqrt{+iY}_b \bigotimes_{v \in N_a \setminus (N_b \cup b)} Z_v , &U^a_{X, -} &= \sqrt{-iY}_b \bigotimes_{v \in N_b \setminus (N_a \cup a)} Z_v.
\end{align}

The choices of side effects to change the basis of intended measurement into Z are chosen by noticing that
\begin{align}
    P^a_{Y, \pm} \sqrt{iX} &= \sqrt{-iX} P^a_{Z, \pm}, \\
    P^a_{X, \pm} \sqrt{iX} \sqrt{iZ} &= \sqrt{-iZ} \sqrt{-iX} P^a_{Z, \pm}.
\end{align}
Note that although the action of local complementation in \cref{eq:local-complementation} suggests that we need to perform local complementation three times to realize $\sqrt{iX}$, we need to perform it only once, since the resulting states differ only by a global phase.
Also, notice that to perform an X measurement, we actually only need two local complementation sequences, but the way we defined it above requires us to perform it three times.
The reason for this is most likely due to~\cite{hein-graph-state-pra,hein-graph-state-arxiv} wanting to constrains the side effects to only rotations in the Y and Z axes.

\clearpage
\chapter{Quantum networks}
\label{chapter:quantum-networks}

Quantum networking holds the promise to revolutionize the way information is processed and communicated across distributed systems.
By harnessing quantum mechanical phenomena such as superposition and entanglement, these networks are envisioned to enable new modes of communication, sensing, and computation that are fundamentally beyond the reach of classical systems.
The possibility of interconnecting quantum devices in a similar manner to today's classical Internet has sparked intense interest, with scientists envisioning the transformative capabilities that a future Quantum Internet could unlock.
In this chapter, I will give an overview of the motivations behind quantum networking and briefly summarize key foundational concepts in quantum networks.
\nomenclature{Quantum network}{A collection of interconnected quantum devices that can establish entangled states between distant parties, enabling tasks that are impossible to achieve with classical communication alone}
\nomenclature{Quantum repeater}{A device essential for extending the reach of entanglement distribution beyond direct transmission limits, serving as an intermediate node in the network topology and implementing error management strategies}
\nomenclature{Entanglement purification}{An error management technique that improves the fidelity of noisy entangled pairs by consuming multiple lower-fidelity copies to produce fewer, higher-fidelity ones}

To achieve this, the chapter is structured as follows.
We first define what a quantum network is and the key tasks it must perform.
We then delve into the essential hardware and protocol building blocks, including different link architectures and the role of quantum repeaters.
Finally, we will survey the critical challenge of error management and conclude with an overview of the major open problems that define the frontier of this exciting field.

\section{Introduction}

Let us begin with the pressing question, ``What is a quantum network?''
For the purposes of this thesis, a \emph{quantum network}~\cite{rdv-quantum-networking-book,sasaki-quantum-networks-where-should-we-be-heading,simon-towards-global-quantum-network,munroDesigningTomorrowsQuantum2022} refers to a way of connecting quantum devices such that they can accomplish tasks that would be impossible if they were limited to communicating only through classical information.

At its core, a quantum network---or more broadly, the field of \emph{quantum communication}---enables today's communication networks to be supplemented with quantum mechanical phenomena.
This includes not only the transmission and reception of quantum information, typically encoded in qubits, but also the ability to establish entangled states between multiple parties across the network.

The most well-known application of quantum communication is quantum key distribution (QKD)~\cite{bb84-conf,bb84-reprint,ekertQuantumCryptographyBased1991,xu-pan-qkd-review}, which enables two parties in a network to create a shared string of secret classical bits that can later be used as a key for encryption.
This task of establishing a shared secret key cannot be achieved through classical communication alone.
Today, there are deployed networks that support QKD~\cite{sasakiFieldTestQuantum2011,peevSECOQCQuantumKey2009,chenIntegratedSpacetogroundQuantum2021,pittaluga600kmRepeaterlikeQuantum2021,pittalugaLongdistanceCoherentQuantum2025a} (often referred to as QKD networks), along with commercially available quantum sources and detectors that enable qubit transmission over optical fibers or free-space links.
It is important to note, however, that many QKD implementations operate without requiring entanglement between the communicating parties as certain QKD protocols can be realized by transmitting unentangled quantum states over the channel.

Beyond QKD, however, a rich landscape of applications emerges that fundamentally relies on the distribution of \emph{entanglement}.
Each application imposes unique demands on the underlying network architecture.
For instance, quantum sensing and metrology promise enhancements such as improved telescope resolution~\cite{gottesman-longer-baseline-telescopes,khabiboullineOpticalInterferometryQuantum2019}, ultra-precise clock synchronization~\cite{jozsaQuantumClockSynchronization2000a,nicholElementaryQuantumNetwork2022,komarQuantumNetworkClocks2014}, vibration detection~\cite{chenQuantumKeyDistribution2022}, and physical location verification~\cite{kanneworffExperimentalDemonstrationQuantum2025}.
These capabilities require distributed entangled states with high fidelity, low timing jitter, and robustness to loss~\cite{degen-sensing,proctorMultiparameterEstimationNetworked2018,proctor-quantum-sensing,giovannetti-advances-in-metrology,ge-distributed-metrology}.

In cryptography and secure communication, entanglement unlocks new functionalities such as quantum secret sharing~\cite{hilleryQuantumSecretSharing1999,cleveHowShareQuantum1999}, certified data deletion~\cite{broadbentQuantumEncryptionCertified2020}, quantum commitments~\cite{gunnCommitmentsQuantumStates2023,morimaeQuantumCommitmentsSignatures2022}, and multi-party primitives like conference key agreement, leader election, and secure voting~\cite{taniExactQuantumAlgorithms2005,taniExactQuantumAlgorithms2012,murtaQuantumConferenceKey2020,augusiakMultipartiteSecretKey2009,chenMultipartiteQuantumCryptographic2007,ben-orFastQuantumByzantine2005}.
These protocols not only require entanglement but often rely on multipartite or high-dimensional entangled states, placing further demands on network capabilities and trust models.

Finally, distributed quantum computing represents a natural culmination of these trends.
By connecting quantum processors over a network, one can realize composite systems that surpass the capabilities of any single device~\cite{cleve-buhrman-entanglement-for-communication,cirac-distributed-qc,barral-review-distributed-qc,gottesman-chuang-universality-teleportation}.
Entanglement also enables blind quantum computation~\cite{broadbent-bfk-protocol,fitzsimonsPrivateQuantumComputation2017,morimaeBlindQuantumComputation2013}, homomorphic encryption~\cite{mahadevClassicalHomomorphicEncryption2023,dulekQuantumHomomorphicEncryption2018}, and quantum interactive proofs~\cite{aharonovInteractiveProofsQuantum2017a}, where a client can securely delegate quantum computations while preserving the privacy of their data and algorithm.
These applications demand not only robust entanglement distribution but also reliable error correction, secure authentication, and sophisticated coordination across the network stack.

The diversity of these applications makes it clear that a quantum network must be more than just a channel for sending qubits.
It must be a sophisticated system for creating, distributing, and managing entanglement.
The following section breaks down the fundamental tasks required to build such a system.

\section{Tasks of quantum networks}

To fulfill the promise of the applications described above, a quantum network must perform a series of fundamental tasks.
These tasks form a logical hierarchy, starting from the definition of the network itself, to generating entanglement over a single link, extending that entanglement across the network, and managing the inevitable errors that arise.
This section will outline each of these core operational requirements.

Unlike in classical networks, where communication involves sending and receiving messages, directly transmitting quantum information from one node to another is often not the best approach.
Quantum states are inherently fragile, and due to the no-cloning theorem~\cite{dieks-no-cloning,wootters-zurek-no-cloning,peresHowNocloningTheorem2003}, if a quantum message is lost or corrupted during transmission, it cannot be recovered.
This makes quantum communication particularly susceptible to noise and loss, especially for states that are difficult or costly to prepare.

For this reason, the most fundamental task of a quantum network is to establish generic shared entangled states between distant parties.
These shared entangled states serve as fundamental resources for various quantum communication tasks---most notably, quantum teleportation~\cite{bennett-teleportation}, which enables the transmission of quantum data (quantum channel) without the need to physically send the qubits themselves.
The smallest such entangled resource that enables quantum communication is the \emph{Bell pair}, which can also serve as a building block for creating more complex multipartite entangled states.

\subsection{Definition of a network}

When I refer to a quantum network in this thesis, I mean a collection of interconnected quantum devices---referred to as network nodes---that are capable of generating, processing, or storing quantum states.
At an abstract level, a quantum network can be modeled as a graph where vertices represent network nodes and edges denote physical quantum links between them.
Depending on the level of abstraction, some nodes (e.g., those without quantum memories or only used for routing photons through) may be omitted (as depicted in \cref{fig:quantum-network-example}).
In such cases, the topology of the graph reflects only the essential components for the protocol under discussion.

\begin{figure}
    \centering
    \includegraphics[width=\textwidth]{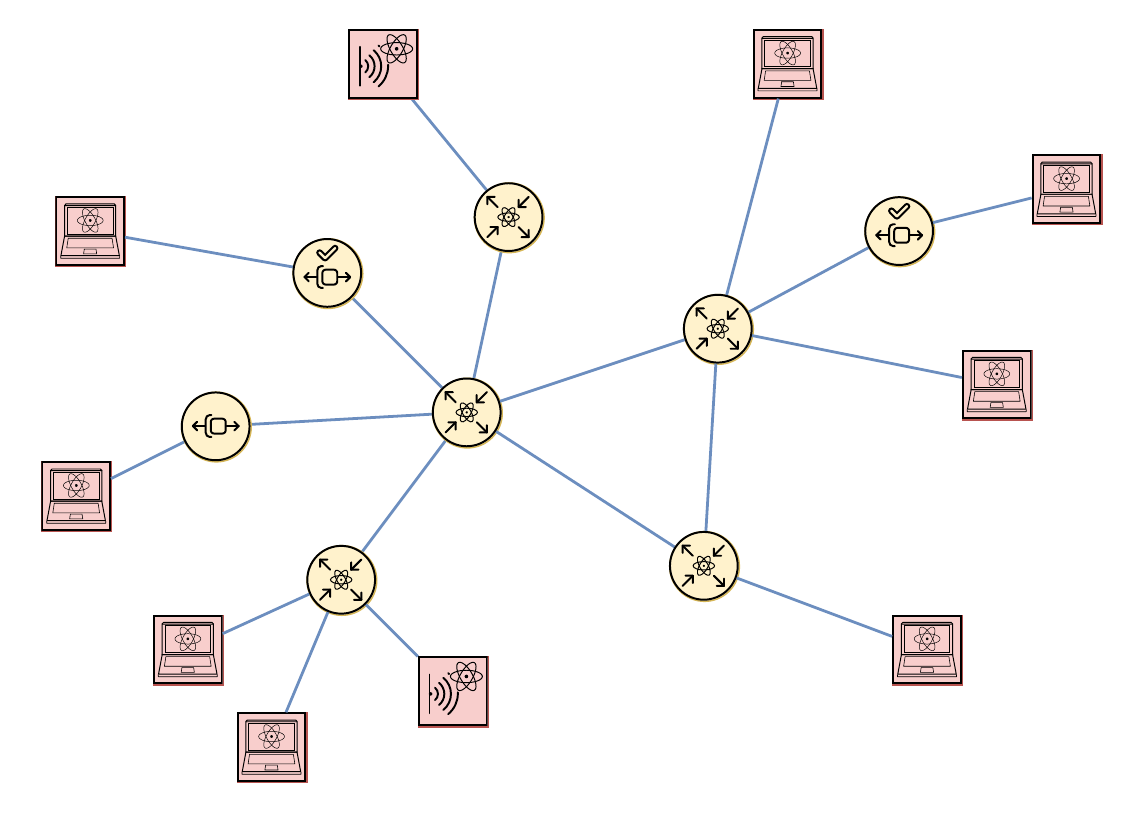}
    \caption[Illustrative topology of a quantum network]{
    An illustrative topology of a quantum network.
    The network consists of end nodes, depicted as red squares (including both computational and sensor types), and intermediate nodes, shown as yellow circles (representing quantum repeaters and routers).
    A complete legend and detailed descriptions of each node's function are discussed in \cref{chapter:architecture-memory}.}
    \label{fig:quantum-network-example}
\end{figure}
Throughout this thesis, the network graphs I present generally correspond to a level of abstraction in which only memory-equipped nodes are included with exceptions when we need emphasis on link types and in \cref{chapter:architecture-all-photonic} where I discuss all-photonic quantum repeaters.
An edge between two nodes implies that they are capable of generating Bell pairs---i.e., there exists a physical quantum channel between them---and such an edge corresponds to what I will refer to as an elementary link~\cite{cody-msm,simon-sneakernet,khartri-policies-elementary-links} that creates \emph{link-level entanglement}.

Having defined the network in terms of nodes and links, we now turn to the first operational task: generating entanglement across one of these elementary links.

\subsection{Link-level entanglement generation}

\nomenclature{Quantum memory}{A physical system capable of storing quantum information for a period of time that preserves the coherence of quantum state and any entanglement the state shared with other system}

As is common in introductory treatments of quantum communication~\cite{mike-ike-book}, foundational protocols such as quantum teleportation typically assume that two parties, Alice and Bob, begin with a pre-shared Bell pair.
In a practical network, this entanglement is not pre-existing but must be actively generated.
This can be achieved in several ways, for instance, by having one party create a Bell pair between a stationary quantum memory and a flying qubit, which is then sent to the other party through a direct physical channel.
The entanglement established between adjacent nodes through such a process is referred to as \emph{link-level entanglement}.

\nomenclature{Link-level entanglement}{An entangled quantum state shared between neighbor nodes in the network through physical quantum link without involving intermediate quantum repeater nodes}
Formally, link-level entanglement is an entangled state created between nodes that are neighbors in the network graph, meaning they are connected by a direct physical quantum channel such as an optical fiber~\cite{cody-msm,simon-sneakernet}.
One of the most common methods for this is heralded entanglement generation, which uses a central measurement station to create entanglement between two distant memories without the direct transmission of a memory-entangled qubit between them, as illustrated in \cref{fig:link-example}.
\begin{figure}[h]
    \centering
    \includegraphics[width=\textwidth]{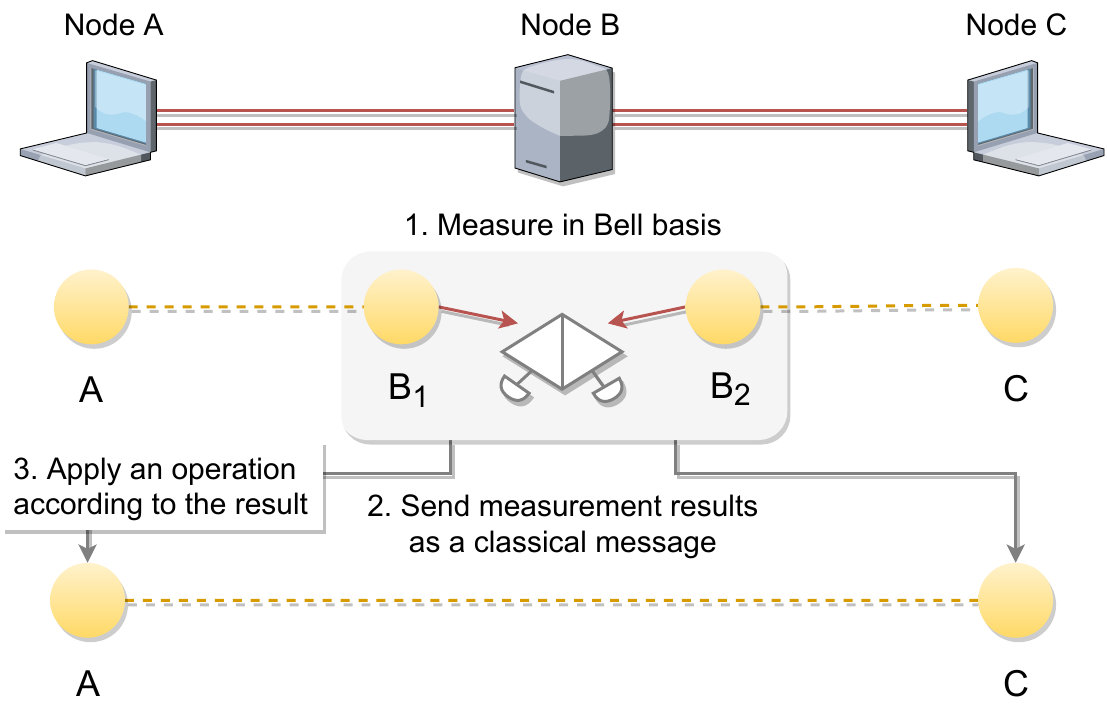}
    \caption[Heralded link-level entanglement generation]{
    A heralded scheme for generating link-level entanglement, characteristic of a Memory-Interference-Memory (MIM) link architecture.
    Two adjacent nodes, Alice (A) and Charlie (C), each entangle a local quantum memory with a photon.
    These photons are sent to a Bell-state analyzer (BSA) node (B).
    A successful measurement at the BSA projects the two distant memories into an entangled state, which is announced, or heralded, by a classical signal.
    (From~\cite{cocori-quisp}.)
    }
    \label{fig:link-example}
\end{figure}

While generating entanglement over a single link is a foundational capability, its reach is severely limited by physical constraints.

\subsection{Entanglement extension}

Although photons are excellent carriers of quantum data for transmission due to their weak interaction with the environment, which helps maintain quantum coherence, there is a fundamental limit to how far entanglement can be distributed over optical fiber because of photon loss.
Unlike in free-space communication, where photons can travel astronomical distances, the transmission probability in fiber-optic channels decays exponentially with distance.
For instance, a typical low-loss optical fiber has an attenuation of about 0.2 dB per kilometer.
This translates to a photon arrival probability of roughly 39\% at 20 km, 10\% at 50 km, and only 1\% at 100 km.
These figures, which consider only the intrinsic loss of the fiber, already severely limit the distance over which Bell pairs can be reliably distributed.
In practice, the total loss is even greater due to system inefficiencies, such as imperfect coupling between fibers and other components.

These limitations are formalized by bounds on point-to-point quantum communication---most notably the PLOB~\cite{plob-bound} and TGW~\cite{tgw-bound} bounds---which are often referred to as the \emph{repeaterless bound}.
They characterize the maximum rate of extractable secret key or entanglement distribution achievable without intermediate quantum nodes, and make clear that direct transmission alone cannot scale to enable a global quantum Internet.

To overcome this exponential attenuation and extend the reach of entanglement distribution, \emph{quantum repeaters} are essential~\cite{dur-repeater,guha-efficient-repeater-above-bound,azuma-rmp-repeater-review}.
These devices break up long communication paths into shorter segments, distribute entanglement over each, and then utilize \emph{entanglement swapping}~\cite{zukowski-entanglement-swap} to effectively ``stitch together'' these segments---allowing entanglement to be extended over much longer distances.

Quantum repeaters are cornerstones of quantum networks and have remained a central focus of research in the field.
We will examine them in more detail later in \cref{sec:quantum-networks:repeaters}.

Successfully extending entanglement across multiple hops, however, introduces a new critical challenge: the accumulation of errors.

\subsection{Error management}

Besides the entanglement distribution rate, an even more critical metric is the fidelity of the distributed states.
High-fidelity entangled state is essential for many quantum networking applications that I have previously mentioned, which require strict fidelity thresholds.
If the fidelity of link-level entanglement is not high enough, performing entanglement swapping will only further degrade the final state's fidelity due to error accumulation.
To address this, we can apply either \emph{entanglement purification}~\cite{bennett-purification, deutsch-purification, briegel-repeater, dur-briegel-purification-review}---which consumes multiple lower fidelity copies to produce fewer, higher-fidelity ones---or perform \emph{quantum error-correction}~\cite{gottesmanSurvivingQuantumComputer,simon-qec-beginners,roffe-qec-intro}, which encode quantum state into a logical state to protect it from future errors, provided the base fidelity is above a certain threshold.

We will discuss both techniques further in \cref{sec:quantum-networks:repeaters,sec:error-management}, as error management is one of the core responsibilities of quantum repeaters.

While the quantum tasks of generation, extension, and error management are paramount, they must be orchestrated by a robust classical control layer responsible for overall network operations.

\subsection{Network operations}

On the classical side, the network also needs to remain operational.
It must collectively handle application requests from end nodes (users) and make progress toward completing them based on certain objectives---such as minimizing the waiting time of end nodes or prioritizing requests by assigning them ranks.
Each node must also monitor its links to ensure they are still operational, while participating in routing, multiplexing, and maintaining operational security.

Most of these tasks are carried out by quantum repeaters.
As I will outline later (in \cref{sec:quantum-networks:repeaters}), the responsibilities of quantum repeaters go beyond simply extending the range of distributed entangled states.
In certain network models~\cite{cicconetti-request-scheduling,dharaSubexponentialRateDistance2021,pant-routing-entanglement,patil-distance-ind-rate}, tasks like request scheduling, routing, and multiplexing are delegated to a central controller.
We will discuss various network models and architectures later in \cref{chapter:architecture-memory}.

In summary, the operation of a quantum network relies on a stack of fundamental tasks, from defining the network topology to managing classical control operations.
Having outlined these essential functions, we now turn our attention to the specific physical and protocol-level implementations that realize them, beginning with a survey of different quantum link architectures.

\section{Quantum link architectures}
\label{sec:quantum-link-architecture}

Establishing link-level entanglement between adjacent quantum nodes is the first crucial step in any repeater-based quantum network.
Various physical link implementations, often termed \emph{quantum link architectures}, have been proposed~\cite{cody-msm,fengEntanglingDistantAtoms2003,duanEfficientEngineeringMultiatom2003,cody-jones-msm-preprint,simon-sneakernet,simonRobustLongDistanceEntanglement2003,huangVacuumBeamGuide2024}.
This section will review a few of these key architectures.
They differ in hardware requirements, specific operational procedures, and detailed performance characteristics, such as entanglement generation rate and achievable fidelity.
Understanding these differences is crucial for guiding the selection of an appropriate architecture based on the components available at hand.

\subsection{Memory-interference-memory link}

\begin{figure}[htbp]
    \centering
    \includegraphics[width=0.85\textwidth]{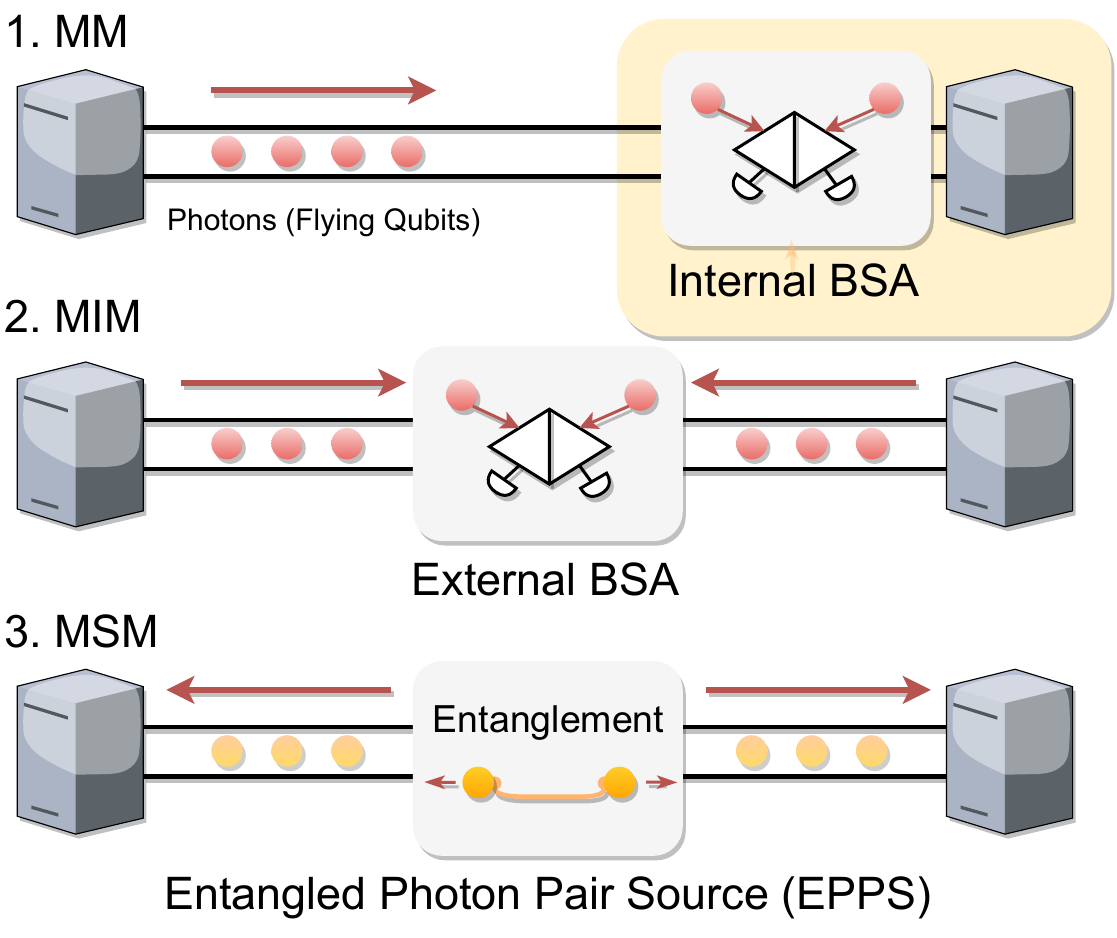}
    \caption[Comparison of primary quantum link architectures]{
    A comparison of three primary quantum link architectures for heralded entanglement generation. 
    \textbf{(1) Memory-Memory (MM):} An asymmetric architecture where the Bell-state analyzer (BSA) is co-located at the receiver node, which can simplify network topology. 
    \textbf{(2) Memory-Interference-Memory (MIM):} A symmetric architecture where two nodes each entangle a local memory with a photon and send the photons to a central BSA for a joint measurement. 
    \textbf{(3) Memory-Source-Memory (MSM):} A central entangled photon pair source (EPPS) distributes entangled photons to each node, where they are subsequently stored in quantum memories.
    (From~\cite{cocori-quisp}.)}
    \label{fig:quantum-link-architectures}
\end{figure}
The memory-interference-memory (MIM) link is also referred to as the midpoint interference link~\cite{simonRobustLongDistanceEntanglement2003,fengEntanglingDistantAtoms2003,duanEfficientEngineeringMultiatom2003}.
In this scheme, two nodes first generate entanglement between their respective quantum memory and a photon.
These photons are then sent through an optical fiber to a central Bell-state analyzer (BSA).
The photons meet at the BSA and undergo a Bell-state measurement (BSM).
Upon a successful BSM, the entanglement is effectively transferred, creating an entangled Bell pair between the two remote quantum memories.
An example of MIM link is shown in \cref{fig:quantum-link-architectures}
A notable drawback is that BSM outcomes are inherently probabilistic.
When implemented with linear optics, the BSM can only distinguish two of the four Bell states, which fundamentally limits the theoretical success probability to 50\%, even with ideal hardware.
Methods to surpass this 50\% limit exist, but they require more complex protocols involving more demanding quantum states, specialized hardware, or the use of non-linear interactions, which will be discussed in \cref{sec:discussion-and-variants}.

\subsection{Memory-memory link}

The memory-memory (MM) link~\cite{munroQuantumMultiplexingHighperformance2010} is sometimes referred to as the sender-receiver model.
It operates similarly to the MIM link, but with a key architectural change.
The BSA is not a standalone station but is instead located inside one of the nodes.
This modification simplifies both network communication and its topology.
This arrangement allows the receiver node to potentially operate with fewer memories because they can be recycled more quickly for subsequent entanglement generation rounds due to near-instantaneous feedback.
This architecture is shown in \cref{fig:quantum-link-architectures}.

\subsection{Memory-source-memory link}

The memory-source-memory (MSM) link, or midpoint source architecture (shown in \cref{fig:quantum-link-architectures}), takes a different approach by starting with a source that directly creates entangled photon pairs~\cite{cody-msm,cody-jones-msm-preprint}.
An entangled photon pair source (EPPS) is placed between the two end nodes.
The photons generated by the EPPS are sent to the two nodes at the ends of the link.
At each node, which must house a BSA, a measurement is performed to entangle the incoming photon with a local quantum memory.
This can be achieved by having memories emit their own photons to interfere with the incoming ones or by using absorptive memories to directly store the photons~\cite{allen-zang-simulation-absorptive-memories}.
While requiring two BSAs reduces the theoretical maximum success rate to 25\% (compared to MIM's 50\%), its key advantage is that memories that fail to entangle can be quickly recycled, significantly reducing their idle time.
This can boost the overall generation rate, especially when the number of available memories is limited.
However, the protocol requires careful design, as poorly tuned parameters for the EPPS generation rate and memory recycling time can severely degrade or even halt the generation process~\cite{soon-optimizing-msm-protocol-qcnc-2024}.

\subsection{Sneakernet link}

A drastically different approach is the sneakernet link, which is analogous to how shipping physical hard drives can be faster for transferring massive datasets than sending them over the internet~\cite{simon-sneakernet}.
This architecture proposes establishing a quantum link by physically transporting quantum memories.
Entanglement is first created between two memories locally, and then one memory is shipped to a distant location.
This is considered a viable option for distributing large volumes of high-fidelity entanglement, particularly given the potentially low entanglement generation rates in early-stage quantum networks.

\subsection{Vacuum beamguide link}

A novel approach for long-distance communication is the vacuum beamguide (VBG) link~\cite{huangVacuumBeamGuide2024}.
Unlike optical fibers, a VBG guides light using an array of precisely aligned lenses housed within an evacuated tube.
This design overcomes the fundamental limit of material absorption in fiber by having photons travel primarily through vacuum.
The VBG promises ultrahigh transparency and an exceptionally low attenuation rate, potentially on the order of $10^{-4}$ dB/km.
Such low loss could enable ground-based, repeater-less quantum communication over continental scales ($>1000$ km).
It offers a high-bandwidth, robust channel that is isolated from environmental perturbations.
While constructing a VBG requires significant investment and engineering to maintain vacuum and lens alignment---similar in scale to the LIGO project~\cite{abbottLIGOLaserInterferometer2009}---it represents a feasible pathway to transform and realize quantum links and quantum networks using current technology.

The choice of link architecture is thus a foundational design decision, balancing hardware complexity against performance. We have surveyed protocols that strategically place sources and measurement stations (MIM, MM, MSM) and alternative concepts that circumvent channel loss entirely (Sneakernet, VBG). Each of these methods establishes a single quantum link with its own characteristic rate and fidelity. Having surveyed how these individual links can be established, we now zoom out to consider the pivotal devices that connect these links together: the quantum repeaters.

\section{Quantum repeaters}
\label{sec:quantum-networks:repeaters}

As previously discussed, point-to-point entanglement distribution via optical fiber cannot scale to a global quantum Internet.
Even within a metropolitan area network, establishing a direct point-to-point connection between every pair of end nodes is impractical.
Quantum repeaters~\cite{dur-repeater,briegel-repeater,azuma-rmp-repeater-review} help alleviate both of these problems.
Their main task is to store link-level entanglement and perform entanglement swapping to create end-to-end entangled states, while also serving as the intermediate nodes that form the network topology.
\Cref{fig:link-and-swap-for-repeater} illustrates these two fundamental operations.
Part~(a) shows the first step, where entanglement is generated over an elementary link using a scheme like the memory-interference-memory (MIM) architecture.
Once two adjacent links have succeeded, the repeater performs the second task, entanglement swapping, which connects the two links into one longer end-to-end entanglement, as depicted in part~(b).
\begin{figure}[t]
    \centering
    \includegraphics[width=\linewidth]{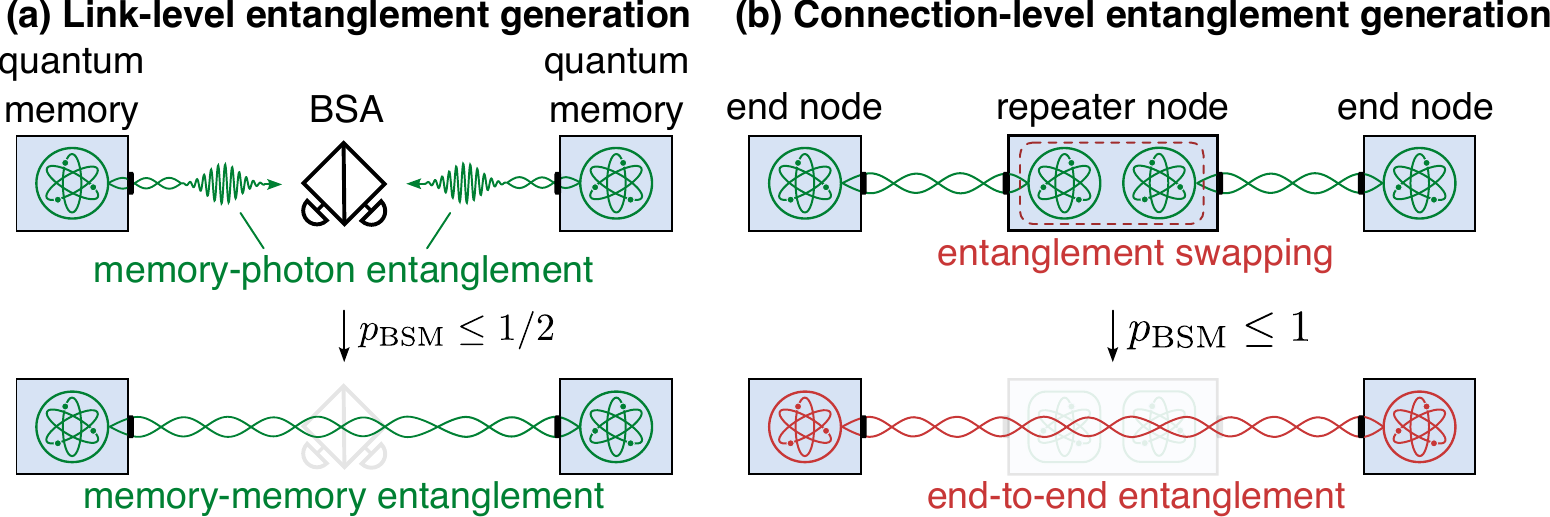}
    \caption[Illustrations of entanglement generation at the link level and inside quantum repeater between memories]{(a) Entanglement generation using a photonic \textit{Bell-state measurement} (BSM) at the \textit{Bell-state analyzer} (BSA) in the \textit{memory-interference-memory} (MIM) link architecture. Success probability is denoted by $p_{\text{BSM}}$. (b) After successful generation of two neighboring links, the repeater performs entanglement swapping on its quantum memories to create an end-to-end entangled connection.
    (From~\cite{joaquin-michal-cross-validation-quantum-network-simulators}.)}
    \label{fig:link-and-swap-for-repeater}
\end{figure}

In this thesis, I will make a clear distinction in terminology to precisely define the roles of network nodes. I define a quantum repeater as a node that connects to exactly two neighboring nodes, its primary function being to extend entanglement along a line. I will term a node with more than two connections a \emph{quantum router}, as it is responsible for directing entanglement across different paths in a network. Finally, while other literature sometimes uses the term ``quantum switch,'' I will reserve the term switch for a purely optical device (without quantum memories) that physically routes photons from one link to another without performing measurements.

\subsection{Classifications of repeaters}

Muralidharan et al.~\cite{muralidharan-repeater-generation} categorize quantum repeaters into three generations based on how they manage photon loss and operational errors, as summarized in Table~\ref{tab:repeater_generations}.

\begin{table}[htbp]
    \centering
    \caption[Generations of quantum repeaters and their error management strategies]{Generations of quantum repeaters and their error management strategies.}
    \label{tab:repeater_generations}
    \begin{tabular}{lll}
        \toprule
        \textbf{Gen} & \textbf{Photon Loss Management} & \textbf{Error Management} \\
        \midrule
        1G & Heralded Entanglement Gen. & Heralded Entanglement Purification \\
        2G & Heralded Entanglement Gen. & Quantum Error Correction (QEC) \\
        3G & Quantum Error Correction (QEC) & Quantum Error Correction (QEC) \\
        \bottomrule
    \end{tabular}
\end{table}

\textbf{First-generation (1G)} repeaters manage both error types using heralded, probabilistic mechanisms~\cite{briegel-repeater,dur-repeater}.
Link-level Bell pairs are heralded by successful Bell-state measurements (BSMs), and operational errors are typically addressed via heralded entanglement purification protocols.

\textbf{Second-generation (2G)} repeaters still rely on heralded generation to manage photon loss but use quantum error-correcting codes (QEC) to handle operational errors~\cite{jiang-2g-repeater,zwergerHybridArchitectureEncoded2014,sangouardQuantumRepeatersBased2009,kimiaeeasadiProtocolsLongdistanceQuantum2020,kimiaeeasadiQuantumRepeatersIndividual2018,mazurekLongdistanceQuantumCommunication2014,liLongRangeFailuretolerant2013,munroQuantumMultiplexingHighperformance2010}.
This involves encoding physical link-level Bell pairs into logical Bell pairs.
2G networks place stringent demands on the fidelity of local gates and memory coherence times, as QEC codes only suppress errors if the physical error rates are below a certain threshold.

\textbf{Third-generation (3G)} repeaters use QEC to handle both photon loss and operational errors~\cite{fowlerSurfaceCodeQuantum2010,muralidharanOvercomingErasureErrors2017,muralidharanUltrafastFaultTolerantQuantum2014,munroQuantumCommunicationNecessity2012a,borregaard-3g-repeater,rozpedekQuantumRepeatersBased2021,diadamoPacketSwitchingQuantum2022}.
In contrast to 1G and 2G networks, 3G repeaters enable a store-and-forward transmission model for quantum data, akin to packet switching~\cite{borregaard-3g-repeater,munroQuantumCommunicationNecessity2012a,diadamoPacketSwitchingQuantum2022}.
This requires decoding and re-encoding photonic quantum states at each repeater station, demanding extremely high-fidelity components and potentially denser repeater placement.

Note that this generational framework is not the only classification in the literature.
Repeaters can also be classified by their communication model (e.g., entanglement swapping vs. packet forwarding) or by their underlying physical platform~\cite{razaviIntroductionQuantumCommunications2018}.

\subsection{Choosing the right generation}

The classification of repeaters into generations was introduced to reflect a progression in technological capability~\cite{muralidharan-repeater-generation,munro-inside-quantum-repeaters}.
Higher generations correspond to more demanding hardware requirements but, in return, enable different error handling strategies that can unlock significantly higher entanglement distribution rates.

However, a higher generation number does not inherently imply better performance in all contexts.
The choice of generation is a design decision aimed at matching the error management strategy to the characteristics of the available hardware.
Each generation operates optimally within a specific regime.
For example, if a network has high-fidelity gates and long-lived memories but suffers from high fiber loss, 2G repeaters may outperform 3G designs.
Similarly, if the initial link-level entanglement is very noisy, 1G repeaters using purification are likely to achieve the best possible end-to-end fidelity.
Thus, the framework is a guide for optimizing network performance with a given technology, not a strict hierarchy of quality.

In recent years, the boundaries between these generations have begun to blur.
Advancements in quantum error correction demonstration~\cite{google-surface-code-one-qubit, quantinuum-microsoft-logical-below-threshold, bluvstein-quera-qec} and the realization that real-world systems must handle multiple error types simultaneously suggest that future networks will likely employ hybrid strategies, combining techniques from all three generations.

\nomenclature{Quantum network interface card (QNIC)}{A quantum analogue to a classical Network Interface Card (NIC), the QNIC manages a node's quantum communication links and the communication qubits involved. Its fundamental distinction is its capacity for local quantum processing on the states it handles, effectively transforming into a specialized, limited-capability quantum computer optimized for network protocols. The QNIC is also typically equipped to interact and coordinate operations with additional computation or algorithmic qubits and processing units housed within the larger network node}

\subsection{All-photonic quantum repeaters}

Distinct from memory-based repeater paradigms, all-photonic quantum repeaters represent another major approach to extending quantum communication distances.
These schemes, most notably the repeater graph state (RGS) based approach~\cite{azuma-rgs}, aim to circumvent the need for long-lived quantum memories at intermediate nodes altogether.
Instead, they rely on the creation and manipulation of large, highly entangled multipartite photonic states that span multiple repeater segments.
Entanglement is then established via a sequence of adaptive measurements performed on these photonic states.
While offering potential advantages such as high speed and avoidance of memory decoherence, all-photonic repeaters present their own unique set of challenges, particularly concerning the generation of large photonic resource states and the management of photon loss.
A detailed exploration of all-photonic repeaters and how our architectural framework can be extended to integrate them is the subject of \cref{chapter:architecture-all-photonic}.

Having categorized the different generations of repeaters based on their approach to errors, we now delve deeper into the specific mechanisms they use for this critical function.

\section{Error management}
\label{sec:error-management}

In any real-world quantum system, imperfections give rise to errors that degrade the fidelity of quantum resources.
In the context of quantum networking, we face two primary sources of errors.
The first is the transmission channel error, dominated by photon loss, which limits the distance of communication as discussed in \cref{sec:quantum-networks:repeaters}.
The second, and the focus of this section, is operational error.
This includes decoherence in quantum memories, imperfect gate operations, and noise from the generation hardware itself.
The result is that the initial \emph{base fidelity} of a freshly generated Bell pair may be insufficient for a given application, and as these states are connected via entanglement swapping, errors from each link accumulate, further degrading the fidelity of the final end-to-end state.

To mitigate operational errors, quantum networks employ two complementary techniques: quantum error correction (QEC) and entanglement purification.
QEC proactively protects quantum states from future degradation, while purification reactively distills high-fidelity pairs from ensembles of noisy ones.
This section will review both of these principal strategies, beginning with the foundational concepts of QEC.

\subsection{Quantum error correction}
\label{subsec:managing-operational-error-with-qec}

Quantum error correction (QEC) is a technique that enables a system to resist a bounded number of errors by encoding the information of a single logical qubit into a redundant state of multiple physical qubits.
Errors can then be identified and corrected non-destructively via syndrome measurements (recall \cref{sec:stabilizer-formalism}).

However, QEC is not a panacea; its effectiveness hinges on the physical error rate being below a certain \emph{error threshold}.
If this condition is met, logical errors can be suppressed.
Above the threshold, the correction procedure introduces more noise than it removes~\cite{gottesmanSurvivingQuantumComputer}.
This is why QEC is typically associated with higher-generation repeaters, which assume higher-fidelity components.

In the context of quantum repeaters, QEC serves two distinct roles that define the different repeater generations.
Second-generation (2G) repeaters use QEC primarily to protect against local operational errors while the link-level entanglement generation is still heralded.
The process involves performing remote operations to entangle two logical qubits that reside in separate repeater nodes, assuming the initial link-level Bell pairs used are already of high quality~\cite{jiang-2g-repeater}.
Third-generation (3G) repeaters, in contrast, extend the use of QEC to combat channel loss directly.
The first 3G strategy performs remote operations using codes that tolerate some photon loss, effectively treating a failed link-generation attempt as a correctable erasure error while maintaining a high threshold for operational errors~\cite{fowlerSurfaceCodeQuantum2010,rametteFaulttolerantConnectionErrorcorrected2024,sinclairFaulttolerantOpticalInterconnects2025}.
The second 3G strategy uses loss-tolerant codes to directly encode and transmit flying logical qubits, making the initial link generation itself robust to photon loss~\cite{azuma-rgs,borregaard-3g-repeater,hilaire-logical-bsm}.

\subsection{Probabilistic entanglement purification via error detection}
\label{subsec:two-way-entanglement-purification}

When the base link-level fidelity does not satisfy the QEC threshold, a different strategy is required.
In the context of first-generation repeaters, where loss is high and base fidelity is low, networks can resort to probabilistic entanglement purification via error detection~\cite{dur-briegel-purification-review,bennett-purification,deutsch-purification}.
This process boosts the fidelity of noisy Bell pairs by consuming multiple copies to produce a smaller number of higher-fidelity ones.
This ``detect and discard'' approach serves as an error detection mechanism, where some pairs are sacrificed to certify the quality of others.
Because it is often more efficient to generate a new purified state than to repair a faulty one, this method is well-suited for preparing known target states like Bell pairs.

This method of purification is probabilistic and often referred to as \emph{two-way heralded entanglement purification}.
The ``two-way'' nature refers to the need for classical messages to be exchanged between parties to herald the success or failure of a purification attempt.
The main strategy is based on joint stabilizer measurements across multiple copies of Bell pairs~\cite{bennett-purification, deutsch-purification}.
This approach is powerful because its theoretical fidelity threshold is 50\%, assuming noiseless quantum operations.
Other two-way protocols have also been studied, such as entanglement breeding, which requires ideal Bell pairs as a catalyst~\cite{bennett-purification}, or error filtration, which relies on non-destructive measurements~\cite{gisinErrorFiltrationEntanglement2005}.

Recent work has extended this stabilizer-based approach by more explicitly using the structure of quantum error-correcting codes.
In these protocols, syndrome extraction is used to detect errors, but instead of applying a correction, a detected error still triggers the probabilistic ``discard'' step~\cite{pattison-constant-rate-purification,shi-stabilizer-purification}.
This approach is notable because it can achieve a non-vanishing distillation rate in the asymptotic limit---a property known as a constant-rate yield---offering potential performance advantages over traditional methods.

\subsection{Deterministic entanglement purification with QEC codes}

In contrast to the probabilistic, two-way protocols, another class of purification is deterministic and requires only one-way classical communication.
In these schemes, often called \emph{one-way purification}, there is no probabilistic discarding of resources.
The key distinction lies in the classical communication flow: while both parties perform local measurements on their share of the entangled states, only one party sends their measurement results to the other.
The receiving party then uses this one-way information to determine and apply the final corrective operations to distill the high-fidelity states.

The foundational protocol for this approach is \emph{entanglement hashing}, which was first proposed for the asymptotic limit of an infinite ensemble of Bell pairs~\cite{bennettMixedstateEntanglementQuantum1996}.
Hashing can, in principle, distill perfect entanglement with a high rate.
However, its reliance on random quantum codes makes the classical decoding process complex, and its intolerance to any operational noise makes the original protocol impractical for real-world hardware with a finite number of resources~\cite{zwergerRobustnessHashingProtocols2014}.

Recent approaches improve on this by employing structured quantum error-correcting codes to deterministically map many noisy Bell pairs into fewer, higher-fidelity logical ones.
In particular, quantum low-density parity-check (qLDPC) codes~\cite{breuckmannQuantumLowDensityParityCheck2021} have emerged as a promising foundation.
These codes support efficient decoding and maintain a constant distillation rate, even with finite resources~\cite{ataides-constant-rate-purification-high-rate-codes}.

Despite these advantages, one-way purification has a key limitation: it demands a higher initial fidelity of the Bell pairs.
Unlike two-way protocols, which tolerate base fidelities down to 50\%, qLDPC-based one-way purification schemes require infidelities below 5\% to yield a net fidelity gain, assuming gate and measurement errors around 0.1\%.
This positions one-way purification closer to the operating regime of full QEC, making it powerful but currently more demanding in practice.

\subsection{System-level error management strategies}
\label{subsec:system-level-strategy}

In practice, while a single technique like two-way purification is sufficient to achieve high fidelity from low-fidelity base resources, this approach is often not optimal across the entire fidelity range.
Limited by communication latency, two-way purification protocols tend to be less efficient in terms of throughput at higher fidelities compared to other approaches.
In these regimes, deterministic one-way purification or direct QEC-based schemes often become more resource-efficient and yield a higher rate.
This reality motivates a hybrid, multi-stage strategy, creating what can be termed a purification pipeline.
In this model, two-way protocols would first be used to elevate low-fidelity pairs to an intermediate fidelity, after which more efficient one-way or QEC protocols could take over to achieve the final target fidelity.
The specific architecture of such a pipeline, including the crossover points between different protocols, is highly dependent on the system's noise model and underlying hardware technology.

A more difficult challenge lies in coordinating these techniques across the entire connection path to produce an end-to-end Bell pair.
This requires solving a joint optimization problem involving both purification and entanglement swapping schedules, while managing trade-offs among fidelity, rate, and memory decoherence~\cite{shchukin-optimal-entanglement-swapping,krastanov-purification,zangNoGoTheoremsUniversal2025,zangEntanglementPurificationQuantum2025,jansen-enumerating-bilocal-clifford-purification,rdv-quantum-networking-book,gidney-tetrationally-compact-purification,jiang-dp}.
Ultimately, system-level error management must draw on a complementary toolkit.
Two-way purification plays an essential role in 1G repeaters, where it distills usable states from highly noisy physical resources.
In 2G and 3G repeaters, QEC becomes the central tool for protecting already high-fidelity states from additional errors and enabling fault-tolerant operations.
As quantum networks become more practical, hybrid strategies will likely dominate---blurring the boundaries between repeater generations and combining multiple error management techniques to adapt dynamically to the system's needs.

\section{Challenges in realizing quantum networks}
\label{sec:quantum-networks:challenges}

While the basic principles of quantum communication are well established, translating these principles into scalable, reliable, and efficient quantum networks presents a wide array of practical challenges.
These obstacles, spanning from physical hardware limitations to complex system-level control, represent the primary bottlenecks to deploying quantum technologies in real-world environments.
This section provides an overview of the most pressing engineering and architectural challenges, which collectively motivate the architectural work presented in the subsequent chapters of this thesis.

\subsection{Physical layer and link-level challenges}

At the most fundamental level, a quantum network is built upon the ability to generate high-quality entanglement over physical links.
However, this process is probabilistic and fraught with difficulty.
For memory-based systems, generating link-level entanglement is often hampered by photon loss in optical fibers and imperfections in sources and detectors.
For all-photonic repeaters, the challenge lies in creating large, multipartite graph states with flying qubits.
While deterministic generation via quantum emitters is theoretically efficient and has been demonstrated in small scale, current hardware often relies on probabilistic photonic fusion, which can incur a large resource overhead.

Beyond generation, a network's performance depends on having a clear and continuous understanding of its physical components.
This requires protocols for \emph{link characterization}---constantly monitoring parameters like fidelity, latency, and generation rate to inform dynamic decision-making.
Furthermore, the lack of \emph{standardized network elements} is a key barrier to building modular and interoperable multi-vendor networks.
Without agreed-upon standards for physical interfaces, photon wavelengths, or detector characteristics, scaling beyond bespoke laboratory testbeds remains difficult.

The difficulty of overcoming these physical-layer hurdles in experimental hardware, where iteration is slow and costly, underscores the critical need for high-fidelity simulation.
Exploring the network-wide impact of link-level noise, loss, and component imperfections is a task perfectly suited for a robust simulation platform, directly motivating the research question on \textbf{how simulation can guide the design and validation of scalable quantum networks}.

\subsection{Network control and resource management}

Beyond the physical layer, a quantum network's efficiency is determined by how it manages its scarce and fragile resources, which requires a sophisticated classical control plane.
This introduces several interdependent optimization problems.
The core challenge is the co-optimization of \emph{scheduling, routing, and multiplexing}.
Nodes must efficiently schedule entanglement attempts, select optimal routes that balance fidelity and rate, and fairly multiplex limited resources like quantum memories among competing users.

These decisions depend on having \emph{effective cost metrics for routing} that go beyond classical measures like hop-count to include quantum parameters such as fidelity, noise channel characteristics, and coherence time.
Furthermore, because link performance can fluctuate, the network must support \emph{real-time monitoring and adaptation}, allowing protocols to adjust dynamically to changing conditions.
The interplay between these tasks is complex; for instance, routing choices for an all-photonic path might prioritize different metrics than for a memory-based path, and the ``circuit-switched'' nature of the all-photonic approach presents unique multiplexing challenges that are not yet fully understood.

Finally, with multiple users sharing the network, \emph{ensuring fair resource allocation} is a critical socio-technical problem.
Mechanisms must be developed to arbitrate access to scarce resources, enforce priorities, and provide equitable service, all while accounting for the probabilistic nature of quantum operations.
Developing and testing these sophisticated control strategies in the face of such complexity is almost intractable without powerful modeling tools, highlighting how \textbf{simulation becomes indispensable for designing and validating effective resource management protocols}.

\subsection{System-level and architectural hurdles}

At the highest level, significant architectural hurdles must be overcome to ensure the network functions as a coherent, scalable system.
A primary obstacle is \emph{managing heterogeneity and ensuring interoperability}.
Real-world quantum networks will inevitably be built from a diverse mix of hardware, repeater technologies, and protocol generations.
Creating a unifying framework that can abstract away these differences while still leveraging their unique capabilities is a central architectural problem.

This challenge is amplified by the need for \emph{tight timing and synchronization}, as many quantum protocols require sub-nanosecond precision between distant nodes.
Furthermore, designing \emph{robust end-to-end protocols} is complicated by the enormous classical communication overhead some schemes introduce.
All-photonic repeaters, for instance, can require processing a massive volume of classical data at the end nodes, creating a potential bottleneck on classical processing power.
Finally, the question of \emph{end-node participation} reveals a deep architectural disconnect, as it remains unclear how memory-equipped end nodes can best interface with a memory-less all-photonic repeater, or how two networks built from different technologies should be connected.

These disconnections at the system level between different technologies and operational paradigms crystallize the core questions that motivate this thesis.
They force us to ask: \textbf{How do we design a unified architecture to manage this heterogeneity?}
And, as a key part of that, \textbf{how can all-photonic quantum repeaters become practical and interoperable with memory-based designs?}

\subsection{The need for a holistic architecture}

These challenges are deeply interconnected and do not exist in isolation.
Addressing them requires a holistic architectural vision that bridges hardware diversity, supports dynamic protocol adaptation, and enables performance reasoning at multiple levels of abstraction.
Such an architecture must be modular, programmable, and support scalable control, without being tightly coupled to a single hardware platform or repeater generation.
In the following chapters, we propose such a solution: a unifying framework built on the Quantum Recursive Network Architecture (QRNA) and programmable RuleSet-based protocols, designed to manage heterogeneous technologies and pave the way for a flexible and scalable Quantum Internet.

\clearpage
\chapter{Architecture and protocols for memory-based quantum repeaters}
\label{chapter:architecture-memory}

Quantum communication is advancing rapidly, with experimental demonstrations ranging from laboratory-scale to metropolitan-scale testbeds~\cite{valivarthiTeleportationSystemsQuantum2020,davisEntanglementSwappingSystems2025,davisTeleportationEntanglementFermilab2024,ruskucMultiplexedEntanglementMultiemitter2025,pompiliRealizationMultinodeQuantum2021,hermansQubitTeleportationNonneighbouring2022,vanleentEntanglingSingleAtoms2022,krutyanskiy-northup-ion-trap-entanglement,senaRobustHighFidelityQuantum2025,valivarthiQuantumTeleportationMetropolitan2016a,sunQuantumTeleportationIndependent2016,chungDesignImplementationIllinois2022,alshowkanReconfigurableQuantumLocal2021,craddockAutomatedDistributionPolarizationEntangled2024,liuCreationMemoryMemory2024,knautEntanglementNanophotonicQuantum2024}.
While many efforts have explored the design of protocols at various abstraction layers~\cite{aparicioProtocolDesignQuantum2011,bacciottiniLeveragingInternetPrinciples2025a,beauchampModularQuantumNetwork2025a,dahlbergLinkLayerProtocol2019,gauthierArchitectureControlEntanglement2023,gauthierControlArchitectureEntanglement2023a,illianoQuantumInternetProtocol2022,kumarQuantumInternetTechnologies2025a,liSurveyQuantumInternet2024,oslovichCompilationStrategiesQuantum2025,ruSynchronizationControlPlaneProtocol2024,dahlberg-netqasm}, few have integrated these components into a coherent, end-to-end architectural framework~\cite{delledonneOperatingSystemExecuting2025,fangQuantumNETworkTheory2023a,pirkerQuantumNetworkStack2019,vechtQoalaApplicationExecution2025a}.
If the history of the classical Internet has taught us anything, it is that hardware and architecture must evolve hand-in-hand; otherwise, the network risks becoming a patchwork of incompatible solutions that hinder scalability.
At present, architectural development is lagging behind hardware progress.
It is therefore essential to begin establishing architectural foundations now---foundations that can both guide hardware development and adapt to future applications.
While multiple network architectures will likely coexist, a true Quantum Internet will ultimately require them to interoperate under a unified framework.
\extrafootertext{Portions of this chapter have been adapted from the following publications:
\begin{itemize}
    \item \fullcite{rdv-qi-architecture}
    \item \fullcite{cocori-quisp}
    \item \fullcite{joaquin-michal-cross-validation-quantum-network-simulators}
\end{itemize}
}

In this chapter, I will outline our attempt at developing a comprehensive architecture for networks utilizing memory-based quantum repeaters.
Our framework builds upon and concretizes two powerful, pre-existing conceptual ideas.
The first is the architectural principle of the \emph{quantum recursive network architecture (QRNA)}, an adaptation by Van Meter, Touch, and Horsman~\cite{vanmeterRecursiveQuantumRepeater2011} of the classical Recursive Network Architecture (RNA) developed by Touch~\cite{touchDynamicRecursiveUnified2011,touchRNAMetaprotocol2008,touchRecursiveNetworkArchitecture2006a}.
QRNA is motivated by the principle that managing small, independent networks is easier than managing a single monolithic one; it uses layers of abstraction to hide internal complexity and simplify protocol reasoning.
The second foundational idea is the RuleSet-based protocol, initially proposed by Matsuo et al.~\cite{kaaki-ruleset-sim,kaaki-master-thesis} to manage errors for link-level entanglement.
Our collaborative work, detailed in this chapter, was to extend and concretize these concepts into a complete, end-to-end framework.

These concepts are brought to operational life by the Quantum Routing Software Architecture (QRSA), a modular framework designed to manage connection establishment, execute RuleSet-driven operations, handle routing decisions, and control the underlying quantum hardware.
A key part of this effort was the development of the Quantum Internet Simulation Package (QuISP), which implements these architectural principles.
My specific contribution within this collaborative effort was to mature the simulator by completing its error models and improving its performance, and to concretize the architecture by formalizing the RuleSet protocol for end-to-end connections.

The chapter is structured as follows.
It begins with an overview of the fundamental design decisions inherent in building any quantum network (\cref{sec:architecture-memory:design-decisions}).
Subsequent sections (\cref{sec:architecture-memory:expressing-connection-semantics-rulesets,sec:architecture-memory:network-services,sec:architecture-memory:networking,sec:architecture-memory:internetworking-qrna}) then systematically present the concrete protocols and mechanisms our framework provides.
The specifics of the Quantum Routing Software Architecture (QRSA) are detailed in \cref{sec:architecture-memory:qrsa}.
To validate our simulation-based approach, the chapter then presents a rigorous cross-validation study of QuISP against SeQUeNCe~\cite{wuSeQUeNCeCustomizableDiscreteevent2021}, another leading simulator, in \cref{subsec:architecture-memory:cross-validation}.
This result establishes the credibility of our architectural model and the simulation tools used to evaluate it.
Finally, the architecture is considered as a whole, and we discuss how other proposals in the literature can be viewed within the same framework (\cref{sec:architecture-memory:discussion}).

\section{Design decisions}
\label{sec:architecture-memory:design-decisions}

Building a coherent and scalable architecture for the Quantum Internet requires addressing several foundational design questions, many of which share similarities to the challenges faced during the development of the classical Internet.
The choices made at this stage are critical, as they profoundly influence the network's core functionality, its capacity to support diverse applications, the efficiency of resource utilization, the complexity of network management, and ultimately, its overall robustness and scalability.
Establishing these architectural underpinnings is essential not only for guiding near-term hardware development but also for ensuring the network can adapt to future technological advancements and operational insights.

Navigating the design space involves making deliberate choices across several key areas.
These decisions, which we will explore in detail throughout this chapter within the context of our proposed architecture, are briefly introduced below.

\textbf{The fundamental network service}: The fundamental services provided by the network determine how useful the network is to higher-layer applications.
What is the primary quantum resource provided to users?
Should the network focus solely on distributing Bell pairs, delegating the task of building more complex states to applications?
Or should it offer richer services like the direct generation of multipartite states (e.g., graph states or W-states) or even qubit teleportation?
This choice impacts protocol complexity and the division of complex tasks between the network and user space.
(\Cref{sec:architecture-memory:network-services})

\textbf{The nature of connections}: How should end-to-end entanglement be established and maintained?
Should paths be predetermined and resources reserved (akin to circuit-switching), which better suits repeater schemes requiring intricate scheduling of swapping and purification (like 1G/2G repeaters or certain all-photonic designs)?
Or should entanglement be extended hop-by-hop (like packet-switching), offering flexibility but potentially sacrificing guarantees, fidelity, and efficiency?
Furthermore, we must define the stateful nature of these connections, including the signaling mechanisms (handshakes) for setup and teardown, and whether control is centralized or distributed.
(\Cref{sec:architecture-memory:expressing-connection-semantics-rulesets})

\textbf{Application interface}: How do end-user applications interact with the network?
Defining a clear Application Programming Interface (API)---a quantum equivalent of the classical socket---is crucial for abstracting the underlying network complexity and enabling application development by separating appliation logic from the complexity of network resource management and the distributed quantum operations required to generate link-level Bell pairs, perform purifications, or entanglement swapping.
(\Cref{sec:architecture-memory:network-services})

\textbf{Conveying user requests}: How do applications communicate their needs (e.g., target fidelity, rate, state type, number of pairs) to the network?
This involves designing request protocols and establishing clear naming and tracking conventions for quantum resources, especially when multiplexing resources among multiple concurrent applications.
(\Cref{sec:architecture-memory:expressing-connection-semantics-rulesets})

\textbf{Network node roles}: What distinct functional roles should different network components play?
Defining clear roles for end nodes, various generations and classes of quantum repeaters and quantum routers (memory-based and all-photonic), and other devices is essential for modularity and interoperability, separating technological capabilities from architectural function.
(\Cref{sec:architecture-memory:networking})

\textbf{Routing strategy}: How are paths selected through the network to connect end nodes?
While seemingly analogous to classical routing, quantum routing algorithms must also account for unique challenges like qubit decoherence, probabilistic operations, biased quantum error channels, and the tight coordination between quantum operations and classical signaling.
(\Cref{subsec:architecture-memory:routing})

\textbf{Resource multiplexing}: How are shared quantum resources---memories, channels, entangled states---allocated among competing user requests?
Various strategies exist, from reserving resources (circuit-switching) to multiplexing with dynamic allocation based on time, buffer space, or statistical priority, each with implications for efficiency, fairness, and Quality of Service (QoS).
Mechanisms for authentication, authorization, and accounting (AAA) are also intrinsically linked to resource allocation.
(\Cref{subsec:architecture-memory:multiplexing})

\textbf{Security considerations}: Beyond the inherent security features like QKD, quantum networks introduce novel vulnerabilities~\cite{satohAttackingQuantumInternet2021,blakelyQuantumInformationSystem2024,suzukiClassificationQuantumRepeater2015}.
We must consider threats ranging from physical side-channel attacks on devices to architectural vulnerabilities like denial-of-service on quantum resources or impersonation via classical control channels, particularly in the context of distributed quantum protocols.
(\Cref{sec:architecture-memory:networking})

\textbf{Internetworking}: How can independently operated quantum networks, potentially based on different technologies or protocols, connect and interoperate to form a larger, global Quantum Internet?
This requires standardized protocols, cross-domain addressing schemes, and mechanisms for establishing trust and translating services between networks.
(\Cref{sec:architecture-memory:internetworking-qrna})

While this overview of design decisions is not exhaustive, addressing these fundamental questions provides a framework for constructing a robust and adaptable quantum network architecture.
We will delve into these areas in the following sections, presenting our approach based on the RuleSet-based protocols and the Quantum Recursive Network Architecture (QRNA) to provide concrete solutions and mechanisms addressing these design decisions.


\section{Defining quantum network services}
\label{sec:architecture-memory:network-services}

Entanglement is the resource that fuels quantum applications such as QKD, teleportation, quantum sensing, delegated quantum computation, or distributed quantum computation.
Continuous, reliable and efficient replenishment of this resource is one of the primary tasks of a quantum network.
However, entangled states come in many shapes and sizes~\cite{horodeckiQuantumEntanglement2009,maMultipartiteEntanglementMeasures2024,amicoEntanglementManybodySystems2008,plenioIntroductionEntanglementMeasures2006,hajdusekDirectEvaluationPure2013a,hajdusekEntanglementPureThermal2010}.

This section delves into the first fundamental design decision: defining the services a quantum network offers to its users.
We first explore the challenges and trade-offs inherent in this choice and then describe the approach taken within our proposed architectural framework.

\subsection{Fundamental network service and semantics}
\label{subsec:architecture-memory:fundamental-service}

The design of a quantum network must begin with a clear definition of its fundamental services---what quantum states or capabilities the network is expected to provide to end users.
These decisions determine the complexity of the protocols at the network layer and the applications that run above it.
A minimalist design may treat \emph{Bell pairs} as the primary network-level service.
Bell pairs serve as the smallest unit of entanglement and the foundation for nearly all quantum communication protocols.
Restricting the service to Bell pair distribution simplifies the network's responsibilities.
However, this approach shifts complexity to the applications, which must construct multipartite or fault-tolerant states themselves and manage the coordination overhead that entails.

At the other end of the spectrum, networks may offer richer services such as multipartite entangled states (e.g., GHZ, W-states~\cite{durThreeQubitsCan2000}, graph states~\cite{hein-graph-state-pra,hein-graph-state-arxiv}), or fault-tolerant state teleportation.
While applications can, in theory, synthesize these states from Bell pairs, direct network-level support may offer efficiency gains and reduce sensitivity to noise by internalizing complex procedures like direct graph state generations or supporting the delivery of error-correcting code encoded logical qubits.

In our architecture, we adopt Bell pair distribution as the core network service, as it allows for a well-scoped, foundational architectural framework.
As we will see later in \cref{sec:architecture-memory:discussion}, our architecture is also extensible to multipartite entanglement services.
These extensions build naturally on the base protocols discussed here and suggest a path for future network capabilities.

Importantly, the semantics of Bell pair distribution are not merely those of passive delivery.
Even in this basic model, distributed quantum computation occurs along the path via entanglement swapping, possibly combined with purification at intermediate repeater nodes.
A proper service definition must account for this processing, as it directly affects fidelity, latency, and trust assumptions in the network.

In addition to quantum state delivery, timing information is often a critical part of the service.
Applications in distributed quantum sensing~\cite{degen-sensing,proctor-quantum-sensing,proctorMultiparameterEstimationNetworked2018,giovannetti-advances-in-metrology,ge-distributed-metrology,gottesman-longer-baseline-telescopes} and clock synchronization~\cite{Ilo-okeke-quantum-clock} require precise knowledge of when entanglement was established or when measurement events occurred.
Hence, high-precision timestamps may need to be bundled into the service interface offered by the network.

In summary, while this work focuses on Bell pair distribution as the baseline network service, it should be emphasized that its semantics involve nontrivial quantum processing along the route and may include auxiliary data like timestamps.
This careful service definition sets the stage for scalable and extensible quantum network architectures.

\subsection{Quantum sockets}
\label{subsec:architecture-memory:sockets}

With Bell pair distribution defined as the fundamental service of the quantum network, the next question is how applications at the end nodes interact with this service.
This is the role of the Quantum Socket API~\cite{satohQuantumSocketsNISQ2018}---an abstraction layer that mediates between quantum applications and the underlying network infrastructure, much like how classical socket APIs decouple application logic from network protocols in the Internet.

The Quantum Socket provides a uniform interface for creating entanglement-based connections between nodes.
Applications can use it to request Bell pairs without needing to manage the probabilistic behavior of physical-layer protocols or the complexity of entanglement swapping and purification.
If the network supports richer services beyond Bell pair delivery, such as teleporting quantum states or accessing shared randomness, these could also be provided via the socket API.

To accommodate a range of use cases, the API is designed to be node-type agnostic, supporting a variety of application endpoints.
We can categorize applications into three types~\cite{vanmeterOptimizingTimingHighSuccessProbability2017}: (i) extracting correlated classical data (for QKD and sensor applications), (ii) distributed Clifford operations where correction operations can be performed posthoc, and (iii) distributed non-Clifford operation where states need full confirmation of success and correction before being consumed.
For extracting correlated classical results, interaction with the network is largely synchronous from the application's point of view, as classical results are obtained directly after measurement.
These applications are tolerant of the stochastic timing of Bell pair delivery and can selectively discard measurements performed on unentangled states without affecting their correctness.
In contrast, computational applications (the second and third types) typically require asynchronous interfaces to coordinate with their local quantum processes.
This coordination involves complex decision-making, such as whether to wait for a resource or proceed, and how to handle errors---either by propagating corrections forward in time or by reversing the computation to a state prior to the consumption of the faulty resource.
Developing robust programming models that handle this variability, including callbacks or deferred execution, remains an open research problem and is beyond the scope of this thesis.

By exposing a consistent API to applications regardless of application and node types, the Quantum Socket interface promotes modularity and simplifies development.
It defines the boundary between application logic and the complex distributed operations that the network must perform behind the scenes.
This thesis focuses on the design and operation of the quantum network \emph{up to this boundary}---the delivery of Bell pairs or provision of core entanglement-based services---with the understanding that applications themselves are responsible for the subsequent, higher-level utilization of these resources.

\section{Expressing connection semantics via RuleSets}
\label{sec:architecture-memory:expressing-connection-semantics-rulesets}

With the fundamental network service defined, we now turn to the question of how end-to-end entangled connections are established and managed.
This process involves a series of coordinated quantum operations across multiple, physically separated nodes---a form of \emph{distributed computation}.
Each node has access only to its local quantum resources and classical control, yet must participate in a global task to generate and maintain shared entangled states.

To address the complex problems of this distributed computation, this section details
our adoption of a connection-oriented model and introduces the RuleSet protocol, built from the work in~\cite{kaaki-master-thesis,kaaki-ruleset-sim}, as the core mechanism for orchestrating the distributed operations required.

\subsection{A RuleSet-driven connection-oriented model}
\label{subsec:architecture-memory:ruleset-driven-model}

To support flexible scheduling of complex operations and enable non-local coordination, such as entanglement purification extending beyond directly connected neighbors, our architecture employs a \emph{connection-oriented} model.
In this paradigm, an end-to-end path is established first, with all intermediate nodes along this path explicitly agreeing to participate in the protocols for distributing Bell pairs for that specific connection.
This model is chosen primarily for the enhanced reliability and more predictable state fidelity it offers for the resulting end-to-end Bell pairs, particularly when contrasted with connectionless approaches which typically restrict purification to adjacent nodes and may lack end-to-end fidelity guarantees.
(A more detailed comparison between connection-oriented and connectionless models is provided in~\cref{sec:architecture-memory:discussion}.)

To realize this connection-oriented model while maximizing node autonomy and minimizing the latency associated with synchronous control or excessive classical communication round-trips, we adopt an event-driven operational scheme orchestrated by \emph{RuleSets}~\cite{kaaki-master-thesis,kaaki-ruleset-sim}.
Upon the establishment of a connection, each participating node is provisioned with a specific RuleSet.
This RuleSet dictates the node's behavior in response to local events---such as the successful generation of link-level entanglement, the arrival of a classical message, or a timeout.
This RuleSet-driven approach is central to our architecture as it enables largely decentralized control and asynchronous operation; nodes react dynamically based on their local context and pre-defined rules rather than waiting for explicit step-by-step instructions or relying on global lockstep coordination.

\subsection{RuleSet protocol}
\label{subsec:architecture-memory:ruleset-definition}

A RuleSet consists of a collection of Rules where each Rule contains a \emph{condition clause} and an \emph{action clause}.
When a Rule's condition is satisfied by the local state or event context, its corresponding action is executed.
A collection of such RuleSets, one at each participating node, collectively constitutes the full distributed program.
A key distinction from classical analogies like Software Defined Networking (SDN) OpenFlow~\cite{mckeownOpenFlowEnablingInnovation2008} or P4~\cite{bosshartP4ProgrammingProtocolindependent2014} is that our RuleSet design explicitly incorporates the management of complex, probabilistic local state, including quantum states, which is indispensable for quantum repeater functionality.
This management of local state can be cast as the protocol's ``side effects''.
A conceptual depiction of a RuleSet is shown in~\cref{fig:ruleset-example}.
\begin{figure}[hbt!]
    \centering
    \includegraphics[width=\textwidth]{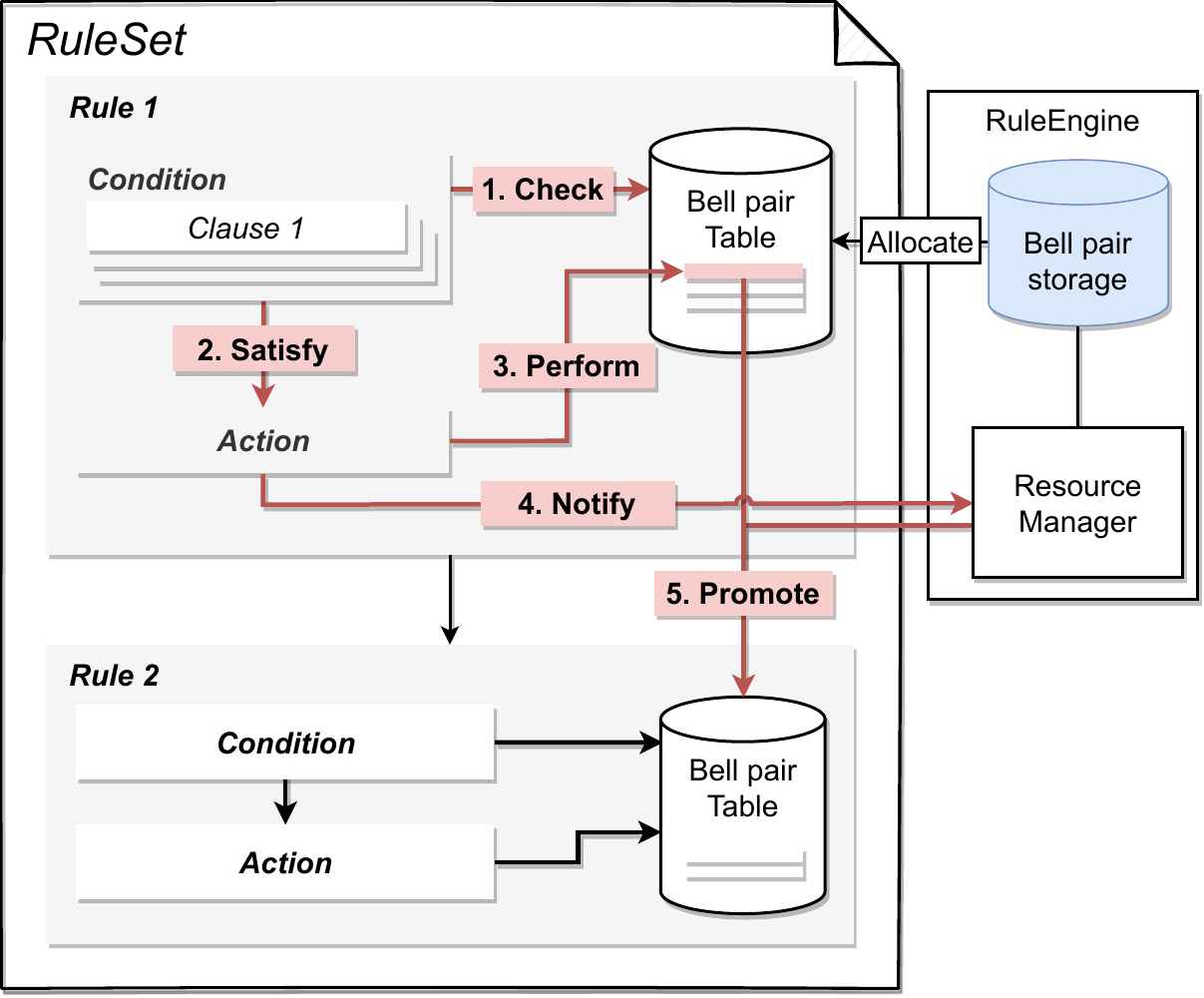}
    \caption[Workflow of the RuleSet execution cycle]{
        Workflow of the RuleSet execution cycle.
        The process unfolds in a deterministic sequence:
        (1) The Rule Engine begins an evaluation phase, checking the Condition clauses of its Rules sequentially.
        (2) If a Rule's conditions are fully satisfied, the process advances to the next step; otherwise, the engine waits for a new event (e.g., resource allocation) to trigger a new evaluation cycle.
        (3) The selected Rule executes its Action clause.
        (4) Upon completion, the Action notifies the Rule Engine of the outcome.
        (5) Finally, the Rule Engine performs a state update, promoting the resource to its next stage in the protocol.
        (From~\cite{cocori-quisp}.)}
    \label{fig:ruleset-example}
\end{figure}

Once a quantum resource, such as a link-level Bell pair, is assigned to a specific RuleSet for a connection, that assignment is typically exclusive and persists until the resource is consumed or released by that RuleSet.
The initial assignment of resources to RuleSets falls under the purview of a separate multiplexing scheme (will be discussed in~\cref{subsec:architecture-memory:multiplexing}).
Consequently, RuleSets, along with any qubits at a node that are currently assigned to a particular connection, constitute vital \emph{connection state} that must be maintained at each repeater or router along the path.
While the scalability implications of such statefulness require careful assessment (see \cref{subsec:architecture-memory:aaa} for security considerations), maintaining some form of state at intermediate nodes appears unavoidable for most approaches to robust quantum networking.

A fundamental principle of this model is that there is one RuleSet dedicated to each connection traversing a node.
Once a quantum resource, such as a link-level Bell pair, is assigned to a RuleSet, it is also tagged with a label (called a \emph{tag}) and given a \emph{sequence number} to track its role in the protocol.
Tags are used to define resource pools, and the movement of resources from one tag to another encodes the logical progression of the protocol.
The tag ordering is unidirectional and intentional: it serves as a scaffold for reasoning about RuleSet behavior and ensures termination of the protocol.
The sequence number, meanwhile, helps maintain resource ordering within a tag and supports coordination across nodes without additional messaging.
This prevents mismatches, such as two nodes selecting different Bell pairs for the same operation, a problem known as leapfrogging~\cite{rdv-quantum-networking-book}.

Thus, a RuleSet acts as a self-contained program governing the lifecycle of entangled states for a connection.
Resources flow from one tag (pool) to another as operations proceed.
When a Rule consumes or updates a resource, it either moves it to a new tag (signaling advancement in the protocol), or marks it as defunct (releasing physical qubits for link-level generation reuse).
This unidirectional tag flow ensures that every quantum state associated with a RuleSet either progresses toward successful end-to-end delivery or is discarded without ambiguity.
Note that this model does not eliminate the possibility of livelocking, where Bell pairs are repeatedly allocated and appear to progress through tag transitions, but are eventually all discarded without reaching the final tag or being delivered to the application.
An example of resource flow within a RuleSet is depicted in \cref{fig:architecture-memory:resource-flow-in-ruleset-livelock}.
\nomenclature{Livelock}{A type of resource starvation where changes in local state indicate that progress is being made with respect to local events, while the overall global task is not progressing}
\begin{figure}
    \centering
    \includegraphics[width=\textwidth]{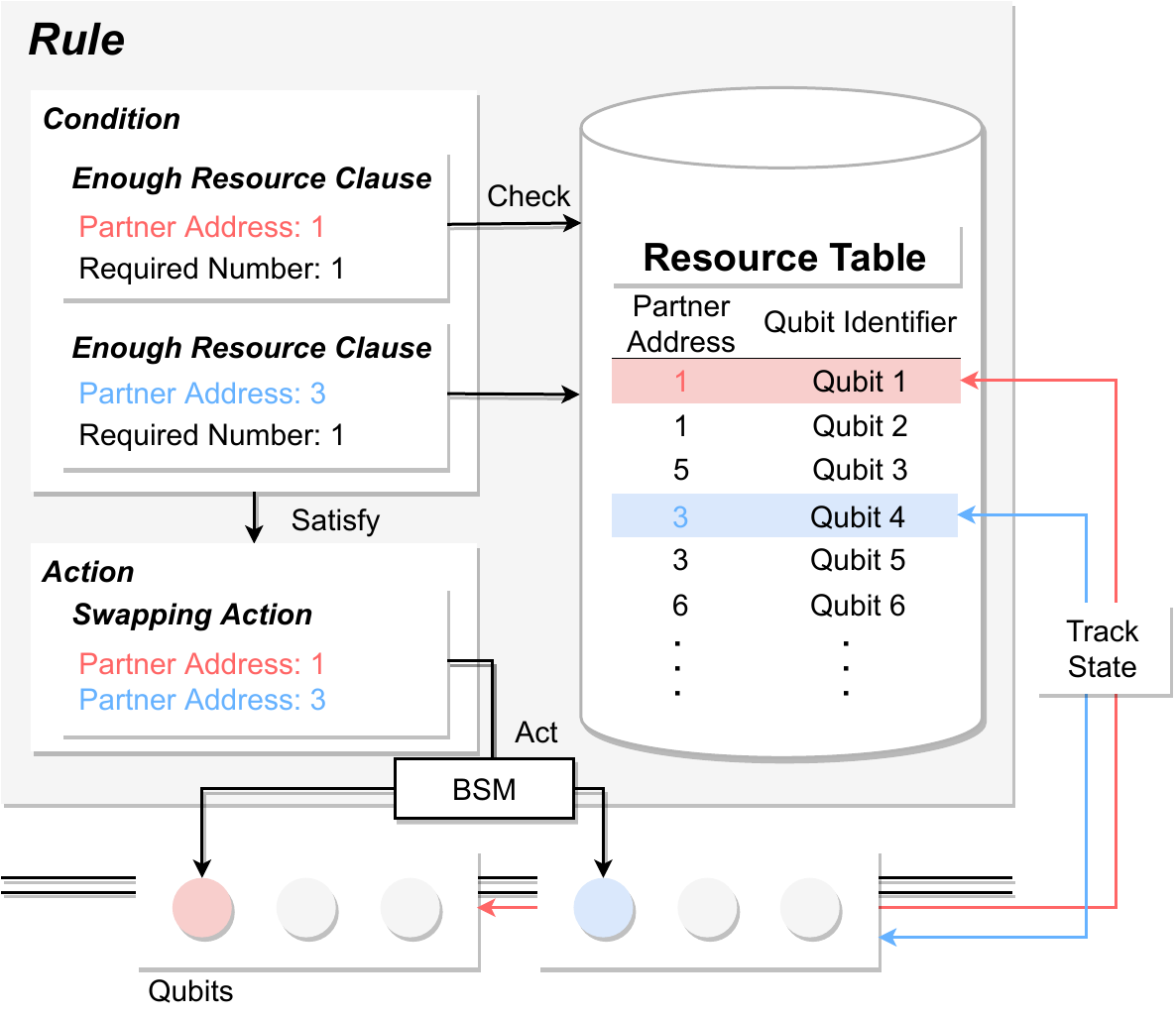}
    \caption[Resource flow as an effect of a Rule]{
        An illustration of resource flow as a result of a Rule's execution.
        The diagram shows how a resource is first evaluated against a Rule's Condition clause.
        Upon a successful match, the corresponding Action clause is executed, causing the resource to transition to a new state or tag.}
    \label{fig:architecture-memory:resource-flow-in-ruleset-livelock}
\end{figure}
Because each connection is managed by its own RuleSet and associated quantum resources, nodes must maintain a small but critical amount of connection state, including active RuleSets and their associated qubits.
While this statefulness has implications for scalability (discussed in \cref{sec:architecture-memory:internetworking-qrna}), it appears unavoidable for robust and flexible quantum networking, particularly when enabling asynchronous and autonomous repeater operations.

\subsection{Two-pass connection setup}
\label{subsec:architecture-memory:two-pass-connection-setup}

To establish a connection in our architecture, we employ a two-pass protocol for establishing connections~\cite{vanmeterConnectionSetupQuantum2019}.
This process separates resource discovery and path selection from the distribution and activation of the operational rules.
The first pass, or outbound leg, is initiated by the node requesting the connection (the \emph{Initiator}).
As the connection request message traverses the network towards the intended \emph{Responder}, guided by underlying routing algorithms, it gathers information about the links along paths, such as available quantum resources, node capabilities, and estimated link fidelities.

Upon reaching the Responder, all the collected path and resource information is processed.
The Responder then takes on the crucial role of devising the specific RuleSets required for every intermediate node (and the end nodes) along the chosen path to collectively establish (distributed computation along the path) the end-to-end entangled state.
This centralized generation of RuleSets by the Responder allows for sophisticated optimization and coherent planning of the entire distributed protocol.
Following this, the second pass, or return leg, involves the distribution of these tailored RuleSets from the Responder back to each node along the selected path.
Once all nodes have received and installed their respective RuleSets, the distributed operations for the connection can commence, driven by local events and the logic outlined in the rules.
\begin{figure}
    \centering
    \includegraphics[width=\textwidth]{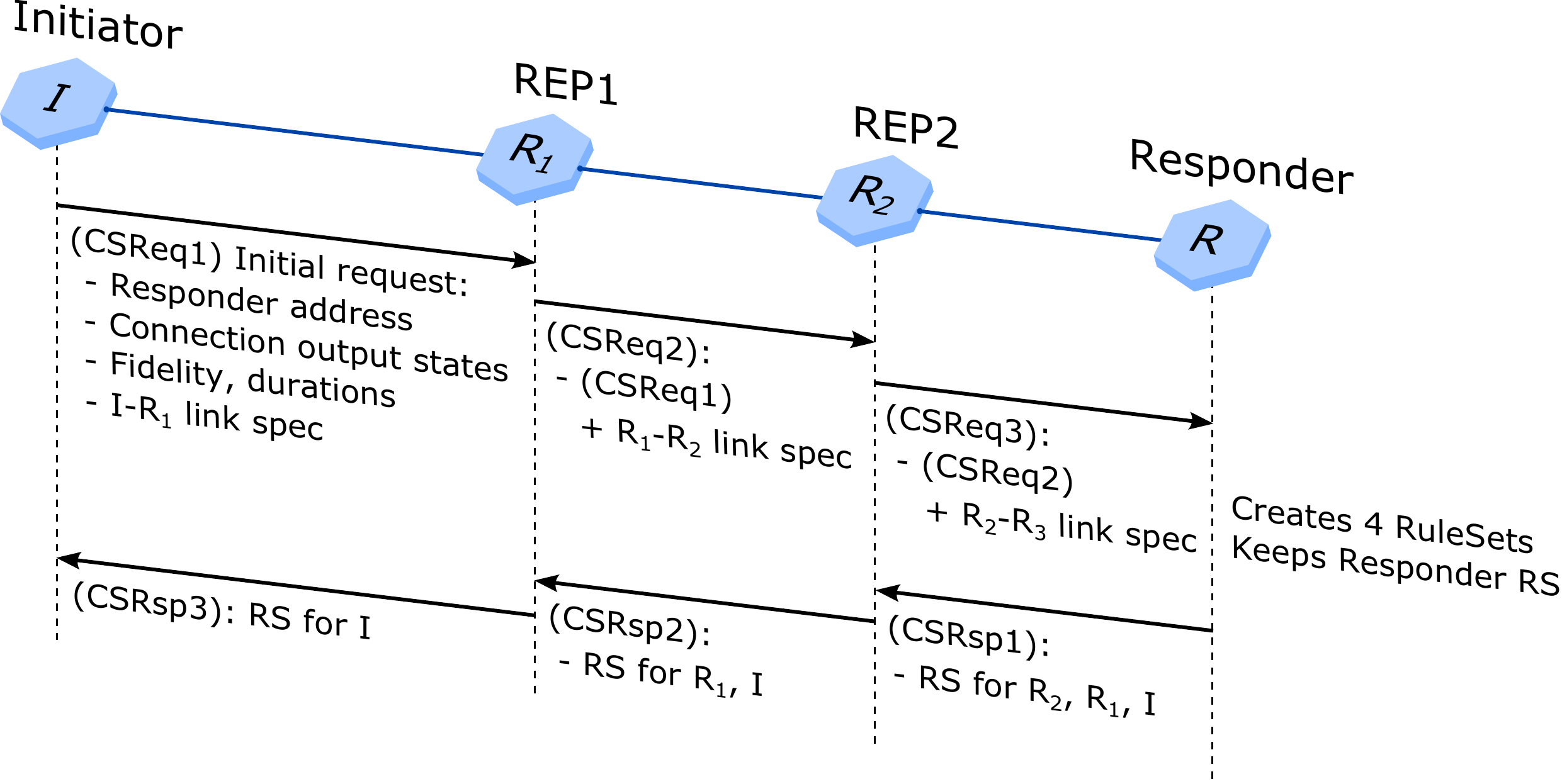}
    \caption[Two-pass connection setup within a network]{Two-pass connection setup within a network along a path.
    (From~\cite{rdv-qi-architecture}.)}
    \label{fig:architecture-memory:two-pass-connection-setup}
\end{figure}

This two-pass connection setup is illustrated for a single network in~\cref{fig:architecture-memory:two-pass-connection-setup}.
As in the classical Internet, we expect that most connections will be initiated by a client node reaching out to a server.
Placing the responsibility for RuleSet creation with the Responder (often the server) provides a natural point for service providers to innovate.
By developing more efficient or robust RuleSets, providers can offer superior connection performance (e.g., faster setup, higher fidelity, better resource utilization), creating a basis for competitive differentiation.

\section{Defining quantum network protocols with RuleSets}
\label{sec:architecture-memory:ruleset-definition-examples}

Now that we have a conceptual understanding of what a RuleSet is, this section delves into a detailed specification of the RuleSet protocol itself.
We will dissect the core components that constitute individual Rules, examine the mechanisms for quantum state representation and classical inter-node communication that RuleSets rely upon, and illustrate their operational semantics through practical examples.
This will provide a comprehensive understanding of how RuleSets function as a programmable framework for distributed quantum operations.

The focus of this specification is on the \emph{logical structure and operational semantics} of the RuleSet protocol.
This includes detailing the essential information content and parameters that messages must convey between nodes for the protocol to function correctly.
However, we will not prescribe specific low-level implementation choices such as particular message serialization formats (e.g., JSON, protocol buffers, or custom binary representations), precise bit-level encodings, or the underlying classical network transport mechanisms (e.g., TCP/IP or other datagram services) used to carry these messages.
For the purposes of defining the RuleSet protocol here, we assume the existence of reliable classical communication channels capable of delivering these logically defined messages between participating quantum network nodes.

\subsection{Fundamental constructs: Rules and RuleSets}
\label{subsec:architecture-memory:rules-and-ruleset-contruct}

At the core of the RuleSet framework are individual Rules, which define quantum operations, specify how resources are used, and state the conditions under which those operations are triggered.
A RuleSet is simply an ordered list of such Rules, constructed programmatically and associated with a unique RuleSet ID (or a connection ID).
Each Rule within the RuleSet also receives a Rule ID, allowing it to be referenced for inter-node communication.

The RuleSet life cycle starts and ends with the connection, thus a single connection will have a single RuleSet.

\subsection{State naming and management in rulesets}
\label{subsec:architecture-memory:state-naming-in-ruleset}

Managing and agreeing upon the specific qubits for joint operations like entanglement purification or swapping among collaborating nodes are critical responsibilities handled by RuleSets.
While we could draw inspiration from the IP architecture by employing network-wide unique naming for qubits, RuleSets have local execution contexts, making global naming for individual qubits unnecessary.

Instead, RuleSets use local logical names for qubits.
At the lowest level, physical qubits within a QNIC can be addressed by the local node using a tuple like \verb|<QNICAddress,QubitIndex>|.
However, this level of detail is not, and should not be, visible to the RuleSet logic.

Within a RuleSet instance on a given node, the RuleSet mechanism identifies a quantum resource primarily by its \emph{tag}---a label that categorizes the resource into a specific pool corresponding to its current role or stage in the protocol (as discussed in \cref{subsec:architecture-memory:ruleset-definition}).
To ensure an unambiguous and absolute ordering of multiple resources that may share the same tag (e.g., several Bell pairs awaiting purification), each resource, upon being allocated to a tag, is automatically assigned a \emph{sequence number} by the local RuleSet engine.
The combination \verb|<tag,seq no.>| thus serves as a distinct and locally unique key within that RuleSet instance for referencing a resource and associating it with relevant metadata, such as its tracked fidelity.
Actions specified within Rules typically reference these resources based on their tag and their relative order within that tag, for instance, by operating on the resource with the smallest sequence number (i.e., the oldest available in that pool).
This structured internal naming is vital for deterministic local operations and for preventing local operational mismatches that could lead to problems like the leapfrogging issue~\cite{rdv-quantum-networking-book} if not handled carefully at a higher protocol design level.

The RuleSet execution mechanism itself, therefore, only requires and manages this local \verb|<tag,seq no.>| system for resource identification within a single node's RuleSet instance; it does not impose any inherent restrictions or requirements on how shared entangled states are identified between nodes.
However, we \emph{recommend} a strategy wherein the RuleSet creator (e.g., the Responder during connection setup) devises RuleSets such that the \verb|<tag,seq no.>| is kept consistent across the RuleSets of all nodes involved in a particular joint operation on that resource.
The responsibility thus lies with the RuleSet author to ensure that RuleSets are designed---potentially using this consistent naming strategy---to correctly manage shared resources and achieve the desired end-to-end protocol behavior.
We believe this minimal local naming scheme provided by the RuleSet engine, combined with judicious design of the RuleSets themselves, is sufficient for implementing a wide range of complex quantum network protocols while allowing creativity with minimal restrictions on RuleSet creator parts, as will be illustrated with examples in \cref{sec:architecture-memory:ruleset-definition-examples}.

\subsection{Inter-node communication}
\label{subsec:architecture-memory:ruleset-internode-communication-messaging}

Distributed protocols, including RuleSets, rely on classical messages for coordination.
In the RuleSet framework, messages are sent as part of a Rule's Action clause, and messages are received through Condition clauses that can trigger subsequent rules.
These messages are crucial for coordinating their executions and making progress toward the successful distribution of end-to-end shared Bell pairs.

Each RuleSet coordination message includes the following fields.
\begin{itemize}
\item A \textbf{recipient address} indicating the destination node.
\item A \textbf{message context ID}  to help the receiving Rule Engine match the message to the correct logical context (this is distinct from the resource tag used to identify qubits).
\item A \textbf{structured content} (key-value pairs) payload, which may contain measurement results, Pauli frame updates, or other operational data.
\end{itemize}
While the underlying message framework may support a generic dictionary-like structure for its content, for clarity in protocol design and ease of RuleSet construction, we define a set of conventional message types or commands (as listed in \cref{table:architecture-memory:ruleset-internode-messages}).
These common patterns provide convenience for frequently occurring inter-node communications, making the RuleSets more human-readable and their intended logic easier to distinguish and manage.
The utility of these messages is evident in standard repeater operations such as entanglement purification and swapping, for heralding success and information about Pauli frame.
\begin{table*}[!t]
\centering
\caption[Inter-node messages in the RuleSet-based protocol framework]{A summary of the primary message types used for inter-node communication within the RuleSet protocol framework. These messages are sent as part of an Action clause and are used to trigger Condition clauses on remote nodes.}
\label{table:architecture-memory:ruleset-internode-messages}
\begin{tabularx}{\textwidth}{l p{3cm} p{3cm} X}
\toprule
\textbf{Name} & \textbf{Descriptive Name} & \textbf{Arguments} & \textbf{Comments} \\
\midrule
\multicolumn{4}{l}{\textbf{Remote Events (Message Transmission)}} \\
\midrule
FREE & Release Remote Qubit & Partner addr., resource IDs & Instructs a partner to release their half of a shared state. \\
\addlinespace
UPDATE & Pauli Frame Update & Partner addr., resource IDs, Pauli frame correction & Notifies a partner about a Pauli frame correction. \\
\addlinespace
MEAS & Measurement Outcome & Partner addr., resource IDs, result & Exchanges classical measurement outcomes between partners, primarily for purification. \\
\addlinespace
TRANSFER & Entanglement Transfer & Partner addr., resource IDs & Notifies a partner about the result of an entanglement swap. \\
\bottomrule
\end{tabularx}
\end{table*}

\begin{table}[!t]
\centering
\caption[Action clauses in the RuleSet-based protocol framework]{A summary of the Action clauses available within the RuleSet framework. Actions are grouped into local classical software operations and local quantum hardware operations.}
\label{tab:actions}
\begin{tabularx}{\textwidth}{l p{2.5cm} p{2.25cm} X}
\toprule
\textbf{Name} & \textbf{Descriptive Name} & \textbf{Arguments} & \textbf{Comments} \\
\midrule
\multicolumn{4}{l}{\textbf{Local Software Actions (Classical)}} \\
\midrule
SETTIMER & Set timer & Timer ID & Use with caution, as distributed race conditions can occur. \\
PROMOTE  & Promote qubits & Qubit IDs, Rule ID & Transfers control/ownership of qubits to a different rule or stage. \\
FREE     & Free qubits & Qubit IDs & Releases qubits back to the pool of unallocated resources. \\
SET      & Set variable & Variable ID, Value & Changes a RuleSet variable; useful for tracking statistics like measurement counts for tomography. \\
\midrule
\multicolumn{4}{l}{\textbf{Local Hardware Actions (Quantum)}} \\
\midrule
MEAS     & Measure qubits & Qubit IDs, Basis & Measures one or more qubits in a specified or randomly chosen basis. \\
QCIRC    & Apply quantum circuit & Qubit IDs, Circuit & Applies a general unitary operation. Used for BSM, purification, swapping, and encoding logical qubits. \\
\bottomrule
\end{tabularx}
\end{table}

\begin{table*}[!t]
\centering
\caption[Event triggers in the RuleSet-based protocol framework]{A summary of the primary Condition clauses that act as triggers within the RuleSet framework. These conditions are evaluated by the Rule Engine to initiate actions.}
\label{tab:conditions}
\begin{tabularx}{\textwidth}{l p{2.5cm} p{2.25cm} X}
\toprule
\textbf{Name} & \textbf{Descriptive Name} & \textbf{Arguments} & \textbf{Comments} \\
\midrule
\multicolumn{4}{l}{\textbf{Internal State Conditions}} \\
\midrule
CMP & Compare RuleSet Variable & Variable ID, operator (e.g., ==, >), value & Evaluated when the specified internal variable is updated. The rule triggers if the new value satisfies the comparison. Used to manage protocol logic, such as counting purification rounds or messages received. \\
\addlinespace
TIMER & Timer Expiration & Timer ID & Triggers a rule after a set duration has passed. Must be used with caution to handle timeouts and avoid race conditions in distributed protocols. \\
\midrule
\multicolumn{4}{l}{\textbf{Quantum Resource Conditions}} \\
\midrule
RES & Resource Availability & Number of pairs, Partner addr. (or wildcard), min. Fidelity & The primary trigger for quantum operations. Matches when a specified number of Bell pairs meeting a minimum fidelity constraint are available from a given partner. Used for purification, swapping, and final delivery to an application. \\
\bottomrule
\end{tabularx}
\end{table*}

\subsection{Condition clauses as triggers}
\label{subsec:architecture-memory:condition-clauses}

Condition clauses within Rules define the specific circumstances or triggers that must be met for a Rule's associated Action clause to be executed.
They can be conceptualized as predicates that enable transitions or operations within the distributed protocol governed by the RuleSet.
\Cref{tab:conditions} enumerates the types of Condition clauses supported.
A Condition clause might depend on various factors, including the arrival of specific messages, the successful generation or availability of one or more local entangled states meeting certain criteria (e.g., having a particular number of resource with specific tag label), or internal events like timeouts.
For instance, a Rule designed to deliver an entangled pair to an application might have a Condition clause that matches a single Bell pair labeled with a terminal tag indicating readiness.
More commonly, Condition clauses need to match two entangled states simultaneously: for example, two Bell pairs shared with the same neighbor node for purification or two Bell pairs connecting to different remote nodes for entanglement swapping, and these can be realized by looking at the tags.
The precise logic of the condition ensures that actions are only taken when the necessary quantum and classical state prerequisites are fulfilled.

\subsection{Action clauses for operations}
\label{subsec:architecture-memory:action-clauses}

Action clauses define the operations a node performs when the conditions of a Rule are satisfied.
At this point, the Rule is guaranteed exclusive access to any quantum resources it is meant to use, as all resources must be reserved through the prerequisite Condition clause.
Actions may include local quantum operations (e.g., applying quantum circuits), the generation of classical messages (to other nodes or to self), and updates to local classical state.

To ensure correctness and enable automated protocol validation, Action clauses must manage all resources they consume.
This means each allocated quantum resource must, by the end of the Action, be either: (1) returned to the free resource pool, (2) assigned a new tag to indicate its updated role in the protocol, or (3) retained under its existing tag with updated metadata.
Likewise, messages that triggered the Rule are considered consumed and must be either discarded or incorporated into outgoing messages or state changes.

Although the logic inside an Action clause can be expressive, its operations are constrained to a well-defined set of primitives supported by the Rule Engine (see \cref{tab:actions}) and are deliberately not Turing-complete.
These constraints, which include operations like applying a local quantum circuit (QCIRC) or measuring qubits (MEAS), are a key architectural choice.
By restricting the complexity of Actions, it becomes easier to analyze the protocol's properties, such as correctness, termination, and deadlock, and to prevent common bugs.
Because many of these primitive quantum operations yield classical information, generating classical messages is an essential and frequent outcome of Action clauses.

\subsection{Ruleset execution logic and resource management}
\label{subsec:architecture-memory:ruleset-execution-resources}

The operational correctness of RuleSet-based protocols depends on two core properties: deterministic execution and precise resource accounting.
To ensure predictable behavior, our model enforces that only one Rule fires at a time, even if multiple Rule Condition clauses are simultaneously satisfied.
Rule selection proceeds by evaluating Condition clauses in a strict, predefined order---typically determined by a unique Rule ID.
Once a Rule is selected, its Action clause executes atomically and to completion.

Although we assume that executing an Action clause incurs negligible latency compared to the delays from quantum operations or message transmission, this sequential execution model introduces potential performance limitations.
In high-throughput regimes---where quantum resources are produced or messages arrive faster than actions can be executed---this could become a bottleneck.
Addressing such scaling challenges while preserving deterministic behavior remains an open avenue for future work.

Each RuleSet instance maintains an internal resource table (or equivalent state tracker) to monitor the current state of all relevant local and remote resources (messages).
This enables Rule Conditions to express complex dependencies---for example, waiting for a classical message from another node while simultaneously checking the availability of a specific class of Bell pairs.
The table is updated dynamically as resources are generated, consumed, tagged, or modified by executed actions, and as messages are received.

To avoid conflicts and ensure independent progress of concurrent connections, we enforce resource encapsulation; once a quantum resource is assigned to a RuleSet instance (i.e., a connection context), it is not shared with other RuleSets.
This strict partitioning prevents interference between protocols running in parallel on the same node.
As will be discussed in the context of multiplexing strategies (\cref{subsec:architecture-memory:multiplexing}), such isolation allows each RuleSet to operate autonomously on its subset of resources and advance according to its local protocol logic.

\subsection{Illustrative examples of RuleSets}
\label{subsec:architecture-memory:rulesets-concrete-examples}

To make the operational principles of RuleSets more concrete, I provide simplified examples of how Rules can be defined for common quantum network tasks below.
These examples illustrate the event-driven nature of RuleSet execution and show how quantum resources transition through the stages of a protocol.
A conceptual illustration of a simple, sequential protocol for a four-hop connection is shown in \cref{fig:example-ruleset-4-hop}.
In this example, purifications are performed on the link-level Bell pairs, again after the first swap, and a final time on the end-to-end pair before it is consumed by a tomography application.
Note that the final tomography block is merely illustrative and could be replaced by a rule that delivers the resource to any quantum application.

\begin{figure}
    \centering
    \includegraphics[width=\textwidth]{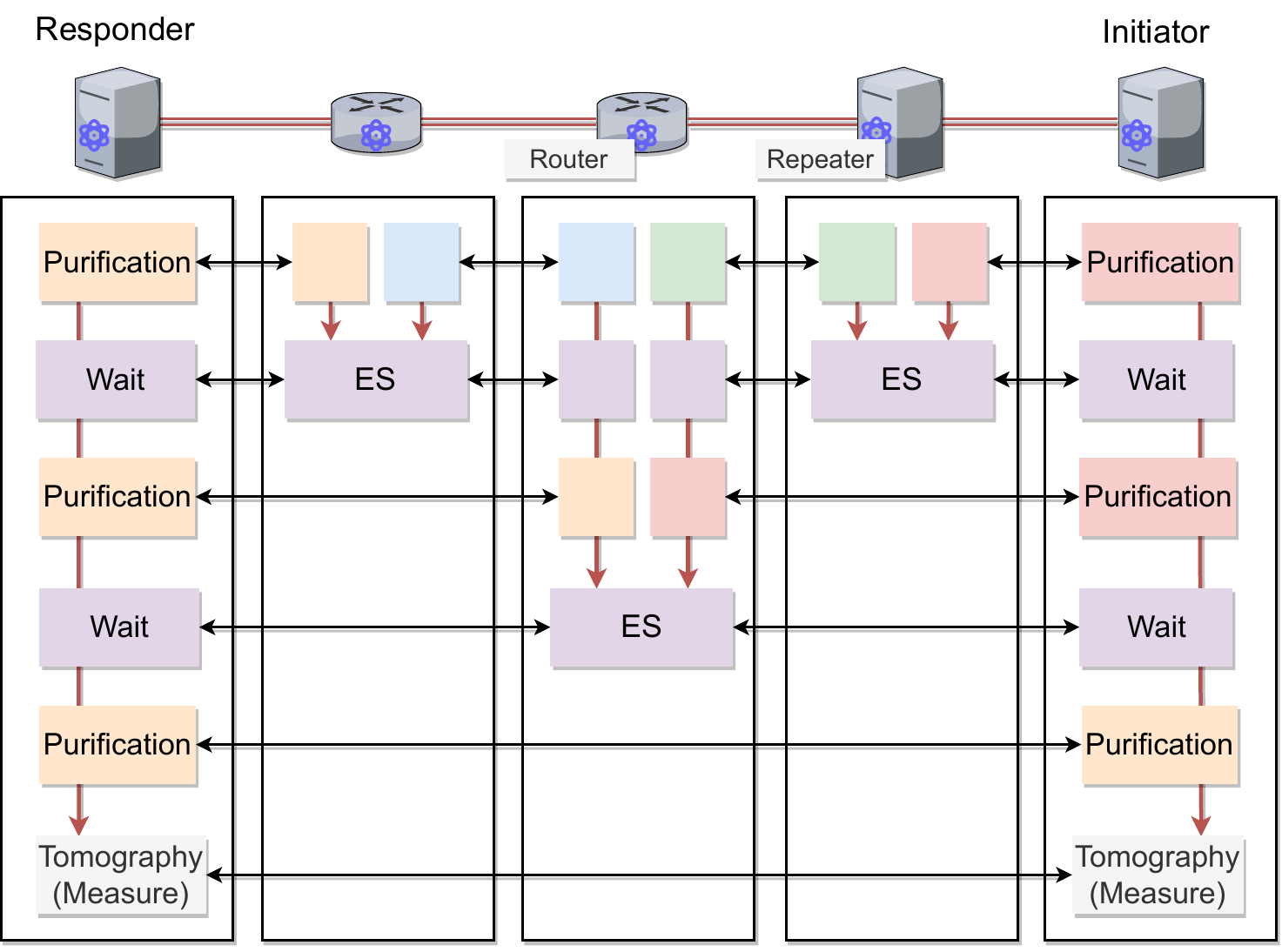}
    \caption[A conceptual depiction of RuleSets installed in nodes for a four-hop connection]{A conceptual illustration of how RuleSets could be installed for four-hop connection to create end-to-end Bell pairs. The RuleSets show flexibility by indicating purifications to be performed at three different stages of the protocol: at the link-level; before the second-swap; and end-to-end. (From~\cite{cocori-quisp}.)}
    \label{fig:example-ruleset-4-hop}
\end{figure}
While this example is simplistic, it demonstrates a general principle: once the desired flow of a quantum resource is defined---from its link-level creation, through intermediate processing steps like purification and swapping, to its final consumption---a corresponding sequence of RuleSet blocks can be constructed to implement it.
More complex protocols, such as those that transform Bell pairs into graph states~\cite{clement-graph-state}, could be realized with a similar modular approach.
For instance, to accommodate multiple rounds of purification at the link level, several purification blocks could be chained sequentially in the RuleSet.

However, more advanced purification strategies like pumping or banding require more careful construction of Rules.
An illustration of the conditions and actions for entanglement purification pumping is shown in \cref{fig:example-ruleset-pumping}.
In this model, the resource flow is not strictly sequential, as rules for different purification rounds must compete for the same pool of lower-level resources.
To ensure correctness, the ordering of Rule evaluation becomes critical; for example, a Rule for a round-$k$ purification must be evaluated before a Rule for a round-$(k-1)$ purification to ensure it gets priority access to the necessary resources.

\begin{figure}
    \centering
    \includegraphics[width=\textwidth]{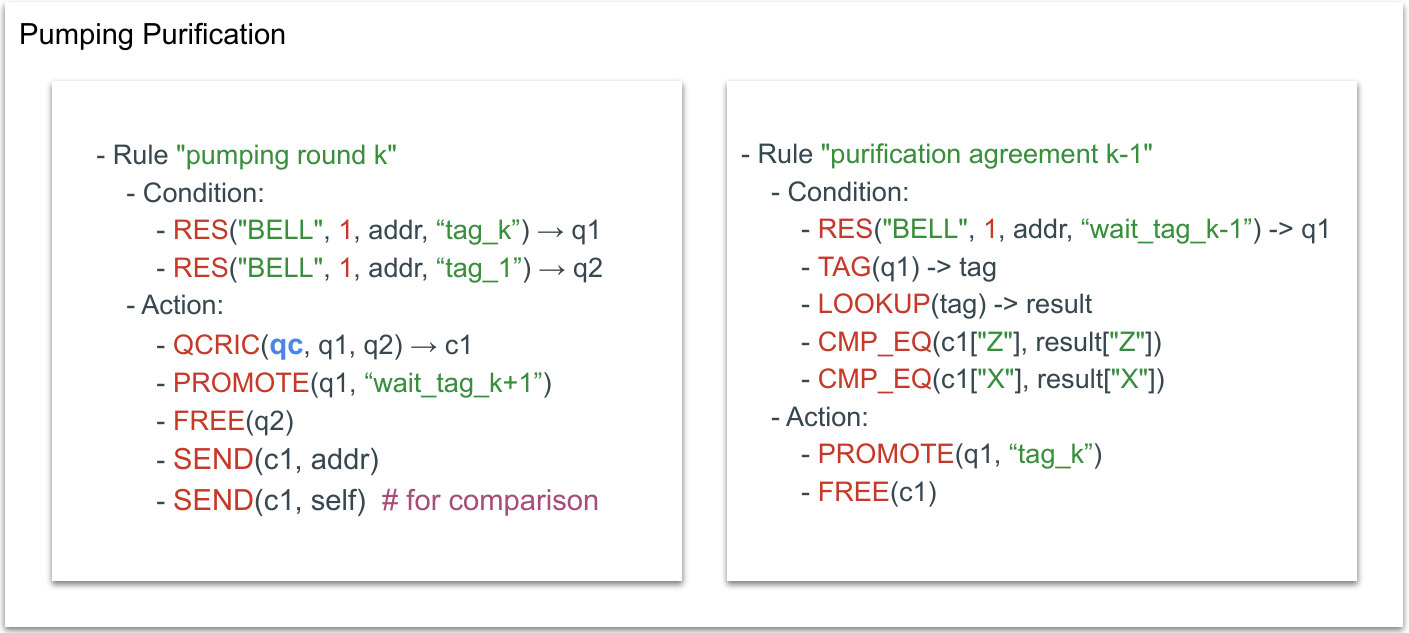}
    \caption[A conceptual depiction of purification pumping Rules]{
        A conceptual depiction of a RuleSet for entanglement purification pumping.
        The diagram illustrates two key features of the protocol: the competition for base-level resources between different purification rounds (left), and the two-step process of a local action followed by a rule that waits for a heralded message to confirm the outcome (right).}
    \label{fig:example-ruleset-pumping}
\end{figure}
A key implementation detail highlighted in this example is that any purification attempt is a two-step distributed process.
First, a Rule performs the local quantum operation and moves the resource to a waiting state (e.g., ``wait\_tag\_k'' in \cref{fig:example-ruleset-pumping}).
A second Rule is then triggered upon receiving the partner's heralded message.
If the message content agrees with the local measurement results, this second rule confirms a successful outcome and promotes the resource to the next stage of purification.

\subsection{Summary and implications of RuleSet-based semantics}
\label{subsec:architecture-memory:rulesets-summary-implication}

In summary, the RuleSet-based mechanism provides a structured and expressive framework for defining and managing the complex connection semantics required in quantum repeater networks. By decomposing distributed protocols into sets of local, event-driven condition-action rules, this approach facilitates significant node autonomy and asynchronous operation, which are crucial for practical network efficiency and responsiveness. The explicit inclusion of local state management, including quantum states and Stage-specific variables, allows RuleSets to capture the nuanced requirements of quantum protocols like entanglement purification and swapping.

The clear separation of concerns---with RuleSets defining the "program" for a connection, distinct elements for state naming, inter-node messaging, condition evaluation, and prescribed actions, all orchestrated by a well-defined two-pass setup protocol---promotes modularity and aids in reasoning about protocol correctness and performance. Furthermore, the ability to centrally devise RuleSets (e.g., at the Responder) offers opportunities for optimization and adaptation of protocols based on network conditions or application requirements.

While this framework offers considerable flexibility, it also highlights the inherent statefulness of quantum repeater nodes, a factor with significant implications for resource management and scalability. Assessing the resource footprint (both quantum memory and classical processing/storage for RuleSets) under various load conditions remains an important area for ongoing research. Nevertheless, the RuleSet paradigm provides a powerful abstraction for taming the complexity of distributed quantum computation inherent in establishing end-to-end entanglement, paving the way for more robust and capable quantum networks.

\section{Networking}
\label{sec:architecture-memory:networking}

Building upon the RuleSet mechanisms for establishing and managing individual connections, this section broadens the scope to address network-level operational considerations.
We will explore how to construct and operate robust quantum networks in complex topologies, considering diverse traffic patterns and the interplay of various network components.
This involves defining the roles of different node types, strategies for routing and resource allocation across the network, and fundamental mechanisms for security and resource accounting.

\subsection{Node types}
\label{subsec:architecture-memory:networking}

The architecture of a quantum network is defined by its constituent nodes and their specialized functions.
For clarity in describing our system, we group these nodes according to their primary contributions to network operation.
In our architecture, we classify nodes into three main types: end nodes, for application interaction; repeater and router nodes, for extending entanglement and path management; and support nodes, for auxiliary operational tasks.

End nodes represent hosts that wish to execute a quantum application such as quantum key distribution, secret sharing and blind quantum computation~\cite{broadbent-bfk-protocol,fitzsimonsPrivateQuantumComputation2017}.
The technological maturity required of an end node heavily depends on the desired application.
There are four major kinds of end nodes:

\vfill{}

\begin{wrapfigure}{o}{0.15\textwidth}
\includegraphics[width=0.9\linewidth]{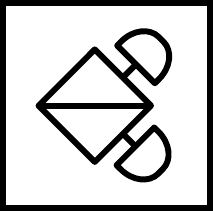}
\label{fig:icon-meas-node}
\end{wrapfigure}
\textbf{A measurement (MEAS) node} is the most basic type of quantum end node, designed primarily for protocols that do not require quantum state storage.
Its core capability is to receive individual photons and perform measurements on them in at least two different bases.
Lacking quantum memory, MEAS nodes are well-suited for applications like quantum key distribution (QKD) or as simple terminals in certain forms of secure delegated computation protocols~\cite{morimaeBlindQuantumComputation2013,fitzsimonsPrivateQuantumComputation2017}, typically interacting with the network in a synchronous manner where measurement results directly yield classical data.

\vfill{}

\begin{wrapfigure}{o}{0.15\textwidth}
\includegraphics[width=0.9\linewidth]{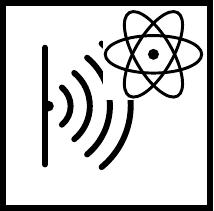}
\label{fig:icon-snsr-node}
\end{wrapfigure}
\textbf{A sensor (SNSR) node} is a specialized end node designed to utilize entangled states, often shared with distant parties, for high-precision measurements of physical quantities, for clock synchronization tasks, or for distributed sensing tasks~\cite{proctorMultiparameterEstimationNetworked2018,degen-sensing,proctor-quantum-sensing,giovannetti-advances-in-metrology,ge-distributed-metrology}.
These nodes typically feature limited quantum memory to hold working qubits (e.g., one half of an entangled pair) and possess specific quantum processing capabilities tailored for sensing protocols.
Such capabilities include performing joint measurements, like Bell State Measurements (BSMs), between their stored qubits and photons that have interacted with the environment~\cite{gottesman-longer-baseline-telescopes,huang-imaging-stars-with-qec}.
While an SNSR node's internal processing is specialized, certain sensing applications may also necessitate high-rate entanglement generation from the network to achieve desired performance.
For SNSR nodes, precise timing information is almost invariably a critical component of the service they provide or require, and their operation typically culminates in outputting classical data that corresponds to the sensed phenomenon.

\vfill{}

\begin{wrapfigure}{o}{0.15\textwidth}
\includegraphics[width=0.9\linewidth]{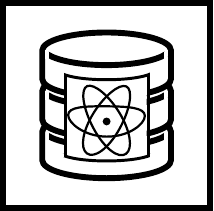}
\label{fig:icon-store-node}
\end{wrapfigure}
\textbf{A store (STOR) node} is a specialized end node whose primary function is to serve as a high-fidelity quantum data repository.
Its core capabilities are the long-term storage of quantum states---often prepared and teleported from other locations---and the ability to teleport these states out on demand.
While a STOR node does not require a universal gate set, it must support certain gates (e.g., Clifford gates) for active quantum error correction to preserve the stored quantum data.
This includes using quantum error-correcting codes to protect against decoherence, along with mitigation strategies for correlated errors from events such as cosmic ray strikes~\cite{martinisSavingSuperconductingQuantum2021,saneFightFlightCosmic2023,xuDistributedQuantumError2022a,vepsalainenImpactIonizingRadiation2020a,wuMitigatingCosmicRaylike2025,liCosmicrayinducedCorrelatedErrors2025a}.
In a network context, STOR nodes may function as data servers, enabling asynchronous applications where valuable states are prepared and stored for later retrieval.

\clearpage
\begin{wrapfigure}{o}{0.15\textwidth}
\includegraphics[width=0.9\linewidth]{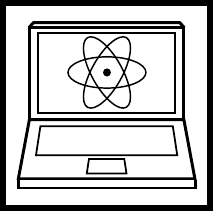}
\label{fig:icon-comp-node}
\end{wrapfigure}
\textbf{A computational (COMP) node} represents a full-fledged quantum processing endpoint within the network.
Equipped with quantum memory and additional algorithmic qubits, it can store, manipulate, and perform complex computations on quantum states received from the network or generated locally.
COMP nodes support a wide range of advanced quantum network applications~\cite{ambainisMultipartyQuantumCoin2004,taherkhaniResourceawareSystemArchitecture2017,mayersUnconditionalSecurityQuantum2004,christandlQuantumAnonymousTransmissions2005}, including distributed quantum algorithms, more general forms of blind quantum computation~\cite{broadbent-bfk-protocol,fitzsimonsPrivateQuantumComputation2017,mahadevClassicalHomomorphicEncryption2023,dulekQuantumHomomorphicEncryption2018}, and potentially fault-tolerant quantum computing~\cite{shapourianQuantumDataCenter2025,sutcliffeDistributedQuantumError2025,yoderTourGrossModular2025b,kimFaultTolerantMillionQubitScale2024}, often requiring asynchronous interfaces to coordinate their local quantum workloads with network operations.

Next, we consider quantum repeater and router nodes, which are essential for overcoming distance limitations and managing network paths.
We classify these into three primary types, briefly summarized here (for a more detailed foundational discussion, please refer to~\cref{sec:quantum-networks:repeaters}).

\begin{wrapfigure}{o}{0.15\textwidth}
\includegraphics[width=0.9\linewidth]{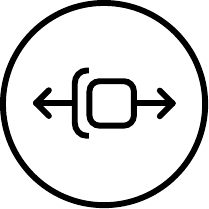}
\label{fig:icon-rep1}
\end{wrapfigure}
\textbf{A first-generation repeater (REP1)} is a network node with two quantum interfaces, whose main task is to extend entanglement.
Its primary operations include generating link-level Bell pairs with its neighbors, performing entanglement swapping to connect these segments, and managing errors through heralded entanglement purification on physical qubits along the connection path.

\begin{wrapfigure}{o}{0.15\textwidth}
\includegraphics[width=0.9\linewidth]{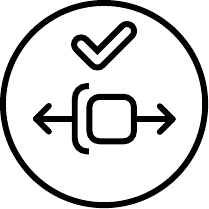}
\label{fig:icon-rep2}
\end{wrapfigure}
\textbf{A second-generation repeater (REP2)} also focuses on entanglement swapping to bridge distances but generally requires a larger quantum memory capacity and higher fidelity local quantum operations than a REP1.
It utilizes quantum error correction (QEC) to manage operational errors in conjunction with heralded link-level entanglement generation.
Instead of, or in addition to, purifying link-level physical Bell pairs, a REP2 node operates on logical qubits, where quantum information is encoded across multiple physical qubits to protect it from errors.
This approach inherently demands more sophisticated hardware and advanced computational capabilities for encoding, decoding, and error correction routines during swapping.

\begin{wrapfigure}{o}{0.15\textwidth}
\includegraphics[width=0.9\linewidth]{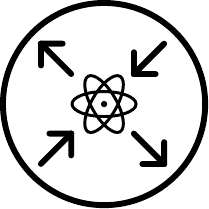}
\label{fig:icon-rtr-node}
\end{wrapfigure}
\textbf{A quantum router (RTR)} is a more complex and versatile node, possessing all capabilities of a quantum repeater and typically featuring three or more quantum network interfaces, enabling it to make sophisticated path selection decisions in complex topologies.
Architecturally, an RTR may consist of multiple line cards and a quantum backplane, allowing it to run a full suite of network operation protocols.
Beyond basic repeating functions like entanglement swapping, an RTR can govern network borders potentially interfacing between different repeater generations or technologies, participate in generating multipartite entangled states if the network provides such a service~\cite{clement-graph-state,bugalhoDistributingMultipartiteEntanglement2023,fischerDistributingGraphStates2021,fanOptimizedDistributionEntanglement2025}, and may act as a Responder in connection setups by rewriting or generating new RuleSets for different network domains (as will be discussed further in~\cref{sec:architecture-memory:internetworking-qrna}).

The last type, supporting nodes, provide essential functions that enable or enhance the operations of end nodes and repeater nodes in the entanglement distribution process.
Key types of support nodes we list are as follows.

\begin{wrapfigure}{o}{0.15\textwidth}
\includegraphics[width=0.9\linewidth]{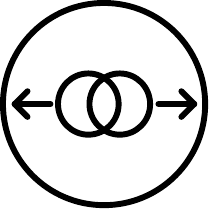}
\label{fig:icon-epps-node}
\end{wrapfigure}
\textbf{An entangled photon pair source (EPPS)} is a device dedicated to generating pairs of entangled photons, commonly through processes like Spontaneous Parametric Down-Conversion (SPDC).
These entangled photons are then typically distributed over quantum channels to be captured or measured at link endpoints, forming the initial resource for entanglement-based protocols.
EPPS nodes can be deployed in various scenarios, including terrestrial fiber links or free-space satellite-to-ground communication~\cite{khatriSpookyActionGlobal2021,fittipaldi-sattelite-quisp,haldarGlobalTimeDistribution2023,yinSatellitebasedEntanglementDistribution2017}.

\begin{wrapfigure}{o}{0.15\textwidth}
\includegraphics[width=0.9\linewidth]{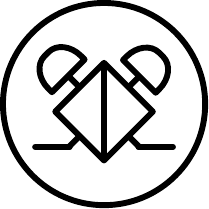}
\label{fig:icon-bsa-node}
\end{wrapfigure}
\textbf{A Bell state analyzer (BSA)} is a crucial component for performing projective measurements on two incoming photons, ideally projecting their combined state into one of the four Bell states.
BSAs are fundamental for realizing photonic entanglement swapping~\cite{zukowski-entanglement-swap}, a primary process that creates link-level entanglement, used particularly to convert memory-photon entanglement into memory-memory entanglement between distant quantum memories.
The efficiency and complexity of a BSA depend on the optical implementation and the specific photonic qubit encoding used.

\begin{wrapfigure}{o}{0.15\textwidth}
\includegraphics[width=0.9\linewidth]{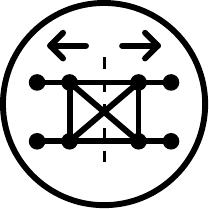}
\label{fig:icon-rgss-node}
\end{wrapfigure}
\textbf{A Repeater Graph State Source (RGSS)} is a specialized source that generates multipartite entangled photonic states, specifically tailored for all-photonic (memory-less) quantum repeater architectures~\cite{azuma-rgs,hilaire-rgs-optimizing-gen-time,buterakos-graph-generation,hilaire-logical-bsm}.
An RGSS typically distributes segments of the generated repeater graph state to adjacent network nodes, where subsequent measurements on these photonic qubits are performed to establish long-distance entanglement without relying on quantum memories.

\begin{wrapfigure}{o}{0.15\textwidth}
\includegraphics[width=0.9\linewidth]{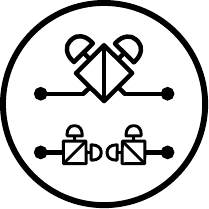}
\label{fig:icon-absa-node}
\end{wrapfigure}
\textbf{An advanced Bell state analyzer (ABSA)} represents a more sophisticated version of a BSA, particularly required in advanced all-photonic repeater protocols based on repeater graph states~\cite{azuma-rgs,hilaire-rgs-optimizing-gen-time,buterakos-graph-generation,hilaire-logical-bsm}.
Unlike basic BSAs, an ABSA must be capable of performing measurements on single or multiple photons in dynamically selectable bases.
The choice of measurement basis often depends on the outcomes of prior measurements within the network and the specific structure of the repeater graph state being utilized, implying more complex hardware and real-time classical control logic.

\begin{wrapfigure}{o}{0.15\textwidth}
\includegraphics[width=0.9\linewidth]{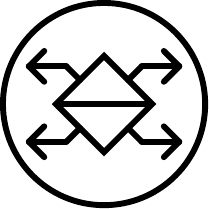}
\label{fig:icon-osw-node}
\end{wrapfigure}
\textbf{An optical switch (OSW)} is a device that can passively route photons from input optical fibers or paths to different output paths without performing measurements on them~\cite{mia-switch-design-paired-egress-bsa-pools}.
OSWs, which can be based on technologies like nanomechanical systems or nanophotonic circuits~\cite{mia-switch-design-paired-egress-bsa-pools}, can be integrated as components within other node types (e.g., routers or complex end nodes) or can function as standalone elements in the network to dynamically reconfigure optical pathways.

This list is by no means exhaustive but covers the main components of a quantum network.
The division into end, repeater and support nodes is not mutually exclusive, as there may be some overlap in functionality.
For example, the RGSS may be viewed as a type of repeater node as well, as it realizes the task of entanglement swapping and link-level generation for all-photonic quantum repeater (discuss in~\cref{sec:quantum-networks:repeaters,chapter:architecture-all-photonic}).
The RGSS and ABSA may participate in the connection planning process; the simpler BSA, on the other hand, is tasked only with notifying two nodes about the success of entanglement creation, and need not be visible to nodes farther away in the path.

\subsection{Routing}
\label{subsec:architecture-memory:routing}

Routing in quantum networks is the process of determining the communication path between a given set of end nodes.
This typically involves two distinct routes: a quantum path consisting of quantum nodes for entanglement distribution, and a separate classical path connecting the control mechanisms of these quantum devices~\cite{munroDesigningTomorrowsQuantum2022}.
It is generally assumed that nodes with quantum channel connections also possess a corresponding classical channel.
As discussed previously, a quantum router's primary task includes participating in finding these routes.
\Cref{fig:full-router} depicts such a router, capable of interfacing with multiple link types or different repeater generations and potentially possessing workspace memory for advanced processing like multipartite entanglement generation or complex error management.
Selecting a route, or connection path, from one end node to another is often termed the entanglement routing problem~\cite{abaneEntanglementRoutingQuantum2025,karRoutingProtocolsQuantum2023}.
\begin{figure}[t]
    \centering
    \includegraphics[width=\textwidth]{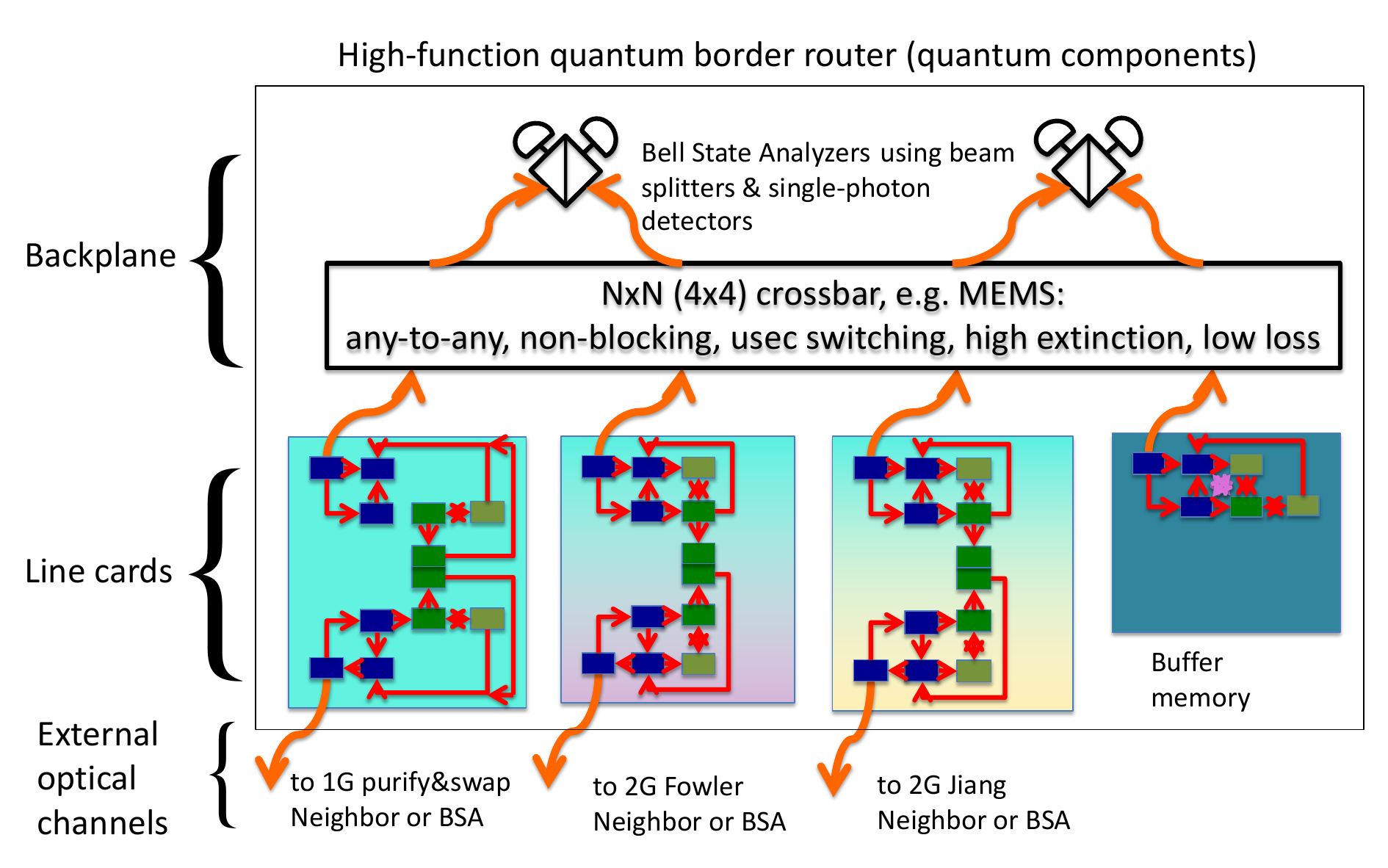}
    \caption[A schematic of a full quantum router]{A full quantum router with hardware architecture similar to today's commercial Internet routers will have QNICs (line cards) coupled via a backplane consisting of optical ports, an optical switch, and Bell State Analyzer measurement devices.
     Using the BSAs, the qubits in the backplane buffers at the top of the line cards are entangled while the transceiver qubits in the lower portion attempt to create entanglement with neighboring nodes.
     Once both backplane and neighbor entangled states are made, entanglement swapping is used within each line card to splice the long-distance entanglement.
    A number of steps in hardware complexity (and cost) will exist between the minimal configuration of and this one. (From~\cite{rdv-qi-architecture}.)}
    \label{fig:full-router}
\end{figure}

A common approach to quantum network routing is to adapt classical shortest-path finding algorithms, which shifts the primary challenge to defining an appropriate cost metric for use with algorithms like Dijkstra's.
This method, sometimes referred to as qDijkstra~\cite{vanmeterPathSelectionQuantum2013,difrancoOptimalPathQuantum2012a}, often employs a cost metric combining the fidelity and generation rate of link-level Bell pairs.
While the rate is relatively easy to characterize, fidelity is a more difficult metric to obtain accurately in practice.
It requires constant link monitoring, as noise sources in quantum links can drift and change over time.
Quantum state tomography~\cite{darianoQuantumTomography2003,jamesMeasurementQubits2001,altepeterQubitQuantumState2004} provides an accurate but resource-intensive way to measure the fidelity of generated Bell pairs at the link level.
Consequently, developing lower-cost means of characterizing quantum states is an active research area~\cite{eisertQuantumCertificationBenchmarking2020a,joshua-distimator,ananda-distimator-sigmetrics,casapao-double-selection-distimator}, and the broader field of quantum state and process learning remains dynamic with many open questions~\cite{huangLearningQuantumUniverse2023}.
Despite these measurement challenges, including fidelity in link metrics allows route calculations to automatically account for the trade-off between high-rate, low-fidelity links and low-rate, high-fidelity ones.
The qDijkstra approach, using such metrics, has demonstrated good agreement between calculated path costs and simulated throughput across various paths with heterogeneous links~\cite{vanmeterPathSelectionQuantum2013}.

Although qDijkstra shows good results for certain link characteristics, it is not always optimal, and defining a universally effective and easily determinable cost metric remains an open problem.
Using only link fidelity and rate as costs is often insufficient, as the specific shape and characteristics of noise also play a vital role in the final end-to-end entanglement fidelity~\cite{difrancoOptimalPathQuantum2012a}.
Recognizing these limitations, recent research has begun to move beyond simple qDijkstra-like approaches, attempting to integrate multiple factors---such as the probabilistic nature of entanglement swapping, memory decoherence during waiting periods, and purification scheduling---into more comprehensive global cost metrics~\cite{caleffiOptimalRoutingQuantum2017}.
These newer strategies draw from a diverse toolkit of optimization and algorithmic techniques. For example, some utilize linear programming and mixed-integer constrained linear programming~\cite{zhaoE2EFidelityAware2022,chakrabortyEntanglementDistributionQuantum2020a,zengMultiEntanglementRoutingDesign2022}, while others employ greedy algorithms~\cite{pant-routing-entanglement,chakrabortyDistributedRoutingQuantum2019a,gyongyosiDecentralizedBasegraphRouting2018}.
Furthermore, AI-based approaches~\cite{gyongyosiEntanglementGradientRoutingQuantum2017,leDQRADeepQuantum2022} and various multi-path routing methodologies~\cite{pant-routing-entanglement,pirandola-e2e-rate-limit,acin-percolation,patil-distance-ind-rate,sutcliffeMultiuserEntanglementDistribution2023,halderConcurrentMultipathEntanglement2024,lopiparoQuantumAggregationTemporal2024,leone-multi-path-routing-qunet} are also being actively investigated.
Thus, defining appropriate and readily determinable cost metrics continues to be an important area of exploration for efficient quantum network routing~\cite{azumaNetworkingQuantumNetworks2025a}.

\subsection{Multiplexing and resource allocation}
\label{subsec:architecture-memory:multiplexing}

The term multiplexing can carry different meanings depending on context.
At the physical layer of a quantum network, it might refer to techniques such as wavelength-division multiplexing (WDM) or frequency-division multiplexing (FDM) in optical fibers or spatial multiplexing via multiple cores or modes~\cite{dharaSubexponentialRateDistance2021,chakrabortySpectrallyMultiplexedQuantum2025,fanQuantumEntanglementNetwork2025,khodadadkashiFrequencybinencodedEntanglementbasedQuantum2025,lago-riveraTelecomheraldedEntanglementMultimode2021,ruskucMultiplexedEntanglementMultiemitter2025,clarkEntanglementDistributionQuantum2023}.
However, in this thesis, multiplexing refers specifically to \emph{network-level resource multiplexing}: the dynamic allocation of shared quantum resources---such as quantum memories at repeater nodes, link-level Bell pairs, or intermediate entangled states---to multiple, concurrent end-to-end user connections.

Several multiplexing strategies have been proposed to manage these quantum network resources effectively~\cite{aparicio-master-thesis,aparicioMultiplexingSchemesQuantum2011,collins-multiplexing-of-memories}.
In a circuit-switching approach, resources along the entire path are reserved exclusively for a given request for the full duration of its lifetime.
Time-division multiplexing (TDM) alternates resource allocation across requests in a round-robin per Bell pair or time-sliced manner.
Buffer-space multiplexing takes a more dynamic approach, assigning available quantum memories to requests as they arrive, with per-connection buffers resembling the notion of network slicing~\cite{barakabitze5GNetworkSlicing2020}.
Finally, statistical multiplexing---inspired by best-effort forwarding in classical networks---allocates resources probabilistically, using heuristics based on request type, urgency, or priority.
This can lead to higher overall throughput under load by letting separate parts of the network operate independently while sharing bottlenecks~\cite{pathumsootModelingMeasurementbasedQuantum2020}.

Each of these schemes comes with trade-offs.
Circuit switching, while simple and reliable, underutilizes resources in high-load regimes.
Buffer-space and statistical multiplexing have shown higher aggregate throughput in simulation studies, especially in scenarios with heterogeneous traffic and competing demands~\cite{aparicioMultiplexingSchemesQuantum2011, aparicio-master-thesis}.
Statistical multiplexing in particular has been shown to outperform circuit switching by exploiting temporal variations in demand.
Xiao et al.~\cite{xiaoConnectionlessEntanglementDistribution2024} compared buffer-space and TDM approaches, although their study focused on throughput with sequential entanglement swapping and did not account for fidelity.
These investigations remain mostly limited to small-scale networks.
More recently, mixed-integer linear programming (MILP) techniques have been applied to jointly optimize routing and multiplexing~\cite{huangNearOptimalSwappingPurifying2024,liNarrowGapReducingBottlenecks2025}, suggesting a path toward more integrated control.

However, many open questions remain.
Current models often neglect the complexities introduced by traffic variability, fidelity constraints, or the interaction between routing and resource allocation.
There is concern that uncoordinated multiplexing could lead to phenomena similar to congestion collapse, where short-range connections monopolize limited entanglement resources, starving long-range paths.
Multiplexing decisions must therefore be made in conjunction with routing logic and security mechanisms---including authentication, authorization, and accounting (the AAA model, discussed next).
Non-blocking multiplexing protocols are preferable where possible, but any protocol that allows for extended resource occupancy must be governed by policies that consider user identity, priority, and possibly even payments or credit-based quotas.

\subsection{Authentication, authorization, and accounting}
\label{subsec:architecture-memory:aaa}

As previously discussed, the gap between available quantum resources and anticipated demand is likely to necessitate fixed or semi-fixed resource allocation in early-stage quantum networks.
Under such constraints, mechanisms for authentication, authorization, and accounting (AAA) will be essential components of the network architecture~\cite{fajardoDiameterBaseProtocol2012}.
These systems will not only manage access control and usage tracking but also help enforce policies that determine who can reserve scarce resources and under what conditions.

In the absence of a carefully designed AAA framework, economic factors alone may dictate access, potentially privileging users or institutions with the ability to pay the most.
To ensure equitable participation and avoid early monopolization, it will be important to explore AAA architectures that incorporate fairness criteria, priority based on mission-critical applications, or other policy-driven metrics beyond purely financial bids.

\subsection{Security}
\label{sucsec:architecture-memory:security}

Quantum mechanics offers the promise of unprecedented confidentiality between communicating parties---an advantage that has made quantum key distribution (QKD) a major focus in both theoretical physics and computer science.
However, the intense focus on QKD has shaped a somewhat narrow view of quantum network security, one that overlooks broader architectural and protocol-level vulnerabilities.
While it is true that quantum mechanics enables new capabilities---such as the detection of eavesdropping and certain forms of tampering---it also introduces entirely new attack surfaces~\cite{satohAttackingQuantumInternet2021,blakelyQuantumInformationSystem2024,jiangSidechannelSecurityPractical2024,zhouQuantumNetworkSecurity2022}.

Some of these threats are rooted in the physical nature of quantum systems.
For example, although QKD is in principle information-theoretically secure, real-world implementations are susceptible to side-channel attacks~\cite{fungPhaseremappingAttackPractical2007,jiangSidechannelSecurityPractical2024}, Trojan-horse attacks~\cite{sushchevTrojanhorseAttackRealworld2024,gisinTrojanHorseAttacks2006}, and vulnerabilities such as photon number splitting~\cite{lutkenhausQuantumKeyDistribution2002}, which arise due to imperfections in sources and detectors~\cite{lutkenhausFocusQuantumCryptography2009}.
These are not weaknesses of the protocol itself but of the hardware, software, and deployment environment---a fact that holds true for classical cryptographic systems as well.
Other classes of vulnerabilities stem from network architecture and protocol design choices, such as denial-of-service (DoS) attacks on key quantum channels or repeaters, and impersonation of nodes via classical control messages.
These threats highlight the need to consider both quantum-specific and classical attack vectors in any serious security model.

Moreover, distributed quantum applications---such as secure multi-party computation, quantum consensus protocols~\cite{taherkhaniResourceawareSystemArchitecture2017}, and leader election~\cite{taniExactQuantumAlgorithms2005,taniExactQuantumAlgorithms2012}---rely heavily on assumptions about trust, timing, and node availability.
Any deviation from these assumptions can be exploited by adversaries, potentially undermining the correctness of the protocol.
Thus, even in quantum networks, security is not just about encryption; it also involves resilience against disruption, impersonation, and strategic manipulation of the protocol environment.

All of this reinforces the critical importance of robust authentication and tamper-resistance mechanisms in early quantum network deployments.
Whether classical privacy is necessary in all contexts is still an open question, but given the history of security oversights in classical networks, it is clear that a comprehensive security framework should be in place early---ideally well before operational quantum networks scale.
These security concerns are closely intertwined with multiplexing strategies and AAA (authentication, authorization, and accounting) systems.
In particular, as discussed previously, any form of persistent or exclusive resource allocation (such as circuit switching) will inevitably raise issues of identity, access rights, and possibly even economic fairness---each of which must be secured against misuse and abuse.

\section{Internetworking and recursive architecture}
\label{sec:architecture-memory:internetworking-qrna}

\begin{figure}[!t]
    \centering
    \includegraphics[width=\columnwidth]{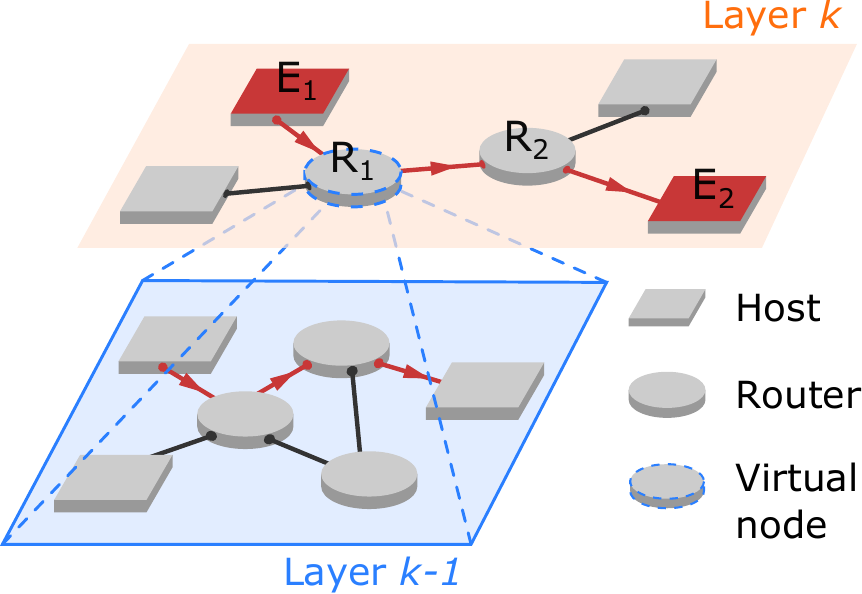}
    \caption[Recursive architecture illustration of QRNA]{\label{fig:qrna} QRNA adopts a fully recursive architecture that enables an entire network to be virtualized as a node.
    QRNA can extend down to the link layer, and interoperates across internally heterogeneous networks as long as they conform to the protocol at their borders~\cite{kozlowski-designing-network-protocol}. (From~\cite{rdv-qi-architecture}.)}
\end{figure}

The Internet is often idealized as a two-level architecture: external gateway protocols like BGP manage routing between networks, while internal gateway protocols such as OSPF and IS-IS handle routing within them.
In reality, the structure is far more intricate.
Tunneling protocols, virtual networks, switched Ethernet segments (which rely on Spanning Tree Protocol despite being ``link-layer''), and modern virtualization approaches like 5G network slicing~\cite{barakabitze5GNetworkSlicing2020} have shaped the Internet into a complex, multi-tiered system with ad hoc inter-layer interactions.

Given the opportunity to redesign the system from first principles~\cite{bacciottiniLeveragingInternetPrinciples2025a}---and with the benefit of hindsight---we would likely prefer a unified framework that explicitly supports clean abstraction boundaries and systematic composition across layers.

One such framework is the Recursive Network Architecture (RNA), proposed by Touch \textit{et al.}~\cite{touchDynamicRecursiveUnified2011,touchRNAMetaprotocol2008,touchRecursiveNetworkArchitecture2006a}.
Inspired by RNA, Van Meter, Touch, and Horsman proposed the Quantum Recursive Network Architecture (QRNA)~\cite{vanmeterRecursiveQuantumRepeater2011} as a foundational model for the Quantum Internet which we are building upon.
QRNA offers a scalable blueprint for connecting logically and physically heterogeneous networks while preserving autonomy, privacy, and security through recursion.

Recursion in QRNA permeates both naming (\cref{subsec:architecture-memory:state-naming-in-ruleset}) and routing (\cref{subsec:architecture-memory:routing}), producing a hierarchy of address spaces modeled as a directed acyclic graph.
This hierarchical structure supports scalable routing, localized policy enforcement, and layered isolation---key to privacy and trust.

In traditional internetworking, connections are typically either boundary-to-boundary (for transit) or boundary-to-end-node (for termination).
QRNA unifies these under a recursive abstraction; both are viewed as end-to-end connections at different levels of the stack.
As shown in \cref{fig:qrna}, a host node $E_1$ initiating a connection to $E_2$ at layer $k$ perceives a direct path via routers $R_1$ and $R_2$.
When the request reaches the border router, it is translated and embedded into layer $k-1$, where the path is resolved internally before being passed back up to layer $k$.
This nesting can continue recursively across multiple layers, allowing composition of arbitrarily complex internetworks.

\subsection{Connection setup with rewrite}
\label{subsec:architecture-memory:connection-setup-with-rewrite}

The recursive model must integrate with the two-pass connection setup described in~\cref{subsec:architecture-memory:two-pass-connection-setup}.
We achieve this via \textit{RuleSet rewriting} at recursion boundaries---specifically, border routers translate detailed link information from internal layers into abstracted virtual links for higher layers.

\begin{figure}
    \begin{center}
    \includegraphics[width=\textwidth]{internetwork.png}
    \end{center}
    \caption[Two-pass connection setup with rewrite in a recursive internetwork]{\label{fig:recursive-setup} Two-pass connection setup in a recursive internetwork.
    Arrows indicate how the connection request moves across and between layers.
    Each recursion layer encapsulates its own Initiator (I) and Responder (R), resulting in multiple I-R pairs at layer $k-1$. (From~\cite{rdv-qi-architecture}.)}
\end{figure}

This process is illustrated in \cref{fig:recursive-setup}.
To preserve network autonomy and scalability, each border router rewrites its segment of the path into a single logical hop.
This abstraction mimics how BGP hides internal topology while advertising route metrics.
The border router acts as a Responder to the upstream Initiator, reporting a performance metric (e.g., Bell pair generation rate at a given fidelity) that summarizes its internal network's behavior.

In the Quantum Internet, especially in first- and second-generation repeaters, connections are stateful distributed computations.
All intermediate nodes participate in operations such as purification and entanglement swapping, requiring connections to be pre-established and maintained via state along the entire path.

During setup, each node receives a Responder (destination) address and selects the next hop based on local policy.
If the received destination address is a virtual one (i.e., not a direct link-level neighbor), the node translates the connection into a request for the layer below. 
At each recursion level, new Initiator-Responder pairs are instantiated.
When a request reaches a Layer $k-1$ Responder (e.g., nodes 3.99 or 4.1 in \cref{fig:recursive-setup}), a provisional RuleSet is constructed.
This RuleSet governs local behavior and is distributed backward along the setup path.
Upon approval by the destination, the final RuleSet is confirmed and installed.

This recursive encapsulation provides several benefits:
\textbf{(i)} It bounds the scope of routing and state information, enhancing scalability and autonomy.
\textbf{(ii)} It allows each administrative domain to innovate locally within the RuleSet framework.
\textbf{(iii)} It enables systematic reasoning about complex paths via layered abstraction.
\textbf{(iv)} It supports interoperability between different repeater generations, enabling seamless deployment of new technologies~\cite{shota-interopability}.
\textbf{(v)} It facilitates interworking between fundamentally different network architectures~\cite{kozlowski-designing-network-protocol}.

\subsection{Routing, multiplexing, and AAA}
\label{subsec:architecture-memory:routing-multiplexing-aaa}

Routing in QRNA generalizes classical inter-domain models by requiring a local routing protocol at each recursion level.
While classical Internet routing relies on BGP and internal protocols like OSPF, QRNA uses full recursion, with each layer managing its own topology, policy, and performance metrics.

At higher recursion levels, path selection is typically guided by virtual link costs, often using the qDijkstra metric.
This abstraction enables scalable inter-network routing by treating entire networks as single links at higher layers.
When a network is virtualized in this way, it exposes performance metrics---such as entanglement generation rate or fidelity---that reflect internal policies.
For instance, a network may choose to perform additional purification internally to present a higher-fidelity virtual link to upper layers.
Prior work on purification scheduling and swap optimization~\cite{vanmeterSystemDesignLongLine2009,poramet-quanta-qcnc,poramet-quanta-extended,zangAnalyticalPerformanceEstimations2024,shchukin-optimal-entanglement-swapping,haldar-purify-swap-schedule,chenOptimumEntanglementPurification2024} has shown that performing purification closer to the physical link layer can significantly improve end-to-end throughput and fidelity.

At the lowest levels, within individual networks, routing strategies may vary and are not limited to distributed models like qDijkstra.
Centralized or policy-driven approaches can often yield better performance and may be preferable depending on local infrastructure and goals (see \cref{subsec:architecture-memory:routing}).

An intriguing direction for future research at higher recursion levels is the adaptation of segment routing techniques to quantum networks.
In classical networks, segment routing~\cite{filsfilsSegmentRoutingArchitecture2018,caiEvolveCarrierEthernet2014,carpaSegmentRoutingBased2014} enables a source to encode a sequence of forwarding instructions---called segments---directly into the packet, allowing for flexible, programmable routing without requiring per-flow state at intermediate nodes. Applied to QRNA, this could involve embedding segment identifiers corresponding to recursive virtual links, streamlining route computation and enabling richer policy expression, such as fidelity constraints or preferred purification strategies.

Multiplexing is more challenging.
As with classical networks, many concurrent connections must share infrastructure, and mechanisms for fairness, scheduling, and resource allocation are essential.
Unfortunately, the classical Internet provides limited guidance: while inter-domain QoS mechanisms have been discussed for decades, they remain sparsely deployed~\cite{clarkTussleCyberspaceDefining2002,xiaoInternetQoSBig1999,}.
As such, scalable, secure multiplexing remains a critical open research problem in quantum networking.

\section{Quantum routing software architecture (QRSA)}
\label{sec:architecture-memory:qrsa}

The operational capabilities of quantum repeater nodes---particularly the execution of complex, stateful protocols like RuleSets within the broader principles of the Quantum Recursive Network Architecture (QRNA)---are realized through a well-defined software framework. Classical control software plays a critical role in designing and deploying efficient repeater-based quantum networks.

To this end, we propose the Quantum Routing Software Architecture (QRSA), a modular and extensible framework that provides the essential functionality for managing quantum resources, executing distributed protocols, and interfacing with physical hardware. QRSA serves as the node-level realization of the QRNA principles discussed in earlier sections, with software components designed to reflect the abstractions introduced in RuleSets and the distributed coordination requirements of quantum networking.

As illustrated in~\cref{fig:architecture-memory-qrsa-diagram}, QRSA consists of five primary components: the Connection Manager (CM), Routing Daemon (RD), Hardware Monitor (HM), Rule Engine (RE), and Realtime Controller (RC). Together, these modules form the minimal functional configuration needed for a quantum repeater or end node to participate in a scalable quantum network. The following subsections describe each component in detail.

\begin{figure}[htbp]
  \centering
  \includegraphics[width=0.8\textwidth]{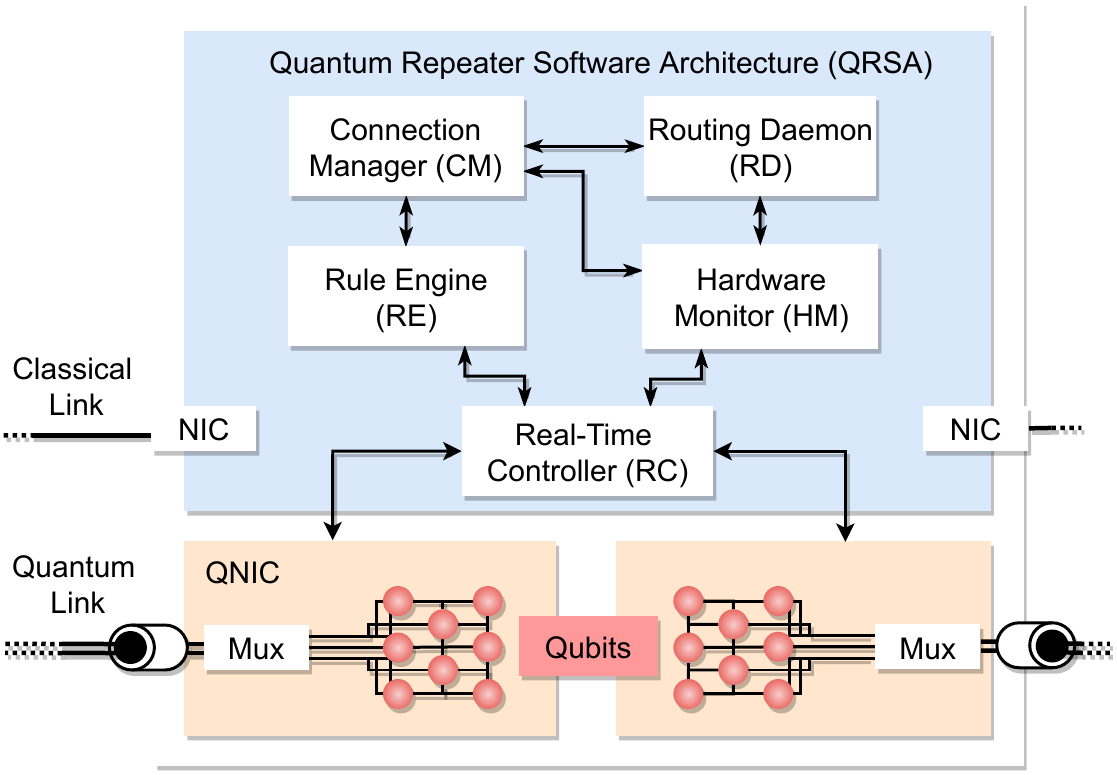}
  \caption[A depiction of quantum routing software architecture]{The Quantum Routing Software Architecture (QRSA) comprises five interacting software components: Connection Manager (CM), Routing Daemon (RD), Hardware Monitor (HM), Rule Engine (RE), and Realtime Controller (RC), managing both classical and quantum communication interfaces (QNICs). (From~\cite{cocori-quisp}.)}
  \label{fig:architecture-memory-qrsa-diagram}
\end{figure}

\subsection{Connection Manager}
\label{subsec:architecture-memory:connection-manager}

The Connection Manager (CM) is responsible for orchestrating connection establishment between an Initiator and Responder, spanning all intermediate nodes along the selected path. Its duties include handling incoming connection requests, assessing feasibility based on current resource availability and node policies, and coordinating the two-pass connection setup protocol detailed in~\cref{subsec:architecture-memory:two-pass-connection-setup,subsec:architecture-memory:connection-setup-with-rewrite}.

During the outbound pass of the connection setup, the CM at each intermediate node appends local link information---such as available QNICs and link characteristics---to the traversing request. Once the annotated request reaches the Responder node and is accepted, the CM's most critical role begins; generating the RuleSets for each node along the path~\cite{SfcaquaRuleSetSpec2024}.

The CM at the Responder uses the aggregated link information to determine whether the network can meet the requested connection parameters (e.g., target fidelity, number of Bell pairs). If feasible, it constructs RuleSets that encode the protocol logic and resource coordination for each participating node and distributes them during the return pass. If the connection cannot be supported, the CM returns a rejection to the Initiator.

\subsection{Routing Daemon}
\label{subsec:architecture-memory:routing-daemon}

The Routing Daemon (RD) maintains path information to other nodes in the network, focusing on quantum-capable links. It builds and updates routing tables by exchanging reachability or link-state information with peer daemons at neighboring nodes via a suitable routing protocol (see~\cref{subsec:architecture-memory:routing}). The RD integrates local link metrics provided by the Hardware Monitor, such as entanglement rate or fidelity, and reflects network topology constraints or policies.

The RD's information is vital during connection setup. When the CM receives a request to establish a connection, it interfaces with the RD to determine the next hop toward the Responder. This interaction can be a direct query, or it can be achieved by the CM reading from a shared data structure, such as a routing table, that the RD maintains and continuously updates. The level of detail in this data can vary depending on the routing algorithm, ranging from minimal reachability information to full path-level metrics.

\subsection{Hardware Monitor}
\label{subsec:architecture-memory:hardware-monitor}

The Hardware Monitor (HM) is responsible for tracking the performance and operational characteristics of local quantum hardware and link interfaces. Since link quality directly impacts the feasibility and efficiency of end-to-end quantum protocols, the HM provides timely updates on the status of each QNIC, including entanglement generation success probabilities, link-level fidelities, noise models, and memory decoherence parameters.

This information is consumed by both the Routing Daemon (RD) and the Connection Manager (CM). The RD uses it to update link cost metrics and inform routing decisions, while the CM relies on it to tailor RuleSets to prevailing link conditions. For example, if a link exhibits high loss or reduced fidelity, the CM may increase the number of purification rounds or opt for alternative purification circuits. In the case of second- and third-generation repeaters, the HM may also provide logical qubit fidelity data under different error-correcting code distances.

Continuously characterizing links, however, can be resource intensive---especially in dynamic environments where link characteristics fluctuate due to external factors. Moreover, some diagnostic procedures (such as quantum state tomography) may temporarily render a link unavailable during the measurement process, reducing overall network throughput.

Recent work suggests alternatives that can reduce this overhead. For instance, there are methods proposed for inferring link properties through passive observation of entanglement purification outcomes~\cite{ananda-distimator-sigmetrics,joshua-distimator,casapao-double-selection-distimator}. This avoids the need to suspend normal operation while still gaining useful information. Similar ideas have been applied to quantum error correction, where techniques have been developed to estimate noise properties from syndrome data~\cite{wagnerOptimalNoiseEstimation2021,wagnerPauliChannelsCan2022a}, offering relevance for second- and third-generation repeaters.

Developing efficient, non-intrusive methods for detecting and adapting to link-level changes remains an important open problem. Advances in this area would improve real-time responsiveness to hardware fluctuations, ultimately enhancing the robustness and performance of quantum networks.

\subsection{Rule Engine}
\label{subsec:architecture-memory:rule-engine}

The Rule Engine (RE) is the component responsible for executing RuleSets at each node. Once a RuleSet is delivered by the CM, the RE instantiates and manages its lifecycle, monitoring available resources and triggering actions when their associated conditions are met. The RE tracks the state of local qubits---via tags and sequence numbers as described in~\cref{subsec:architecture-memory:state-naming-in-ruleset}---as well as classical messages, which may originate from the network or other software components.

When a Rule's condition clause is satisfied, the RE executes the associated action clause. This may involve invoking quantum operations via the Realtime Controller, updating resource metadata (e.g., re-tagging or releasing qubits), or generating new classical messages. In this way, the RE drives the autonomous, event-driven logic of the repeater node in accordance with the RuleSet's semantic specification.

A key function of the RE is managing resource multiplexing, ensuring that qubits are allocated and released according to active RuleSets. It also participates in connection teardown---when an end node determines that a RuleSet has completed, a teardown message is propagated through the path to instruct intermediate nodes to clean up associated state. Designing efficient teardown and resource agreement protocols is non-trivial but beyond the scope of this thesis, we refer curious readers to check some preliminary work in~\cite{SfcaquaRuleSetSpec2024}.

\subsection{Realtime Controller}
\label{subsec:architecture-memory:realtime-controller}

The Realtime Controller (RC) is the lowest layer in QRSA, responsible for translating logical commands from the RE into precise control signals for the quantum hardware. It executes quantum operations such as gate applications, Bell measurements, qubit initialization, and photon emissions. Because many quantum operations are time-sensitive, the RC must meet strict real-time performance constraints.

The RC generates the appropriate physical signals (e.g., laser pulses, microwave pulses, or hardware triggers) and synchronizes timing with neighboring nodes during link-level entanglement generation. It also handles qubit re-initialization, measurement readout, and low-level coordination of hardware components. In this way, it serves as the operational interface between abstract protocol logic and the concrete capabilities of the quantum device.

\subsection{Component interactions and overall operation}
\label{subsec:architecture-memory:qrsa_discussion}

The five components of QRSA work in concert to support distributed quantum networking. The process typically begins with the Connection Manager receiving a connection request. It queries the Routing Daemon for the next hop, consults the Hardware Monitor for link quality, and determines if the path can support the desired protocol. If the Responder approves the connection, the CM generates and distributes RuleSets to all participating nodes.

At each node, the Rule Engine loads and executes the RuleSet, monitoring resource availability and triggering quantum operations through the Realtime Controller. The RC, in turn, interfaces directly with the hardware to perform gate operations, entanglement generation, measurements, and qubit resets. Classical messages generated during protocol execution are routed via the system's classical NICs and interpreted by the REs at destination nodes.

This modular software architecture is designed to be extensible, interoperable, and suitable for various generations of quantum repeater protocols. Each component has clearly defined responsibilities and interfaces, enabling clean abstraction layers and enabling research into protocol-level behavior, resource allocation, and network scheduling. QRSA also serves as the implementation target for the QuISP simulator (\cref{sec:architecture-memory:evidence}), which models these components in a realistic time-driven simulation environment. We believe QRSA represents a minimal, yet powerful, foundation for the implementation of scalable quantum repeater nodes.

\section{Simulation as evidence for architectural design}
\label{sec:architecture-memory:evidence}

The architectural framework presented in this thesis, comprising the Quantum Recursive Network Architecture (QRNA), RuleSets, and the Quantum Routing Software Architecture (QRSA), aims to provide a scalable and modular approach for designing with the ability to analyze quantum networks.
Throughout this chapter, we have described the architectural principles and protocol semantics in detail.
In this section, we turn to simulation as a means of evaluating the behavioral fidelity and practicality of these ideas.

Simulation plays two complementary roles: it allows us to reason about the correctness of protocol behavior under realistic conditions, and it enables empirical validation of our models through comparison with analytical predictions and alternative implementations.
In what follows, we first describe how RuleSets enable local reasoning about distributed behavior, then introduce the QuISP simulator~\cite{cocori-quisp,kaaki-ruleset-sim}, explain its timing and error models, and present the results of a comparative validation study~\cite{joaquin-michal-cross-validation-quantum-network-simulators} with another quantum network simulator, the SeQUeNCe simulator~\cite{wuSimulationsPhotonicQuantum2019,wuSeQUeNCeCustomizableDiscreteevent2021,wuParallelSimulationQuantum2024}.

\subsection{Reasoning with RuleSets and the role of simulation}
\label{subsec:architecture-memory:reasoning-and-simulation}

In an ideal world, the long-term validation of a network architecture would come through widespread real-world adoption.
However, many well-designed systems---whether processors, operating systems, or communication protocols---have failed to gain traction due to external factors unrelated to technical merit.
More importantly, retrospective success offers little help when evaluating architectures prospectively.
For quantum networks, where deployments are fledging and experimental feedback is currently limited, it is crucial to develop abstractions and tools that support structured reasoning and prospective validation.

A core motivation for RuleSets, introduced in \cref{sec:architecture-memory:ruleset-definition-examples}, is to provide a programmable and analyzable abstraction for specifying node behavior in a distributed quantum network.
RuleSets define node logic through event-driven condition-action rules, enabling modular and locally inspectable implementations of complex protocols such as entanglement swapping and purification.
Their structure supports reasoning about protocol semantics, resource state transitions, timing constraints, and inter-node coordination.

Beyond supporting the implementation of correct behavior, RuleSets also help identify failure scenarios.
Timing-sensitive issues such as leapfrogging or premature resource discard (see \cref{subsec:architecture-memory:rulesets-summary-implication}) can be addressed by carefully designing condition clauses and discard timers relative to classical messaging delays.
This makes RuleSets a practical framework for protocol design that balances abstraction with operational precision.

Nevertheless, this kind of design-time reasoning does not guarantee correctness in deployed systems.
The real behavior of a network depends on the execution semantics provided by the underlying QRSA software stack, including the Rule Engine, classical messaging, and physical resource control.
Bugs or mismatches may arise from several sources:
\begin{itemize}
    \item Misinterpretation of Rule Engine behavior by the RuleSet author;
    \item Implementation flaws in QRSA components (e.g., Connection Manager or Rule Engine);
    \item Incorrect assumptions about timing, failure handling, or hardware error rates;
    \item Unanticipated interactions or edge-case behaviors in concurrent execution.
\end{itemize}

To mitigate these risks, simulation plays a critical role.
It allows us to evaluate protocol behavior under noisy, asynchronous, and resource-constrained conditions---revealing issues that are difficult to anticipate analytically.
In this way, simulation complements human reasoning and provides an essential bridge between protocol design and practical implementation, supporting the iterative refinement of both logic and architecture.

\subsection{QuISP and the QRSA execution model}
\label{subsec:architecture-memory:quisp}

To support simulation-based study of RuleSets and QRNA, we developed \emph{Quantum Internet Simulation Package (QuISP)}, an open-source quantum network simulator implemented on top of the OMNeT++ discrete event simulation engine~\cite{vargaOVERVIEWOMNeTSIMULATION2008,vargaOMNeT2010a}.
QuISP is explicitly designed to reflect the QRSA model described in \cref{sec:architecture-memory:qrsa}, with a modular structure comprising the five core software components: the Connection Manager (CM), Routing Daemon (RD), Hardware Monitor (HM), Rule Engine (RE), and Realtime Controller (RC).

Each quantum node inside QuISP operates autonomously using RuleSets to guide its logic.
When a condition is satisfied (e.g., a message arrives, a resource becomes available), an associated action is triggered; this may involve performing a local quantum operation, sending a message, or tagging or discarding a qubit.
QuISP tracks both quantum and classical state over time, simulating the behavior of nodes in full accordance with QRSA semantics.

Importantly, while QuISP supports the execution of full RuleSet-based logic, our current use of the simulator has focused on small-scale networks, primarily linear repeater chains.
The primary goal has been to confirm that the RuleSet abstraction and QRSA model result in the expected low-level behavior in the presence of timing delays, resource constraints, and noise.
The simulator provides a platform for testing both protocol logic and architectural behavior under these controlled settings.
It is of interest to extend the use cases of QuISP to the study of emergent behaviors of large-scale quantum network in the future.

\subsection{SeQUeNCe and its execution model}
\label{subsec:architecture-memory:sequence-overview}

To assess the validity of QuISP's physical and timing models, we conducted a comparative study with \emph{SeQUeNCe: Simulator of QUantum Network Communication}, an independently developed open-source quantum network simulator by Argonne National Laboratory and the University of Chicago~\cite{wuParallelSimulationQuantum2024,wuSeQUeNCeCustomizableDiscreteevent2021,wuSimulationsPhotonicQuantum2019}.
Although SeQUeNCe does not implement the QRSA model or use RuleSet-based protocols, it operates with a similar rule-based resource manager that governs quantum memory allocation and coordination between its internal modules (see \cref{fig:architecture-memory:sequence-software-architecture}).

\begin{figure}[!t]
    \centering
    \includegraphics[width=0.9\linewidth]{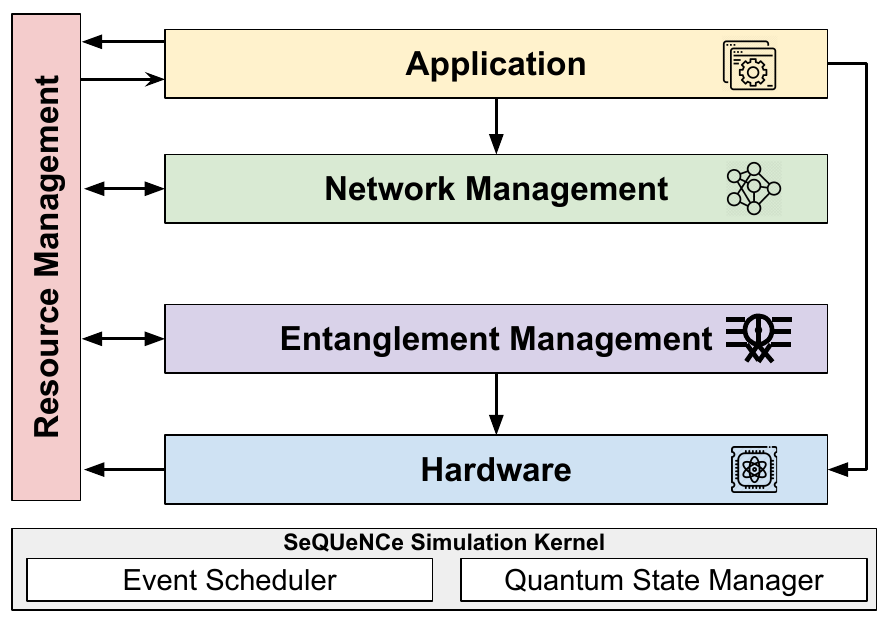}
    \caption[SeQUeNCe software framework]{SeQUeNCe software framework comprises a simulation kernel and five functional modules. (From~\cite{joaquin-michal-cross-validation-quantum-network-simulators}.)}
    \label{fig:architecture-memory:sequence-software-architecture}
\end{figure}

SeQUeNCe is structured around a modular architecture that enables flexible customization and reuse of simulation components. It consists of five primary modules---hardware, entanglement management, resource management, network management, and application---built atop a discrete event simulation kernel. This kernel handles event scheduling and includes a quantum state manager that maintains the evolving quantum state of all components. Quantum states in SeQUeNCe can be represented in either bra-ket or density matrix formalism, supporting detailed modeling of quantum operations, noise, and decoherence.

The hardware module models the physical characteristics of quantum components such as QNICs and quantum memories, including their errors and limitations. The entanglement management module implements protocols for generating and managing entangled pairs. Resource and network management modules coordinate local and inter-node resource usage using classical control messages. SeQUeNCe's resource manager follows a rule-based logic to update memory state, functionally similar to the memory handling mechanism of QuISP's Rule Engine.

The application module models end-user quantum programs and their service demands. Additionally, SeQUeNCe integrates analytical noise models for efficient simulation of noisy bipartite entanglement, enabling simulations to maintain both physical realism and scalability. Together, these components allow SeQUeNCe to simulate a wide range of quantum network scenarios with high configurability and physical fidelity.

\subsection{Timing models of QuISP and SeQUeNCe}
\label{subsec:architecture-memory:timing-model}

Both QuISP and SeQUeNCe employ discrete-event simulation frameworks to capture the timing behavior of quantum and classical network operations, but they differ in execution strategy and modeling granularity.

\textbf{QuISP} models quantum operations, photon emission, and classical message propagation as discrete events scheduled with configurable system parameters. Link-level delays are determined by physical distance and the speed of light in fiber, while additional classical processing latencies can be specified for components such as the Rule Engine. Qubits experience time-dependent decoherence using a sliced exponential noise model: the qubit's lifetime is divided into short slices $\delta t$, and probabilistic error channels are applied slice by slice, closely approximating continuous error accumulation.

Link-level entanglement generation in QuISP begins with a connection setup, during which the desired number of Bell pairs $N_{\text{Bell}}$ is specified. After synchronization, nodes emit photon trains toward a Bell-state analyzer (BSA), with photons separated by a time interval $t_{\text{sep}}$. Each round consists of photon emission, BSM processing, and classical feedback. The duration of one round is
\begin{equation}
    T_{\text{round}} = 2T_0 + 10t_{\text{sep}} + (N_{\text{mem}} - 1)t_{\text{sep}},
\end{equation}
where $T_0 = \max(L_A, L_B)/c$ is the one-way signal delay from BSA to the farthest node and $N_{\text{mem}}$ denoting the number of memories the node holds.
The expected number of rounds needed to generate $N_{\text{Bell}}$ pairs is
\begin{equation}
    k = \left\lceil \frac{N_{\text{Bell}}}{N_{\text{mem}} p_{\text{succ}}} \right\rceil,
\end{equation}
where $p_{\text{succ}}$ is the success probability of a BSM.
Thus, the total expected time is
\begin{equation}
    T_{\text{exp}}^{\text{QuISP}}(N_{\text{Bell}}) = T_{\text{setup}} + k \cdot T_{\text{round}},
\end{equation}
with $T_{\text{setup}} = 2L/c + T_0$ representing the initial setup delay.

\textbf{SeQUeNCe}, by contrast, initiates a fresh entanglement generation protocol for each Bell pair, using a three-way handshake between nodes and the BSA. Timing events such as photon emission and classical messaging are scheduled based on physical path lengths and configured device latencies. The handshake introduces setup latency for every round, making the per-attempt model more sensitive to communication overhead.

Each entanglement generation round in SeQUeNCe includes handshake and feedback delays totaling approximately as
\begin{equation}
    T_{\text{round}}^{\text{SeQUeNCe}} = 4L / c.
\end{equation}
The total expected time to generate $N_{\text{Bell}}$ Bell pairs is then
\begin{equation}
    T_{\text{exp}}^{\text{SeQUeNCe}}(N_{\text{Bell}}) = k \cdot T_{\text{round}}^{\text{SeQUeNCe}},
\end{equation}
where $k$ is again the expected number of successful attempts.
This round-by-round negotiation model simplifies protocol logic but adds latency, especially for long-distance links.

Message diagrams illustrating both timing models are shown in \cref{fig:architecture-memory:quisp-sequence-timing-model}, helping to visualize the key differences in protocol structure and scheduling assumptions.

\begin{figure}[!t]
    \centering
    \includegraphics[width=\textwidth]{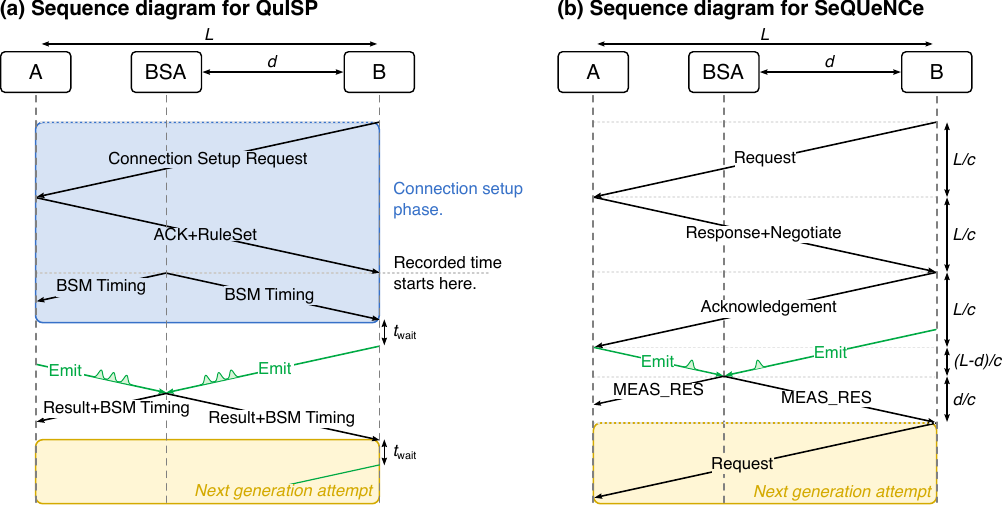}
    \caption[Message diagram for link-level entanglement generation in QuISP and SeQUeNCe]{
        Message diagrams for link-level entanglement generation in QuISP and SeQUeNCe.
        (a) In QuISP's timing model, a single connection setup phase (blue) is performed once to request a batch of Bell pairs.
        This initiates a continuous stream of photon emission and BSA attempts (white), a process that repeats (yellow) until the requested number of Bell pairs are successfully generated.
        (b) In SeQUeNCe's timing model, a three-way handshake protocol is used to generate each Bell pair individually.
        The entire handshake process (yellow) is repeated for every subsequent pair required by the connection.
        (From~\cite{joaquin-michal-cross-validation-quantum-network-simulators}.)}
    \label{fig:architecture-memory:quisp-sequence-timing-model}
\end{figure}

These differing models foreshadow the timing behaviors observed in simulation results discussed in later subsection.

\subsection{Error models of QuISP and SeQUeNCe}
\label{subsec:architecture-memory:error-models}

While both simulators implement quantum operational errors, they differ significantly in modeling philosophy and granularity.

\textbf{QuISP} adopts a probabilistic, per-qubit error model on top of graph state representation.
Each qubit is associated with an error probability vector $\vec{\pi}(t)$ that represents the likelihood of being in different error states, including Pauli errors $(X, Y, Z)$, relaxation and excitation $(R, E)$, and photon loss $(L)$.
The evolution of $\vec{\pi}(t)$ over time follows a time-sliced Markov process as
\begin{equation}
\vec{\pi}(t) = \vec{\pi}(0) Q^n, \quad \text{for } n = \left\lceil \frac{t}{\delta t} \right\rceil,
\end{equation}
where $Q$ is the single-qubit error transition matrix and $\delta t$ is the time slice interval.

Memory decoherence, single-qubit gates, and photon loss are modeled using this per-qubit transition matrix $Q$, with the evolving error vector $\vec{\pi}(t)$ tracking accumulated noise.
Two-qubit gate errors, which introduce correlated noise and thus violate the independence assumption of single-qubit channels, are treated separately; with probability $p_{\text{g}}$, a two-qubit Pauli error is sampled and applied directly to the affected qubits.
Measurement errors are implemented by flipping the reported classical outcome with probability $p_{\text{m}}$.

The fidelity of a two-hop Bell pair with noisy gates and measurements (ignoring decoherence) is
\begin{equation}
    F_\mathrm{swap} = (1 - p_{\text{g}})(1 - p_{\text{m}})^2 + \frac{3}{15} p_{\text{g}}(1 - p_{\text{m}})^2 + \frac{8}{15} p_{\text{g}}(1 - p_{\text{m}}) p_{\text{m}} + \frac{4}{15} p_{\text{g}} p_{\text{m}}^2.
\label{eq:quisp-swap-fidelity}
\end{equation}
With memory decoherence over times $t_1$, $t_2$, and total classical communication delay $T$, the final fidelity becomes
\begin{equation}
    F_{\text{swap, decoherence}} = F_{\text{swap}} \cdot \left(Q^{2t_1 + 2t_2 + 2T}\right)_{00}.
\label{eq:quisp-swap-with-decoherence-fidelity}
\end{equation}
For depolarizing errors, we have a closed form expression for any integer power of the transition matrix given by
\begin{equation}
    \left(Q^t_\mathrm{depol}\right)_{00} = \left[1+3^{1-t}(3-4 p)^t\right]/4,
\end{equation}
where $p$ is chosen such that $\left(Q^{t_\mathrm{coherence}}_\mathrm{depol}\right)_{00} = 1/e$.

\textbf{SeQUeNCe} uses analytical approximations to track fidelity.
Memory decoherence is modeled using a continuous-time depolarizing channel applied to Werner states as
\begin{equation}
    F(t) = F_{\text{in}} e^{-2t/\tau} + \frac{1 - e^{-2t/\tau}}{4},
\end{equation}
where $\tau$ is the memory coherence time, with the initial fidelity $F_{\text{in}}$.
Gate noise is applied with probability $p_{\text{g}}$ as a complete depolarization, while measurement errors flip the output with probability $p_{\text{m}}$.

For swapping two Werner states of fidelity $F_1$ and $F_2$, the post-swap fidelity is
\begin{align}
    F_\mathrm{swap,W} & = \frac{p_{\text{g}}}{4} + (1 - p_{\text{g}})\left[(1 - p_{\text{m}})^2(F_1F_2 + 3e_1e_2)\right. \nonumber \\
    & \left. + p_{\text{m}}(p_{\text{m}} - 2)(F_1e_2 + e_1F_2 + 2e_1e_2)\right],
\label{eq:sequence-swap-fidelity}
\end{align}
with $e_i = (1 - F_i)/3$.

These error models reflect differing design goals. QuISP prioritizes scalability by using per-qubit stochastic noise models that approximate quantum state evolution without tracking full quantum states.
In contrast, SeQUeNCe uses more detailed noise modeling based on Kraus operators and density matrices, enabling higher precision and allowing for analytical tractability in some small-scale cases.
Despite these differences, both simulators have been shown to produce consistent fidelity results under comparable conditions, as discussed in the next subsection.

\begin{figure}[tb]
    \centering
    \includegraphics[width=\textwidth]{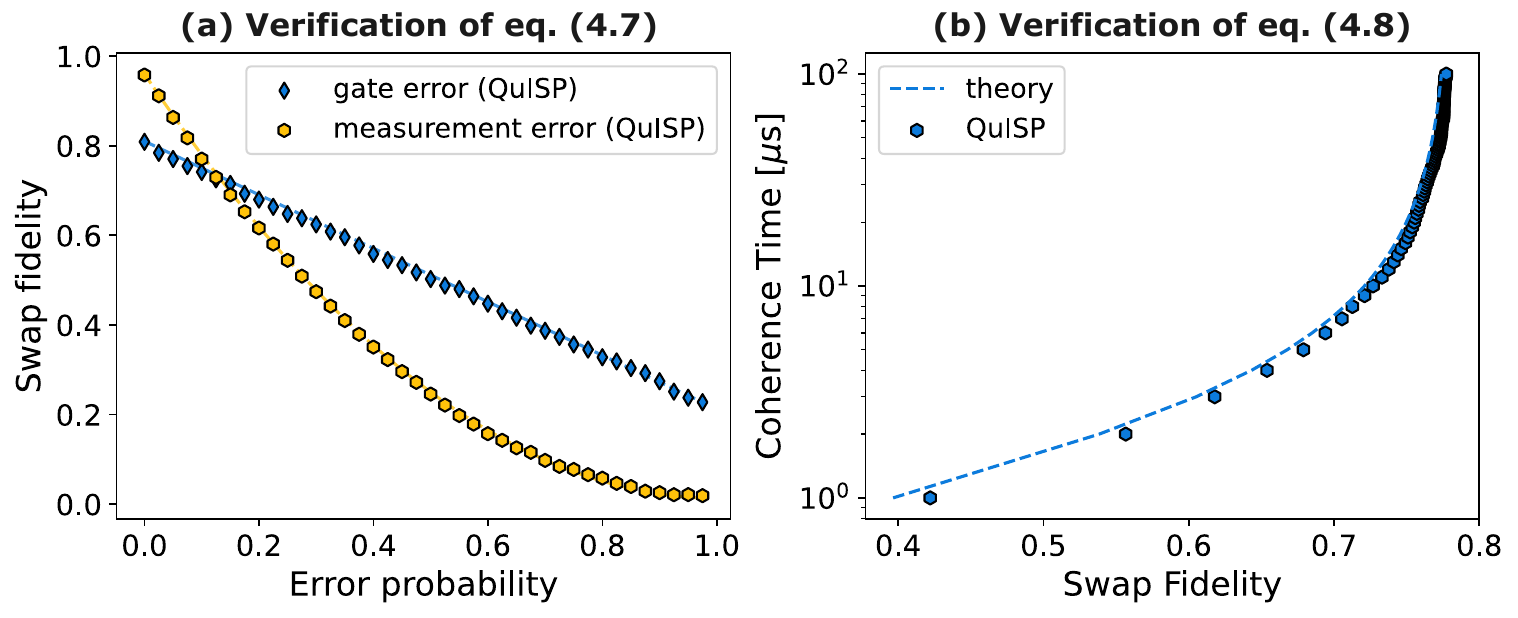}
    \caption[Plots showing the fidelities of swapped Bell pairs between output of QuISP and its analytical model varying noise types and strength]{(a) Verification of end-to-end fidelity \cref{eq:quisp-swap-fidelity} for QuISP. Blue diamonds for varying $p_{\text{g}}$ and $p_{\text{m}}=0.1$ and yellow hexagons for varying $p_{\text{m}}$ with $p_{\text{g}}=0.05$. Dashed lines represent our theoretical predictions. (b) Verification of \cref{eq:quisp-swap-with-decoherence-fidelity} for $p_{\text{g}}=0.05$ and $p_{\text{m}}=0.1$, varying degree of depolarizing noise. (From~\cite{joaquin-michal-cross-validation-quantum-network-simulators}.)}
    \label{fig:architecture-memory:quisp-model-verification}
    \vspace{\floatsep}
    \centering
    \includegraphics[width=\textwidth]{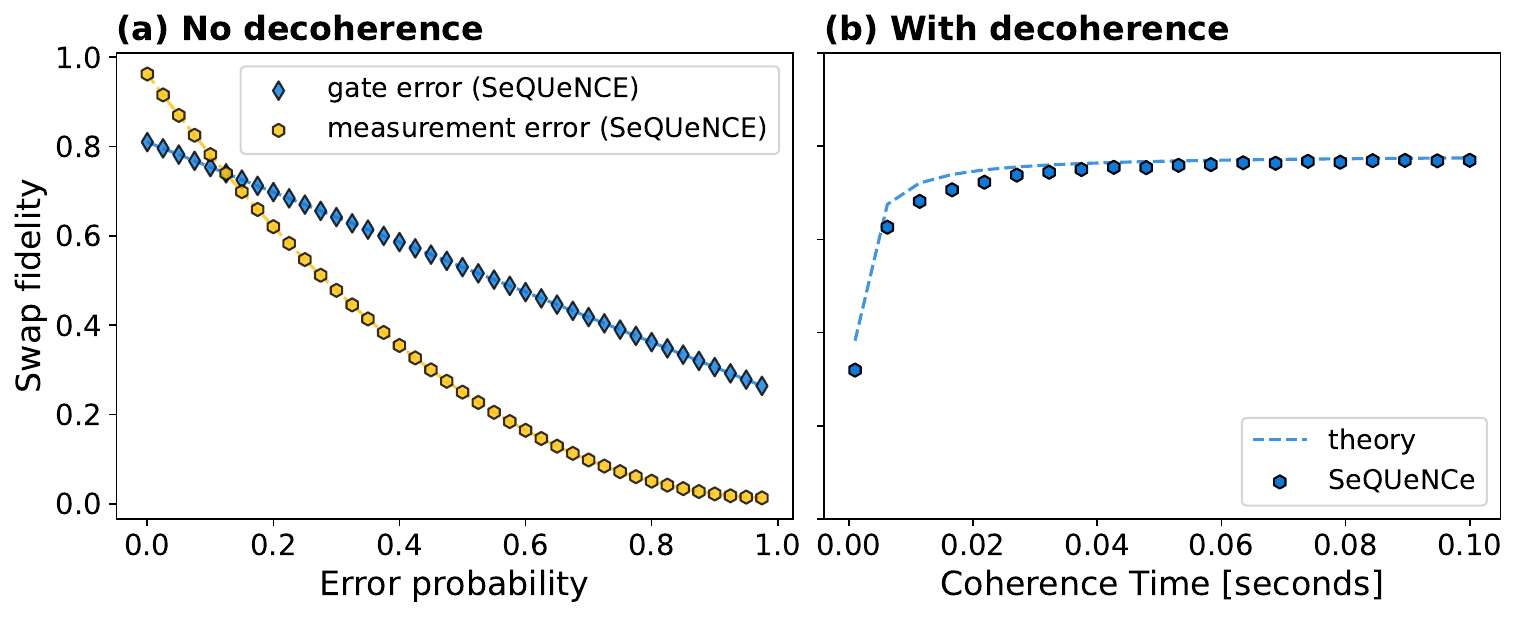}
    \caption[Plots showing the fidelities of swapped Bell pairs between output of SeQUeNCe and its analytical model varying noise types and strength]{Verification of SeQUeNCe error models. (a) Without decoherence. Blue diamonds for $p_{\text{m}}=0.1$ and yellow hexagons for $p_{\text{g}}=0.05$. (b) Including decoherence for $p_{\text{g}}=0.05$ and $p_{\text{m}}=0.1$.
    (From~\cite{joaquin-michal-cross-validation-quantum-network-simulators}.)}
    \label{fig:architecture-memory:sequence-model-verification}
\end{figure}
\Cref{fig:architecture-memory:quisp-model-verification,fig:architecture-memory:sequence-model-verification} show the end-to-end fidelity results for both QuISP and SeQUeNCe under varying gate and measurement error rates, with the other error type held fixed.
In both simulators, we observe excellent agreement between the simulated outputs and the respective theoretical models; \cref{eq:quisp-swap-fidelity,eq:quisp-swap-with-decoherence-fidelity} for QuISP, and \cref{eq:sequence-swap-fidelity} for SeQUeNCe.
Additionally, when memory decoherence is introduced via depolarizing noise, the simulated fidelity continues to closely track analytical predictions.
These results confirm the effective consistency of both simulators' error models, despite their differing approaches to model errors.

\subsection{Cross-validation of QuISP and SeQUeNCe: fidelity, timing, and error models}
\label{subsec:architecture-memory:cross-validation}

Cross-validating independent simulation platforms is essential for establishing trust in their results.
Agreement between QuISP and SeQUeNCe, which were developed separately and based on different modeling philosophies, strengthens confidence in both simulators' ability to accurately capture quantum network behavior.
This joint effort, presented in~\cite{joaquin-michal-cross-validation-quantum-network-simulators}, aimed to reproduce three canonical quantum networking experiments on both platforms and compare their outputs to analytical predictions.

The three experiment scenarios studied are as follows.
\begin{enumerate}
    \item \textbf{Symmetric entanglement generation:} Entanglement generation over a memory--interference--memory (MIM) link, where the number of quantum memories at both end nodes is varied to evaluate scaling performance under symmetric conditions.
    \item \textbf{Asymmetric entanglement generation:} End nodes each have a single memory, while the position of the Bell-state analyzer (BSA) is varied along the link to introduce asymmetry. This experiment examines how BSA placement affects timing and throughput under different classical control models.
    \item \textbf{Two-hop entanglement swapping:} Establishing an end-to-end Bell pair across two hops, with each end node having a single memory and the repeater node having two (one per link). The experiment focuses on end-to-end fidelity under varied operational errors, including gate infidelity, memory decoherence, and measurement noise.
\end{enumerate}

\begin{figure}[!t]
    \centering
    \includegraphics[width=0.8\linewidth]{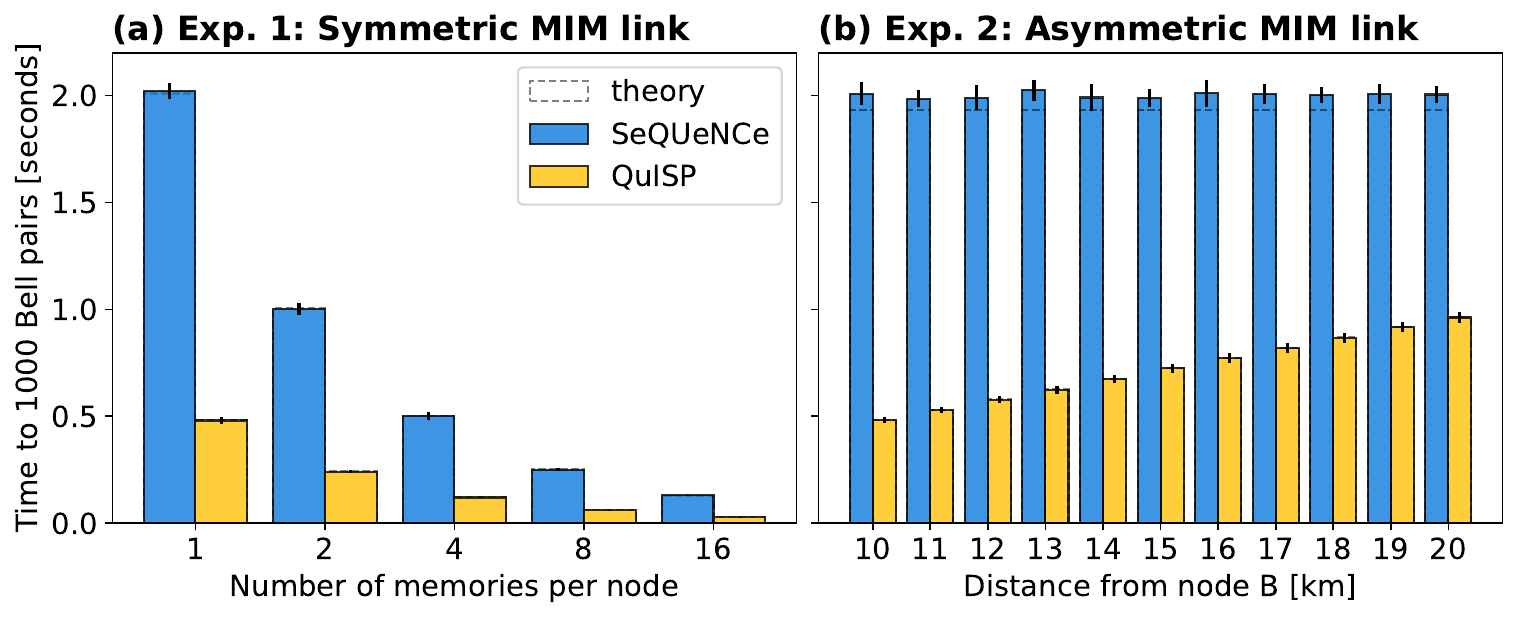}
    \caption[Plots showing time to establish 1000 Bell pairs between QuISP and SeQUeNCe]{Time to generate 1000 Bell pairs for symmetric MIM link in (a), and asymmetric MIM link in (b). QuISP simulation used $t_\text{sep} = 1 \text{ns}$.
    (From~\cite{joaquin-michal-cross-validation-quantum-network-simulators}.)}
    \label{fig:quisp_sequence_exp1}
\end{figure}
In the first two experiments, which involved only photon loss and no operational noise, both simulators produced generation rates and completion times (for requests of 1000 Bell pairs) that matched analytical expectations derived from their respective timing models.
As shown in \cref{fig:quisp_sequence_exp1}, both simulators demonstrate the same qualitative scaling behavior in the first experiment, where increasing the number of available memories leads to improved performance.

In the second experiment, which introduces asymmetry by shifting the position of the Bell-state analyzer (BSA), the two simulators diverge in their trend lines due to differences in classical control assumptions.
In SeQUeNCe, each Bell pair generation attempt is treated as an independent connection, requiring a new three-way handshake as shown in the timing diagram.
This repeated setup dominates the overall timing, making the position of the BSA effectively irrelevant.
In contrast, QuISP performs connection setup once at the beginning for the entire request of 1000 Bell pairs, and subsequent entanglement attempts proceed continuously.
As a result, the effect of BSA placement becomes visible in QuISP: a centrally placed BSA minimizes total communication delay, while asymmetrical placement leads to longer delays per attempt and a more noticeable impact from photon loss on the longer fiber path.
Despite these differences, both simulators match their own analytical models, confirming the internal consistency of their respective timing assumptions.

\begin{figure}[!t]
    \centering
    \includegraphics[width=\linewidth]{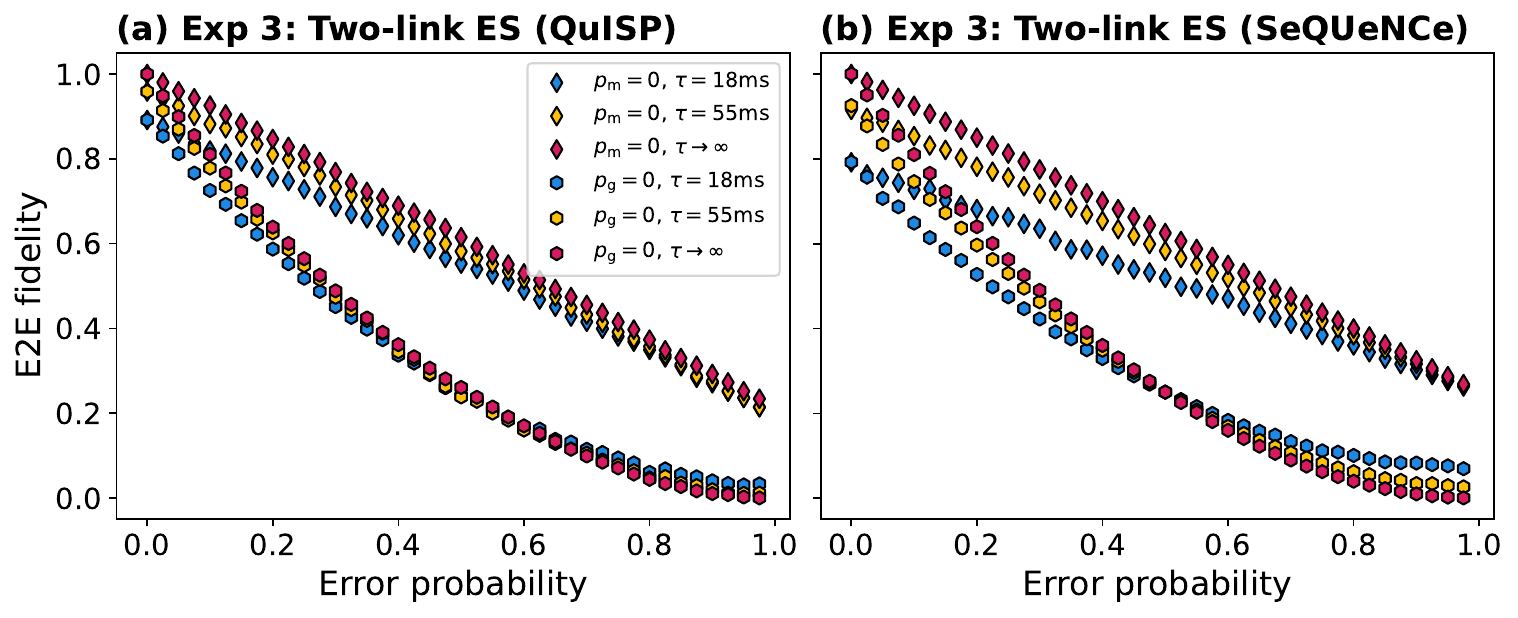}
    \caption[Plots showing fidelity of two-hop Bell pairs with varying error types and rate between QuISP and SeQUeNCe]{Experiment 3 (Two-hop swapping): End-to-end fidelity under varying gate and memory noise. Agreement between simulators supports the validity of each platform's error model and protocol behavior.
    (From~\cite{joaquin-michal-cross-validation-quantum-network-simulators}.)
    }
    \label{fig:quisp_sequence_fidelity}
\end{figure}
The third experiment, two-hop entanglement swapping, offered a more comprehensive test of distributed behavior and error modeling.
This scenario introduces memory decoherence (finite $T_1$), gate infidelity, and measurement error.
As shown in \cref{fig:quisp_sequence_fidelity}, QuISP and SeQUeNCe produced nearly identical end-to-end fidelity curves across a wide range of error parameters, once the parameters were translated to account for differences in how each simulator models noise.

This agreement is particularly noteworthy given the distinct modeling philosophies of the two simulators.
SeQUeNCe employs the density matrix formalism and analytical error models, while QuISP adopts a discretized and computationally lighter approach, using single-qubit noise channels and stochastic sampling to model correlated errors such as those arising from two-qubit gates.
In the third experiment, the primary source of discrepancy lies in how gate error probabilities are defined.
SeQUeNCe models a gate error of probability $p$ as a complete depolarization of the two-qubit state with probability $p$, whereas QuISP applies one of 15 non-identity Pauli error terms with total probability $p$, excluding the identity operator.
When this difference of definition is reconciled analytically, the two simulators produce nearly identical fidelity results across a broad parameter range.

Taken together, these findings validate the consistency and reliability of both timing and error models in QuISP and SeQUeNCe.
Despite differences in modeling granularity and simulation strategies, both platforms faithfully reproduce expected quantum behaviors.
For the purposes of this thesis, these results provide strong support for using QuISP as a platform for prototyping, simulating, and validating distributed quantum protocols within the QRSA software architecture and RuleSet-based control framework.

\subsection{Limitation and future work}
\label{subsec:architecture-memory:future}

While the experiments discussed in this section demonstrate promising agreement between QuISP, SeQUeNCe, and analytical models, they represent only an initial validation step.
Our current investigations are limited to small-scale networks, primarily linear chains and single-hop topologies, and focus on verifying timing models and fidelity predictions under controlled noise parameters.
A natural next step is to apply QuISP to simulate larger-scale networks and study emergent behaviors, resource contention across multiple connections, and the performance of routing and multiplexing protocols in more realistic settings.

It is important to emphasize that QuISP offers significantly more capabilities than demonstrated in this thesis.
Built on the QRSA architecture and RuleSet-based control framework, QuISP provides modular tools for simulating resource allocation, routing, multiplexing, and AAA (authentication, authorization, and accounting) under heterogeneous network conditions.
It supports a wide range of link types, including satellite-ground channels~\cite{fittipaldi-sattelite-quisp}, memory--source--memory (MSM) and memory--interference--memory (MIM) links~\cite{soon-heterogeneous-msm-mim,soon-optimizing-msm-protocol-qcnc-2024}, and is capable of modeling internetworking scenarios that span administrative domains~\cite{teramotoRuleSetbasedRecursiveQuantum2023}.
However, these advanced features have not yet been applied to protocol validation or network-wide policy evaluation, which presents an important direction for future research.

Several other quantum network simulators have been developed, though many focus on different goals.
Some emphasize hardware-level modeling~\cite{bartlett-squanch,coopman-netsquid}, others aim to integrate with specific applications or testbeds~\cite{dahlberg-netqasm,oslovichCompilationStrategiesQuantum2025,QuTechDelftSquidasm2025,QuTechDelftQneadk2025}, and some focus on entanglement distribution in linear repeater chains~\cite{wallnofer-requsim-faithfully-simulating,fangQuantumNETworkTheory2023a} or specific network-layer issues such as routing and multi-path scheduling~\cite{diadamo-qunetsim,leone-multi-path-routing-qunet,dahlberg-simulaqron}.
In contrast, QuISP is designed to support end-to-end architectural studies that encompass routing, multiplexing, AAA, and abstraction across link and network boundaries.

Unlike quantum error correction (QEC), where simplified noise models are supported by strong theoretical justification for fault-tolerant properties, quantum networking still lacks foundational results that explain when and why such models are appropriate~\cite{gottesman-phd-thesis,gottesmanSurvivingQuantumComputer,gottesman-ftqc-review,simon-qec-beginners,roffe-qec-intro}.
As a result, the community tends to rely on more rigorous modeling approaches such as full density matrix simulations.
We argue that further validation of simulators like QuISP, particularly through experimental testbeds and real-world deployments, is essential for increasing confidence in their predictive value.
Building this empirical and theoretical foundation remains a key challenge for advancing simulation-based architecture design in quantum networking.

\section{Discussion}
\label{sec:architecture-memory:discussion}

Having presented the details of our architecture throughout the chapter, in this last section, we now reiterate its core principles and revisit the key design decisions introduced in \cref{sec:architecture-memory:design-decisions}.
This reflection highlights the rationale behind our architectural choices, the internal coherence of the overall design, and how the key components---RuleSets, QRNA, and QRSA---together provide a robust and extensible foundation for memory-based quantum networks.
While this chapter has focused on one specific architecture grounded in RuleSets and recursive abstraction, it is not the only viable approach.
Later, we broaden the discussion to consider alternative architectural paradigms and other operational models of quantum networks that offer distinct trade-offs in generality, performance, and implementation complexity.
This comparative discussion places our design choices within the broader landscape of quantum network architectures.

\subsection{Revisiting the design decisions}
\label{subsec:architecture-memory-revisting-design-decisions}

Our architecture adopts a connection-oriented model in which connections are established through a two-pass setup protocol.
These connections are managed by RuleSets distributed along the path between the Initiator and Responder nodes.
This design treats Bell pair delivery as the fundamental network service, but it also leaves room for generalizing to the generation and distribution of multipartite entangled states, should a network operator choose to support such use cases (see \cref{sec:architecture-memory:ruleset-definition-examples}).
This addresses the fundamental nature of the network's services and how connections are conceptualized and controlled.

RuleSets lie at the heart of our programming model, serving as distributed, event-driven control logic that defines local behavior and governs resource usage.
They are generated based on user requests at the Responder and installed during connection setup, enabling localized yet coordinated behavior across the network.
This allows connections to be customized and analyzable.
At the application level, users specify high-level constraints, such as target fidelity or the number of Bell pairs, without dealing with low-level protocol details.
These intents are translated by the Connection Manager into executable RuleSets.
Furthermore, the Rule Engine, which allocates local resources to specific RuleSet instances (each corresponding to a single connection), enables multiplexing by tagging and isolating resources.
This mechanism also lays the foundation for integrating fairness, access control, and scheduling policies across shared infrastructure---addressing the questions of how application requests are conveyed, how resources are multiplexed, and how security and AAA considerations can be incorporated.

In addition to defining connection semantics and control logic, our architecture clarifies node roles and facilitates scalable network composition through recursion.
QRNA defines a recursive abstraction boundary that allows networks to be grouped and treated as virtual links at higher levels of hierarchy.
This enables independent domains, each managed by different operators, to interconnect while preserving control over local behavior.
Within each recursion level, the QRSA stack (especially the Connection Manager) rewrites RuleSets and coordinates connection setup through wrapper functions, supporting seamless internetworking across domains.
Each recursion layer makes routing decisions based on its own topology and policy, enabling flexible and scalable routing strategies.
At domain boundaries, RuleSets are translated to preserve end-to-end connection semantics, allowing interoperability across heterogeneous hardware and administrative zones.
This modular and layered approach supports both architectural consistency and extensibility---addressing the questions of defining network roles, supporting flexible routing strategies, enabling internetworking, and managing security and privacy at domain boundaries.

Taken together, these architectural mechanisms form a cohesive foundation for building programmable, distributed quantum networks that are scalable, analyzable, and technology-agnostic.
The deliberate use of modularity, abstraction, and cross-domain compatibility ensures that the architecture can support a broad range of quantum networking applications and deployment scenarios.

\subsection{Operation scheduling paradigms}
\label{subsec:other-operational-paradigms}

The RuleSet-based architecture central to this thesis, as detailed in this chapter, operates on what can be best described as a dynamic scheduling paradigm. In this model, network activities are intrinsically event-driven, reacting to local information and resource availability, which fosters significant node autonomy in decision-making (albeit guided by the centrally-authored RuleSets). This approach offers considerable flexibility and responsiveness, mirroring the adaptive nature often sought in complex distributed systems.

To better contextualize this design choice and understand its advantages and trade-offs, it is instructive to examine it alongside other prominent paradigms for scheduling operations within a quantum network. These alternative approaches often differ fundamentally in how they manage the timing and coordination of key tasks---such as link-level entanglement generation, purification, and entanglement swapping---ranging from globally synchronized, discrete-time operations to more decentralized, continuous-time mechanisms. In the following, we will discuss three principal scheduling paradigms: the one-shot model, the time-slot based model, and will further elaborate on the characteristics of the dynamic model as exemplified by our work.

The \emph{one-shot based paradigm}, explored in the work of Pant et al.~\cite{pant-routing-entanglement} and Patil et al.~\cite{patil-distance-ind-rate}, operates with operations occurring in predefined stages within synchronized slots.
Its defining characteristic is that all unused entangled pairs (i.e., those not part of a successfully completed end-to-end entanglement by the slot's conclusion) are discarded at the end of each slot.
This approach is particularly relevant for near-term quantum network implementations, especially where quantum memory lifetimes are short and cannot reliably hold qubits across multiple time slots.
By ensuring each new operational round effectively starts with fresh quantum resources, this method minimizes the impact of memory decoherence on the final entanglement.
Consequently, it tends to offer higher fidelity for successfully established end-to-end pairs, though often at the cost of slower overall throughput compared to approaches that preserve resources across slots.
Furthermore, this operational model aligns well with assumptions for distributed quantum computation at the datacenter scale, where protocols may rely on Bell pairs being generated and made available within specific time windows to perform remote operations between distinct quantum processing units or computers~\cite{jnane-multicore-qc,leoneUpperBoundsClock2023,leoneResourceOverheadsAttainable2025,shapourianQuantumDataCenter2025,pouryousefNetworkAwareSchedulingRemote2025a}.

In contrast, \emph{the time-slot based paradigm}, as analyzed in the work of Cicconetti et al.~\cite{cicconetti-request-scheduling} and Dhara et al.~\cite{dharaSubexponentialRateDistance2021}, also dictates that the network operates in globally coordinated time slots, with each slot typically divided into several predefined stages for sequential operations (e.g., link-level generation, purification, swapping, classical communication).
However, unlike the one-shot model, entangled Bell pairs that are generated but not immediately consumed are retained in quantum memories and carried over to subsequent time slots.
This approach necessitates quantum memories with longer coherence times capable of preserving qubit states reliably across these extended durations.
The global coordination inherent in this paradigm often implies the need for a central decision-making unit to schedule and synchronize operations across the network.
While such centralization can simplify the implementation of complex, multi-stage coordination strategies, the classical messaging overhead involved in communicating instructions and feedback between the central unit and distributed network nodes can become a significant bottleneck, potentially limiting performance and scalability, especially for wide-area networks.
Nevertheless, for smaller or more controlled network segments, this paradigm can offer high pipeline efficiency.

Finally, the \emph{dynamic paradigm}, which aligns with our RuleSet-driven architecture, schedules operations asynchronously.
This approach is explored in the work of Zhan et al.~\cite{zhanDesignSimulationAdaptive2025}, J. Li et al.~\cite{li-connection-oriented-quantum-network}, Z. Li et al.~\cite{liConnectionOrientedConnectionlessRemote2022}, and Andrade et al.~\cite{andradeAnalysisQuantumRepeater2024}.
In this model, actions are primarily event-driven and triggered by local conditions, such as resource availability or the arrival of messages.
Link-level Bell pairs are generated whenever quantum memories become free and opportunities arise, and subsequent operations like purification or swapping occur opportunistically.
This approach offers high flexibility and efficient memory utilization, embodying the adaptive and decentralized ``spirit of the Internet.''
However, its asynchronous and highly state-dependent nature poses significant challenges for analytical performance modeling and prediction, especially at the multi-hop or network-wide scale.

\subsection{Connection-oriented versus connectionless models}
\label{subsec:architecture-memory:connection_vs_connectionless}

Two primary paradigms have emerged for managing entanglement distribution across quantum networks: the \emph{connection-oriented model} and the \emph{connectionless model}.
Our proposed architecture adopts the connection-oriented model, as it offers strong guarantees in terms of coordination, fidelity, and scheduling flexibility.
Understanding the fundamental distinctions between these two paradigms is crucial for appreciating the design rationale behind our framework.

The connection-oriented model~\cite{rdv-qi-architecture,cicconetti-request-scheduling,liConnectionOrientedConnectionlessRemote2022,li-connection-oriented-quantum-network,zhangConnectionorientedConnectionlessQuantum2022}, as detailed for our architecture in \cref{sec:architecture-memory:expressing-connection-semantics-rulesets}, emphasizes establishing a confirmed path and securing resource commitments from all intermediate nodes before the attempt to distribute end-to-end Bell pairs begins.
This pre-established agreement, orchestrated, for instance, by a Responder generating and distributing RuleSets, aims to enhance the reliability of entanglement distribution and provide more predictable end-to-end rate and state fidelity.
This model is particularly well-suited for scenarios requiring non-local coordination, such as multi-hop entanglement purification and more complex scheduling.

In contrast, the connectionless model~\cite{liConnectionOrientedConnectionlessRemote2022,zhangConnectionorientedConnectionlessQuantum2022,bacciottiniLeveragingInternetPrinciples2025a,diadamoPacketSwitchingQuantum2022} operates without prior resource reservation or explicit commitment from all nodes along an entire path.
Instead, operations like Bell pair distribution typically proceed on a hop-by-hop basis, analogous to packet switching in classical networks.
While this approach is often associated with certain third-generation repeater schemes designed for high-speed attempts, it could, in principle, be adapted for first or second-generation repeaters.
If utilized with a RuleSet-like protocol, for example, it would likely require RuleSets to be more dynamically generated or updated, involving ongoing coordination between the current end-node of an extending Bell pair and the next hop node it is attempting to reach, rather than being entirely predetermined by a central Responder.

The hop-by-hop nature of the connectionless model generally restricts entanglement purification to occur only between adjacent nodes.
This can make it challenging to achieve high end-to-end fidelity guarantees, as errors might accumulate along the path.
If an operation within a segment fails, or if the fidelity of an intermediate entangled pair drops below a usable threshold and cannot be salvaged by link-level purification alone, the attempt for that path segment is aborted and may be retried.
Consequently, connectionless approaches often impose more stringent requirements on the baseline error thresholds of individual quantum operations, the initial fidelity of elementary links, and the coherence times of any quantum memories involved.
While many studies on connectionless models~\cite{liConnectionOrientedConnectionlessRemote2022,zhangConnectionorientedConnectionlessQuantum2022} have focused on their potential advantages in terms of entanglement generation rates, comprehensive analyses of their end-to-end fidelity characteristics, especially across multi-hop paths with realistic noise models, are an ongoing area of investigation.

\clearpage
\chapter{Integrating all-photonic repeaters into the architecture}
\label{chapter:architecture-all-photonic}

A key element that enables quantum repeaters to circumvent the fundamental repeaterless bound is their ability to store short-distance Bell pairs between adjacent nodes inside quantum memories, and later splice them together via entanglement swapping to extend entanglement across longer distances.
This principle underlies both first- and second-generation repeaters.
Other approaches---such as those proposed in third-generation repeaters---aim to significantly relax memory lifetime requirements, reducing them from the duration of an end-to-end round-trip to merely the time needed to receive and forward quantum states at each node, typically on the scale of link-level communication delays.
These methods achieve this by leveraging fault-tolerant teleportation~\cite{muralidharanUltrafastFaultTolerantQuantum2014,muralidharanOvercomingErasureErrors2017}, performing logical operations through surface code-style measurement-based computation~\cite{fowlerSurfaceCodeQuantum2010}, or transmitting photonic qubits encoded in fault-tolerant codes directly through the network in hop-by-hop manner decoding and re-encoding at every repeater node~\cite{munroQuantumCommunicationNecessity2012a,borregaard-3g-repeater}.
\extrafootertext{Portions of this chapter have been adapted from the following publications:
\begin{itemize}
    \item \fullcite{naphan-engineering-challenges-in-rgs}
    \item \fullcite{naphan-half-rgs}
    \item \fullcite{naphan-integrating-purification-rgs}
\end{itemize}
}

Challenging the assumption that quantum memories are indispensable, Azuma, Tamaki, and Lo proposed a new type for quantum repeaters, the \emph{all-photonic quantum repeaters}, built entirely from flying photons~\cite{azuma-rgs}.
Their insight was that the use of memories arises primarily to deal with the probabilistic nature of entanglement generation and to facilitate the delayed operations required for high-fidelity entanglement swapping.
By embedding these capabilities directly---near-deterministic link-level generation---into large photonic graph states, called the \emph{repeater graph states (RGS)}, they outlined how repeaters could be realized without intermediate storage, instead relying on photonic loss-tolerant graph states and adaptive measurements.
\nomenclature{Repeater graph state (RGS)}{A specific multipartite photonic graph state that serves as the central resource in a time-reversed memoryless quantum repeater architecture, characterized by a set of inner qubits connected as a complete graph, with each inner qubit also entangled with a corresponding outer qubit for near-deterministic link-level entanglement creation}

\nomenclature{Quantum emitter}{A stationary qubit that can be controlled to sequentially emit multiple photons, with each emission coherently extending the entanglement to form a single multipartite state between the emitter and the entire sequence of photons}
In this chapter, we will revisit this all-photonic paradigm and focus on addressing the gap between conceptual designs and practical implementations~\cite{naphan-engineering-challenges-in-rgs}.
We first review the original proposal by Azuma, Tamaki, and Lo, then present our contributions: a more modular approach to repeater graph state generation based on the introduction of half repeater graph states (half-RGS)~\cite{naphan-half-rgs,naphan-rgs-tqe}, which can be deterministically generated using quantum emitters.
This design allows not only greater timing flexibility and modularity but also provides a natural way to hand off photonic entanglement to memory-equipped end nodes (addressing challenges raised in \cref{sec:quantum-networks:challenges}).
Leveraging this abstraction, we demonstrate how paths composed entirely of all-photonic repeaters can be treated as virtual link-level connections from the perspective of higher-level protocols.
This integration point allows such paths to be seamlessly embedded into the recursive architecture introduced in \cref{chapter:architecture-memory}, despite their very different underlying hardware assumptions and operations.
Finally, we address a long-standing question in the literature---how to support non-neighbor purification in an all-photonic segment---by showing that half-RGS offers the operational flexibility needed to realize purification and swapping even when limited to photonic-only infrastructure~\cite{naphan-integrating-purification-rgs} in \cref{sec:rgs-with-purification}.

\section{Preliminaries to understanding all-photonic quantum repeaters}
\label{sec:preliminaries-for-all-photonic-quantum-repeaters}

To understand the design and operation of all-photonic quantum repeaters, we begin by revisiting a key concept---the graph state formalism.
While the basics of graph states were introduced in \cref{sec:preliminaries:graph-states}, we will extend this discussion by introducing an additional measurement pattern, the XX measurement on adjacent qubits, and its corresponding graph transformation rule, both of which are central to the all-photonic quantum repeater protocol in this chapter.

\subsection{Revisiting graph state transformation rules}
\label{subsec:xx-measurement-rules}

Let us recall a basic transformation rule in the graph state formalism, the effect of a Z basis measurement on a qubit.
In a graph state, measuring a qubit in the Z basis removes the corresponding vertex from the graph, and induces a Pauli $Z^r$ operator as a side effect on each of the measured qubit's neighbors, where $r \in \{0,1\}$ is the classical measurement outcome.
An example of this Z measurement effect is depicted in \cref{fig:graph-state-example-in-rgs} (bottom left).

\begin{figure}[htb]
    \centering
    \includegraphics[width=\textwidth]{figures/graph-state-example.pdf}
    \caption[Examples of graph states]{Top: Two equivalent graph state representations of the same quantum state. Vertices represent qubits, and edges correspond to controlled-phase gates. A green edge in the top left highlights one such gate. The top right graph is obtained by applying local complementation on vertex 3, replacing edges between its neighbors (e.g., deleting $(1,4)$) with new ones (e.g., adding $(1,5)$ and $(4,5)$, shown in red). The states remain equivalent under local Clifford operations. Bottom: A $Z$ measurement and its effect (left), and an $XX$ measurement on two qubits in a different graph (right). Side effects on qubits 3, 4, and 5 depend on the measurement of qubit 2; those on qubits 6 and 7 depend on qubit 1. Blue and green arrows indicate these dependencies.
    (From~\cite{naphan-half-rgs}.)}
    \label{fig:graph-state-example-in-rgs}
\end{figure}

To fully describe the operational mechanics of the all-photonic repeater scheme, however, we require an additional graph transformation rule---the \emph{XX measurement rule}.
This rule applies when two adjacent qubits in a graph state, say $a$ and $b$, are each measured in the X basis, under the condition that these two qubits do not share any common neighbors.
It is important to distinguish this sequence of two independent single-qubit X basis measurements from a joint measurement of the $XX$ Pauli operator (a stabilizer measurement) often encountered in quantum error correction.

As illustrated in \cref{fig:graph-state-example-in-rgs} (bottom right), performing these X basis measurements on adjacent qubits $a$ and $b$ transforms the graph such that the distinct neighborhoods $N_a \setminus \{b\}$ and $N_b \setminus \{a\}$ become fully interconnected, forming a complete bipartite graph between them, while $a$ and $b$ are removed.
While such graphical transformation rules are powerful tools for bypassing the often cumbersome direct manipulation of stabilizer tableaus, we will derive this XX measurement rule explicitly using the stabilizer formalism.
This derivation is particularly more convenient as X basis measurements involve multiple non-trivial steps to represent purely through local complementations or simple vertex and edge removal rules.

The derivation proceeds by analyzing the effect of the X basis measurements on $a$ and $b$ on the stabilizer generators of the initial graph state.
We can categorize these generators as follows:
\begin{enumerate}
    \item The stabilizer generator $g_a$ associated with qubit $a$.
    \item The stabilizer generator $g_b$ associated with qubit $b$.
    \item Generators $g_j$ associated with qubits $j \in N_a \setminus \{b\}$ (neighbors of $a$, excluding $b$).
    \item Generators $g_k$ associated with qubits $k \in N_b \setminus \{a\}$ (neighbors of $b$, excluding $a$).
    \item Generators associated with all other qubits outside of $N_a \cup N_b \cup \{a,b\}$.
\end{enumerate}
Generators in the last category commute with the X basis measurements on $a$ and $b$, thus remaining unaffected, and can be excluded from this specific derivation.
For clarity, and with slight abuse of notation, we denote the Pauli string for stabilizer generators of each group (omitting the last group) in \cref{table:xx-measurement-stabilizer-table}
\begin{table}[htb]
    \caption[The stabilizer of graph states before XX measurement]{The stabilizer of graph states before XX measurement on adjacent qubits $a$ and $b$. Operators acting on qubit groups: $M_j$, $N_a \setminus \{b\}$, $a$, $b$, $N_b \setminus \{a\}$, and $M_k$ from left to right, with $M_j$ ($M_k$) denoting qubits in the neighborhood of qubts in $N_a$ ($N_b$) not including $a$ ($b$).}
    \label{tab:ch5_stabilizers_before_xx}
    \centering
    \begin{tabular}{l >{$}c<{$} >{$}c<{$} >{$}c<{$} >{$}c<{$} >{$}c<{$} >{$}c<{$}}
        \toprule
        \textbf{Generator Type} & \multicolumn{6}{c}{\textbf{Pauli Operators on Qubit Groups}} \\
        \cmidrule(lr){2-7} 
        & M_j & N_a \setminus \{b\} & a & b & N_b \setminus \{a\} & M_k \\
        \midrule
        $g_a$                   & I & Z^a_j & X_a & Z_b & I & I \\
        $g_b$                   & I & I & Z_a & X_b & Z^b_k & I \\
        \addlinespace 
        $g_j$ (for $j \in N_a \setminus \{b\}$) & Z_{M_j} & X^a_j & Z_a & I & I & I \\
        \addlinespace
        $g_k$ (for $k \in N_b \setminus \{a\}$) & I & I & I & Z_b & X^b_k & Z_{M_k} \\
        \bottomrule
    \end{tabular}
\label{table:xx-measurement-stabilizer-table}
\end{table}

To derive the effect of the XX measurement, we apply the stabilizer update procedure from \cref{subsec:preliminaries:stabilizer-measurement} sequentially: first for the measurement of \( P_a = X_a \), then for \( P_b = X_b \).  
Let the measurement outcomes be \( s_a, s_b \in \{+1, -1\} \).  
After the measurements, the stabilizer group of the entire system will include \( s_a X_a \) and \( s_b X_b \).  
Our goal is to determine the updated stabilizer generators that describe the system post-measurement.

Starting with the measurement on qubit \( a \), the system is projected into an eigenstate of \( X_a \).  
We observe that \( g_b \) and all \( g_j \) anticommute with \( X_a \).  
Following the rules from \cref{sec:stabilizer-formalism}, we update the stabilizer by multiplying each \( g_j \) with \( g_b \), giving updated generators \( g_j' = g_b \cdot g_j \).  
We then remove \( g_b \) from the stabilizer set.  
After this first measurement, the stabilizer becomes
\begin{equation}
    \left\langle\begin{array}{c}
                Z^a_j X_a Z_b  \\
                s_a X_a \\
                Z_{M_j}  X^a_j  I_a X_b Z^b_k  \\
                Z_b X_k^b Z_{M_k}
    \end{array}\right\rangle.
\end{equation}
Here, each generator \( g_j \) associated with a neighbor \( j \in N_a \setminus \{b\} \) has been updated, effectively transferring its entanglement to qubit \( b \).

Next, we measure qubit \( b \), obtaining outcome \( s_b \) for \( X_b \).  
We choose \( g_a \), which anticommutes with \( X_b \), and multiply it into all \( g_k \), yielding the updated stabilizer:
\begin{equation}
    \left\langle\begin{array}{c}
                s_b X_b  \\
                s_a X_a \\
                Z_{M_j}  X^a_j  I_a X_b Z^b_k  \\
                Z^a_j X_a X_k^b Z_{M_k}
    \end{array}\right\rangle,
\end{equation}
To simplify, we now eliminate \( X_a \) and \( X_b \) from the remaining generators by multiplying the transformed \( g'_j \) and \( g'_k \) with \( s_a X_a \) and \( s_b X_b \), respectively. This results in:
\begin{equation}
    \left\langle\begin{array}{c}
                s_b X_b  \\
                s_a X_a \\
                s_b Z_{M_j}  X^a_j Z^b_k  \\
                s_a Z^a_j X_k^b Z_{M_k}
    \end{array}\right\rangle,
\end{equation}

The net effect of these two measurements is a new stabilizer set for the remaining unmeasured qubits.  
For each original neighbor \( j \in N_a \setminus \{b\} \), its updated generator becomes:
\begin{equation}
g'_j = s_b X_j \left( \prod_{v \in N_j \setminus \{a\}} Z_v \right) \left( \prod_{k \in N_b \setminus \{a\}} Z_k \right).
\end{equation}
Similarly, for each original neighbor \( k \in N_b \setminus \{a\} \), its updated generator is:
\begin{equation}
g'_k = s_a X_k \left( \prod_{v \in N_k \setminus \{b\}} Z_v \right) \left( \prod_{j \in N_a \setminus \{b\}} Z_j \right).
\end{equation}
Interpreting these in the graph state formalism (see \cref{eq:graph-state-stabilizer-generators}), we see that each qubit \( j \) retains its original neighbors (excluding \( a \)) and gains new edges to all qubits in \( N_b \setminus \{a\} \), as indicated by the \( \prod_{k \in N_b \setminus \{a\}} Z_k \) term.  
Likewise, each \( k \) in \( N_b \setminus \{a\} \) becomes connected to all \( j \in N_a \setminus \{b\} \), as shown by the \( \prod_{j \in N_a \setminus \{b\}} Z_j \) term.  
The resulting graph corresponds precisely to a complete bipartite connection between the two original neighborhoods (as shown in the bottom right of \cref{fig:graph-state-example-in-rgs} as claimed), confirming the graphical transformation rule.

Having now extended the ruleset for graph state transformations and their associated effects under measurement, we focus on two key measurement sequences---single-qubit Z measurements and two-qubit XX measurements, to analyze how Bell pairs are distributed in the all-photonic quantum repeater scheme.

\section{Original all-photonic quantum repeaters}
\label{sec:original-rgs-recap}

We begin by reviewing the original proposal for all-photonic quantum repeaters by Azuma, Tamaki, and Lo~\cite{azuma-rgs}.
The core component of this scheme is a highly entangled photonic graph state known as the \emph{repeater graph state} (RGS).
For brevity, we will refer to this all-photonic quantum repeater scheme as the \emph{RGS scheme} throughout the rest of this thesis.

\begin{figure}[htb]
\centering
\includegraphics[width=\textwidth,keepaspectratio]{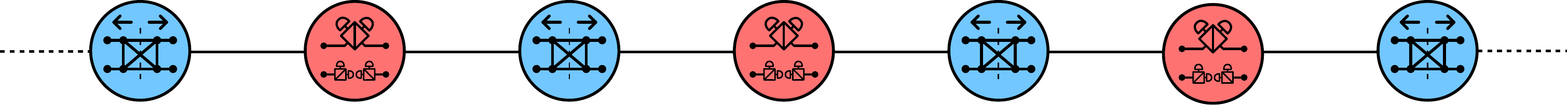}
\caption[An illustration of a connection path segment in the RGS scheme network]{An illustration of a segment of a connection path in a network based on the RGS scheme. Blue nodes are RGS source nodes (RGSS), and red nodes are Advanced Bell State Analyzers (ABSAs). Note that end nodes are often omitted in diagrams of the original scheme, as it was primarily designed for applications like QKD where the end nodes can be simple measurement devices.}
\label{fig:rgs-scheme-path}
\end{figure}

Using the terminology and node types defined in \cref{chapter:architecture-memory}, we can describe the RGS scheme as follows.
The nodes responsible for generating repeater graph states are referred to as repeater graph state sources (RGSS).
The intermediate measurement nodes, which must be capable of performing both Bell state measurements and adaptive single-qubit measurements, are known as advanced Bell state analyzers (ABSAs).
An example of a connection path segment in the RGS scheme is shown in \cref{fig:rgs-scheme-path}.
It is important to note that the original scheme was designed with applications like QKD in mind, and thus often implicitly assumes the end nodes (e.g., Alice and Bob) are measurement-based terminals, which are typically not drawn in the diagrams.

We will now delve into the original RGS scheme in more detail, highlighting its key innovations, particularly its resilience to photon loss and operational errors.
We will also describe the quantum error-correcting code it employs and briefly outline the conceptual and technological assumptions underlying the scheme.

\subsection{Overview of the RGS scheme}
\label{subsec:overview-of-rgs-scheme}

A repeater graph state (RGS) consists of two sets of qubits, \emph{inner} and \emph{outer} qubits, each containing $2m$ qubits (as shown in Step 1 of \cref{fig:rgs-scheme-overview}).
The inner qubits form a complete graph, and each inner qubit is connected to an outer qubit.
We refer to $m$ as the \emph{number of arms} of the RGS.
This complete graph among the inner qubits encodes the entanglement swapping, before photons are transmitted to establish link-level entanglement between adjacent neighbor nodes.
This leads to the interpretation of the RGS scheme as a ``time-reversed'' version of conventional quantum repeaters, where entanglement swapping occurs \emph{before} the link-level Bell pairs are generated.

To distribute a Bell pair between two end nodes (e.g., Alice and Bob), the RGS scheme employs a linear path made up of repeater graph state source (RGSS) and advanced Bell state analyzer (ABSA) nodes, analogous to a path consisting of memory--interference--memory (MIM) links in memory-based networks.

\begin{figure}[htb]
    \centering
    \includegraphics[width=\textwidth,keepaspectratio]{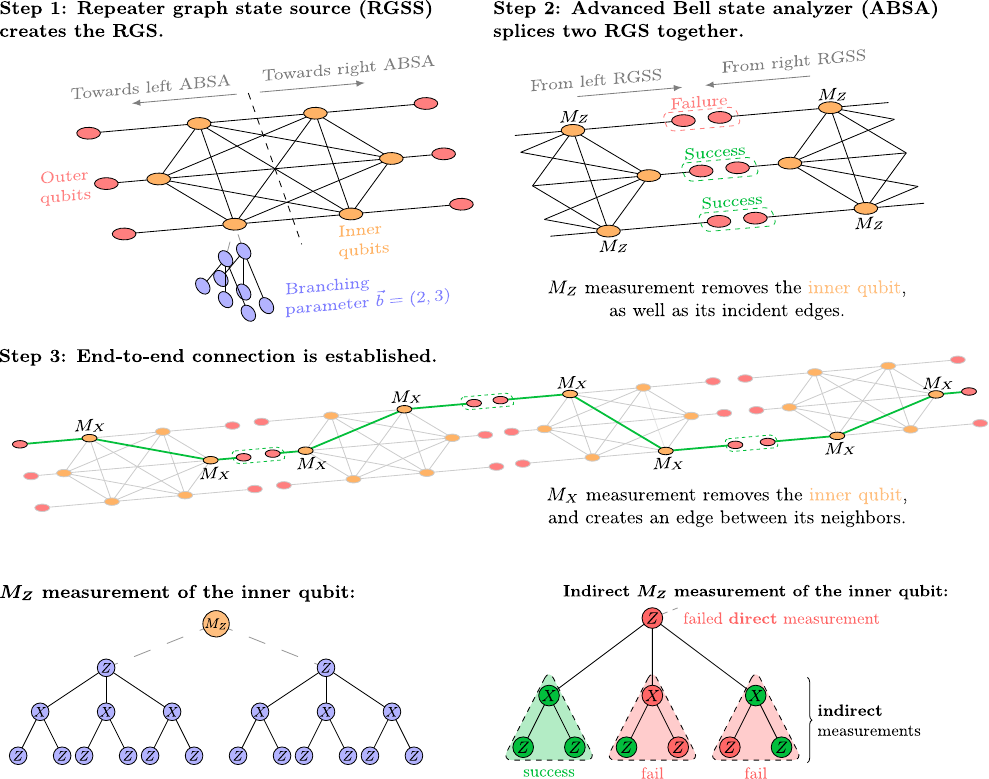}
    \caption[Overview of the RGS scheme]{An overview of the RGS scheme. The three steps shown here have corresponding analogs in memory-based repeaters. Inner qubits play the role of quantum memories, while outer qubits act as flying qubits. Step 1 (at RGSS) mirrors entanglement swapping, though no specific inner qubit pairs are selected. Step 2 (at ABSA) generates link-level entanglement via Bell state measurements (BSMs) on outer qubit pairs. The Z and X basis measurements in Steps 2 and 3 effectively determine which inner qubit pairs are swapped. Logical Z measurements are indicated at the bottom. Z/X labels on blue qubits refer to physical measurement bases; logical X measurements swap these roles. The triangle structures in the bottom right show how indirect Z basis measurement is possible if photon loss occurs.
    (From~\cite{naphan-half-rgs}.)}
    \label{fig:rgs-scheme-overview}
\end{figure}

As illustrated in \cref{fig:rgs-scheme-path}, each RGSS along the path generates an RGS.
Each RGS is split in two: one half is sent to the left ABSA and the other to the right ABSA.
Each ABSA receives the halves of \emph{two} RGSs, one from each neighboring RGSS, and performs \emph{three stages of adaptive measurement}.

In the first stage, the ABSA performs \emph{rotated Bell state measurements (BSMs)} between each pair of outer qubits from the two RGS halves (Step 2 of \cref{fig:rgs-scheme-overview}).
These rotated BSMs are equivalent to adding an edge between the two outer qubits and then performing an XX measurement.
This is repeated for all $m$ pairs.
Since BSMs implemented by linear optics are inherently probabilistic and photons may be lost in transit from RGSS to ABSA, the success of each measurement is uncertain.

If \emph{at least one} BSM succeeds at an ABSA, the protocol proceeds to the second stage.
Z basis measurements are performed on the inner qubits connected to outer qubits where the BSM failed.
This is because a Z measurement effectively removes a vertex (and its edges) from the graph state.
If \emph{no} BSMs succeed at an ABSA node, the connection attempt fails and must be restarted.
Inner qubits connected to successfully measured outer qubits are also Z measured, except for one surviving pair per ABSA.
This procedure prunes the graph, yielding a \emph{linear chain graph state} (as shown with green path in Step 3 of \cref{fig:rgs-scheme-overview}).

In the third and final stage (Step 3), the remaining inner qubits, now arranged in a linear chain, are measured in the X basis.
Since only two inner qubits remain per ABSA, these measurements effectively implement XX measurements.
The final result is a \emph{Bell pair} (i.e., a two-qubit graph state) shared between the two end nodes, up to local Pauli corrections.

Having discussed the conceptual foundations of the RGS scheme, we now turn to how it mitigates photon loss and operational errors---features that qualify it as a third-generation quantum repeater, albeit without support for hop-by-hop, packet-switching-style connections.

\subsection{Robustness of the RGS scheme: encoding and measurement of inner qubits}
\label{subsec:rgs-scheme-loss-and-error-mitigation}

The RGS scheme incorporates several mechanisms to address photon loss during link-level entanglement generation and to mitigate operational errors.
A primary strategy against photon loss in the Bell state measurements (BSMs) on the RGS's outer qubits is the use of multiple parallel ``arms,'' a parameter denoted by $m$.
Increasing $m$ enhances the probability that at least one BSM succeeds at each advanced Bell state analyzer (ABSA), a necessary condition for establishing an end-to-end entangled connection.

Successfully measuring the outer qubits, however, is only one aspect of the challenge.
The inner qubits of the RGS, which are also photonic and therefore susceptible to loss, must undergo successful measurement to complete the entanglement swapping protocol and prune the graph state into a final end-to-end Bell pair.
To guarantee arrivals of these vital inner qubits, the RGS scheme typically employs the \emph{loss-tolerant tree code encoding}~\cite{varnava-counter-factual-measurement}.
In this encoding, each logical inner qubit is represented by a tree-like graph state composed of multiple physical photonic qubits.
The specific structure of this tree, defined by its branching parameters $\vec{b} = (b_0, b_1, \ldots, b_{n-1})$, provides redundancy.
This redundancy, in conjunction with the number of arms $m$, determines the overall loss and error tolerance of the RGS scheme.
An illustrative example of such tree-encoded inner qubits can be seen in \cref{fig:rgs-scheme-overview}.

The logical state of these encoded inner qubits is determined by performing a pattern of single-qubit measurements on their constituent physical qubits.
All physical qubits at the same level within a tree are measured in the same basis, with the measurement basis (e.g., X or Z) alternating between successive levels, as depicted in \cref{fig:rgs-scheme-overview}.
For instance, to perform a logical measurement in the X basis (Z basis), we measure the first-level physical qubits in the X basis (Z basis), the second-level physical qubits in the Z basis (X basis), and so on, alternating the basis for each subsequent level.
The logical outcome of the measurement is then inferred from the parity of a specific subset of these physical measurement outcomes.
A logical Z basis measurement outcome is derived from the parity of eigenvalues from all physical qubits in the first level of the tree, whereas a logical X basis outcome is determined by the parity of an X basis measurement on one first-level qubit combined with Z basis measurements of its neighbors at the second level.

Even if a physical qubit is lost, its measurement result can sometimes still be deduced through a technique known as \emph{counterfactual or indirect measurement}~\cite{varnava-counter-factual-measurement}.
This principle, rooted in the stabilizer formalism of graph states, allows the intended measurement outcome of a physical qubit $a$ (which is part of a stabilizer generator $g_i$) to be inferred if all other physical qubits in the support of $g_i$ have been successfully measured, either directly or indirectly.
The inherent redundancy of the tree code thereby ensures that the logical outcome can often be consistently determined despite some physical qubit losses.
For details of how logical operators are defined, readers are referred to the supplementary material of~\cite{azuma-rgs}.

Beyond mitigating photon loss, the tree encoding also offers a degree of resilience against quantum operational errors that affect physical measurements.
For instance, errors in physical measurement outcomes in the Z basis on a qubit at level $k$ within the tree can often be corrected by performing a majority vote.
This vote utilizes the indirect measurement results for that qubit, which are inferred from its neighboring qubits at level $k+1$ and, in turn, their neighbors at level $k+2$.

Considering the \emph{logical} measurement outcomes, results from logical measurements in the X basis are amenable to correction through majority voting.
Since the logical X outcome is determined by the parity of a measurement in the X basis on a single first-level physical qubit and the measurements in the Z basis of its neighbors in the second level; if no errors occur, any such valid grouping for parity calculation will yield the same logical result, allowing for error detection and correction via a majority vote across these possible derivations.
In contrast, logical measurements in the Z basis, which depends on the combined parity of \emph{all} physical measurements in the Z basis on the first-level qubits, are not directly correctable by a similar majority vote mechanism at the logical level.
However, the individual physical measurements in the Z basis contributing to this logical outcome can still benefit from the aforementioned indirect measurement and majority voting techniques at the physical qubit level.
Consequently, the overall error tolerance of the RGS scheme is a function of both the specific tree code structure and the prevailing photon loss probability.

\subsection{Transmission order of photons in the RGS}

The measurement basis selection of the inner qubits is dependent on the BSM outcome of the outer qubits.
As long as the outer qubits arrive at the ABSA before their connected inner qubit, the measurement basis can be determined locally by the ABSA.
The physical qubits composing the inner qubits can be sent in any order as long as it is known to the ABSA beforehand, this is fixed and does not require any delay line to correct the ordering in our work as we will show the generation sequence in \cref{sec:generation-of-half-rgs}.
We note that even if the photons are lost, assuming that the photons are well separated temporally and time tagged, the ABSA can deterministically flag the loss event.
This well-separated assumption is also commonly adopted in memory-based repeater schemes for multiplexing photons from multiple memories into a single fiber~\cite{rdv-qi-architecture}.

\section{Half-RGS building block}
\label{sec:half-rgs}

While the original repeater graph state (RGS) connects its inner qubits in a complete graph, such full connectivity is not strictly necessary for entanglement swapping.
Prior work has demonstrated that a complete bipartite structure (biclique), formed by linking two halves of the inner qubits, is sufficient to achieve the same functionality~\cite{russo-rgs-biclique-generation, tzitrin-rgs-biclique-equivalent}, as illustrated in \cref{fig:half-rgs-biclique-rgs}.
This relaxation preserves the core capabilities of the RGS while potentially reducing the complexity of its generation~\cite{ghanbari-hoi-kwong-rgs-optimization, li-entangled-photon-factory}, and may also lower error rates given the same baseline fidelity of photon sources and detectors.

This insight motivates our first and most central contribution: the \emph{half-Repeater Graph State (half-RGS)}---a modular, emitter-friendly building block for constructing biclique RGSs.
The half-RGS is more than just a simplification, it is the foundation upon which nearly all contributions in this chapter are built.
It directly addresses key engineering and architectural challenges in the RGS framework: it simplifies photonic generation procedures, enables clean timing synchronization, aligns naturally with end-node constraints, and allows direct interfacing with memory-based repeaters.
Moreover, it introduces new possibilities for entanglement purification and modular scalability (\cref{sec:rgs-with-purification}).

In this section, we describe the structure and capabilities of the half-RGS, and demonstrate how this modular abstraction leads to practical, scalable designs in both pure all-photonic and heterogeneous quantum networks---between all-photonic and memory-based repeaters.
The modularity of the half-RGS building block offers significant advantages in addressing practical engineering challenges for RGS schemes, particularly in optimizing end-node resource requirements, simplifying timing synchronization, and facilitating the integration of all-photonic segments with memory-based repeater networks.

\begin{figure}[htb]
    \centering
    \includegraphics[width=\textwidth]{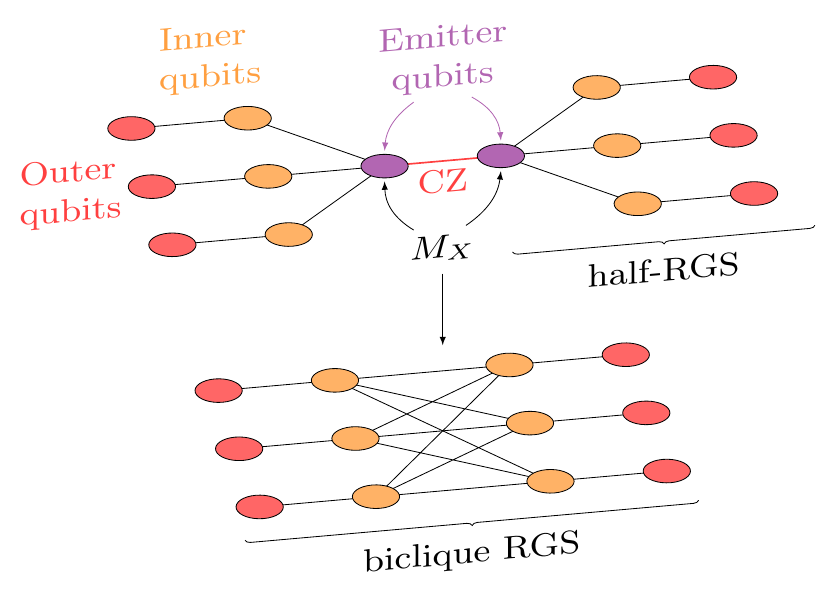}
    \caption[An illustration of a half-RGS and its transformation to a biclique RGS]{An example of a half-RGS (left and right structures) and the transformation of two such half-RGSs into a biclique RGS (center). The anchor emitter qubits (blue) of the two half-RGSs are joined via an application of a CZ gate followed by an XX measurement, resulting in the biclique RGS suitable for entanglement swapping. (From~\cite{naphan-half-rgs}.)}
    \label{fig:half-rgs-biclique-rgs}
\end{figure}

\subsection{The half-RGS construct}
\label{subsec:half-rgs-definition}

The \emph{half-RGS}, as illustrated in \cref{fig:half-rgs-biclique-rgs}, is a multipartite entangled state derived from one half of a standard Repeater Graph State (RGS) with a key structural modification.
A stationary quantum emitter or memory qubit, referred to as the \emph{anchor qubit} (shown in purple), is introduced and becomes entangled with the photonic portion of that half, which consists of a set of inner qubits and their corresponding outer qubits.
At a given repeater graph state source (RGSS) node, two such half-RGSs can be generated independently and then deterministically fused by applying a CZ gate between their anchor qubits followed by an XX measurement (i.e., a rotated Bell measurement).
This operation combines the two halves into a full biclique RGS, hence the name ``half-RGS.''
As we will show next, this modular structure addresses several longstanding challenges associated with the broader RGS framework~\cite{naphan-engineering-challenges-in-rgs}.

\subsection{Improving resource requirements at end nodes}
\label{subsec:half-rgs-end-nodes}

The integration of compute-capable end nodes into the RGS scheme, extending beyond simpler measurement-only nodes for QKD, has not been fully explored.
These compute nodes require quantum memories to store entangled states while awaiting classical messages from distant ABSAs and RGSSs before Bell pairs are ready for use.
One approach, explored in~\cite{zhan-graph-based-repeater-analysis} (and depicted at the top of \cref{fig:proposed-architecture}), necessitates that end nodes be equipped with at least $m$ (the number of RGS arms) emissive quantum memories per trial.
To sustain high trial rates, comparable to the RGS scheme's full potential, an additional $rm$ memories must be held in reserve.
This is because photons from previous trials would have already been emitted while the system awaits correction messages, where $r$ denotes the ratio of classical message travel time from the farthest ABSA to the RGS generation time.
For instance, based on the analysis in~\cite{hilaire-rgs-optimizing-gen-time} for a 1000~km separation between end nodes, each end node would require approximately 150 quantum memories to sustain high-rate operation, assuming $m = 15$ arms and a message delay ratio of $r = 10$.

\begin{figure}[tbp]
    \centering
    \includegraphics[width=\textwidth]{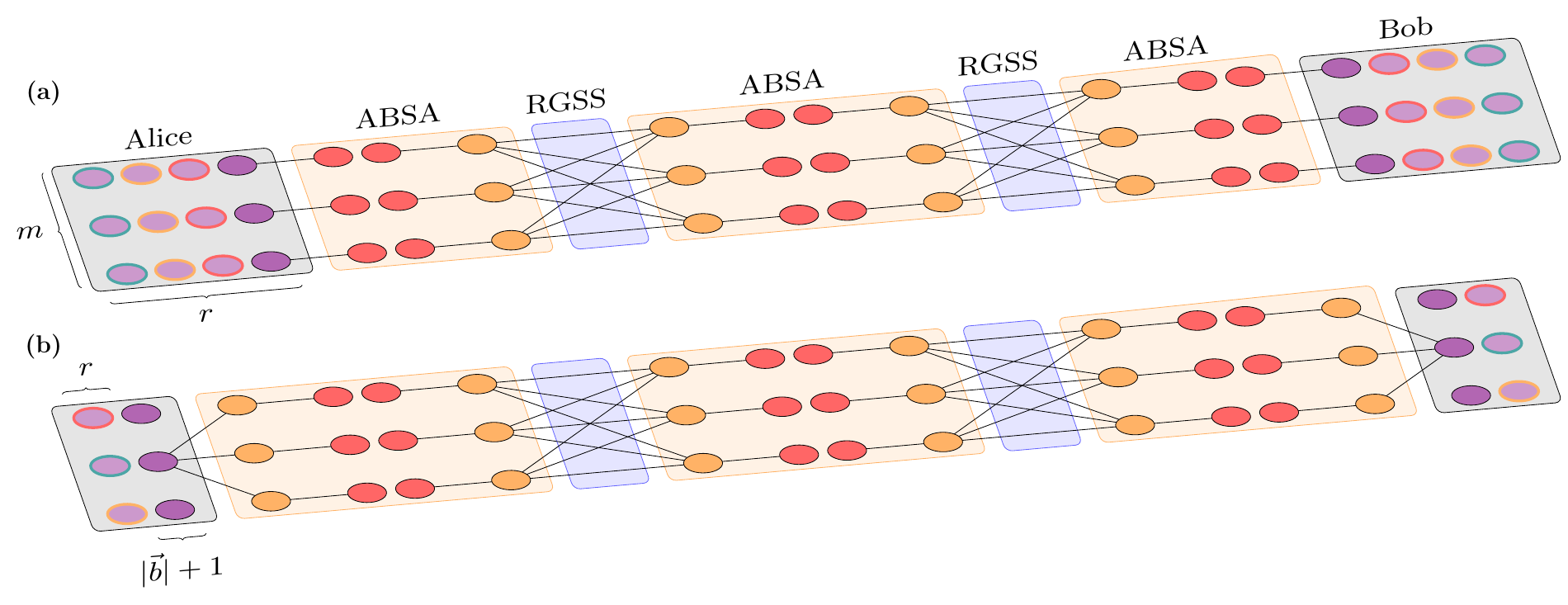}
    \caption[A comparison between two end node architectures for RGS schemes]{Architectures supporting the RGS scheme where the photonic states generated at end nodes are different. (a) The architecture proposed in~\cite{zhan-graph-based-repeater-analysis}, where end nodes require $m$ emissive memories for each trial, along with $rm$ idle memories that have already been utilized in prior trials and are awaiting messages from all ABSAs. (b) Proposed architecture, featuring end nodes equipped with half-RGS building blocks. In this setup, end nodes require $|\vec{b}|+1$ quantum emitters (where $|\vec{b}|$ represents the depth of the tree encoding plus one), while reserving $r$ memories awaiting notification messages from prior trials. Purple circles with black borders represent the quantum emitters utilized in the current trial. Purple circles with colored borders in both (a) and (b) indicate idle memories awaiting messages from ABSAs, with the same colors denoting memories participating in the same trials. (From~\cite{naphan-half-rgs}.)}
    \label{fig:proposed-architecture}
\end{figure}
In contrast, our proposed architecture leveraging the half-RGS building block (shown at the bottom of \cref{fig:proposed-architecture}) streamlines this requirement.
An end node would need only $r$ quantum memories (for storing states from previous trials awaiting classical corrections) and $|\vec{b}| + 1$ emitters (where $|\vec{b}|$ is the depth of the tree encoding) for generating its local half-RGS, resulting in a total of $r+|\vec{b}|+1$ combined memory and emitter units.
Crucially, the $r$ quantum memories do not need to be emissive if the state of the half-RGS anchor emitter can be swapped into them.
Furthermore, the emitters themselves do not require coherence times as long as those of quantum memories, given their role is primarily the generation of photonic qubits for the half-RGS.
This configuration can effectively double the entanglement trial rate compared to the $rm$ emissive memories approach (or $2rm$ if multiplexed fiber is used to match our rate) discussed in~\cite{zhan-graph-based-repeater-analysis}, as illustrated in \cref{fig:proposed-architecture}.

While this approach does imply that end nodes must be equipped with half-RGS generation capability, the hardware is identical to what is already required at repeater nodes, making the assumption reasonable.
A more end-node friendly approach would be that end nodes connected to a repeater bridging the RGS scheme and the port connecting to end nodes can be memory-based.

While this strategy implies that end nodes themselves must be equipped with Half-RGS generation capabilities, this requirement can be considered reasonable given that such hardware is functionally identical to that already necessary for the RGSS nodes within the all-photonic segments. Alternatively, for end nodes lacking this specific generation hardware, a more accommodating approach involves utilizing a memory-equipped repeater node to act as an interface. Such a repeater would bridge the all-photonic RGS segment to a conventional memory-based quantum link terminating at the end node, thereby allowing simpler, memory-based end nodes to connect to networks that leverage advanced RGS technology (see also \cref{subsec:half-rgs-path-virtualization}).

\subsection{Simplifying timing synchronization.}
\label{subsec:half-rgs-simplifying-timing-synchronization}

For entanglement swapping via Bell state measurements (BSMs) to succeed, the involved photons must arrive at the measurement station within an indistinguishable time window.
This requirement is universal across all quantum repeater schemes and stems from the physics of linear-optical two-photon interference, not from the design of any particular architecture.

What makes the RGS scheme uniquely challenging, however, is that the full RGS is typically generated by a single set of emitters~\cite{buterakos-graph-generation,hilaire-rgs-optimizing-gen-time}.
In this configuration, both halves of the photonic graph---destined for neighboring nodes on either side---are entangled with the same stationary qubits and must be emitted in a tightly coordinated sequence.
For the outer photons to arrive simultaneously at their respective Bell state measurement stations (ABSAs), the emitter must coordinate timing with both neighboring nodes.
This means that a single RGSS node must synchronize with both its left and right ABSAs, implicitly creating a dependency chain across the full connection path.
Timing messages must propagate along the entire route, turning synchronization into a global coordination problem.
This requirement imposes significant engineering complexity and hinders scalability~\cite{moriScalableTimingCoordination2024a}.

Our half-RGS approach resolves this issue by introducing modularity at the level of RGS generation.
Instead of creating both sides of the RGS from a single set of emitters, the biclique RGS is formed from two independently generated half-RGSs---each constructed by its own set of emitters (different QNICs) within the same RGSS node.
This design localizes timing control as synchronization is now reduced to standard neighbor-to-neighbor coordination, just as in conventional memory-based repeater schemes.
Timing messages are only exchanged between adjacent nodes and the local controller managing emitter operations, eliminating the need for global synchronization across the entire connection path.

\subsection{Integration with memory-based repeaters through virtualization of links}
\label{subsec:half-rgs-path-virtualization}

The integration of all-photonic repeaters---particularly the RGS scheme---into memory-based quantum repeater networks remains an underexplored but essential step toward scalable and heterogeneous quantum networks.
By reframing an entire RGS segment as a single virtual link-level connection between two memory-equipped devices---rather than as a chain of network-level repeaters---we enable its seamless use within existing memory-based protocols, as illustrated in \cref{fig:rgs-virtual-link}.
This abstraction allows the Bell pairs generated over the RGS chain to serve as fundamental link-level resources, similar to those produced by memory-based links like MIM, MSM~\cite{cody-msm}, or even Sneakernet~\cite{simon-sneakernet}.

Once generated, these Bell pairs can be directly handed off to the control stack (e.g., QRSA and RuleSets) at the terminating memory-equipped nodes, allowing seamless integration with existing entanglement management protocols.
This abstraction aligns with the design principles of the future Quantum Internet---modularity, heterogeneity, and interoperability---by allowing RGS links to be treated identically to other link technologies within connection setup and resource scheduling processes, as discussed in \cref{chapter:architecture-memory}.

\begin{figure}[htb]
\centering
\includegraphics[width=\textwidth,keepaspectratio]{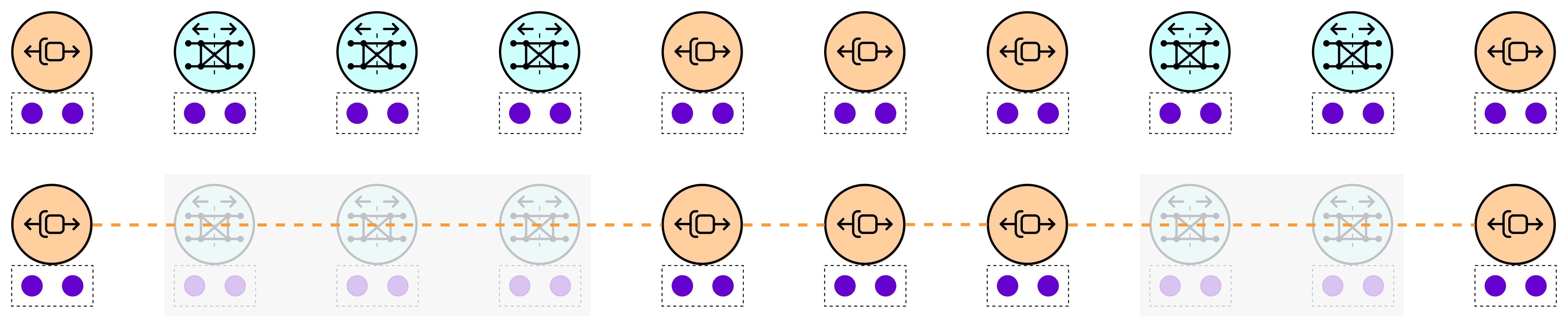}
\caption[Virtual link abstraction for RGS scheme segments]{Segments constituting RGSS (repeater graph state source, in light blue) and ABSA (advanced Bell state analyzer, omitted) nodes are treated as virtual links to memory-equipped repeaters (orange). The RGS scheme link is invisible at the connection-level protocol, reducing the cost of connection setup and management. (From~\cite{naphan-half-rgs}.)}
\label{fig:rgs-virtual-link}
\end{figure}

This virtual link abstraction fits naturally within the QRNA framework described in \cref{chapter:architecture-memory}, where RGS-based subpaths can be treated as single hop-level links between memory-equipped nodes.
At the boundaries of these subpaths, RuleSets and connection messages are translated into the lower-level RGS protocol---mirroring the internetworking process outlined in \cref{sec:architecture-memory:internetworking-qrna}.
Conceptually, an RGS subnetwork may function like a router, due to its internal coordination, or like an optical switch (OSW), owing to its abstraction of link-level and tight real-time signaling requirement.
While implementation specifics are left for future work, the key insight here is that encapsulating all-photonic repeater chains as virtual links is both technically viable and architecturally beneficial, paving the way for unified protocol designs across heterogeneous quantum technologies.

\section{Generation of half-RGS}
\label{sec:generation-of-half-rgs}

Generating photonic RGS remains one of the most challenging aspects of implementing the RGS scheme. Numerous generation methods have been proposed and refined over time~\cite{azuma-rgs,pant-rate-dist-tradeoff,buterakos-graph-generation,zhan-graph-gen-delay-line,shapourianModularArchitecturesDeterministically2023}, alongside optimization techniques that reduce resource consumption and generation time~\cite{li-entangled-photon-factory,kaur-patil-guha-rgs-generation,ghanbari-hoi-kwong-rgs-optimization}.

In this work, we adopt the approach of Buterakos et al.~\cite{buterakos-graph-generation}, which enables the deterministic generation of photonic graph states via quantum emitters.

To support deterministic generation of half-RGS using emitter-based schemes, we begin by clarifying the operational assumptions and primitive transformations used in our construction.

\subsection{Assumptions on the operations}
\label{subsec:assumptions-of-operations-in-half-rgs-generation}

We assume the following capabilities for the quantum emitters, summarized in \cref{fig:rgs-gen-sequence}.
Emitters are arranged in a linear topology $(Q_a, Q_e, Q_0, Q_1, \ldots, Q_{|\vec{b}|-1})$, where $Q_a$ is the anchor qubit and $Q_e$ is the emitter for outer qubits.
As a reminder, anchor qubits refer to emitters that are part of the half-RGS and serve to anchor photonic qubits (\cref{fig:half-rgs-biclique-rgs}).
Emitters support single-qubit Hadamard gates and controlled-phase gates between adjacent emitters.
Photon emission is modeled as a controlled-Not operation to a newly initialized qubit in the $\ket{0}$ state (\cref{fig:rgs-gen-sequence}(a)).
While we model this emission as deterministic for the purposes of the protocol logic, we acknowledge that no physical emitter can achieve perfect quantum efficiency.
Any inefficiency in single-photon emission is treated as a form of initial photon loss, a type of error that the RGS scheme is inherently designed to tolerate.

In addition to the standard graph state manipulation rules, we use \emph{push-out operations}, where an emitter is effectively replaced by its emitted photon and then disconnected from the rest of the graph~\cite{enocomou-2d-cluster-generation,russo-arbitrary-graph-state}.
This is implemented by emitting a photon followed by a Hadamard gate, as shown in \cref{fig:rgs-gen-sequence}(b).

\begin{figure}[htb]
\centering
\includegraphics[width=\textwidth,keepaspectratio]{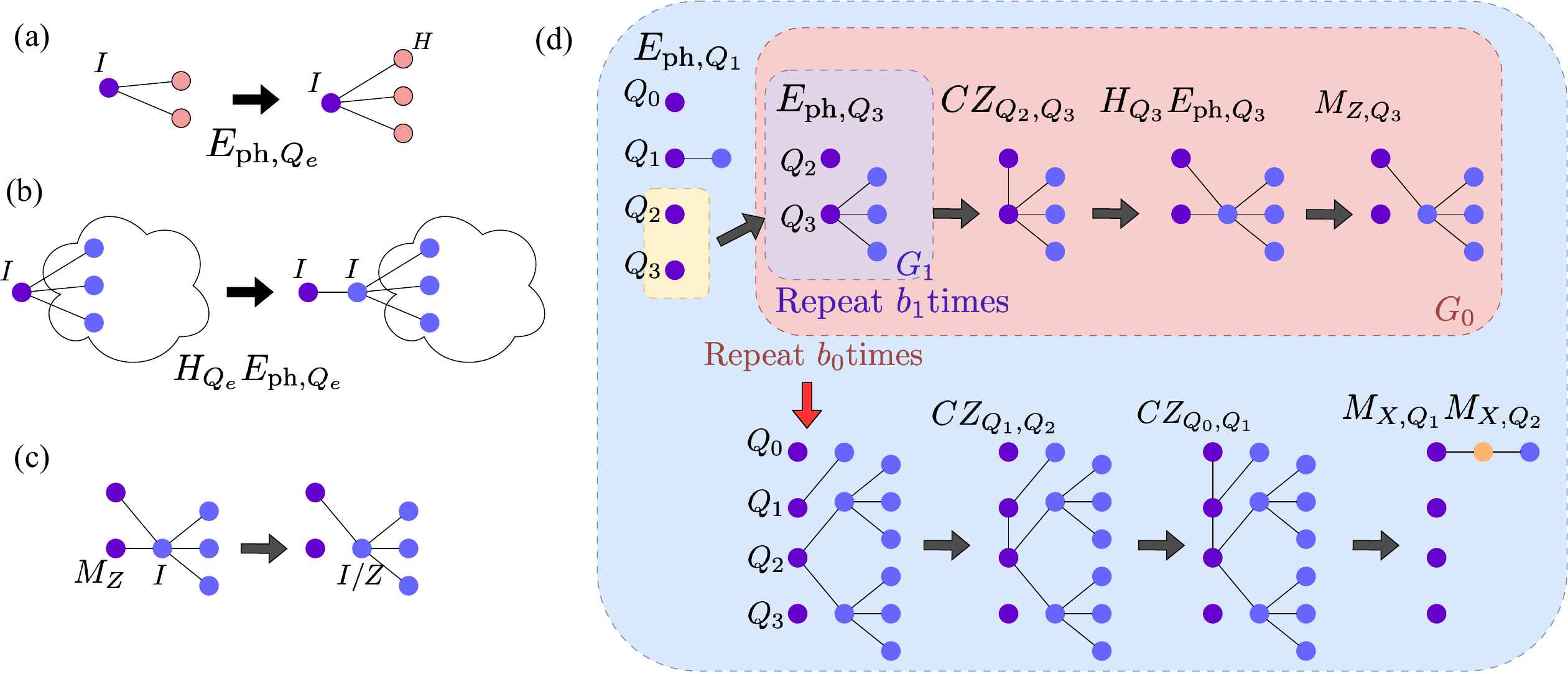}
\caption[An illustration of half-RGS generation via deterministic quantum emitters and side effects]{(a) Photon emission with side effects labeled. (b) Push-out operation: photon emission followed by a Hadamard gate, introducing no side effect. (c) Push-out followed by Z measurement, the main sequence used in half-RGS generation. (d) Generation sequence (adapted from~\cite{hilaire-rgs-optimizing-gen-time}) for one arm of half-RGS with $\vec{b} = (2, 3)$. Repeat $m$ times to construct $m$ arms. (From~\cite{naphan-half-rgs}.)}
\label{fig:rgs-gen-sequence}
\end{figure}

The Bell-state measurement (BSM) employed in our protocol is equivalent to applying a controlled-phase gate between the two qubits followed by an XX measurement.
Due to limitations of linear optics, the BSM succeeds with probability 50\%, which we model by treating odd-parity outcomes (i.e., 01 or 10) as success.
This choice is arbitrary; any two of the four Bell basis outcomes would suffice.

With these assumptions in place, we now describe the full generation sequence for a half-RGS state with arbitrary branching parameters and multiple arms.

\subsection{Sequences for generating the half-RGS}
\label{subsec:half-rgs-generating-sequences}

To generate a half-RGS with $m$ arms and branching vector $\vec{b} = (b_0, b_1, \ldots, b_{n-1})$, anchored at $Q_a$, we use a sequence adapted from~\cite{buterakos-graph-generation, hilaire-rgs-optimizing-gen-time}, as also illustrated in \cref{fig:rgs-gen-sequence}(d), which is given by
\begin{equation}
  \begin{aligned}
    & \left( M_{X,Q_e} M_{X,Q_0} CZ_{Q_a, Q_e} CZ_{Q_e, Q_0} G^{b_0}_{0} E_{\mathrm{ph},Q_e} \right)^m, \\
    & \text { with } G_k = M_{Z, Q_{k+1}} H_{Q_{k+1}} E_{\mathrm{ph}, Q_{k+1}} CZ_{Q_{k}, Q_{k+1}} G_{k+1}^{b_{k + 1}} \\
    & \text { and } G_{n-1}=E_{\mathrm{ph}, Q_{n-1}},
  \end{aligned}
\end{equation}
where $E_{\mathrm{ph}, Q_i}$ denotes photon emission from emitter $Q_i$, and $M_{P, Q_i}$ denotes measurement in basis $P$ on emitter $Q_i$.

Joining two half-RGS states into a full RGS is achieved by applying a CZ gate between the two anchor qubits ($Q_a$) and performing an XX measurement.
Although our generation sequence requires more emitters than some prior approaches, it simplifies both timing and resource management at ABSA nodes.
In particular, photon emission proceeds in the same order as measurements, which reduces the need for photon storage and mitigates loss in optical fibers.

Compared to previous methods~\cite{hilaire-rgs-optimizing-gen-time}, our design uses twice as many emitters per RGSS and two additional emitters for generating the outer qubits.
However, this cost is offset by significantly improved performance.
Our scheme cuts the generation time in half and eliminates the need for optical delay lines or efficient photon storage at ABSA nodes, as required in~\cite{buterakos-graph-generation, hilaire-rgs-optimizing-gen-time, zhan-graph-based-repeater-analysis, li-entangled-photon-factory}.
All photons are emitted in the same order as their corresponding measurements, ensuring minimal travel distance through fiber and reducing susceptibility to loss.

While this construction ensures deterministic generation and efficient timing, it introduces certain Clifford side effect operators during entangling gates and measurements.
These side effects must be tracked and corrected to ensure the proper formation of Bell pairs, as discussed next.

\subsection{Side effects created during the generation}
\label{sec:generation-of-half-rgs:side-effect}

The generation of half-RGS states and the subsequent measurements at ABSA nodes introduce Clifford side effects that must be carefully tracked and corrected to ensure proper state preparation.
During generation, leaf nodes corresponding to the inner and outer qubits inherit an $H$ side effect (\cref{fig:rgs-gen-sequence}(a)).
These can be corrected either optically, by applying Hadamard gates to the emitted photons, or logically, by swapping the X and Z measurement bases at ABSA.

Beyond the $H$ side effects, additional $Z$ side effects may arise.
In particular, qubits produced via a push-out operation followed by Z basis measurement of the emitter can acquire a Z side effect with 50\% probability (\cref{fig:rgs-gen-sequence}(c)).
Outer qubits may also pick up Z side effects from the XX measurement used to attach them to the inner qubits.

Further side effects are introduced when joining two half-RGS states to form a biclique RGS.
This step involves a CZ gate and an XX measurement between the anchor qubits (\cref{fig:half-rgs-biclique-rgs}), which can collectively toggle the Z side effect on all physical qubits in the first tree level (\cref{fig:graph-state-example-in-rgs}, bottom right).
There are two ways to handle these effects: either (1) track a logical Z side effect on the inner qubit or (2) track the side effect status of each individual physical qubit.
We adopt the second strategy---toggling the physical qubits---when describing the general protocol in \cref{subsec:agnostic-rgs-pauli-frame-correction}, and the first---tracking logical Z---when analyzing the protocol tailored to our half-RGS design in \cref{subsec:half-rgs-pauli-frame-correction}.

\section{The full RGS protocol}
\label{sec:full-rgs-protocol}

We now describe the communication protocol to realize the RGS scheme given our assumptions discussed above.
First, we will describe a biclique RGS generation agnostic protocol outlining where information about measurement and side effects originate and how they should be passed to intermediate and end nodes.
Later, in the following subsection, we will go into the generation specific with our proposed half-RGS interface leveraging graph state manipulation rules to argue about the calculations of Pauli frame corrections in arm-by-arm manner and hop-by-hop manners.

\subsection{Protocol overview}
\label{subsec:side-effect-correction}

A communication protocol for the RGS scheme must support fast generation and soft real-time processing.
In particular, ABSAs must be able to determine measurement bases based solely on the arrival order of photons, without waiting for classical instructions from the RGSS.
At the same time, the protocol must ensure that Pauli frame corrections are correctly tracked---even as the number of measurement results scales with the size of the RGS---so that end nodes can deterministically correct the final entangled state.

In each trial, the RGSS sends an RGS to its adjacent ABSA, followed by a classical message detailing side effects associated with each physical qubit.
These side effects---such as byproduct Pauli $Z$ operators introduced during graph state generation---are determined at the RGSS regardless of the specific generation method used~\cite{azuma-rgs, pant-rate-dist-tradeoff,buterakos-graph-generation,zhan-graph-gen-delay-line,li-entangled-photon-factory}.
Importantly, in some architectures (e.g., full RGSs with complete graphs), certain generation methods may impose additional side effects on emitted photons.
For example, if the emitter qubit is measured in the Y basis as in~\cite{hilaire-rgs-optimizing-gen-time}, then the first photon in a tree-encoded arm must be measured in the Y basis rather than X.
This means ABSAs must be aware in advance of the exact form of side effects that may appear on each photon, in order to choose the correct measurement basis.
Hence, side-effect information must be communicated ahead of time or synchronized on a per-trial basis between adjacent RGSS--ABSA pairs.

Once the RGS reaches the ABSA, Bell-state measurements (BSMs) are performed on the outer qubits.
The outcomes of these BSMs determine the measurement bases of inner qubits, and the ABSA proceeds to measure them accordingly.
To process these outcomes, each ABSA constructs a measurement tree per RGS arm, recording measurement outcomes and bases (or marking qubits as lost).
This tree mirrors the logical tree encoding and enables the ABSA to determine the logical measurement result locally.
The $Z$ side effects from generation, along with the measurement tree data, can then be compressed into just two classical bits per ABSA per trial: one indicating whether the logical $X$ result should be flipped, and another indicating whether decoding was successful.
These bits are sent to the end nodes, which then apply the appropriate Pauli corrections to recover a deterministic Bell pair.

\subsection{Generation-agnostic Pauli frame calculations}
\label{subsec:agnostic-rgs-pauli-frame-correction}

We now explain how Pauli frame corrections can be computed in a generation-agnostic manner---without relying on the internal steps of RGS creation.
This tracking is a crucial but often overlooked part of RGS-based protocol design, as many works focus solely on the final entangled state up to local Clifford equivalence, sidestepping the operational details of how Pauli frame information is extracted and propagated during execution.

While photons are physically measured in the order they arrive at the ABSA, we can conceptually treat them as being measured in any order, provided each outer qubit is measured before its corresponding inner qubits.
We now describe how to compute the resulting Pauli frame for the memory qubits at the end nodes.
\cref{fig:side-effect-propagations-and-corrections} outlines the procedure below.

\begin{figure}[htb]
    \centering
    \includegraphics[width=\textwidth]{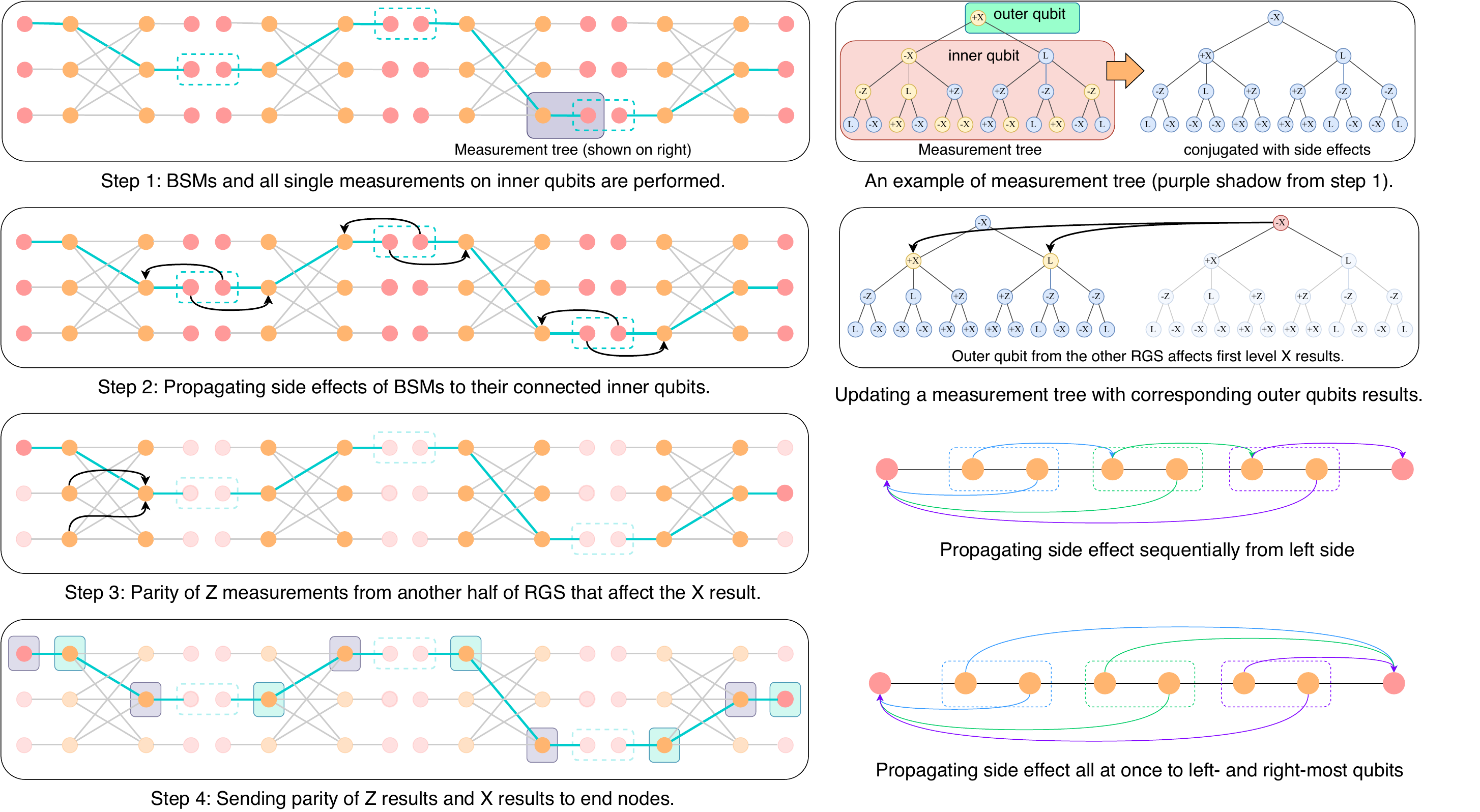}
    \caption[A depiction of Pauli frame correction procedure for generation agnostic RGS protocol]{The tracking of side effect propagation at each step, the resolution of physical and logical measurement results, and the rationale behind the Pauli frame corrections applied at end nodes, as illustrated in Steps 1--4 on the left side of the figure. Examples of measurement trees and their updating during the Pauli frame calculation process are presented in the top two panels on the right side. Yellow-filled vertices represent qubits with a $Z$ side effect. The sequential propagation of side effects, viewed hop-by-hop, is contrasted with an equivalent view of simultaneous propagation to the outermost qubits, depicted in the bottom two panels on the right side. (From~\cite{naphan-half-rgs}.)}
    \label{fig:side-effect-propagations-and-corrections}
\end{figure}

First, we assume that the success or failure of all BSMs on the outer qubits is known, so the measurement bases for all inner qubits are determined.
Before we begin tracking state changes due to measurements, each ABSA constructs a measurement tree for each inner--outer qubit pair in every RGS arm. This tree includes the side effect information about all physical qubits, as provided by the RGSS and end nodes.
The tree stores the measurement basis and result, or marks the qubit as lost. Measurement outcomes are denoted by $+$ (for 0) or $-$ (for 1), along with the basis (X or Z); loss events are marked by $L$.
If a qubit has a $Z$ side effect from generation (shown as a yellow-filled vertex in~\cref{fig:side-effect-propagations-and-corrections}), then its X basis measurement result is flipped, while Z basis results and loss entries remain unchanged.

Next, we propagate the BSM results obtained from the outer qubit pair to their connected inner neighbors, as shown in Step 2 of \cref{fig:side-effect-propagations-and-corrections}.
A BSM has the same effect as an XX measurement in \cref{fig:graph-state-example-in-rgs} and may introduce a $Z$ side effect.
Successful BSM outcomes propagate across the RGSs to the inner qubits, indicated by the black arrows in Step 2.
If the measurement outcome on the outer qubit from the left (right) is 1 (i.e., $-$X in the figure), then the measurement results of the first-level physical qubits in the right (left) tree are flipped.
We do not propagate $Z$ side effects to inner qubits whose outer neighbors were part of a failed or discarded BSM, since these inner qubits are measured in the Z basis and Z side effects do not affect Z measurement results.

The results of Z measurements on the inner qubits of one half of the RGS affect the X outcomes of the other half; i.e., the two halves sent to different ABSAs.
However, these measurements are performed at separate ABSAs, making their outcomes locally unavailable.
This cross-arm dependency is illustrated in Step 3 of \cref{fig:side-effect-propagations-and-corrections}.
Fortunately, as shown in the bottom two panels on the right side of the figure, the $Z$ propagation from XX measurements can be handled at the end nodes via the parity product of measurement outcomes.
This same logic applies to any $Z$ side effect introduced by inner qubits measured in the Z basis.
Each ABSA computes the parity of its local $Z$ side effects (i.e., the product of the individual Pauli $Z$) needing corrections, and sends this partial parity to the end node, which combines them to determine the total parity.

Therefore, each ABSA needs to send only two bits of information to the end nodes per trial: one bit indicating whether a $Z$ side effect propagates to that side (based on the parity), and one bit indicating whether the procedure was successful.
The number of classical bits that each ABSA receives is equal to the number of photons it measured.
We will discuss this total amount of classical bits to be sent in \cref{subsec:one-vs-two-stage-correction}

\subsection{Half-RGS specific Pauli frame calculation}
\label{subsec:half-rgs-pauli-frame-correction}

We now show how our proposed half-RGS building block, generated using quantum emitters, enables straightforward calculation of Pauli frame corrections and naturally leads to a simple graphical method for determining the end-to-end fidelity of the protocol (discuss in \cref{sec:half-rgs-fidelity-derivation}).
As in the generation-agnostic approach previously described, we begin with Step 1 of \cref{fig:side-effect-propagations-and-corrections}, but the subsequent steps differ.

\begin{figure}
    \centering
    \includegraphics[width=\linewidth]{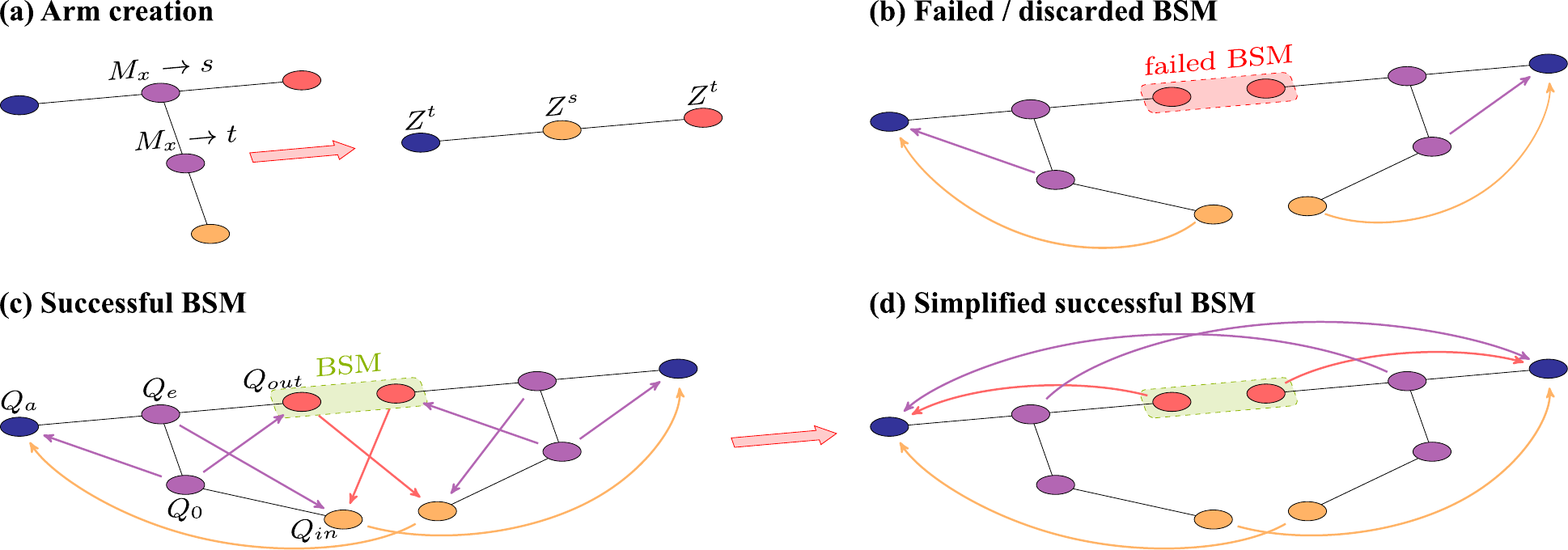}
    \caption[An illustration of side effect propagation for a single arm of half-RGS generation]{Side effect propagation for an arm of a one-hop RGS link from half-RGS. An arrow from $a$ to $b$ indicates that if the measurement result of $a$ is 1, the result of $b$ should be flipped. (a) Side effect propagation when creating an arm attached to anchor qubits via XX measurements on emitters joining inner and outer qubits. (b) Propagation for failed or discarded BSMs, where inner qubits are measured in the Z basis. (c) Side effect propagation for a successful BSM arm, with inner qubits measured in the X basis. (d) A simplified view of (c), where the tips of arrows are moved to anchor qubits by following the arrow chain.}
\label{fig:side-effect-half-rgs}
\end{figure}
With our construction of half-RGS blocks, all photons are generated in the same order they are sent and measured at an ABSA, eliminating the need for optical delay lines.
Each arm of the half-RGS is created sequentially, which allows us to track and propagate $Z$ side effects of each arm to the anchor qubit during the generation process.
Importantly, the half-RGS anchor qubit never emits photons, and the only operation performed on it---namely, the CZ (used to join and form the biclique RGS at the final step)---commutes with $Z$ side effects.
Thus, this tracking approach is sound.

First, let us define the qubits involved in this one-hop picture, as shown in \cref{fig:side-effect-half-rgs} (corresponding to the second-to-last step of \cref{fig:rgs-gen-sequence}, where tree-encoded physical qubits are represented as a single logical orange qubit).
Let $Q^L_a$, $Q^L_e$, $Q^L_0$, $Q^L_{out}$, and $Q^L_{in}$ denote the anchor qubit of the half-RGS, emitter for outer qubits, anchor of the inner logical qubits, outer qubit, and inner logical qubit, respectively, for the left side incoming to the ABSA.
Similarly, $Q^R_j$ denotes the corresponding qubits for the right incoming side.

Recall the side effects introduced by the generation sequence of the half-RGS as described in \cref{sec:generation-of-half-rgs:side-effect}.
Graphically (see \cref{fig:side-effect-half-rgs}), if a measurement in the RGS protocol introduces a side effect (flipping the result of another qubit's measurement) when the outcome is ``1,'' we draw an arrow from the measured qubit to the affected one.
The arrows from inner and outer qubits correspond to the view in the last step of \cref{fig:rgs-gen-sequence}, after $Q_1$ and $Q_2$ have been measured out.
The arrows from $Q_0$ and $Q_1$ follow from the generation sequence in the same figure, where inner and outer qubits become part of the RGS arm attached to the anchor.

The figure depicts two measurement patterns: one where inner qubits are measured in the X basis and another where they are measured in the Z basis (left and right halves, respectively).
From this graphical picture, we observe that a single flipped measurement result propagates through the chain of arrows and always affects an anchor qubit.
This lets us simplify the entire effect of each measurement to a possible flip of the $Z$ side effect on only the anchor qubits (lower half of \cref{fig:side-effect-half-rgs}).

By repeating this process for each arm of a one-hop RGS link between two half-RGSs, we can compute the full Pauli frame corrections.
To generalize this to a multi-hop linear chain of RGS links, recall that at each RGSS node, the two halves of the half-RGS are joined to form a biclique RGS using a CZ gate followed by XX measurements.
Because the tracked $Z$ side effect in the one-hop picture commutes with CZ and flips X basis measurement outcomes, we simply flip the reported measurement result if the anchor has a $Z$ side effect.
These measurement results, which are then sent to the end nodes for post-processing, indicate the Pauli frame corrections---similar to Step 4 in \cref{fig:side-effect-propagations-and-corrections}.

\subsection{One-Stage vs Two-Stage Pauli frame correction}
\label{subsec:one-vs-two-stage-correction}

An important aspect of RGS protocols is how classical information from measurements is processed to determine the Pauli frame corrections at the end nodes.
Each measurement generates a classical bit and may introduce side effects that must be tracked.
We now compare two strategies for processing and communicating this classical data: the conventional \emph{One-Stage Correction Method}, and the \emph{Two-Stage Correction Method} we propose (\cref{subsec:agnostic-rgs-pauli-frame-correction,subsec:half-rgs-pauli-frame-correction}).

In the conventional One-Stage Method, implied in most literature~\cite{azuma-rgs,hilaire-rgs-optimizing-gen-time}, every intermediate node---including all Advanced Bell State Analyzers (ABSAs) and Repeater Graph State Sources (RGSSs)---forwards all measurement outcomes to the end nodes without performing any local post-processing.
This includes results from every physical photon measured along the entire RGS chain and their side effects during the generation, potentially numbering in the hundreds or thousands depending on the encoding parameters (e.g., $m=14$, $\vec{b}=(10,5)$) per RGS.
The end nodes then process this full dataset to reconstruct the logical qubit outcomes and calculate the required Pauli frame corrections.
While conceptually straightforward and easier to implement, this approach incurs significant classical communication overhead and places a heavy processing burden on the end nodes, which must interpret the entire history of side effect propagation.

In contrast, our Two-Stage Method introduces local pre-processing at each ABSA.
Rather than transmitting all raw photon measurement results, the ABSA collects and reduces this data---along with any necessary side-effect tracking from neighboring RGSSs---down to a small number of summary bits.
Specifically, each ABSA computes and forwards only two bits: the effective parity of $Z$ side effect for end nodes and a bit inidiciating the success of the procedure.
These summary bits are then passed along the chain, and the end nodes use this compressed information to determine the final Pauli corrections.

\begin{figure}[htbp]
    \centering
    \includegraphics[width=\textwidth, keepaspectratio]{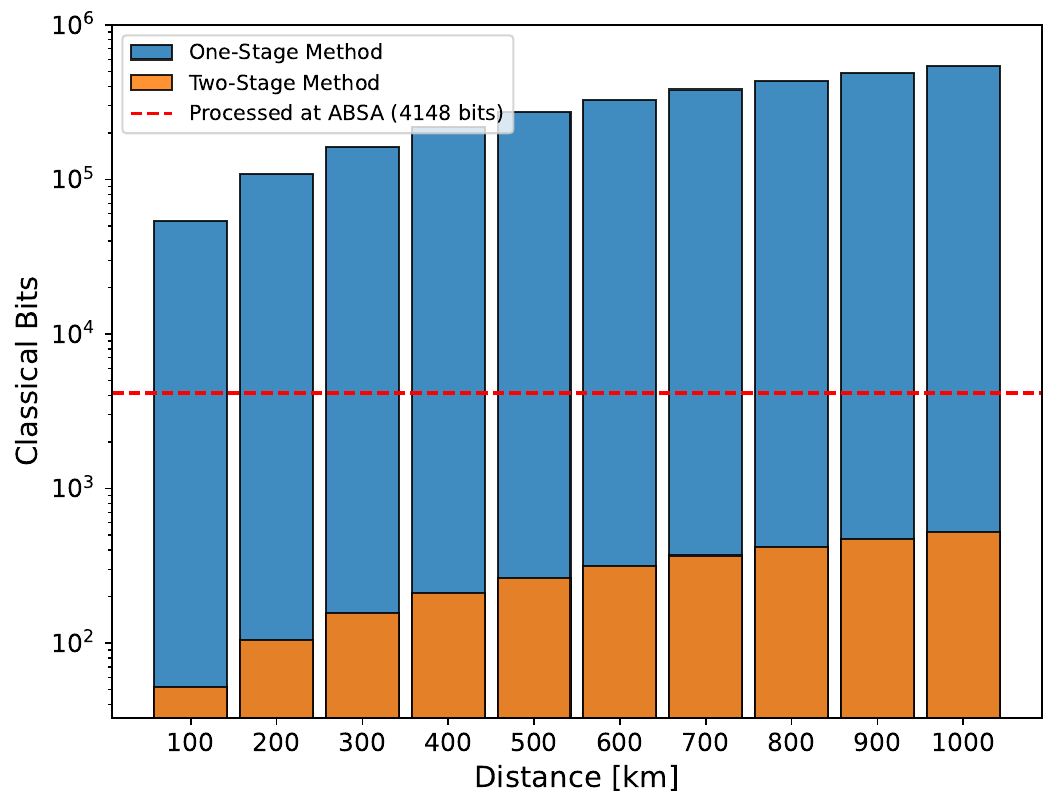}
    \caption[Classical communication bits comparison One-Stage and Two-Stage Pauli frame correction of RGS protocol]{The number of classical bits to be processed (equivalent to the total number of photonic qubits involved in generating the RGS arms for one Bell pair attempt) versus the separation distance between two end nodes. This assumes a near-optimal RGS structure with $m = 14$ arms and inner qubit tree-encoding parameters $\vec{b} = (10, 5)$. Blue bars represent the classical bits processed by end nodes in the One-Stage Correction Method. Orange bars show the significantly reduced bits processed by end nodes in the Two-Stage Correction Method. The red dashed line indicates the constant, relatively small number of bits processed by each intermediate ABSA in the Two-Stage Method. (From~\cite{naphan-engineering-challenges-in-rgs}.)}
    \label{fig:communication-cost}
\end{figure}
This distributed correction strategy has two major advantages.
First, the end nodes receive and process far fewer classical bits---only two bits per ABSA instead of all individual measurement results---resulting in a much lighter computational load.
Second, the total classical communication volume across the network is significantly reduced.
As illustrated in \cref{fig:communication-cost}, the Two-Stage Method cuts down the number of bits the end nodes must process by up to three orders of magnitude compared to the One-Stage Method.
The constant and relatively small processing load per ABSA, shown as the red dashed line, makes this a scalable and efficient approach.

By optimizing classical side-effect management in this way, our Two-Stage Method improves the practicality of RGS-based quantum repeater protocols, particularly for large-scale quantum networks where classical bandwidth and processing resources are constrained.

\subsection{Correctness of the protocol and the propagation rules}
\label{subsec:half-rgs-simulation-validation}

We validated the correctness of the two protocols, generation-agnostic and half-RGS building block, using the stabilizer tableau simulator Stim~\cite{gidney-stim}.
The code for this validation can be found at \cite{rgs-github-repo}.
In the generation-agnostic simulation, each physical qubit forming an inner qubit is directly created from quantum circuits using Hadamard and CZ gates without measurements.
To simulate the stochastic $Z$ side effect on each physical qubit within an inner logical qubit, a $Z$ gate is applied with 50\% probability after state generation.
Subsequently, inner--outer qubit arms are generated following the procedure in \cref{sec:generation-of-half-rgs} until half-RGSs are obtained and transformed into biclique RGSs.
Once all RGSs are generated at each RGSS, along with the two half-RGSs from the end nodes, BSMs are performed on all outer qubit pairs, and measurement bases are selected for all inner qubits.
Finally, all physical qubits in the system are measured simultaneously in a single circuit layer.

Instead of directly simulating message passing, the correction procedure is executed step by step, as outlined in \cref{subsec:agnostic-rgs-pauli-frame-correction}.
For each RGS arm, a measurement tree is created, followed by updating the measurement tree with side effect information and $Z$ propagation from the BSM, and the decoding of the logical result.
Finally, the 4 bits of information at each ABSA are processed, and correction operations are applied to the end nodes' memories accordingly.

For the half-RGS specific protocol---where biclique RGS is created via our proposed half-RGS building blocks---we simulate it hop-by-hop similarly to how we described it.
This helps speed up the simulation as the total number of qubits during the simulation is bounded by only a single hop independent of how many hops we want to simulate.
All the photons are generated exactly as described in \cref{sec:generation-of-half-rgs}, photon-by-photon, confirming that our Pauli frame calculation and also the generation sequence create the right state.

To confirm that the final state between two end nodes is a two-vertex graph state without any side effect, we used Stim to examine the stabilizer generators of the final two qubits.
We verified that the stabilizer generators are $X_a Z_b$ and $Z_a X_b$ as expected, where the subscripts $a$ and $b$ denote the qubit held by Alice and Bob, respectively.
Simulating the photon loss event is achieved by probabilistically marking qubits as lost with some probability.
If a qubit is marked as lost, we randomly apply Pauli $X$, $Y$, or $Z$ before the measurement, and the results are then removed from the measurement tree, excluding their participation from the logical qubit decoding process.

\section{Fidelity derivation of the end-to-end Bell pairs}
\label{sec:half-rgs-fidelity-derivation}

The fidelity of end-to-end Bell pairs is a key performance metric in quantum networks, quantifying the closeness of the final shared quantum state to an ideal Bell state.
In this section, we will derive this fidelity using the single-hop half-RGS architecture, guided by the Pauli frame propagation behavior illustrated in \cref{fig:side-effect-half-rgs}.

\subsection{Error modeling assumptions}
\label{subsec:rgs-error-modeling-assumptions}

\begin{figure}
    \centering
    \includegraphics[width=0.6\columnwidth]{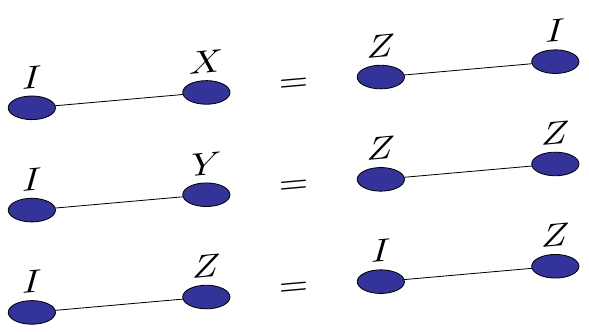}
    \caption[An illustration of equivalent side effects on two-qubit graph states]{The equivalence descriptions of graph states with side effects. The graph states on the left side represent Bell pairs with $X$, $Y$, and $Z$ Pauli errors, while the right side shows how we model errors in this work using graph state representations.}
    \label{fig:two-qubit-graph-state-error-model}
\end{figure}
Our error model assumes that errors manifest as independent single-qubit Pauli error channels acting on each photonic qubit involved in the RGS.
For simplicity, and in line with standard approaches~\cite{azuma-rgs,zhan-graph-based-repeater-analysis,hilaire-rgs-optimizing-gen-time}, we consider quantum emitters, their associated gate operations, and measurement devices to be ideal and noiseless.
This simplification is justified because noise from these components can be effectively approximated as an increased effective noise channel on the emitted photons themselves~\cite{lindner-rudolf-machine-gun-graph-state}.

While standard approaches for modeling Bell pair errors typically apply Pauli errors ($X$, $Y$, or $Z$) to just one qubit---exploiting the symmetry of Bell states---we adopt a different viewpoint tailored to our protocol.
Since our target state is a two-vertex graph state (equivalent to a Bell state), we can equivalently track only $Z$ type side effects accumulating on each of the two end-node qubits.
This modeling choice simplifies error propagation across hops without loss of generality.

Indeed, for a two-qubit graph state, these approaches are equivalent up to stabilizer transformations.
As shown in \cref{fig:two-qubit-graph-state-error-model}, a Pauli $X$, $Y$, or $Z$ error on Alice's qubit corresponds respectively to a $Z$ error on Bob's, $Z$ errors on both, or a $Z$ error on Alice's qubit---each equivalent under multiplication by stabilizer generators $Z_a X_b$ or $X_a Z_b$.
A key insight in our fidelity analysis is that the error probability for a single-hop half-RGS link can be decomposed into two contributions:
(i) inner qubit contributions, which are local and uncorrelated between the two end nodes; and (ii) outer qubit contributions, which induce correlated errors across the link.
The error on a memory qubit at an end node depends on whether the Pauli frame at its corresponding anchor is flipped.
This occurs when an \emph{odd} number of side-effect-inducing errors affect the qubits pointing to the anchor.
An even number of such errors cancels out.
Thus, the error probability can be computed by evaluating the parity of errors influencing each memory qubit.

\subsection{Error propagation from outer qubit operations}
\label{subsec:rgs-error-prob-outer-qubits}

We first analyze how errors arising from operations on the \emph{outer qubits} contribute to the effective error on the Bell pair after link-level entanglement swapping.
These outer qubits participate in Bell state measurements (BSMs) that connect adjacent half-RGS segments, analogous to elementary links in traditional repeater chains.
For this analysis, we can conceptually consider all inner qubits to have been measured out first (as measurement order can be rearranged without affecting the final state apart from classical processing of outcomes), allowing us to isolate the BSMs on outer qubits.

Each outer qubit is subjected to an independent single-qubit Pauli error channel.
Focusing on successful transmission events (i.e., no photon loss for these qubits), each resulting Bell pair formed by these outer qubits before a BSM can be described by a probabilistic mixture of Pauli errors.
In our graph state framework, this is represented by a four-element probability vector $\mathbf{e}$ given by
\begin{align}
    \mathbf{e} &= [w, x, y, z]^T \\
               &\coloneq w\ket{\psi_{II}}\bra{\psi_{II}} + x\ket{\psi_{ZI}}\bra{\psi_{ZI}} + y\ket{\psi_{ZZ}}\bra{\psi_{ZZ}} + z\ket{\psi_{IZ}}\bra{\psi_{IZ}},
\label{eq:bell-pair-error-vector}
\end{align}
where each $\ket{\psi_{P}}$ is a two-qubit graph state (Bell pair) affected by an effective Pauli error $P \in \{II, ZI, ZZ, IZ\}$, with $w, x, y, z$ being their respective probabilities.
When a BSM (specifically, the XX measurement following a CZ gate, as common in RGS protocols) is performed between two such Bell pairs, the output error vector is computed via the transformation given by
\begin{align}
    \text{BSM}_{XX} \left(
        \begin{bmatrix}
            w_1 \\
            x_1 \\
            y_1 \\
            z_1
        \end{bmatrix},
        \begin{bmatrix}
            w_2 \\
            x_2 \\
            y_2 \\
            z_2
        \end{bmatrix}
    \right)
    =
    \begin{bmatrix}
        w_1 w_2 + x_1 x_2 + y_1 y_2 + z_1 z_2 \\
        w_1 x_2 + x_1 w_2 + z_1 y_2 + y_1 z_2 \\
        w_1 y_2 + y_1 w_2 + x_1 z_2 + z_1 x_2 \\
        w_1 z_2 + z_1 w_2 + x_1 y_2 + y_1 x_2
    \end{bmatrix}.
\label{eq:bsm-graph-state}
\end{align}
To compute the error across the entire path, we can apply \cref{eq:bsm-graph-state} iteratively at each hop between the end nodes, ultimately yielding the final error probability vector.

For a special case of $N-1$ hops with the same depolarizing errors for every photons.
We can derive the final error probability vector for an end-to-end Bell pair established across a chain of $N$ initial elementary links.
Each elementary link is assumed to start with an identical error vector $\mathbf{e}^{(0)} = [w_0, x_0, y_0, z_0]^T$.
The end-to-end entanglement is created by performing $N-1$ successive BSMs, each extending the entanglement by one link.

Let $\mathbf{e}^{(k)} = [w^{(k)}, x^{(k)}, y^{(k)}, z^{(k)}]^T$ denote the error vector of an entangled pair that spans $k+1$ elementary links; i.e., the result of $k$ BSM operations.
The process is recursive; the $k$-th BSM operation combines the entangled pair from the previous $k-1$ swaps (with error vector $\mathbf{e}^{(k-1)}$) with a fresh elementary link (error vector $\mathbf{e}^{(0)}$).
Using the BSM transformation rule defined previously, this can be expressed as a linear recurrence relation
\begin{equation}
    \mathbf{e}^{(k)} = M \cdot \mathbf{e}^{(k-1)},
\end{equation}
where the transformation matrix $M$ is determined by the components of the initial error vector $\mathbf{e}^{(0)}$ given by
\begin{equation}
    M = \begin{pmatrix}
    w_0 & x_0 & y_0 & z_0 \\
    x_0 & w_0 & z_0 & y_0 \\
    y_0 & z_0 & w_0 & x_0 \\
    z_0 & y_0 & x_0 & w_0
    \end{pmatrix}.
\end{equation}
The final state after $N-1$ swaps, which spans all $N$ initial links, is therefore $\mathbf{e}^{(N-1)} = M^{N-1} \mathbf{e}^{(0)}$.

To compute $M^{k_s}$ where $k_s = N-1$ is the number of swaps, we can diagonalize $M$ as matrix $M$ is real and symmetric, and its structure allows for straightforward determination of its eigenvalues.
These eigenvalues are
\begin{align}
    \lambda_0 &= w_0 + x_0 + y_0 + z_0 = 1 \quad \left(\text{as } \sum p_i = 1\right), \label{eq:lambda0} \\
    \lambda_1 &= w_0 + x_0 - y_0 - z_0, \label{eq:lambda1} \\
    \lambda_2 &= w_0 - x_0 + y_0 - z_0, \label{eq:lambda2} \\
    \lambda_3 &= w_0 - x_0 - y_0 + z_0. \label{eq:lambda3}
\end{align}
The initial state $\mathbf{e}^{(0)}$ can be decomposed in the eigenbasis of $M$. After $k_s = N-1$ applications of the transformation $M$, the components of the final error vector $\mathbf{e}^{(N-1)} = [w^{(N-1)}, x^{(N-1)}, y^{(N-1)}, z^{(N-1)}]^T$ are given by the standard solution for such iterated Pauli channels given by
\begin{align}
    w^{(N-1)} &= \frac{1}{4} \left[ \lambda_0^{N} + \lambda_1^{N} + \lambda_2^{N} + \lambda_3^{N} \right] \nonumber \\
              &= \frac{1}{4} \left[ 1 + \lambda_1^N + \lambda_2^N + \lambda_3^N \right], \label{eq:final_w} \\
    x^{(N-1)} &= \frac{1}{4} \left[ 1 + \lambda_1^N - \lambda_2^N - \lambda_3^N \right], \label{eq:final_x} \\
    y^{(N-1)} &= \frac{1}{4} \left[ 1 - \lambda_1^N + \lambda_2^N - \lambda_3^N \right], \label{eq:final_y} \\
    z^{(N-1)} &= \frac{1}{4} \left[ 1 - \lambda_1^N - \lambda_2^N + \lambda_3^N \right]. \label{eq:final_z}
\end{align}
In these equations, $\lambda_1, \lambda_2, \lambda_3$ are defined by \cref{eq:lambda1,eq:lambda2,eq:lambda3} using the components $w_0, x_0, y_0, z_0$ of the initial error vector $\mathbf{e}^{(0)}$ of each elementary link, and $N$ is the total number of elementary links in the chain. The fidelity of the final distributed Bell pair with respect to the ideal state $\ket{\psi_{II}}$ is given by $w^{(N-1)}$. This derivation quantifies how the initial Pauli error probabilities on elementary links propagate through a chain of $N-1$ entanglement swaps.

\subsection{Error probability from inner qubits}
\label{subsec:rgs-error-prob-inner-qubits}

The error contributions from inner qubits arise from measurements on their constituent physical photons and the subsequent impact on Pauli frame tracking.
These contributions are effectively independent for Alice's and Bob's sides of the final entangled pair, allowing their respective error probabilities to be defined and calculated separately.
To facilitate a compact derivation, we will assume an identical depolarizing channel affects all physical photonic qubits within the inner structures.
Under this symmetric model, if $p_\text{dep}$ is the single-photon depolarization probability, the effective single-photon measurement error probability relevant for our analysis $p_{\text{ph}}$ can be considered as $\frac{2}{3} p_\text{dep}$ since $X$ and $Y$ error would flip the results when measuring in the Z basis, and similarly for X and Y measurements.
While this specific derivation assumes uniformity, the general approach can be extended to cases with non-identical or asymmetric Pauli error channels across different hops or within distinct tree code structures.

Let $p_{ZI}$ and $p_{IZ}$ denote the probability that a single-hop link's inner qubit measurements induce an effective $Z$ error on the left (Alice's side) and right (Bob's side) memory (anchor qubit), respectively.
Assuming there are $N$ hops between the two end nodes and independent errors across these hops, the total end-to-end error probabilities for Alice's $\left(p^{\text{e2e}}_{ZI}\right)$ and Bob's $\left(p^{\text{e2e}}_{IZ}\right)$ memory qubits are given by summing over an odd number of such errors as
\begin{align}
    p^{\text{e2e}}_{ZI}
        &= \sum_{k \text{ odd}}^N \binom{N}{k} p_{ZI}^k (1 - p_{ZI})^{N-k} \nonumber \\
        &= \frac{1}{2} \left( 1 - (1 - 2 p_{ZI})^N \right), \label{eq:e2e-error-prob-zi} \\ 
    p^{\text{e2e}}_{IZ}
        &= \sum_{k \text{ odd}}^N \binom{N}{k} p_{IZ}^k (1 - p_{IZ})^{N-k} \nonumber \\
        &= \frac{1}{2} \left( 1 - (1 - 2 p_{IZ})^N \right), \label{eq:e2e-error-prob-iz} 
\end{align}
assuming $p_{ZI}, p_{IZ} < 0.5$, as to avoid a completely mixed state at the end.

The single-hop error probabilities, $p_{ZI}$ and $p_{IZ}$ (which are equal to a common $p_{\text{in-hop}}$ in the symmetric case), capture the likelihood that an odd number of physical errors within the $m$ arms of an RGS segment's inner qubit measurements results in a net flip of the Pauli frame for that hop.
This probability is given by
\begin{align}
    p_{\text{in-hop}} = p_{ZI} = p_{IZ} &= \frac{1}{2} \left(1 - \left(1 - 2p_{\text{in}}^{X}\right) \left(1 - 2p_{\text{in}}^Z\right)^{m - 1} \right),
\label{eq:half-rgs-inner-error-single-hop} 
\end{align}
where $p_{\text{in}}^{P}$ denotes the probability that a logical measurement in basis $P \in \{\text{X}, \text{Z}\}$ on a tree encoded inner qubit is decoded incorrectly.
The average error probability of logical measurements $p_{\text{in}}^{P}$ can be derived from underlying physical error parameters (such as $p_{\text{ph}}$ or $p_\text{dep}$), taking into account both photon loss probabilities and physical measurement error probabilities within the tree code structure, as detailed in analyses such as~\cite{hilaire-rgs-optimizing-gen-time}.
While \cref{eq:half-rgs-inner-error-single-hop} provides the per-hop probability for identical links, in a general network where these probabilities might vary from hop to hop, the end-to-end calculation would involve a more generalized combination of these per-hop error events, though the fundamental principle of summing over configurations leading to an odd number of total Pauli frame flips remains.

\subsection{Combined end-to-end fidelity derivation}
\label{subsec:rgs-combined-fidelity-derivation}

Having separately characterized the error propagation from outer qubits (resulting in the error vector $\mathbf{e}_{\text{out}}^{(N-1)}$) and the effective $Z$ type errors from inner qubit measurements (leading to independent $Z$ error probabilities $p_A \equiv p^{\text{e2e}}_{ZI}$ and $p_B \equiv p^{\text{e2e}}_{IZ}$ on Alice's and Bob's sides, respectively, as given by \cref{eq:e2e-error-prob-zi,eq:e2e-error-prob-iz}), we now combine these contributions to determine the overall error probability vector of the final Bell pair.

The effect of the independent $Z$ errors from inner qubits can be described as a transition matrix (linear transformation) applied to the error vector $\mathbf{e}_{\text{out}}^{(N-1)}$ from the outer qubit chain.
Let these inner qubit error probabilities be
\begin{align}
    w'_{\text{in}} &= (1-p_A)(1-p_B) \quad (\text{Prob. of } II \text{ from inner channel}), \label{eq:ch5_w_prime_in} \\
    x'_{\text{in}} &= p_A(1-p_B) \quad (\text{Prob. of } ZI \text{ from inner channel}), \label{eq:ch5_x_prime_in} \\
    y'_{\text{in}} &= p_A p_B \quad (\text{Prob. of } ZZ \text{ from inner channel}), \label{eq:ch5_y_prime_in} \\
    z'_{\text{in}} &= (1-p_A)p_B \quad (\text{Prob. of } IZ \text{ from inner channel}). \label{eq:ch5_z_prime_in}
\end{align}
The final end-to-end error vector $\mathbf{e}_{\text{e2e}} = [w_{\text{e2e}}, x_{\text{e2e}}, y_{\text{e2e}}, z_{\text{e2e}}]^T$ is then obtained by applying the transformation $M_{\text{in}}$ to $\mathbf{e}_{\text{out}}^{(N-1)}$ as
$\mathbf{e}_{\text{e2e}} = M_{\text{in}} \cdot \mathbf{e}_{\text{out}}^{(N-1)}$, where the transition matrix $M_{\text{in}}$ due to inner qubit errors is given by
\begin{equation}
    M_{\text{in}} = \begin{pmatrix}
    w'_{\text{in}} & x'_{\text{in}} & y'_{\text{in}} & z'_{\text{in}} \\
    x'_{\text{in}} & w'_{\text{in}} & z'_{\text{in}} & y'_{\text{in}} \\
    y'_{\text{in}} & z'_{\text{in}} & w'_{\text{in}} & x'_{\text{in}} \\
    z'_{\text{in}} & y'_{\text{in}} & x'_{\text{in}} & w'_{\text{in}}
    \end{pmatrix}.
    \label{eq:ch5_M_in_transition_simplified}
\end{equation}
This matrix $M_{\text{in}}$ has the same structural form as the BSM transformation matrix $M$, with its components now being the probabilities derived from the inner qubit errors ($w'_{\text{in}}, x'_{\text{in}}, y'_{\text{in}}, z'_{\text{in}}$). Applying this transformation explicitly yields the components of the final error vector as
\begin{align}
    w_{\text{e2e}} &= w'_{\text{in}} w_{\text{out}}^{(N-1)} + x'_{\text{in}} x_{\text{out}}^{(N-1)} + y'_{\text{in}} y_{\text{out}}^{(N-1)} + z'_{\text{in}} z_{\text{out}}^{(N-1)}, \label{eq:ch5_w_e2e_final_s} \\
    x_{\text{e2e}} &= x'_{\text{in}} w_{\text{out}}^{(N-1)} + w'_{\text{in}} x_{\text{out}}^{(N-1)} + z'_{\text{in}} y_{\text{out}}^{(N-1)} + y'_{\text{in}} z_{\text{out}}^{(N-1)}, \label{eq:ch5_x_e2e_final_s} \\
    y_{\text{e2e}} &= y'_{\text{in}} w_{\text{out}}^{(N-1)} + z'_{\text{in}} x_{\text{out}}^{(N-1)} + w'_{\text{in}} y_{\text{out}}^{(N-1)} + x'_{\text{in}} z_{\text{out}}^{(N-1)}, \label{eq:ch5_y_e2e_final_s} \\
    z_{\text{e2e}} &= z'_{\text{in}} w_{\text{out}}^{(N-1)} + y'_{\text{in}} x_{\text{out}}^{(N-1)} + x'_{\text{in}} y_{\text{out}}^{(N-1)} + w'_{\text{in}} z_{\text{out}}^{(N-1)}, \label{eq:ch5_z_e2e_final_s}
\end{align}
with the end-to-end fidelity given by $F_{\text{e2e}} = w_{\text{e2e}}$.

The graphical representation in \cref{fig:side-effect-half-rgs} and these error expressions also inform efficient Monte Carlo simulation methods by tracking $Z$ flips hop-by-hop.
This aligns with techniques in QuISP, prior work~\cite{kaaki-ruleset-sim, cocori-quisp, addala-krastanov-faster-than-clifford}, and Stim~\cite{gidney-stim}.
Our simulations using this approach show consistency with average error models like those in~\cite{hilaire-rgs-optimizing-gen-time}.
The code for this simulation can be found at~\cite{rgs-github-repo}.

\section{Purification-enhanced RGS scheme}
\label{sec:rgs-with-purification}

Entanglement purification is a powerful tool for improving the fidelity of long-distance quantum communication.
However, integrating it beyond the link level into all-photonic architectures remains a significant challenge~\cite{azuma-rgs,hilaire-rgs-optimizing-gen-time}.
In this section, we introduce a purification-enhanced framework for the RGS scheme that leverages our half-RGS building block to address this problem.
Our approach uses optimistic purification protocols to achieve flexible purification between distant nodes without introducing memories or significantly increasing latency, opening the door to a new class of high-performance all-photonic repeaters.

\subsection{Motivation for integrating purification into the RGS scheme}
\label{subsec:half-rgs-purification-motivation}

To understand the need for a new approach, we first consider conventional methods for purification in repeater networks.
A natural strategy in RGS-based networks is to perform two-way purification at the end nodes once a long-distance entangled state has been established.
While conceptually simple, this method suffers from increased latency due to one-way end-to-end classical communication and degradation from memory decoherence during the waiting period.

Alternatively, a measurement-based approach using purification-embedded graph states can potentially avoid quantum memories but is limited to purifying entanglement between adjacent nodes only~\cite{zwerger-measurement-based-repeater}.
This method still requires additional optical Bell-state measurements, which reduces success rates and lowers the overall generation rate.
Such limitations highlight the fundamental difficulty of incorporating purification into RGS protocols, which are otherwise valued for their high loss tolerance and memory-free design.

We address this challenge by leveraging the half-RGS building block to enable purification directly on the generated half-RGS primitives.
This is achieved using optimistic purification, which is executed immediately on the anchor qubits of multiple half-RGSs without awaiting heralding signals~\cite{mobayenjarihani-optimistic-purification-qce,hartmann-optimistic-purification,razavi-optimistic-purification-pumping}.
Success signals and Pauli frame corrections propagate only at the end of the process, preserving the fast rate and memoryless structure of the RGS scheme.
This design allows for purification between non-neighboring nodes and supports flexible scheduling strategies---a capability previously limited to memory-based repeaters.

Crucially, the added overhead is modest, with generation time scaling proportionally with purification rounds while total latency remains effectively unchanged.
These features enable RGS-based networks to benefit from a large body of prior work on purification scheduling and optimization, bringing established memory-based techniques into the all-photonic domain~\cite{poramet-quanta-qcnc,poramet-quanta-extended,haldar-purify-swap-schedule,wangEfficientSchedulingScheme2024}.
The key enabler for this framework is the optimistic purification protocol, which we now review in more detail.

\subsection{Optimistic purification}
\label{subsec:optimistic-purification-review}

Optimistic entanglement purification, also known as blind purification, is a variant of the heralded approach where purification proceeds without waiting for confirmation that the initial Bell pairs were successfully established~\cite{hartmann-optimistic-purification, razavi-optimistic-purification-pumping, mobayenjarihani-optimistic-purification-qce}.
As illustrated in \cref{fig:purification-schematic}(b), multiple purification rounds can be executed sequentially, with a single round of classical messaging exchanged only at the very end to report the combined outcome.

\begin{figure}[htp]
    \centering
    \includegraphics[width=\textwidth]{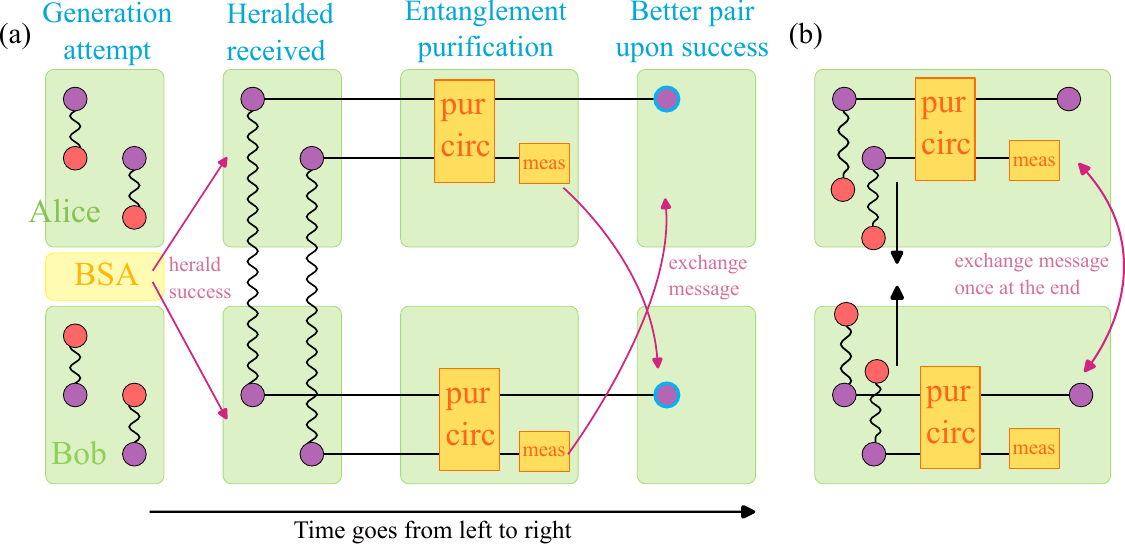}
    \caption[Schematic of two-way entanglement purification and its optimistic variant]{(a) Schematic of two-way heralded entanglement purification. The process begins with the probabilistic distribution of Bell pairs, followed by heralding signals from intermediate nodes confirming successful pair creation. A purification circuit is then applied, during which some Bell pairs are measured and consumed. The measurement results are exchanged between the two parties; if the outcomes agree (i.e., correlate or anti-correlate as expected), the remaining Bell pair is kept, resulting in improved fidelity. (b) Schematic of optimistic purification. In this approach, distribution and purification operations are performed immediately in sequence, without waiting for heralding confirmation. Classical outcomes are compared only once at the end, combining both the heralding of Bell pair generation and the results of purification in a single round of communication. (From~\cite{naphan-integrating-purification-rgs}.)}
    \label{fig:purification-schematic}
\end{figure}

While this strategy may lower the overall success probability, it offers significant advantages in high-latency settings by minimizing memory idle time and the cumulative effects of decoherence~\cite{razavi-optimistic-purification-pumping, mobayenjarihani-optimistic-purification-qce}.
This trade-off can lead to higher output fidelity compared to traditional purification, even with fewer successful attempts.

The optimistic approach is particularly well-suited to scenarios with high or near-deterministic Bell pair generation rates, a key feature of the RGS framework.
When pairs are created reliably, waiting for heralding signals offers little benefit and only increases latency.
Having established the advantages of this protocol, we now detail how our half-RGS architecture is modified to incorporate it.

\subsection{Modification to the RGS generation process}
\label{sec:rgs-with-purification:generation}

The standard half-RGS generation process, detailed in \cref{sec:generation-of-half-rgs}, concludes by immediately joining two half-RGSs.
To integrate optimistic purification, we modify this procedure by delaying the joining step to insert a purification stage.
Instead of creating a single half-RGS per side, we generate multiple half-RGSs, requiring additional anchor emitters.
Once all half-RGSs are created, their anchor qubits are fed as inputs to a purification circuit.
After purification, the retained high-fidelity half-RGSs from both sides are paired and joined.
While this modification enables purification, it introduces new operational steps whose impact on resource overheads must be analyzed.

\begin{figure}[t]
    \centering
    \includegraphics[width=\textwidth]{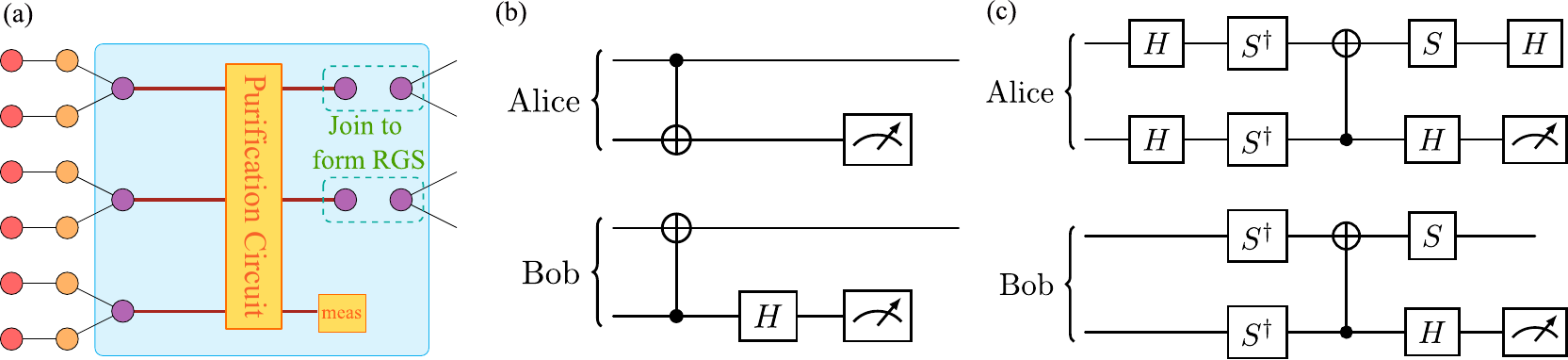}
    \caption[A depiction of entanglement purification on half-RGSs and purification circuits utilized in purification of bipartite graph states]{
        (a) Integration of entanglement purification into the RGS generation process.
        Instead of joining left and right half-RGSs immediately after their creation as in the original protocol~\cite{naphan-half-rgs}, multiple half-RGSs are generated on each side.
        Their anchor emitters (purple qubits) serve as inputs to a purification circuit.
        After purification, the retained high-fidelity half-RGSs from the left and right are paired and joined to form complete RGSs.
        (Note that dark red lines connecting purple qubits in the blue shaded box represent the circuits or their evolution in time and do not represent entangled properties of graph states.)
        (b) Graph state purification circuits that measures the parity of stabilizer $Z_a X_b$ where $a$ represents the Alice's side and $b$ for Bob~\cite{dur-graph-state-purification, kruszynska-graph-state-purification} Top (bottom) qubits of Alice and Bob are Bell pairs in the two-qubit graph state.
        The top qubits for both side are retained after the purification circuit.
        The purification circuit for $X_a Z_b$ are similar, by exchanging the circuits that Alice and Bob locally performed.
        (c) Graph state purification circuit that measures the parity of $Y_a Y_b$ stabilizers.
        (From~\cite{naphan-integrating-purification-rgs}.)
    }
    \label{fig:purification-enhanced-rgs}
\end{figure}

\subsection{Overhead analysis of the purification-enhanced RGS scheme}
\label{sec:rgs-with-purification:overhead-analysis}

We now analyze the time overhead of our purification-enhanced framework, focusing on the qubit coherence time required at the end nodes and the RGSS nodes.
We evaluate how long these qubits must remain coherent to support applications requiring high-fidelity entanglement.
For a direct comparison, we consider three scenarios: (1) the \textbf{raw RGS scheme} without purification; (2) a \textbf{baseline purification} approach where purification is applied only at the end nodes; and (3) our \textbf{proposed optimistic purification} framework.

First, in the \emph{raw RGS scheme}, the memory coherence time $t^{\text{(raw)}}_{\text{mem}}$ must span the half-RGS generation and the arrival of remote Pauli frame correction messages.
Second, the \emph{baseline purification} approach requires a significantly longer coherence time $t^{\text{(p-base)}}_{\text{mem}}$ as it is dominated by the round-trip communication needed for heralding.
Finally, our \emph{proposed optimistic purification scheme} has a memory coherence time of $t^{\text{(p-opt)}}_{\text{mem}}$ that scales much more favorably than the baseline, as it avoids the heralding delay.
This analysis reveals that our protocol significantly reduces the demands on quantum memory, a crucial advantage that sets the stage for evaluating its overall performance in terms of fidelity and rate.

\section{Performance evaluation}
\label{sec:error-modeling}

Having established the architectural modifications and favorable overheads of our protocol, we now evaluate its performance and demonstrate its flexibility through numerical simulations.
We showcase this flexibility by applying purification both at the link-level and between non-neighboring nodes.
While a full optimization of purification scheduling is beyond the scope of this thesis, our analysis focuses on the fidelity of the distributed Bell pairs and the rate at which a minimum fidelity threshold is met, providing a clear benchmark of the scheme's capabilities.

\subsection{Scope and assumptions}

To ground our evaluation, we first define the scope of our simulation and the key assumptions made.
We consider a repeater chain of 10 hops between two memory-equipped end nodes, Alice and Bob, connected by nine intermediate RGSS nodes.
We model the outcome of half-RGS generation by applying effective errors directly to the anchor qubits, following the derivation in \cref{sec:half-rgs-fidelity-derivation}.
The half-RGS parameters are tuned to achieve a near-deterministic generation success rate ($\geq 0.999$).
Our key assumptions are as follows:
\begin{itemize}
    \item Nodes are spaced 2 km apart, with a channel loss of 0.2 dB/km.
    \item All other sources of photon loss, such as from coupling or emission failure, are considered negligible.
    \item Each half-RGS uses 18 arms and a tree encoding with branching parameters $(16, 14, 1)$ to ensure high success probability.
    \item All photons are subject to a uniform depolarizing noise channel.
    \item Quantum gates and measurements are assumed to be noiseless.
    \item End-node memories are assumed to be ideal, with no decoherence.
\end{itemize}
The justification for these ideal assumptions is twofold.
First, operational noise can be effectively modeled as increased depolarizing error on the emitted photons.
Second, recent advances in quantum error correction~\cite{google-surface-code-one-qubit, quantinuum-microsoft-logical-below-threshold, bluvstein-quera-qec} make the availability of highly coherent logical memories a reasonable projection for future systems.
With these assumptions defined, we can now specify the purification strategies used in our model.

\subsection{Purification circuits and scheduling}

We restrict our noise model to Pauli channels, a choice justified because most error-correcting codes convert general noise into an effective Pauli channel during syndrome measurement~\cite{gottesmanSurvivingQuantumComputer}.
Both the baseline and our enhanced schemes use the same three 2-to-1 stabilizer-based purification circuits, which operate on two-qubit graph states (Bell pairs).
These circuits, shown on the right in \cref{fig:purification-enhanced-rgs}, measure the joint stabilizers $Z_a X_b$, $X_a Z_b$, and $Y_a Y_b$~\cite{dur-graph-state-purification, kruszynska-graph-state-purification}.

For the analysis, we adopt an error model where all errors are represented as $Z$-type Pauli operators that can affect either or both qubits of the Bell pair.
This model simplifies the analysis of error propagation and aligns with the Pauli frame tracking used in our protocol.
The purification circuits are sensitive to different error signatures; for instance, the ZX circuit detects $Z$ errors on qubit $b$, the XZ circuit detects them on qubit $a$, and the YY circuit detects an odd number of total $Z$ errors.

Our purification schedule for the baseline setting is inspired by the entanglement pumping protocol from~\cite{gidney-tetrationally-compact-purification}.
The schedule applies these circuits iteratively in the sequence YY, ZX, YY, XZ.
The rationale for starting with the YY circuit is that while the initial depolarizing noise is symmetric, the effective errors contributed by the inner qubits are biased towards independent $Z$ errors on each end node ($IZ$ and $ZI$), as shown in \cref{sec:half-rgs-fidelity-derivation}.
The YY circuit is therefore most effective at detecting these dominant error types in the early rounds.

\subsection{Error vector formalism for purification}
\label{subsec:error-modeling-for-half-rgs-purification}

To evaluate the different purification schedules, we now define the mathematical rules for purification within the error vector formalism established in \cref{sec:half-rgs-fidelity-derivation}.
Each 2-to-1 purification circuit is characterized by its success probability and the corresponding transformation it applies to the error vectors of the two input Bell pairs, $\mathbf{e_1}$ and $\mathbf{e_2}$.
The success probabilities for the three stabilizer measurement circuits ($ZX$, $XZ$, and $YY$) are given by
\begin{align}
    P_{ZX}(\mathbf{e_1}, \mathbf{e_2}) &= (w_1 + x_1) (w_2 + x_2) + (z_1 + y_1) (z_2 + y_2), \\
    P_{XZ}(\mathbf{e_1}, \mathbf{e_2}) &= (w_1 + z_1) (w_2 + z_2) + (x_1 + y_1) (x_2 + y_2), \\
    P_{YY}(\mathbf{e_1}, \mathbf{e_2}) &= (w_1 + y_1) (w_2 + y_2) + (x_1 + z_1) (x_2 + z_2).
\end{align}
Conditioned on a successful outcome, the error vector of the retained Bell pair is given by one of the following transformations
\begin{align}
    \text{Pur}_{ZX}(\mathbf{e_1}, \mathbf{e_2}) &= \frac{1}{P_{ZX}(\mathbf{e_1}, \mathbf{e_2})}\begin{bmatrix}
           w_1  w_2 + z_1  z_2 \\ z_1  w_2 + w_1  z_2 \\ x_1  y_2 + y_1  x_2 \\ x_1  x_2 + y_1  y_2
        \end{bmatrix}, \\
    \text{Pur}_{XZ}(\mathbf{e_1}, \mathbf{e_2}) &= \frac{1}{P_{XZ}(\mathbf{e_1}, \mathbf{e_2})}\begin{bmatrix}
           w_1  w_2 + x_1  x_2 \\ z_1  z_2 + y_1  y_2 \\ z_1  y_2 + y_1  z_2 \\ x_1  w_2 + w_1  x_2
        \end{bmatrix}, \\
    \text{Pur}_{YY}(\mathbf{e_1}, \mathbf{e_2}) &= \frac{1}{P_{YY}(\mathbf{e_1}, \mathbf{e_2})}\begin{bmatrix}
           w_1  w_2 + y_1  y_2 \\ x_1  z_2 + z_1  x_2 \\ y_1  w_2 + w_1  y_2 \\ x_1  x_2 + z_1  z_2
        \end{bmatrix}.
\end{align}
These new rules for purification, combined with the BSM transformation from \cref{eq:bsm-graph-state}, form the complete analytical toolkit needed to model the resulting fidelity between baseline and our purification-enhanced scheme.

\subsection{Evaluation methodology}

With the analytical tools established, we now outline the methodology for our numerical simulation.
Our simulation tracks the effective error vectors, conditioned on the successful generation of one-hop Bell pairs.
The freedom to reorder measurements allows us to model each half-RGS as an effective two-qubit graph state between an anchor and an outer qubit, with noise from inner qubits applied directly to the anchors using the average logical error probability~\cite{azuma-rgs, hilaire-rgs-optimizing-gen-time}.
The process to establish an end-to-end link involves first applying depolarizing noise to the outer qubits, then performing BSMs to connect the segments, and finally updating the anchor error vectors with the noise contributions from the inner qubits, per the rules in \cref{sec:half-rgs-fidelity-derivation}.

After establishing the initial noisy end-to-end Bell pairs, we simulate the scheduled purification protocols for each of the three scenarios.
For the baseline case, purification is applied to the final Bell pairs.
For our enhanced scheme, purification is performed at intermediate stages according to the multi-stage schedule detailed in the next section.
This consistent methodology ensures a direct and fair comparison of the end-to-end fidelities achieved by each framework.

\section{Numerical results}
\label{sec:numerical-results}

\begin{figure}[htp]
    \centering
    \includegraphics[width=\textwidth]{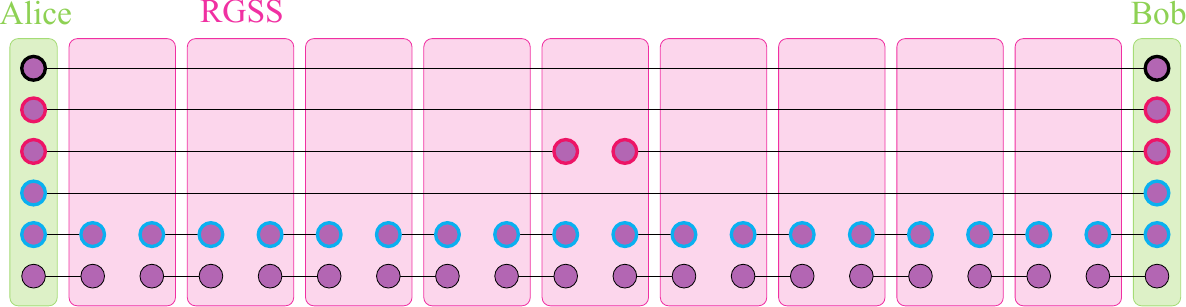}
    \caption[An example of flexible purification scheduling enabled by purification-enhanced RGS]{
        Illustration of flexible purification scheduling enabled by our purification-enhanced RGS framework.
        Using five half-RGSs per side at each hop (bottom line represents a copy), three end-to-end Bell pairs are created via different purification schedules, resulting in varying fidelities.
        These pairs then undergo further purification at the end nodes to yield a single high-fidelity Bell pair.
        The first is constructed by purifying ($YY$) each link-level connection and stitching the purified links across the full path (blue border).
        The second is formed by stitching link-level pairs into two copies of two 5-hop segments, purifying ($YY$) within each segment, and then connecting them (pink border).
        The third is created by directly stitching raw link-level pairs without intermediate purification (black thick border).
        These three pairs are then used in a pumping-like purification sequence: the first and second are purified (with the second as the sacrificial pair with $ZX$), followed by purification between the result and the third (with the third as the sacrificial pair with $YY$).
        The final output is a high-fidelity end-to-end Bell pair, significantly improved over the unpurified case upon success. (From~\cite{naphan-integrating-purification-rgs}.)
        }
    \label{fig:flexible-scheduling}
\end{figure}

\begin{figure}[!htp]
    \centering
    \includegraphics[width=0.8\columnwidth]{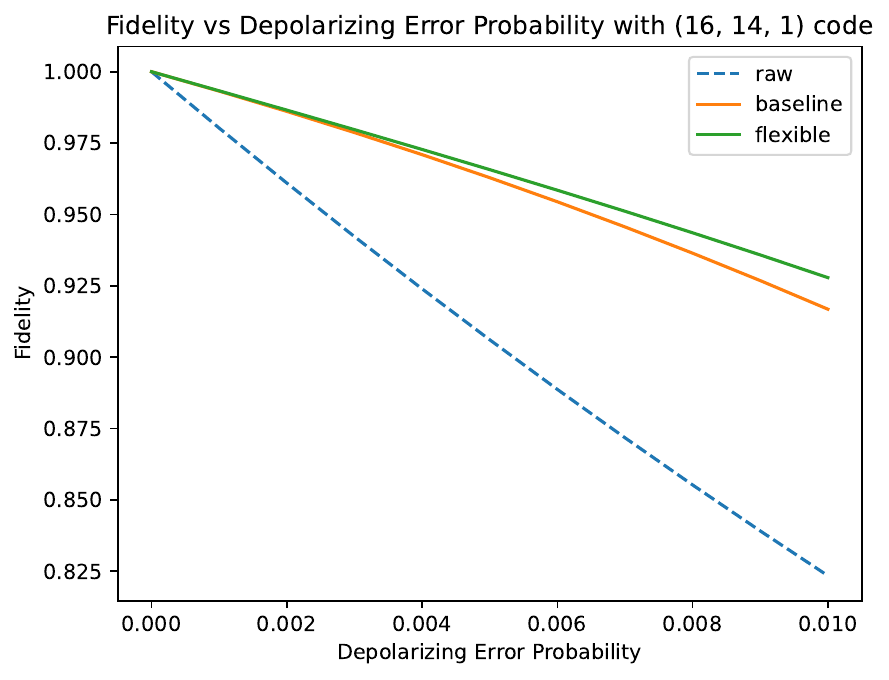}
    \caption[Fidelity comparison between different RGS purification strategies given photon depolarizing error probability]{
        End-to-end fidelity comparison of repeater chain with ten hops between unpurified Bell pairs (dashed blue lines), baseline purification using direct end-to-end Bell pairs (orange), and our purification-enhanced scheme (green).
        Both of the purified schemes consume five half-RGSs per end-to-end Bell pair instead of one in the unpurified setting.
        (From~\cite{naphan-integrating-purification-rgs}.)
    }
    \label{fig:fidelity-plot}
\end{figure}

\begin{figure}[!htp]
    \centering
    \includegraphics[width=0.8\columnwidth]{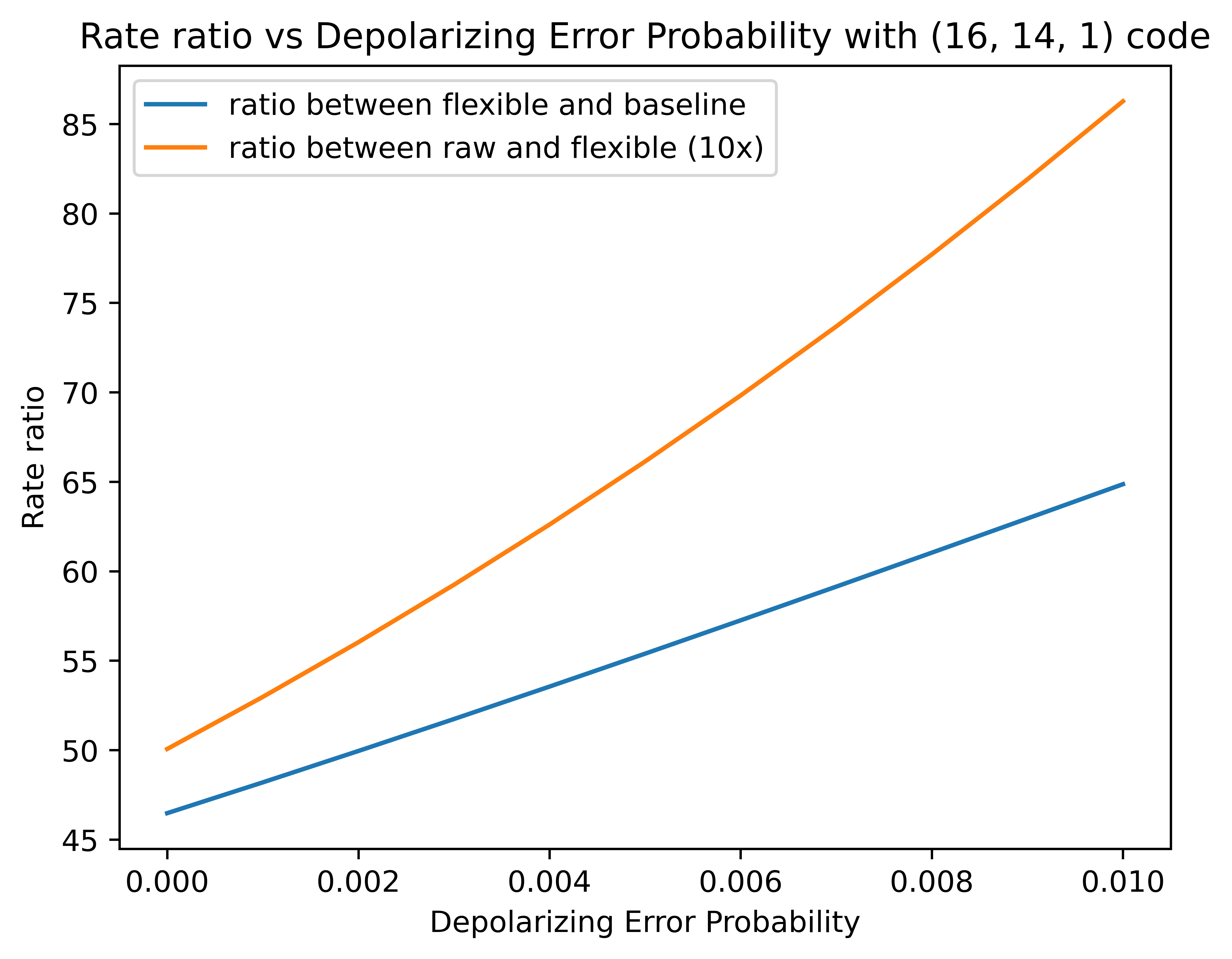}
    \caption[End-to-end Bell pair generation rate ratio between RGS purification strategies]{
        Comparison of Bell pair distribution rates for different purification strategies over a 40 km (10-hop) repeater chain.
        The blue curve shows the rate enhancement of the purification-enhanced strategy over a baseline pumping approach, achieving an improvement of 45 to 65 times, depending on the noise level.
        The orange curve, scaled by 10x for visibility, illustrates the overhead of purification by comparing its rate to that of a raw, unpurified link.
        This overhead corresponds to a rate reduction factor of 5 to 8.5.
        (From~\cite{naphan-integrating-purification-rgs}.)
    }
    \label{fig:rate-ratio-plot}
\end{figure}

To quantify the advantages of our architecture, we present a detailed performance comparison between our purification-enhanced scheme, the baseline approach, and the unpurified RGS protocol.
For the purification-enhanced scheme, we implement a sophisticated multi-stage schedule, illustrated in \cref{fig:flexible-scheduling}.
This strategy involves creating three distinct end-to-end Bell pairs using different intermediate purification schedules before combining them in a final purification sequence at the end nodes.
This flexible, multi-layered approach exemplifies our framework's ability to strategically apply purification at the most effective points in the network.

The simulation results, presented in \cref{fig:fidelity-plot,fig:rate-ratio-plot}, clearly illustrate the critical trade-offs between fidelity and rate.
As seen in \cref{fig:fidelity-plot}, both purification schemes substantially improve fidelity over the unpurified case, consistently achieving values above 0.9 for the noise parameters considered ($0 \leq p_{dep} \leq 0.01$).
Notably, our enhanced scheme maintains a consistent fidelity advantage over the baseline approach.
This improvement is a direct consequence of its ability to suppress errors early in the repeater chain, thereby preventing noise from accumulating and propagating across the entire connection.

This performance advantage becomes even more pronounced when considering the distribution rate.
As shown in \cref{fig:rate-ratio-plot}, the purification-enhanced scheme outperforms the baseline strategy by a factor of 45 to 65 across the tested noise levels.
This significant gain arises because our optimistic protocol eliminates the need for multiple, long-distance classical communication rounds required by the baseline's schedule, pushing the exchange of all purification results to the very end.

Crucially, this comparison is based on equal resource consumption, as both purification schemes use five half-RGSs per side at each RGSS and end node.
The performance gap is therefore attributable to the architectural design of our protocol.
By exploiting the structure of repeater graph states, our framework embeds purification directly within the entanglement generation process.
This approach enables multiple stages of error suppression that enhance fidelity while simultaneously reducing the time overhead from classical communication.

These results, obtained without a globally optimized schedule, suggest that further improvements are readily achievable through tailored scheduling strategies~\cite{krastanov-purification}.
Nonetheless, they provide compelling evidence that the purification-enhanced RGS framework can simultaneously meet the dual goals of high fidelity and a high entanglement distribution rate.
This result successfully overcomes a fundamental challenge in all-photonic quantum repeater design, marking a significant step toward their practical implementation.

In summary, by integrating optimistic purification directly into the generation process, our framework transforms the all-photonic repeater chain from a conceptual protocol into a robust, high-performance building block for larger quantum networks.
The demonstrated ability to generate high-fidelity entanglement at high rates means that an entire RGS-based path can now be reliably treated as a single, abstracted virtual link.
This capability is essential for creating scalable, heterogeneous networks, as it allows all-photonic segments to be seamlessly composed with memory-based systems under a unified control architecture, bridging a critical gap between hardware-level photonic protocols and the requirements of a future Quantum Internet.

\section{Discussion and future outlook}
\label{sec:discussion-and-variants}

In \cref{chapter:architecture-memory}, we established a comprehensive architecture for memory-based quantum networks built on the principles of the quantum recursive network architecture (QRNA) and managed by RuleSet-based protocols.
A central tenet of that architecture is the ability to abstract entire network segments as virtual links, enabling scalable and heterogeneous internetworking.
However, integrating memoryless, all-photonic repeaters into such a stateful, connection-oriented framework presents a significant challenge.
In this chapter, we have addressed this challenge directly by developing a robust, high-performance RGS scheme.
Our key contributions---the modular half-RGS building block and the integration of optimistic purification---provide the necessary tools to transform a volatile all-photonic path into a reliable architectural component.

\subsection{Hardware progress and the experimental landscape}

Recent experimental progress has begun to bridge the gap between theoretical all-photonic repeater protocols and their physical realization.
Early proof-of-concept demonstrations of the RGS scheme have been achieved, for instance, over a single hop using a passive, specialized measurement apparatus~\cite{hasegawa-passive-absa-experiment} and in a two-arm configuration~\cite{jian-wei-pan-rgs-experiment}.
A critical challenge remains the deterministic generation of the large-scale photonic graph states required.
Several hardware platforms are being pursued, with proposals for using arrays of quantum emitters~\cite{tiurevHighfidelityMultiphotonentangledCluster2022} and promising demonstrations using semiconductor quantum dots as sources~\cite{istratiSequentialGenerationLinear2020,chenHeraldedThreePhotonEntanglement2024}.

The most resource-efficient theoretical proposals for deterministic graph state generation often rely on a single, highly-capable quantum emitter that sequentially emits entangled photons, a concept sometimes called a ``photonic machine gun''~\cite{lindner-rudolf-machine-gun-graph-state,reisererCavitybasedQuantumNetworks2015}.
This approach has seen remarkable experimental success recently, with demonstrations in the optical domain using trapped atoms and quantum dots~\cite{yangSequentialGenerationMultiphoton2022,thomasEfficientGenerationEntangled2022,thomasFusionDeterministicallyGenerated2024,coganDeterministicGenerationIndistinguishable2023,costeHighrateEntanglementSemiconductor2023,suContinuousDeterministicAllphotonic2024a,mengDeterministicPhotonSource2024} and in the microwave domain with superconducting qubits~\cite{ferreiraDeterministicGenerationMultidimensional2024,osullivanDeterministicGeneration20qubit2024}.
Among the most advanced experimental feats to date are the generation of reconfigurable 2D cluster states of up to 20 photons~\cite{huetDeterministicReconfigurableGraph2025} and deterministically generated, redundantly encoded linear graph states that enable near-deterministic fusion operations with built-in error correction~\cite{petterssonDeterministicGenerationConcatenated2025, hilaire-near-deterministic-fusion}.

\subsection{Alternative all-photonic architectures}

Beyond the RGS framework, a rich landscape of alternative all-photonic repeater architectures exists.
These approaches can be broadly distinguished by their core operational strategy~\cite{munro-inside-quantum-repeaters,zwergerMeasurementbasedQuantumCommunication2016}, which has significant implications for their integration into a QRNA-style architecture.

The first category, like the RGS scheme, is based on pre-distributing entangled Bell pairs, a model that aligns naturally with the connection-oriented paradigm of \cref{chapter:architecture-memory}.
This family includes protocols that refine the crucial BSM step with adaptive measurement strategies on logically encoded states~\cite{hilaire-logical-bsm}.
Other variants in this family introduce short-term photonic memory by having inner qubits travel through spools of optical fiber, allowing for more complex operations like boosted BSMs or parallel transmission of RGS arms through multi-channel fibers~\cite{pant-rate-dist-tradeoff,patilImprovedDesignAllPhotonic2024}.

The second category~\cite{niu-all-photonic-one-way-repeater,ewertUltrafastFaulttolerantLongdistance2017,borregaard-3g-repeater,leeFundamentalBuildingBlock2019,ewertUltrafastLongDistanceQuantum2016} operates via the direct, hop-by-hop transmission of encoded logical states, an approach akin to the packet-switching model of third-generation repeaters~\cite{munro-inside-quantum-repeaters}.
The core idea is to utilize a ``decode-and-re-encode'' strategy at each repeater station to correct for errors, a strategy that can be implemented with various optimized loss-tolerant codes~\cite{ralphLossTolerantOpticalQubits2005,bellOptimizingGraphCodes2023}.
While this model potentially offers high speed, integrating such connectionless protocols into the stateful, connection-oriented architecture presented in \cref{chapter:architecture-memory} is non-trivial.
Our framework relies on end-to-end RuleSets established via a two-pass setup, but a hop-by-hop protocol could be emulated by sequentially creating and tearing down a series of short-lived, single-hop connections.
This would, however, introduce significant protocol overhead, especially for managing the rapid instantiation of RuleSets and coordinating handoffs at network boundaries.

Finally, other distinct approaches exist that rely on different physical encodings entirely.
These include protocols based on bosonic GKP states~\cite{fukuiAllopticalLongdistanceQuantum2021,albertPerformanceStructureSinglemode2018}, coherent states which can naturally provide deterministic BSMs~\cite{rozpedek-all-photonic-bosonic-gkp-code}, or multipartite entangled states such as W-states~\cite{miguel-ramiroQuantumRepeater-for-W-states}, each representing a unique avenue of research.

\subsection{Strategies for protocol enhancement}

Further improvements to all-photonic repeaters can be pursued along several research directions, primarily by addressing key operational bottlenecks.
A primary challenge in many schemes, including RGS, is the probabilistic nature of linear-optical Bell-state measurements (BSMs).
Boosting the BSM success probability beyond the standard 50\% limit is a major avenue for enhancement, as it would directly reduce the resource overhead of the virtual links presented to the higher-level network architecture.
Several proposals exist to achieve this, using techniques like ancillary photons~\cite{ewert34Efficient2014,griceArbitrarilyCompleteBellstate2011}, weak non-linearities~\cite{barrettSymmetryAnalyzerNondestructive2005}, or squeezed states~\cite{zaidiNeardeterministicCreationUniversal2015, kilmer-guha-boosting-linear-optimal-bsm-squeezing}.
However, these advanced techniques involve significant experimental complexity and are constrained by fundamental no-go theorems~\cite{lutkenhausBellMeasurementsTeleportation1999} and theoretical limits on loss tolerance~\cite{hilaireLinearOpticalLogical2023}.

Another strategy for enhancement is to improve the resource efficiency of the RGS protocol itself.
Currently, only one Bell pair can be extracted from a successful RGS trial, even if multiple BSMs succeed at every hop.
This limitation has motivated the proposal of a generalized RGS framework where the tree code for inner qubits is replaced with a classical low-density parity-check (LDPC) code~\cite{liGeneralizedQuantumRepeater2025}.
For certain families of LDPC codes, this would enable the creation of multiple, locally-CNOT equivalent Bell pairs, thus allowing for the extraction of more than one end-to-end pair from a single trial.
Ultimately, weighing the architectural trade-offs between these two enhancement strategies---one improving a fundamental operation and the other the protocol's core structure---will be crucial for developing the most practical and resource-efficient quantum repeaters.

\subsection{Conclusion}

In this chapter, we addressed several key architectural and operational challenges of the all-photonic repeater graph state (RGS) scheme.
Our primary contribution is the introduction of the \textbf{half-RGS}, a modular construct that resolves fundamental issues related to monolithic graph state generation and global timing synchronization by decomposing the problem into independent, localized operations.
This building block provides a more practical and hardware-friendly foundation for constructing all-photonic repeater links.

Building upon this modular foundation, we then tackled the long-standing problem of incorporating entanglement purification into all-photonic repeater paths.
The architectural flexibility afforded by the half-RGS enabled the seamless integration of optimistic purification directly into the generation process.
As our numerical results demonstrated, this purification-enhanced framework achieves both high fidelity and significantly higher distribution rates---outperforming baseline approaches by a factor of 45 to 65---without incurring additional latency or memory requirements.
This performance gain is a direct result of the protocol's efficient design, which eliminates the need for multiple long-distance classical communication rounds.

Ultimately, these contributions advance the all-photonic repeater chain from a theoretical protocol towards its realization as a robust, high-performance component for large-scale networks.
The demonstrated ability to generate high-fidelity entanglement at high rates validates the abstraction of an entire RGS subpath as a well-behaved \textbf{virtual link}.
This provides the crucial missing piece for integrating this entire class of technology into the scalable, recursive architecture laid out in \cref{chapter:architecture-memory}.
The half-RGS, guided by optimistic purification, becomes the concrete physical-layer and link-layer implementations that can be controlled and composed by the higher-level Connection Management and RuleSet protocols, truly bridging the gap between hardware-specific paradigms and the architectural requirements of a universal Quantum Internet.

\clearpage
\chapter{An application of distributed quantum computation: an optimization algorithm perspective} 
\label{chapter:qaoa-and-network-resources} 

Having established an architectural framework for heterogeneous quantum networks in the preceding chapters, we now take a step back and turn to the applications that such a network is designed to support.
It would be incomplete to discuss the architecture without considering the computations it must enable.
A primary motivation for building a quantum network is to connect individual quantum processors into distributed quantum computer clusters capable of performing large-scale computations that are infeasible on a single machine.
While truly large-scale algorithms like Shor's factoring algorithm~\cite{shor-factoring} or complex chemical simulations may represent a distant goal, \emph{combinatorial optimization} problems offer a compelling and industrially relevant use case for near- to medium-term distributed quantum computing.
\extrafootertext{Portions of this chapter have been adapted from the following publications:
\begin{itemize}
    \item \fullcite{naphan-spo}
    \item \fullcite{naphan-lower-bound-qaoa}
\end{itemize}
}

This chapter bridges the analysis of quantum optimization algorithms with the practical requirements of the network architectures designed to run them.
To build a network capable of executing these algorithms, it is essential to understand their runtimes and resource consumption.
This knowledge is critical for creating realistic traffic models for network simulators like QuISP, informing resource allocation strategies, and ultimately understanding the demands that distributed computation will place on a future Quantum Internet.

To this end, this chapter presents two primary contributions.
First, we explore potential advancements in optimization by developing an algorithm based on a generalization of amplitude amplification.
This method uses a non-trivial phase shift oracle, which we term the \emph{subdivided phase oracle} (SPO), to directly amplify states corresponding to high-quality solutions.
Interestingly, we find that the SPO algorithm, with its repeated application of a problem-specific phase oracle and a general mixing operator, can be viewed as a specific instance of the QAOA framework~\cite{andreas-grover-mixer-qaoa}.

Recognizing that our SPO algorithm is a member of this broader class, the second key contribution of this chapter is a detailed analysis of the general QAOA framework's performance, which has direct implications for network design.
Specifically, we derive some of the first analytical lower bounds on the number of QAOA rounds required to achieve a guaranteed approximation ratio.
This provides crucial, concrete data on algorithm runtimes needed for network planning.

This chapter is structured to unfold this narrative.
We will first provide a detailed background on Grover's algorithm, QAOA, and the SPO.
We will then present our main results on the lower bounds for QAOA rounds for different classes of problems and mixers.
Finally, we will discuss the broader implications of these algorithmic runtimes for designing and managing resources in the quantum networks of the future.

\section{Introduction: why study a heuristic optimization algorithm?}
\label{sec:introduction}

Solving optimization problems is a cornerstone computational task, with a wide range of applications across industry, science, and technology.
A primary application for quantum computers is to tackle these problems, and the quantum approximate optimization algorithm~\cite{farhi-qaoa} and its generalization to constrained optimization problems, the quantum alternating operator ansatz (QAOA)~\cite{hadfield-qaoa}, have emerged as leading candidates.
QAOA is a hybrid quantum-classical approach designed to find high-quality approximate solutions to combinatorial optimization problems.
Its structure involves alternating between a phase separator, which encodes the problem's objective function into a Hamiltonian $H_1$, and a mixing operator, $H_0$.
A classical optimizer then tunes the parameters for a fixed number of rounds, $p$, to guide the quantum state toward a low-energy solution that represents a high-quality answer.

While algorithms based on Grover's search can find exact solutions with a provable quadratic speedup, this advantage must be considered in context.
It has been argued that a quadratic speedup alone may be insufficient to provide a practical advantage, as the immense overheads associated with fault-tolerant quantum error correction could nullify the gains over classical algorithms~\cite{babbushFocusQuadraticSpeedups2021}.
Furthermore, Grover-style algorithms solve optimization problems heuristically by treating the problem as a black box and forgoing any exploitable structure; they are the quantum equivalent of brute force and do not provide approximation guarantees.
This has motivated the development of other approaches, such as QAOA, which are designed to find high-quality \emph{approximate} solutions and are seen as a potential path to achieving practical performance benefits.

A common misconception is that variational heuristic algorithms like QAOA are merely a stop-gap for the noisy intermediate-scale quantum (NISQ) era~\cite{preskillQuantumComputingNISQ2018a}, to be supplanted by algorithms with provable guarantees once fault-tolerant quantum computers (FTQCs) are available.
This view is mistaken.
In fact, running QAOA on problem sizes of industrial relevance will likely \emph{require} fault-tolerant hardware~\cite{augustinoStrategiesRunningQAOA2024}.
Furthermore, concerns that the classical optimization of QAOA's parameters may be a bottleneck for large-scale problems are being addressed by promising research into techniques like transfer learning~\cite{ruslan-parameter-transfer-qaoa}.
Indeed, a growing body of numerical evidence supports QAOA as a leading candidate for the go-to optimization algorithm in future quantum computing scenarios, making it a key potential application for the quantum networks we envision~\cite{john-numerical-speedup-constrained,john-sat-qaoa,ruslan-labs-evidence-of-qaoa-scaling,boulebnaneSolvingBooleanSatisfiability2024}.

Given the long-term relevance of QAOA, a deeper theoretical understanding of its performance is crucial.
However, theoretical results regarding its runtime---specifically, the number of rounds $p$ required to achieve a given approximation ratio---are still relatively sparse, especially beyond the low-depth regime.
A key contribution of this thesis is to address this gap by deriving some of the first analytical lower bounds on $p$.
We achieve this by leveraging a powerful connection between the digital, round-based structure of QAOA and the continuous-time evolution of quantum annealing, building on recent results that established bounds on annealing times.
This analysis will provide crucial insights into how to plan for and manage the resources needed to run large-scale optimization tasks on the distributed quantum networks of the future.

\section{Background on quantum optimization algorithms}
\label{sec:background-optimization-algorithms}

To understand the computational workloads a quantum network must support, we first review several foundational algorithms for quantum optimization.
This review will establish the key concepts, performance metrics, and operational principles that are relevant to the analysis in the subsequent sections of this chapter.

\subsection{The combinatorial optimization problem}

The core task we consider is the combinatorial optimization problem.
We define this problem on bitstrings of length $n$, with an objective function $C(x)$ that maps a given bitstring $x$ to a non-negative integer value.
The goal is to find an input $z$ that maximizes this objective function, yielding the value $C_{max} = \max_z C(z)$.
For our analysis, it is also useful to define the average of the objective values, $C_{avg} = \frac{1}{N} \sum_z C(z)$, and their standard deviation, $\sigma_C = \sqrt{\sum_z \frac{(C(z) - C_{avg})^2}{N}}$.
We will consider unconstrained problems where the search space $F$ consists of all $N = 2^n$ possible bitstrings.

\nomenclature{Approximation ratio}{A performance measure used to evaluate an approximation algorithm's average-case behavior. For a given optimization problem, it is expressed as the multiplicative factor comparing the average output of the algorithm against the true optimal solution. A value close to one indicates that the algorithm consistently finds solutions that are, on average, very close to the optimal result, signifying a high-quality approximation}
To approach this with a quantum algorithm, it is natural to translate the objective function into a corresponding problem Hamiltonian.
We first define a diagonal cost Hamiltonian $H_C$ whose eigenvalues are the objective function values
\begin{equation}
    H_C = \sum_{z \in F} C(z) \ket{z}\bra{z}.
\end{equation}
A common way to evaluate the performance of a heuristic optimization algorithm is its \emph{approximation ratio}, $\lambda \in [0, 1]$, which we define as
\begin{equation}
    \lambda = \frac{\braket{H_C}_p}{C_{max}} = \max_{\ket{\psi_p}} \frac{\braket{\psi_{p} | H_C | \psi_p}}{C_{max}},
\label{eq:approximation-ratio}
\end{equation}
where $\ket{\psi_p}$ is the state prepared by $p$ rounds of an algorithm.
Since it is often more convenient to work with Hamiltonians whose ground state represents the solution, we define our primary problem Hamiltonian $H_1$ as
\begin{equation}
    H_1 = \mathbbm{1} C_{max} - H_C.
\label{eq:h1-definition}
\end{equation}
With this definition, the task of maximizing $C(x)$ is equivalent to finding the ground state of $H_1$.

\subsection{Grover's algorithm and adaptive search}

One quantum approach to optimization is to treat it as a search problem.
Grover's search algorithm is a celebrated quantum algorithm that provides a provable quadratic speedup for finding a marked item in an unstructured database~\cite{grover-alg}.
It can find one of $t$ target inputs among $N$ total possibilities in just $O(\sqrt{N/t})$ queries to an oracle, compared to the $O(N/t)$ required classically.
It was later understood that Grover's algorithm is an optimal special case of a more general technique known as \emph{amplitude amplification}~\cite{brassard-bhmt-amplitude-amplification-estimation}.
While powerful for finding exact solutions, Grover's algorithm in its basic form does not provide approximation guarantees.
However, it can be used as a heuristic optimizer in a strategy called \emph{Grover Adaptive Search}, where the algorithm is run repeatedly to find solutions of increasing quality~\cite{durrQuantumAlgorithmFinding1999,gilliam-grover-adaptive-search-cpbo}.

\subsection{Quantum annealing}

One leading paradigm for quantum optimization is quantum annealing (QA).
The protocol starts in an easy-to-prepare ground state of a simple ``mixer'' Hamiltonian, $H_0$.
The system is then slowly evolved according to a time-dependent Hamiltonian 
\begin{equation}
    H(t) = [1-g(t)]H_0 + g(t)H_1,
\end{equation}
where $H_1$ is the problem Hamiltonian whose ground state encodes the optimal solution.
According to the adiabatic theorem, if this evolution is sufficiently slow, the system remains in the instantaneous ground state throughout the process and ends in the desired solution state of $H_1$.
The primary limitation of this approach is that for many hard problems, the required ``annealing time'' to ensure adiabaticity can become exponentially long, negating the potential speedup~\cite{bernaschiQuantumTransitionTwodimensional2024,albashAdiabaticQuantumComputation2018a}.

\subsection{The quantum approximate optimization algorithm}

The quantum approximate optimization algorithm (QAOA) is a leading hybrid quantum-classical algorithm designed to find high-quality approximate solutions to combinatorial optimization problems~\cite{farhi-qaoa}.
It is structured around $p$ rounds of alternately applying two parameterized unitary operators: a \emph{phase separator} $e^{-i\gamma H_1}$, which encodes the problem Hamiltonian, and a \emph{mixing operator} $e^{-i\beta H_0}$, which drives transitions between different computational basis states~\cite{hadfield-qaoa}.
The initial state $\ket{\psi_0}$ is typically the ground state of the mixer $H_0$.
The state after $p$ rounds is thus given by
\begin{equation}
\label{eq:qaoa-circuit}
    \ket{\psi (\vec{\beta}, \vec{\gamma})}_p = (e^{-i \beta_p H_0} e^{-i \gamma_p H_1}) \dots (e^{-i \beta_1 H_0} e^{-i \gamma_1 H_1})  \ket{\psi_0}.
\end{equation}
The goal of the QAOA is to prepare a state that minimizes the energy expectation value $\langle H_1 \rangle_p = \braket{\psi (\vec{\beta}, \vec{\gamma})| H_1 | \psi (\vec{\beta}, \vec{\gamma})}_p$.
This is achieved through a classical optimization loop that varies the angle parameters $\vec{\gamma}$ and $\vec{\beta}$ to find the minimum.

Conceptually, QAOA is deeply connected to Quantum Annealing (QA).
In the limit of infinite rounds ($p \rightarrow \infty$), QAOA can be seen as a Trotterized discretization of a continuous-time annealing process.
More specifically, the alternating application of the two Hamiltonians in QAOA corresponds to a "bang-bang" schedule in QA, where only one driving Hamiltonian is turned on at a time~\cite{innocenti-bang-bang}.
This connection is significant, as the bang-bang schedule is conjectured to be the optimal annealing schedule for many optimization problems~\cite{yang-chamon-bang-bang-optimal-vqa}.

On the other hand, the QAOA does not need to arise from the Trotterization or discretization of any annealing schedule; instead, it can be understood purely as a variational ansatz designed for optimization~\cite{cerezo-vqa-review}.
Research in the finite-$p$ regime has shown that for certain problems, QAOA can achieve a constant approximation ratio with a constant number of rounds~\cite{farhi-qaoa, farhi-e3lin2, wurtz-love-max-cut-constant-depth-qaoa}.
Other active areas of investigation include understanding the algorithm's expressibility (i.e., the subset of Hilbert space accessible by a given ansatz)~\cite{akshay-reachability-deficit-prl, akshay-circuit-depth-scaling-sat-density, niu-chuang-qaoa-state-transfer}, the phenomenon of parameter concentration~\cite{basso-constant-depth-qaoa-analysis, anshu-metger-concentration-bound-qaoa}, and techniques for improving performance through better initialization or insights from counterdiabatic driving~\cite{cain-qaoa-get-stuck, sack-serbyn-qaoa-initialization-via-qa, wurtz-love-counterdiabaticity-qaoa}.

\section{Amplitude amplification with subdivided phase oracle for optimization problems}
\label{sec:subdivided-phase-oracle}

A promising avenue for quantum optimization lies in extending the principles of amplitude amplification beyond a simple binary distinction between ``good'' and ``bad'' states.
Recent work has investigated the use of oracles that apply non-trivial phase shifts, acting uniquely on each input based on a cost function~\cite{satoh-spo,koch-gaussian-amplitude-amplification,shyamsundarNonBooleanQuantumAmplitude2023a}.
Building on this, we introduce and analyze a specific implementation of this concept, which we term the \emph{subdivided phase oracle} (SPO)~\cite{satoh-spo}.
This approach replaces the canonical Grover oracle in the amplitude amplification process, providing a new method for solving combinatorial optimization problems.
While the word ``oracle'' typically refers to a black-box function that recognizes a solution, we use the term here to describe the phase-shifting operator to maintain consistency with the language of amplitude amplification and related literature.

\subsection{Oracle construction}

The core idea of the SPO is to embed the objective value $f(x)$ of each potential solution $\ket{x}$ directly into its quantum phase.
Instead of applying a uniform phase shift to all "good" states, the SPO applies a unique phase to every computational basis state in proportion to its quality.
The action of the phase oracle, $O_{\phi}$, on a state $\ket{x}$ is formally defined as
\begin{equation}
\label{eq:phase_oracle}
    O_\phi \ket{x} \rightarrow e^{i \phi(x)}\ket{x},
\end{equation}
where $\phi(x)$ is a function that maps the input $x$ to a phase angle.
This mapping is constructed from the objective function $f(x)$ via a simple linear relationship
\begin{equation}
    \label{eq:phi_mapping}
    \phi(x) = k f(x) + \theta,
\end{equation}
where $k$ is an adjustable real-valued parameter and $\theta$ represents an offset, typically corresponding to the minimum objective value.
Since a constant phase offset applied to all states amounts to an irrelevant global phase, the $\theta$ term can be ignored.
The action of the SPO is therefore fully characterized by the scaling parameter $k$, and we can write the operator as
\begin{equation}
\label{eq:spo_combined}
    O_\phi(k) \ket{x} \rightarrow e^{i k f(x) }\ket{x}.
\end{equation}
The oracle's action is invariant under shifts in the objective values (e.g., making them all positive) and permutations of the input labels.
For simplicity in our analysis, we can therefore assume without loss of generality that $f(x)$ is a non-decreasing function with a minimum value of $f(x_{min}) = 0$.

\subsection{The SPO optimization algorithm}

Our algorithm incorporates the SPO into the standard amplitude amplification framework, replacing the binary oracle but retaining the conventional diffusion operator.
The full process is outlined in Algorithm~\ref{alg:algorithm-1}.

\begin{algorithm}
\caption{Optimization via SPO-based Amplitude Amplification}
\label{alg:algorithm-1}
\begin{algorithmic}[1]
\State Initialize state $\ket{s} = \ket{0}^{\otimes n}$
\State Prepare uniform superposition $\ket{\Psi_0} \gets H^{\otimes n}\ket{s}$
\For{a predetermined number of iterations, $j$}
    \State Apply the Subdivided Phase Oracle: $\ket{\Psi'} \gets O_\phi (k) \ket{\Psi_{j-1}}$
    \State Apply the Diffusion Operator: $\ket{\Psi_{j}} \gets D \ket{\Psi'}$
\EndFor
\State Measure the final state $\ket{\Psi_j}$
\end{algorithmic}
\end{algorithm}

The process begins by preparing an equal superposition of all possible states, $\ket{\Psi_0}$.
Then, in each iteration, two operations are applied in sequence.
First, the SPO, $O_\phi(k)$, rotates the phase of each computational basis state as described above.
Second, the standard Grover diffusion operator, $D$, is applied, which performs an inversion about the average of the amplitudes.
The diffusion operator is given by
\begin{equation}
    D = 2\ket{\Psi_0}\bra{\Psi_0} - I,
\end{equation}
where $\ket{\Psi_0}$ is the uniform superposition state and $I$ is the identity operator.
After a set number of iterations, the final state is measured in the computational basis, with a higher probability of yielding states with high objective values.

\subsection{Amplification mechanism and visualization}

To build intuition for how the SPO leads to optimization, it is helpful to visualize its effect on the state amplitudes.
Unlike the standard Grover algorithm where all amplitudes remain real, the SPO introduces complex phases.
We can represent the amplitude of each basis state $\ket{x}$ as a vector $\vec{\alpha}_x$ in the 2D complex plane.
Initially, in the uniform superposition state $\ket{\Psi_0}$, all $N$ amplitude vectors are identical, real, and have a magnitude of $1/\sqrt{N}$, as shown in \cref{fig:spo-evolution}(a).

The application of the SPO, $O_\phi(k)$, corresponds to rotating each of these vectors around the origin by an angle $k f(x)$, as seen in \cref{fig:spo-evolution}(b).
States with higher objective values are rotated more.
This step only changes the relative phases and does not by itself alter the measurement probabilities.
The amplification occurs upon applying the diffusion operator, $D$, which can be understood as a geometric operation in the complex plane:
\begin{enumerate}
    \item First, the mean of all $N$ amplitude vectors, $\vec{\alpha}_{mean}$, is calculated.
    \item Then, each individual amplitude vector $\vec{\alpha}_x$ is reflected about this mean.
    \item The new amplitude vector, $\vec{\alpha}'_x$, is given by the transformation $\vec{\alpha}'_x = 2\vec{\alpha}_{mean} - \vec{\alpha}_x$.
\end{enumerate}
This process is depicted in \cref{fig:spo-evolution}(c-f).
The reflection preferentially increases the magnitude of vectors that are pointed in a direction roughly opposite to the mean vector.
Since states with high objective values are rotated the most by the SPO, they are most likely to end up in this favorable position for amplification.
The vector corresponding to the optimal solution (most saturated red) is significantly increased in magnitude, thus amplifying its measurement probability.

\begin{figure}[htbp]
    \centering
    \includegraphics[width=0.7\textwidth]{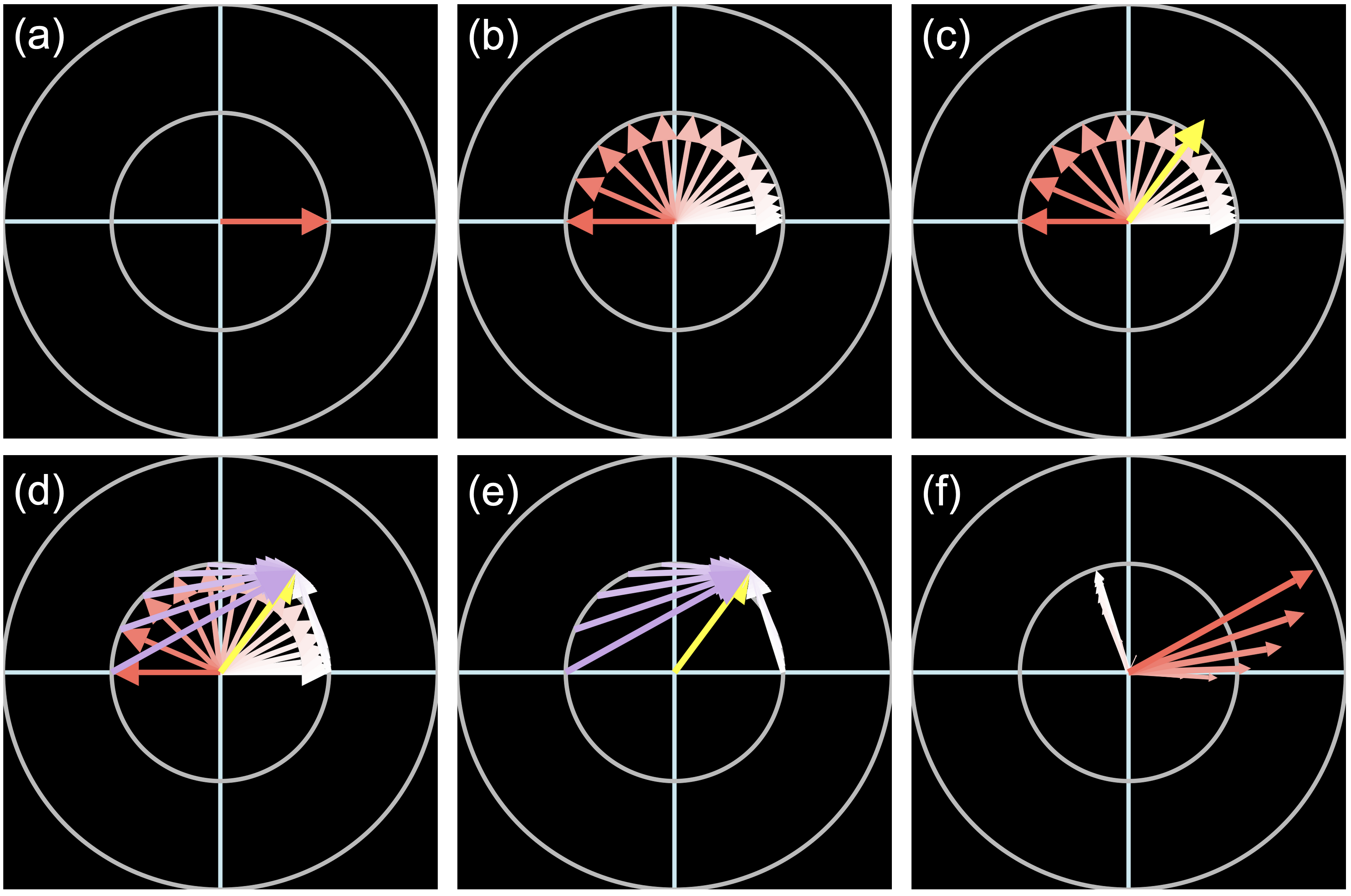}
    \caption[2D complex plot of amplitude evolution with an SPO]{
    A 2D complex plot illustrating how the amplitudes of $N=16$ states change during one iteration of amplitude amplification using an SPO with $f(x) = x^2$.
    The color of each vector, from white to red, denotes its objective value.
    (a) Initial amplitudes in a uniform superposition.
    (b) Amplitudes after the SPO applies a phase rotation proportional to each state's objective value.
    (c) The mean amplitude vector ($\vec{\alpha}_{mean}$) is calculated, shown here scaled by a factor of two (yellow arrow).
    (d, e) The diffusion operator reflects each amplitude vector about the mean.
    (f) The final amplitudes after one full iteration, showing a significant increase in the magnitude of the optimal state's amplitude.
    (From~\cite{naphan-spo}.)}
    \label{fig:spo-evolution}
\end{figure}

This visualization also makes the condition for successful amplification clear.
The amplitude of the optimal state, $\vec{\alpha}_{best}$, will increase in magnitude only if it is pointing generally away from the mean vector, $\vec{\alpha}_{mean}$.
More formally, significant amplification occurs when the angle between $\vec{\alpha}_{best}$ and $\vec{\alpha}_{mean}$ is in the range $[\pi/2, 3\pi/2]$.
The choice of the parameter $k$ in the SPO is what determines this crucial angle.
Finding an optimal $k$ is therefore essential for the algorithm's performance, as it ensures the optimal state is consistently positioned for amplification in each iteration.

\section{Performance analysis for common objective distributions}
\label{sec:results}

Before presenting the simulation results, it is important to contextualize the problems being studied.
The success of the Subdivided Phase Oracle (SPO) algorithm relies heavily on the distribution of the objective function values, as this distribution dictates the behavior of the mean amplitude vector during the amplification process.
To explore the algorithm's performance, we therefore categorize our simulations based on the statistical properties of the objective function, focusing on three representative classes: symmetric (normal), asymmetric (skew normal), and heavy-tailed (exponential).
These distributions are chosen not only for their analytical tractability but also because they serve as effective proxies for the types of objective landscapes found in real-world optimization problems.

\subsection{Symmetric distributions: the normal distribution}

We begin by considering objective values that follow a normal distribution, as described by the probability density function
\begin{equation}
    \phi_{\text{norm}} (x) = \frac{1}{\sigma \sqrt{2\pi}} e^{-\frac{(x-\mu)^2}{2\sigma^2}},
\end{equation}
where $\mu$ is the mean and $\sigma$ is the standard deviation.
This is a natural starting point, as many instances of NP-hard optimization problems, when formulated as Ising problems, exhibit normally distributed objective values for large, random instances.
This behavior arises from the central limit theorem, as the total cost is a sum of many local interaction terms sampled from a single distribution.

The key feature of the normal distribution is its symmetry.
This symmetry makes it possible to choose a parameter $k$ for the SPO such that the phase difference between the mean amplitude vector and the optimal solution's amplitude vector is exactly $\pi$, leading to highly effective amplification.
Using the visualization technique presented in \cref{fig:spo-evolution}, one can see that an appropriate choice of $k$ can ensure that both the mean vector and the optimal solution's vector remain on the real axis throughout the amplification process.
However, this symmetry has a notable side effect: because the distribution is symmetric around its mean, the algorithm not only amplifies the probability of the best state $\ket{N-1}$ but also amplifies the probability of the worst state $\ket{0}$ to an equal degree.

\begin{figure}[htbp]
    \centering
    \includegraphics[width=\columnwidth]{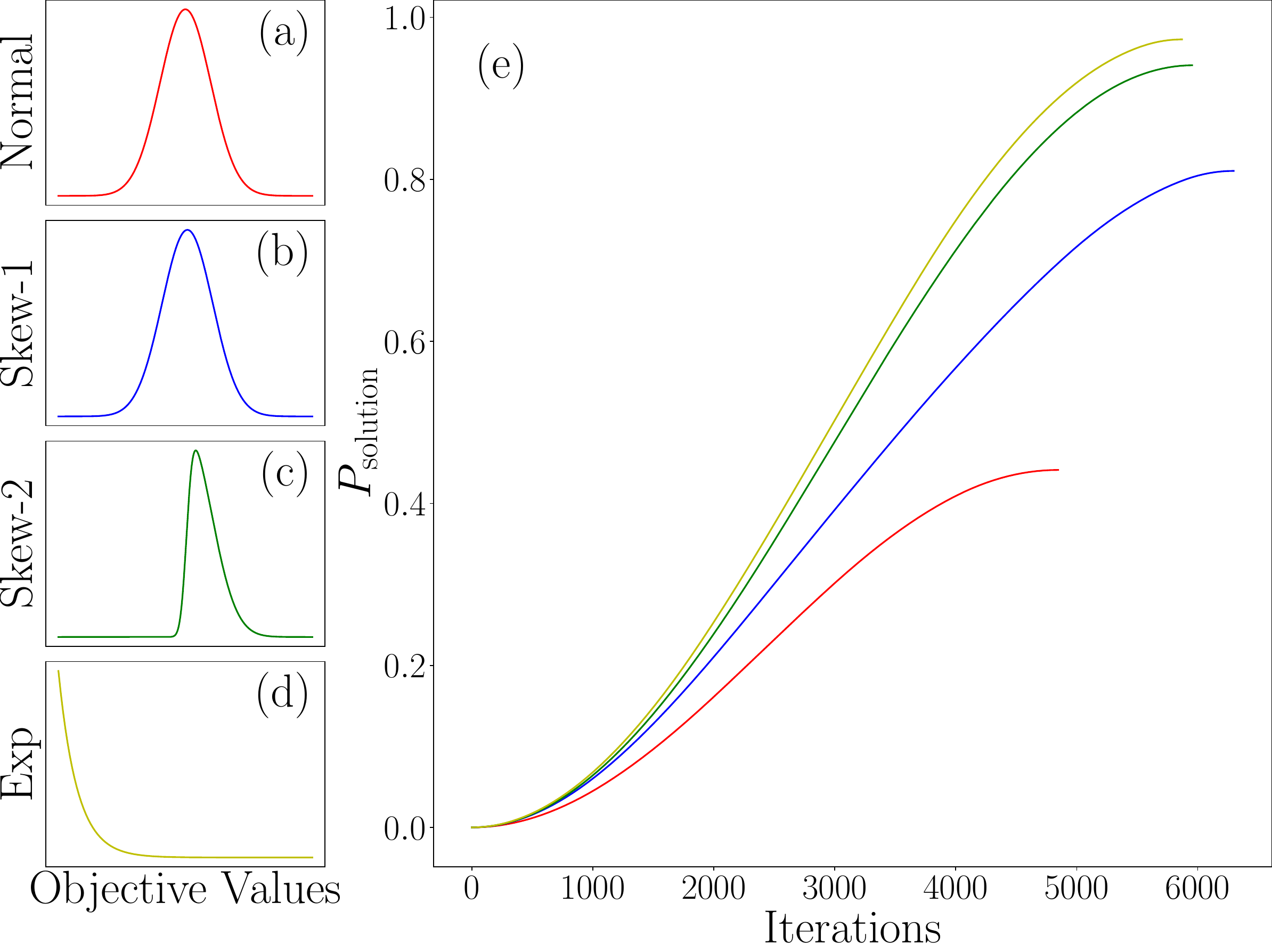}
    \caption[Amplitude amplification performance for various objective distributions]{
    The probability of measuring the optimal solution, $P_{\text{solution}}$, as a function of the number of iterations for an optimization problem with search space $N \approx 2^{25}$.
    The objective values are sampled from four different distributions shown on the left: (a) Normal ($\sigma = 10$), (b) Skew Normal ($\alpha=0.1$), (c) Skew Normal ($\alpha=5$), and (d) Exponential.
    Plot (e) shows that the SPO algorithm successfully amplifies the solution probability to a significant degree for all distributions, though the maximum probability and required iterations vary.
    (From~\cite{naphan-spo}.)
    }
    \label{fig:prob-various-dist}
\end{figure}

\subsection{Asymmetric distributions: skew normal and exponential}

While the normal distribution provides a useful baseline, real-world optimization problems are unlikely to possess perfect symmetry.
We therefore evaluate our algorithm's performance on two classes of asymmetric distributions.
We find that not only do these asymmetries not pose an obstacle, they often prove to be advantageous for the algorithm's performance.

First, to capture a slight deviation from normality, we use the \emph{skew normal distribution}, defined as
\begin{equation}
    \phi_{\text{skew}}(x;\alpha) = 2 \phi_{\text{norm}}(x)\Phi(x;\alpha),
\end{equation}
where $\Phi(x;\alpha)$ is the cumulative distribution function of the standard normal distribution and the parameter $\alpha$ controls the degree and direction of the skew.
As shown in \cref{fig:prob-various-dist}(b) and (c), a non-zero $\alpha$ breaks the symmetry of the distribution.
Second, we consider the \emph{exponential distribution},
\begin{equation}
    \phi_{\text{exp}} (x; \lambda) = \lambda e^{-\lambda x},
\end{equation}
where $\lambda$ is the rate parameter.
This distribution is characterized by a long tail, representing problems where most solutions are poor, but a few are exceptionally good.
For both of these asymmetric distributions, we observe that there exists an optimal parameter $k$ that amplifies the probability of the optimal solution to a significant degree, as depicted in \cref{fig:prob-various-dist}(e).

\subsection{Common observations and performance analysis}

By comparing the results across the three distributions, we can identify several common behaviors and key performance characteristics of the SPO algorithm.
\cref{fig:prob-various-dist}(e) shows how the probability of measuring the optimal solution, $P_{\text{solution}}$, evolves as the algorithm iterates.
Starting from an initial probability of $P_{\text{solution}} \approx 1/N$ (here, $N=2^{25}$), the algorithm successfully amplifies the solution in all cases.
The specific distribution, however, affects both the number of iterations required and the maximum achievable probability, $P_{\text{solution}}^*$.
For the symmetric normal distribution, $P_{\text{solution}}^* \approx 0.5$.
In contrast, the two asymmetric distributions yield much stronger results: $P_{\text{solution}}^* = 0.94$ for the skew normal distribution and $P_{\text{solution}}^* = 0.97$ for the exponential distribution.
This demonstrates that asymmetry in the objective landscape is beneficial, as it reduces the amplification of sub-optimal states and concentrates probability more effectively on the single best solution.

If the algorithm is run beyond the point of maximum amplification, $P_{\text{solution}}$ begins to oscillate, reminiscent of the behavior in standard Grover search---exhibiting the souffle problem behavior.
This brings us to the most critical factor for performance: the choice of the parameter $k$---obtaining the maximum possible amplification requires selecting an optimal $k$ for a given objective distribution.
The algorithm is extremely sensitive to this choice, and deviating even slightly from the optimal value can cause the achievable amplification to decrease rapidly.
\Cref{fig:angle-sensitivity} illustrates this high sensitivity, showing a sharp peak in performance at the optimal $k$ for each of the three distributions.
Finding this optimal $k$ efficiently is a non-trivial task and a key challenge for practical implementations of this algorithm.
Finally, a crucial observation is that for these broad distributions, the effectiveness of the algorithm---that is, the maximum achievable $P_{\text{solution}}^*$---is not affected by the size of the input space $N$.
This is a highly desirable feature, signifying the potential applicability of the SPO approach to optimization problems with extremely large search spaces.

\begin{figure}[htbp]
    \centering
    \includegraphics[width=\textwidth]{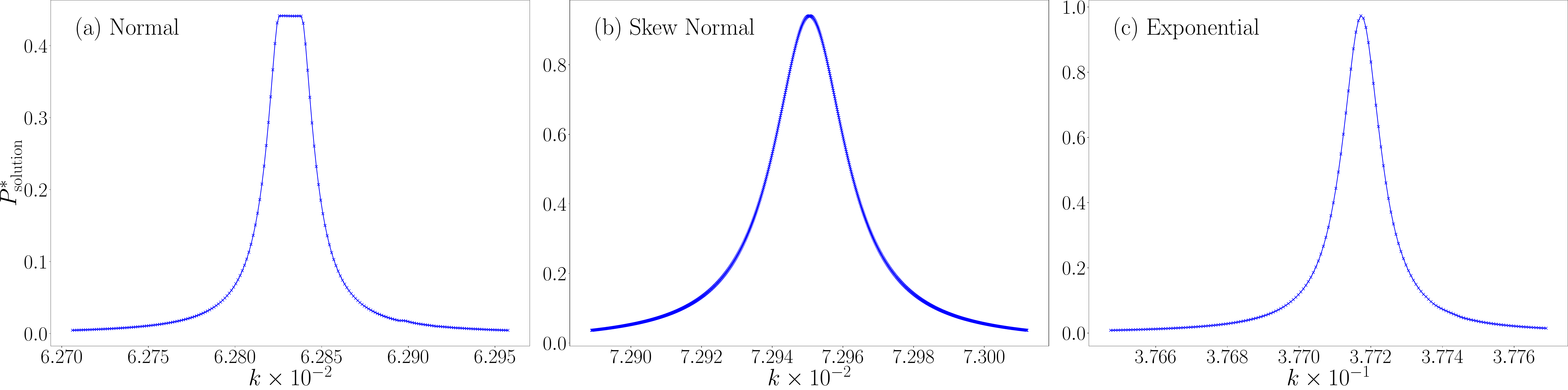}
    \caption[Performance sensitivity of the SPO algorithm to the parameter k]{
    Sensitivity of the SPO algorithm's performance, measured by the maximum achievable solution probability $P_{\text{solution}}^*$, as a function of the scaling parameter $k$.
    The plots for (a) Normal, (b) Skew Normal, and (c) Exponential distributions all display an extremely sharp peak, demonstrating that performance rapidly degrades when deviating from the optimal $k$ value for a given distribution.
    (From~\cite{naphan-spo}.)}
    \label{fig:angle-sensitivity}
\end{figure}

Now that we have introduced the optimization algorithm based on subdivided phase oracle (SPO) and discussed its behavior under fixed-angle amplitude amplification, we turn to the question of formally bounding its runtime. Specifically, we aim to understand whether the fixed-angle strategy is inherently suboptimal, and whether allowing the angles~$k$ to vary across rounds---i.e., treating them as variational parameters---could enable performance that surpasses the Grover-like speedup.

From another perspective, the entire SPO algorithm can be viewed as a specific instance of the Quantum Approximate Optimization Algorithm (QAOA).
The process of alternating a problem-dependent phase operator ($O_{\phi}$) with a problem-independent mixing operator (the Grover diffusion operator) fits directly into the QAOA ansatz structure, where the diffusion operator acts as a ``Grover-style'' mixer~\cite{andreas-grover-mixer-qaoa}.
This insight provides a powerful bridge between the two approaches and motivates a deeper analysis of the broader QAOA framework.

Recognizing that our SPO algorithm is a member of this more general class of protocols naturally leads to a crucial question: What are the performance limitations of the QAOA framework as a whole?
While our results for the SPO are promising, they are specific to one type of mixer.
To understand the full potential of QAOA for practical optimization, we must analyze its performance more broadly.
A key unresolved issue in the field is the lack of theoretical guarantees on the runtime of QAOA---specifically, the number of rounds $p$ required to achieve a given solution quality.

\section{Lower bounds on QAOA runtimes}
\label{sec:main-results}

While QAOA is a promising heuristic, a deeper theoretical understanding of its performance is crucial for evaluating its potential.
A key performance metric is the number of rounds, $p$, required to achieve a desired solution quality, as this directly relates to the algorithm's time complexity.
In this section, we present the second main contribution of this chapter: some of the first analytical lower bounds on $p$ required to achieve a guaranteed approximation ratio, $\lambda$.
We derive these bounds by establishing and exploiting a powerful connection between the digital, round-based structure of QAOA and the continuous-time evolution of quantum annealing.

\subsection{The connection to quantum annealing}
\label{subsec:qaoa-qa-connection}

The link between QAOA and Quantum Annealing (QA) provides the foundation for our analysis.
Recall that QA evolves a system under a time-dependent Hamiltonian $H(t) = (1 - g(t))H_0 + g(t)H_1$.
The QAOA protocol, with its alternating application of unitaries $e^{-i\beta_j H_0}$ and $e^{-i\gamma_j H_1}$, can be viewed as a specific implementation of QA that uses a "bang-bang" schedule.
In this type of schedule, the function $g(t)$ only takes the values 0 or 1, meaning only one Hamiltonian, $H_0$ or $H_1$, is active at any given time.

Under this interpretation, the sum of the absolute values of the QAOA angle parameters corresponds directly to the total annealing time, $t_{anneal}$, of the equivalent QA process:
\begin{equation}
    t_{anneal} = \sum_{j=1}^p \left( |\beta_j| +  |\gamma_j| \right).
    \label{eq:sum-of-angles-to-annealing-time}
\end{equation}
This equivalence allows us to apply theoretical results from the study of general quantum annealing directly to the analysis of QAOA.

\subsection{A general lower bound on QAOA rounds}
\label{subsec:rounds-lower-bound}

Our derivation begins with a known lower bound on the annealing time required to reach a ground state with high probability, which holds regardless of the state's specific trajectory during the process.

\begin{theorem}[from \cite{luis-pedro-annealing-lower-bounds}]
Given two driving Hamiltonians with zero ground state energies, $H_0$ and $H_1$, a quantum annealing protocol that starts in the ground state of $H_0$ and ends at time $t_{f}$ is bounded by
\begin{equation}
    t_f \geq \frac{\braket{H_0}_{t_f} + \braket{H_1}_0 - \braket{H_1}_{t_f}}{\|[H_1, H_0]\|},
\end{equation}
where $\braket{H}_t$ is the expectation value with respect to the Hamiltonian $H$ at time $t$ and $\| \cdot \|$ denotes the spectral norm.
\label{theorem:annealing-lower-bound}
\end{theorem}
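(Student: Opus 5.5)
The plan is to control how fast the two energies $\braket{H_0}_t$ and $\braket{H_1}_t$ can change along the annealing trajectory, using the Heisenberg equation of motion. Let $\ket{\psi(t)}$ evolve under $H(t) = [1-g(t)]H_0 + g(t)H_1$ with $g(t)\in[0,1]$ and $\ket{\psi(0)}$ the ground state of $H_0$, so that $\braket{H_0}_0 = 0$. Since $H_0$ and $H_1$ are time independent, for each $k\in\{0,1\}$ we have
\begin{equation*}
\frac{d}{dt}\braket{H_k}_t = i\braket{[H(t),H_k]}_t .
\end{equation*}
In the commutator $[H(t),H_0]$ only the $H_1$ part survives, giving $g(t)[H_1,H_0]$. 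In $[H(t),H_1]$ only the $H_0$ part survives, giving $[1-g(t)][H_0,H_1] = -[1-g(t)][H_1,H_0]$.

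The key step is to add these two rates with the right sign, so that the mixing function $g$ drops out of the bound. Consider the combination $E(t) = \braket{H_0}_t - \braket{H_1}_t$. Its derivative is
\begin{equation*}
\frac{dE}{dt} = i\bigl(g(t) + 1 - g(t)\bigr)\braket{[H_1,H_0]}_t = i\braket{[H_1,H_0]}_t .
\end{equation*}
For any state, $|\braket{A}|\le\|A\|$, so $|dE/dt| \le \|[H_1,H_0]\|$ uniformly in $t$. The bound does not depend on the schedule $g$ at all.

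Integrating from $0$ to $t_f$ gives
\begin{equation*}
\braket{H_0}_{t_f} - \braket{H_1}_{t_f} - \braket{H_0}_0 + \braket{H_1}_0 \le t_f \|[H_1,H_0]\| .
\end{equation*}
Setting $\braket{H_0}_0 = 0$, because the initial state is the zero-energy ground state of $H_0$, and dividing by $\|[H_1,H_0]\|$ yields the claimed inequality. The bound is vacuous when the commutator vanishes.

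The main point to verify is the sign bookkeeping: choosing the difference $H_0 - H_1$, rather than the sum, is exactly what makes the $g$-dependence cancel. A secondary check is that the zero-ground-energy hypothesis is used only to drop $\braket{H_0}_0$. The bang-bang (piecewise constant) schedules relevant to QAOA are covered as well, since the derivative identity holds on each piece and integration is unaffected by finitely many discontinuities.
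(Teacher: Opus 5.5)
Your argument is correct. Both Heisenberg rates are right, $\frac{d}{dt}\braket{H_0}_t = i g(t)\braket{[H_1,H_0]}_t$ and $\frac{d}{dt}\braket{H_1}_t = -i[1-g(t)]\braket{[H_1,H_0]}_t$, and their difference does not depend on the schedule. Only $\braket{H_0}_0=0$ is used.

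The thesis does not prove this statement. It is imported from \cite{luis-pedro-annealing-lower-bounds} and used as a black box in Lemma~\ref{lemma:qaoa-ground-state-bound}. Your derivation uses the same bound-the-rate-and-integrate technique the thesis applies in its proof of Lemma~\ref{lemma:probability-based-bound-annealing}.

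The one genuine design choice is how you combine the two rates. The obvious alternative bounds them separately, $|\frac{d}{dt}\braket{H_0}_t|\le g(t)\,\|[H_1,H_0]\|$ and $|\frac{d}{dt}\braket{H_1}_t|\le [1-g(t)]\,\|[H_1,H_0]\|$, then integrates and adds. That route requires $0\le g\le 1$ so the weights sum to $t_f$, but it proves more: $|\Delta\braket{H_0}|+|\Delta\braket{H_1}|\le t_f\|[H_1,H_0]\|$. It also gives separate lower bounds on how long each Hamiltonian is switched on, namely $\int_0^{t_f}g\,dt$ and $\int_0^{t_f}(1-g)\,dt$, which become $\sum_j|\gamma_j|$ and $\sum_j|\beta_j|$ in the QAOA reading. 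Your signed combination gives up that split. In exchange, the cancellation $g+(1-g)=1$ is exact, so you need no assumption on the range of $g$ and reach exactly the stated inequality with minimal bookkeeping.
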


By combining Theorem~\ref{theorem:annealing-lower-bound} with the relationship in eq.~\eqref{eq:sum-of-angles-to-annealing-time}, we can establish a lower bound on the sum of the QAOA angles.

\begin{lemma}
\label{lemma:qaoa-ground-state-bound}
For a QAOA protocol with $p$ rounds driven by Hamiltonians $H_0$ and $H_1$ with zero ground state energies, starting from the ground state of $H_0$, the angle parameters $\vec{\gamma}$ and $\vec{\beta}$ are bounded by
\begin{equation}
    \sum_{j=1}^p \left( |\beta_j| +  |\gamma_j| \right) \geq \frac{\braket{H_0}_{p} + \braket{H_1}_0 - \braket{H_1}_{p}}{\|[H_1, H_0]\|}.
\end{equation}
\end{lemma}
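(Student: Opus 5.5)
The plan is to realize the $p$-round QAOA state as the endpoint of a bang-bang quantum annealing protocol and then apply Theorem~\ref{theorem:annealing-lower-bound} to that protocol. First I fix signs. Since $e^{-i\beta H_0}$ with $\beta<0$ is not literally forward evolution under $(1-g)H_0+gH_1$ with $g\in[0,1]$, I would not insist on a literal annealing schedule. Instead I would revisit the speed-limit argument behind the theorem, which is the cleaner route. In that argument one bounds the rate of change of $\braket{H_0}+\braket{H_1}$ and integrates.

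Concretely, I parametrize the QAOA evolution as a piecewise evolution over a total time $T=\sum_j(|\beta_j|+|\gamma_j|)$. On each segment of length $|\beta_j|$ (respectively $|\gamma_j|$) the state evolves under $\pm H_0$ (respectively $\pm H_1$), with the sign chosen to match the sign of the angle. During an $H_0$ segment, $\braket{H_0}$ is conserved. The quantity $\braket{H_1}$ obeys
\[
\frac{d}{dt}\braket{H_1}=\pm i\braket{[H_0,H_1]},
\]
so $|\tfrac{d}{dt}\braket{H_1}|\le\|[H_1,H_0]\|$. Symmetrically, during an $H_1$ segment $\braket{H_1}$ is conserved and $|\tfrac{d}{dt}\braket{H_0}|\le\|[H_1,H_0]\|$. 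In every segment, then, exactly one of the two expectations moves, and it moves at rate at most $\|[H_1,H_0]\|$, independent of the sign of the angle. Integrating gives the total variation estimate
\[
|\Delta\braket{H_0}|+|\Delta\braket{H_1}|\le T\,\|[H_1,H_0]\|.
\]

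Next I use the boundary data. The initial state is the ground state of $H_0$ with zero ground energy, so $\braket{H_0}_0=0$. It follows that
\[
\Delta\braket{H_0}=\braket{H_0}_p \qquad\text{and}\qquad \Delta\braket{H_1}=\braket{H_1}_p-\braket{H_1}_0.
\]
Therefore
\[
\braket{H_0}_p+\braket{H_1}_0-\braket{H_1}_p\le|\Delta\braket{H_0}|+|\Delta\braket{H_1}|\le T\,\|[H_1,H_0]\|.
\]
Dividing by the commutator norm yields the claim, using $T=\sum_j(|\beta_j|+|\gamma_j|)$ as in eq.~\eqref{eq:sum-of-angles-to-annealing-time}. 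If the commutator vanishes, the statement is trivial or vacuous, and I would note that case separately.

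The main obstacle is justifying that negative angles do not break the identification with an annealing time. Handling this via the direct derivative bound, rather than invoking Theorem~\ref{theorem:annealing-lower-bound} as a black box, resolves it, because the rate bound depends only on $|\pm 1|$. If one prefers to cite the theorem directly, the first step would be to check that its proof only uses $|g|,|1-g|\le 1$ pointwise, so that it permits reversed-sign segments.
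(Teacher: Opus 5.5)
Your proof is correct, but it takes a different route from the paper. The paper's proof is a one-line appeal to Theorem~\ref{theorem:annealing-lower-bound}: it identifies the $p$-round circuit with a bang-bang annealing schedule of duration $\sum_j(|\beta_j|+|\gamma_j|)$, as in eq.~\eqref{eq:sum-of-angles-to-annealing-time}. You instead re-derive the speed limit segment by segment.

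The citation buys brevity. Your version buys rigor exactly where the paper is silent. A negative angle means evolution under $-H_0$ or $-H_1$, which is not of the form $(1-g)H_0+gH_1$ for any $g$, so the identification with an annealing time is not literal. For the same reason, your fallback (checking that the cited proof only uses $|g|,|1-g|\le 1$) would not quite work. You would have to enlarge the schedule family to $a(t)H_0+b(t)H_1$ with $|a|+|b|\le 1$; for that family the rate of change of $\braket{H_0}-\braket{H_1}$ is $i(a+b)\braket{[H_1,H_0]}$.

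Your bookkeeping also gives more than the lemma. Since $\braket{H_0}$ moves only during $H_1$ segments and $\braket{H_1}$ moves only during $H_0$ segments, you get the two bounds separately: $\braket{H_0}_p\le\|[H_1,H_0]\|\sum_j|\gamma_j|$ and $\braket{H_1}_0-\braket{H_1}_p\le\|[H_1,H_0]\|\sum_j|\beta_j|$. With $2\pi$-periodicity this gives $p\ge\max\{\braket{H_0}_p,\,\braket{H_1}_0-\braket{H_1}_p\}/(2\pi\|[H_1,H_0]\|)$ immediately. That is the supremum of the bound \eqref{eq:worst-case-constant-bound}, which the paper reaches only through its Hamiltonian-rescaling argument. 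Finally, your argument only uses the zero ground energy of $H_0$, not that of $H_1$.
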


\begin{proof}
This is an immediate result from Theorem~\ref{theorem:annealing-lower-bound} and eq.~\eqref{eq:sum-of-angles-to-annealing-time}.
\end{proof}

To translate this bound on angles into a bound on the number of rounds $p$, we introduce the reasonable assumption that the driving Hamiltonians are periodic.
For the problem Hamiltonian $H_1$ built from an integer-valued objective function, a period of $2\pi$ arises naturally.
Most common mixers, $H_0$, are also periodic.
This allows us to bound the meaningful range of any individual angle, leading to our main result.

\begin{theorem}
\label{theorem:qaoa-objective-bound}
Given a classical objective function $C(x)$ for a maximization task, represented by the Hamiltonian $H_C$ and encoded into the phase separator $H_1 = \mathbbm{1}C_{max} - H_C$.
Let the mixer be $H_0$, where all Hamiltonians are $2\pi$ periodic with zero ground state energies.
If a QAOA protocol with $p$ rounds reaches a state with approximation ratio $\lambda$, then
\begin{equation}
p \geq \frac{\braket{H_0}_{p} + \lambda C_{max} - C_{avg}}{4\pi \|[H_C, H_0]\|}.
\label{eq:qaoa-obj-phase-lower-bound}
\end{equation}
\end{theorem}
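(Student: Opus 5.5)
Starting from Lemma~\ref{lemma:qaoa-ground-state-bound}, I will rewrite each piece of its right-hand side in terms of the objective function and the approximation ratio, and then turn the bound on the sum of angles into a bound on $p$ using periodicity.

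\textbf{Numerator.} The initial state is the ground state of $H_0$, which for standard mixers is the uniform superposition over $F$. Hence
\[
\braket{H_1}_0 = C_{max} - \braket{H_C}_0 = C_{max} - C_{avg}.
\]
At the end of the protocol, the definition of $\lambda$ in \cref{eq:approximation-ratio} gives $\braket{H_C}_p = \lambda C_{max}$, so
\[
\braket{H_1}_p = C_{max} - \lambda C_{max}.
\]
Substituting both, the numerator becomes $\braket{H_0}_p + \lambda C_{max} - C_{avg}$. If the optimized state only achieves $\braket{H_C}_p \ge \lambda C_{max}$, the inequality still runs in the right direction.

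\textbf{Denominator.} Since $H_1 = \mathbbm{1}C_{max} - H_C$ and the identity commutes with everything, $[H_1,H_0] = -[H_C,H_0]$. The two commutators therefore have equal spectral norm.

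\textbf{From angles to rounds.} By $2\pi$-periodicity of $H_0$ and $H_1$, each angle can be reduced modulo $2\pi$ into $(-\pi,\pi]$ without changing the unitary. After this reduction, $|\beta_j| + |\gamma_j| \le 2\pi$, so the left-hand side of Lemma~\ref{lemma:qaoa-ground-state-bound} is at most $2\pi p$. This gives
\[
p \ge \frac{\braket{H_0}_{p} + \lambda C_{max} - C_{avg}}{2\pi \|[H_C, H_0]\|}.
\]
This is stronger than \cref{eq:qaoa-obj-phase-lower-bound}, so the claimed $4\pi$ follows. The looser constant covers the more conservative convention of reducing angles to $[0,2\pi)$, where each round contributes at most $4\pi$.

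\textbf{Main obstacle.} The delicate point is justifying that reducing the angles modulo $2\pi$ preserves the state. The reduction can only make $\sum_j(|\beta_j|+|\gamma_j|)$ smaller, and the Lemma applies to the reduced angle sequence, so the bound holds for it. Two hypotheses must be checked. First, periodicity requires integer spectra: $H_1$ has integer eigenvalues because $C$ is integer-valued, and this must be assumed for $H_0$. Second, the ground energies must be zero: $H_1$ has ground energy $0$ by construction, and $H_0$ has ground energy $0$ by hypothesis.
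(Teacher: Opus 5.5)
Your proof is correct and follows the paper's argument step for step. You apply Lemma~\ref{lemma:qaoa-ground-state-bound}, rewrite $\langle H_1\rangle_0-\langle H_1\rangle_p=\lambda C_{max}-C_{avg}$ from the definitions of $\lambda$ and $H_1$ (with the same implicit uniform-superposition initial state the paper uses), and drop the identity shift inside the commutator. The one difference is in the periodicity step. The paper bounds each angle by $2\pi$, giving $4\pi$ per round, whereas your reduction to $(-\pi,\pi]$ gives the factor-two sharper denominator $2\pi\|[H_C,H_0]\|$. That reduction needs the Lemma for negative angles, which holds because the underlying speed-limit argument bounds $|d\langle H_0\rangle/dt|$ and $|d\langle H_1\rangle/dt|$ only through the absolute values of the coefficients. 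The stated $4\pi$ bound then follows when the numerator is nonnegative, and holds trivially otherwise.
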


\begin{proof}
From Lemma~\ref{lemma:qaoa-ground-state-bound}, when imposing the periodicity constraints, we get
\begin{align}
    \sum_{j=1}^p \left( |\beta_j| +  |\gamma_j| \right) 
    &\geq \frac{\braket{H_0}_{p} + \braket{H_1}_0 - \braket{H_1}_{p}}{\|[H_1, H_0]\|} \\
p(2\pi + 2\pi)     
    &\geq \frac{\braket{H_0}_{p} + \braket{H_1}_0 - \braket{H_1}_{p}}{\|[H_1, H_0]\|} \\
p
    &\geq \frac{\braket{H_0}_{p} + \braket{H_1}_0 - \braket{H_1}_{p}}{4\pi \|[H_1, H_0]\|},
\end{align}
where the second line comes from the fact that the two Hamiltonians are $2\pi$ periodic, thus, the largest meaningful $\beta_j$ and $\gamma_j$ are bounded by $2\pi$.
Since the phase separator $H_1$ of QAOA encodes the information of the objective value into the phase, we get
\begin{align}
\braket{H_1}_0 &= C_{max} - C_{avg}, \\
\braket{H_1}_p &= C_{max} - \lambda C_{max},\\
\braket{H_1}_0 - \braket{H_1}_p 
    &= \lambda C_{max} - C_{avg}.
\end{align}
This comes directly from the definitions of the approximation ratio and $H_1$ defined in \cref{eq:h1-definition,eq:approximation-ratio}.
As for the denominator, we know that 
\begin{align}
    [A + kI, B] = [A, B]
\end{align}
holds for any constant $k$ and any matrix $A, B$. 
Because $H_1 = \mathbbm{1}C_{max} - H_C$, we get $\|[H_1, H_0]\| = \|[H_C, H_0]\|$.
\end{proof}

To the best of our knowledge, this is the first result that analytically connects the number of rounds $p$ to the approximation ratio $\lambda$ for general QAOA on combinatorial optimization problems, independent of the mixer choice or problem structure.
It provides a powerful tool for reasoning about the minimum circuit depth required for QAOA to achieve a meaningful result.

\subsection{The effect of Hamiltonian rescaling}
\label{subsec:hamiltonian-rescaling}

One might ask if this bound can be circumvented by simply rescaling the Hamiltonians, a common practice in quantum annealing to adjust energy ranges.
If we rescale $H_0$ by $\alpha_0 > 0$ and $H_1$ by $\alpha_1 > 0$, the lower bound on $p$ changes.
However, for QAOA, the effect of rescaling is different than for continuous QA.
Rescaling the Hamiltonians only changes their period, which means the angles $\beta$ and $\gamma$ can be rescaled accordingly without changing the number of rounds $p$.
The number of rounds is therefore invariant under Hamiltonian rescaling.

This observation confirms that deriving the bounds from the $2\pi$ periodic case is logically sound.
It also implies that we can make the bound tighter by choosing scaling factors that maximize the expression, giving a worst-case lower bound:
\begin{equation}
    p \geq \max_{\alpha_0, \alpha_1} \frac
     {\alpha_0 \braket{H_0}_p + \alpha_1 (\braket{H_1}_0 - \braket{H_1}_p)}
    {2\pi (\alpha_0 + \alpha_1) \|[ H_1, H_0 ]\|}.
\label{eq:worst-case-constant-bound}
\end{equation}
This analysis shows that while the constant factor of the bound can be tightened, its asymptotic behavior cannot be arbitrarily changed by simple rescaling.
Having established this general bound, we will now explore its implications for specific, concrete examples.

\section{Lower bounds on QAOA with a Grover-style mixer}
\label{sec:grover-mixer}

Having established a general lower bound on QAOA performance, we now turn our attention to a specific and important case: QAOA with a Grover-style mixer.

\subsection{The Grover mixer and its general lower bound}

One of the mixers that has been proposed for constrained optimization problems is the \emph{Grover mixer}~\cite{andreas-grover-mixer-qaoa}.
This operator is the parameterized version of the usual diffusion operator in Grover's algorithm and amplitude amplification~\cite{grover-alg, brassard-bhmt-amplitude-amplification-estimation}.
It is a natural choice for many problems because it preserves the state within the subspace of feasible solutions and is invariant to permutations of the input.
This means all states with the same objective value are treated equally, ensuring fair sampling.
This style of QAOA is also studied from the perspective of amplitude amplification, in an attempt to decrease the cost of adaptive Grover searches and to generalize ground-state-finding techniques~\cite{satoh-spo,shyamsundarNonBooleanQuantumAmplitude2023a,koch-gaussian-amplitude-amplification,naphan-spo} and also our SPO approach \cref{sec:subdivided-phase-oracle}.
While the Grover mixer's ability to rapidly explore the state space provides better scaling for small problem sizes, it has been empirically shown to be potentially detrimental for large problems when using standard classical optimization loops~\cite{john-numerical-speedup-constrained, john-sat-qaoa, akshay-reachability-deficit-prl, john-threshold-qaoa, farhi-qaoa-whole-graph-typical-case, farhi-qaoa-whole-graph-worst-case, marsh-wang-optimization-with-quantum-walk-mixer, marsh-wang-quantum-walk-assisted-approximate-algorithm}.
Nonetheless, characterizing its fundamental limitations is crucial, as finding optimal QAOA schedules can be computationally hard~\cite{bittel-vqa-depth-qcma-hard}.

The Hamiltonian for the Grover mixer is defined as the projector onto the uniform superposition of all feasible states, $\ket{\psi_0}$,
\begin{equation}
    H_{Grover} = \mathbbm{1} - \ket{\psi_0}\bra{\psi_0}.
\label{eq:grover-diffusion-hamiltonian}
\end{equation}
For this specific mixer, our general lower bound from Theorem~\ref{theorem:qaoa-objective-bound} simplifies significantly.

\begin{theorem}
Given a classical objective function $C(x)$ for a maximization task, if a QAOA protocol with $p$ rounds driven by the Grover mixer $H_{Grover}$ reaches a state with approximation ratio $\lambda$, then
\begin{equation}
    p \geq \frac{1 - \left|\braket{\psi_0|\psi_{p}}\right|^2 + \lambda C_{max} - C_{avg}}{ 4 \pi \sigma_{C} },
\end{equation}
where $C_{max}$, $C_{avg}$, and $\sigma_{C}$ are the maximum, average, and standard deviation of the objective function, respectively.
\label{theorem:grover-objective-bound}
\end{theorem}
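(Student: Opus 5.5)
The plan is to specialize Theorem~\ref{theorem:qaoa-objective-bound} to $H_0 = H_{Grover}$. This requires checking its hypotheses, evaluating $\braket{H_0}_p$ for this mixer, and computing the commutator norm $\|[H_C, H_{Grover}]\|$ exactly. Each of the three terms in the bound of Theorem~\ref{theorem:qaoa-objective-bound} then becomes the corresponding term in the claim.

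First we check the hypotheses. $H_{Grover} = \mathbbm{1} - \ket{\psi_0}\bra{\psi_0}$ is a projector with eigenvalues $0$ and $1$. Its ground state is $\ket{\psi_0}$ with energy zero, which is exactly the QAOA initial state, and $e^{-i\beta H_{Grover}}$ is $2\pi$ periodic in $\beta$. The problem Hamiltonian $H_1 = \mathbbm{1}C_{max} - H_C$ has integer spectrum, so it is $2\pi$ periodic, and its minimum eigenvalue is $0$. The numerator term is then immediate: $\braket{H_0}_p = \braket{\psi_p|(\mathbbm{1} - \ket{\psi_0}\bra{\psi_0})|\psi_p} = 1 - |\braket{\psi_0|\psi_p}|^2$.

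The main step is showing $\|[H_C, H_{Grover}]\| = \sigma_C$. Since $[H_C, \mathbbm{1}] = 0$, this reduces to $\|[H_C, P]\|$ with $P = \ket{\psi_0}\bra{\psi_0}$. Write $\ket{v} = H_C\ket{\psi_0}$. Then
\[
[H_C, P] = \ket{v}\bra{\psi_0} - \ket{\psi_0}\bra{v},
\]
an anti-Hermitian operator of rank at most two. Decompose $\ket{v} = C_{avg}\ket{\psi_0} + \ket{u}$ with $\ket{u} \perp \ket{\psi_0}$, using $\braket{\psi_0|H_C|\psi_0} = C_{avg}$. The $C_{avg}$ contributions cancel, leaving $\ket{u}\bra{\psi_0} - \ket{\psi_0}\bra{u}$. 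On the orthonormal pair $\{\ket{\psi_0}, \ket{u}/\|u\|\}$ this operator acts as $\|u\|$ times a $2\times 2$ rotation generator, whose singular values are both $\|u\|$. Hence the spectral norm equals
\[
\|u\| = \sqrt{\braket{\psi_0|H_C^2|\psi_0} - C_{avg}^2} = \sigma_C,
\]
the last equality holding because $\ket{\psi_0}$ is the uniform superposition over the $N$ feasible states.

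I expect this commutator norm computation to be the only real obstacle, and the concern is a potential factor-of-two slip. The orthogonal decomposition above is what guards against it. Substituting the numerator term and $\|[H_C, H_0]\| = \sigma_C$ into eq.~\eqref{eq:qaoa-obj-phase-lower-bound} then gives the stated bound.
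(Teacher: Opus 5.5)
Your proposal is correct and follows the paper's route: you specialize Theorem~\ref{theorem:qaoa-objective-bound} to $H_0=H_{Grover}$, read off $\braket{H_0}_p = 1-|\braket{\psi_0|\psi_p}|^2$, and use $\|[H_C,H_{Grover}]\|=\sigma_C$. The only difference is that the paper cites a closed-form formula for the spectral norm of this rank-two commutator, whereas you derive it directly from the decomposition $H_C\ket{\psi_0}=C_{avg}\ket{\psi_0}+\ket{u}$; that derivation is self-contained and correctly treats the commutator as anti-Hermitian, not as the ``projector'' the paper loosely calls it.
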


\begin{proof}
The term $\braket{H_{Grover}}_p$ can be evaluated directly from \cref{eq:grover-diffusion-hamiltonian}.
Since $H_C$ is diagonal in the computational basis and $H_{Grover}$ is a rank-1 projector, the term $[H_C, H_{Grover}]$ evaluates to a projector of rank at most 2.
There is a closed-form formula for calculating the spectral norm of a matrix of this form given in~\cite{ipek-eigenvalue-rank-1-skew-symmetric}, which gives 
\begin{align}
    \| [H_C, H_{Grover}] \| = \sigma_C.
\label{eq:spectral-norm-commutator-grover-mixer}
\end{align}
Plugging these values into \cref{theorem:qaoa-objective-bound} gives the lower bound as shown.
\end{proof}

This theorem reveals a crucial insight: the performance of QAOA with a Grover mixer depends only on the statistical distribution of the objective values, not on the combinatorial structure of the problem itself.

\subsection{Application to k-local Hamiltonians}

We can make this bound even more concrete for problems whose objective functions can be described by a $k$-local Hamiltonian.
This is a natural assumption for many combinatorial optimization problems, where the cost function is a sum of terms that each act on at most $k$ variables~\cite{taillard-design-heuristic-algorithm-optimization-problem, lucasIsingFormulationsMany2014}.
Such a Hamiltonian can be written as
\begin{equation}
    H_C = \mathbbm{1}C_{avg} - \sum_{\nu=1}^m \alpha_\nu H_\nu,
\end{equation}
where each $H_\nu$ is a product of at most $k$ Pauli $Z$ operators, and $\alpha_\nu$ are real-valued weights.
For this class of Hamiltonians, the standard deviation $\sigma_C$ can be calculated directly from the coefficients as $\sigma_C^2 = \sum_{\nu=1}^m \alpha_\nu^2$.
This leads to a more explicit lower bound.

\begin{theorem}
Given a classical objective function $C(x)$ for a maximization task, represented by a $k$-local Hamiltonian $H_C = C_{avg}\mathbbm{1} - \sum_{\nu=1}^m \alpha_\nu H_\nu$, where each $H_\nu$ is a product of Pauli Z with at most $k$ terms, each $\alpha_\nu$ is a real number denoting the weight of the term, and $C_{avg}$ and $C_{max}$ denote the average and the global maximum of $C(x)$.
If a QAOA protocol with $p$ rounds driven by the phase separator $H_1 = C_{max}\mathbbm{1} - H_C$ and the mixer $H_{Grover}$ that starts in the state $\ket{\psi_0} = \ket{+}^{\otimes n}$ reaches a state $\ket{\psi_p}$ with approximation ratio $\lambda$, then
\begin{align}
p
    &\geq \frac
    {1 - \left|\braket{\psi_0|\psi_{p}}\right|^2 + \lambda C_{max} - C_{avg}}
    {4 \pi \sqrt{\sum_{\nu = 1}^m \alpha_\nu^2}}.
\end{align}
\label{theorem:grover-local-cost}
\end{theorem}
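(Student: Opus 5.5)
The plan is to obtain Theorem~\ref{theorem:grover-local-cost} as a direct specialization of Theorem~\ref{theorem:grover-objective-bound}. The numerator already matches that theorem, so the only new ingredient is the identity $\sigma_C = \sqrt{\sum_{\nu=1}^m \alpha_\nu^2}$ for a $k$-local cost Hamiltonian. Once this is shown, we substitute it into the denominator $4\pi\sigma_C$ and the statement follows.

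First we check that the hypotheses of Theorem~\ref{theorem:grover-objective-bound} hold. The initial state is $\ket{\psi_0}=\ket{+}^{\otimes n}$, which is exactly the uniform superposition over all $N=2^n$ bitstrings (the unconstrained feasible set $F$). It is therefore the zero-energy ground state of $H_{Grover}=\mathbbm{1}-\ket{\psi_0}\bra{\psi_0}$. The phase separator $H_1=C_{max}\mathbbm{1}-H_C$ has zero ground-state energy by construction, and periodicity is assumed as in Theorem~\ref{theorem:qaoa-objective-bound}. Also, $\braket{H_{Grover}}_p = 1-|\braket{\psi_0|\psi_p}|^2$, which gives the first term of the numerator.

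The main step is computing $\sigma_C$. Since $H_C$ is diagonal, $C(z)$ is its eigenvalue on $\ket{z}$, so $C(z)-C_{avg} = -\sum_\nu \alpha_\nu h_\nu(z)$, where $h_\nu(z)\in\{\pm1\}$ is the eigenvalue of the Pauli-$Z$ product $H_\nu$ on $\ket{z}$. Then
\begin{equation}
\sigma_C^2=\frac{1}{N}\sum_z\Big(\sum_\nu\alpha_\nu h_\nu(z)\Big)^2=\sum_{\nu,\mu}\alpha_\nu\alpha_\mu\,\frac{1}{N}\tr{H_\nu H_\mu}.
\end{equation}
We will use orthogonality of Pauli strings: for distinct $Z$-products, $H_\nu H_\mu$ is a nonidentity $Z$-string and has trace zero, while $H_\nu^2=\mathbbm{1}$. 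This requires the $H_\nu$ to be pairwise distinct and not equal to the identity. The identity part has already been absorbed into $C_{avg}$, and repeated strings can be merged by combining their weights. The same orthogonality confirms that $C_{avg}$ is the constant term, because each $H_\nu$ is traceless. The double sum then collapses to $\sum_\nu \alpha_\nu^2$.

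This identity is the only genuine step. The expected obstacle is the bookkeeping: making explicit that the decomposition is the unique Pauli (Walsh--Fourier) expansion with distinct nonidentity terms. Locality $k$ plays no role in the bound beyond guaranteeing such a finite expansion. We then combine this with Eq.~\eqref{eq:spectral-norm-commutator-grover-mixer}, namely $\|[H_C,H_{Grover}]\|=\sigma_C$, and Theorem~\ref{theorem:grover-objective-bound} to conclude.
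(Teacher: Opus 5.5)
Your proposal is correct and follows essentially the same route as the paper: specialize Theorem~\ref{theorem:grover-objective-bound} and obtain $\sigma_C^2=\sum_\nu\alpha_\nu^2$ because only products of Pauli-$Z$ strings that reduce to the identity survive the average. The paper writes this as $\braket{s|H_C^2|s}-\braket{s|H_C|s}^2$ with $\ket{s}=\ket{+}^{\otimes n}$, which equals your normalized trace for diagonal operators, and your explicit requirement that the $H_\nu$ be distinct and non-identity is a useful clarification the paper leaves implicit.
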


\begin{proof}
Since we have already established that $C_{avg}$ is the coefficient of the trivial Pauli term, we now show that the standard deviation $\sigma_C$ can be easily calculated from the problem Hamiltonian representation as well.
Realizing that
\begin{align}
\sigma_C^2
    &= \braket{s|H_C^2|s} - \braket{s|H_C|s}^2\\
    &= \left(C_{avg}^2 + \sum_{\nu=1}^m \alpha_\nu^2\right) - C_{avg}^2\\
    &= \sum_{\nu=1}^m \alpha_\nu^2,
\end{align}
where the first term of the second line is due to the fact that only when all the Pauli $Z$'s cancel out into identity that we get non-zero contributions to the sum.
\end{proof}

This formulation is powerful because it allows us to estimate the lower bound on the required circuit depth directly from the coefficients of the problem Hamiltonian.

\subsection{Implication for Max-Cut: a polynomial round requirement}

The consequences of these bounds are most apparent when applied to a specific problem like Max-Cut.
By applying Theorem~\ref{theorem:grover-objective-bound} to a Max-Cut instance on a graph with $|E|$ edges, we can derive a direct corollary.
We can form a corollary for Max-Cut to formally state this insight.
\begin{corollary}
If a QAOA protocol with $p$ rounds finds an approximate solution with approximation ratio $\lambda$ to Max-Cut of a graph with $|E|$ edges driven by the objective value phase separator $H_1$ and the Grover-mixer $H_{Grover}$ that starts in the state $\ket{\psi_0} = \ket{+}^{\otimes n}$, then
\begin{align}
    p &\geq \frac{
        1 - \left|\braket{\psi_0|\psi_{p}}\right|^2 + 
      \lambda C_{max} - |E|/2}
    { 2 \pi \sqrt{|E|} }.
\end{align}
\end{corollary}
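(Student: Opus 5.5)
The plan is to specialize Theorem~\ref{theorem:grover-local-cost}. We write the Max-Cut objective on a graph $G=(V,E)$ as a $2$-local Hamiltonian and read off $C_{avg}$ and the weights $\alpha_\nu$.

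First, we use the standard encoding. For bitstring $x$, the cut value is
\begin{equation}
C(x)=\sum_{\{u,v\}\in E}\frac{1-Z_uZ_v}{2},
\end{equation}
so $H_C = \frac{|E|}{2}\mathbbm{1} - \sum_{\{u,v\}\in E}\frac{1}{2}Z_uZ_v$. This has exactly the form $C_{avg}\mathbbm{1}-\sum_\nu \alpha_\nu H_\nu$ required by Theorem~\ref{theorem:grover-local-cost}. There is one term $H_\nu = Z_uZ_v$ per edge, with weight $\alpha_\nu = 1/2$.

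Second, we check that the identity coefficient $|E|/2$ is indeed $C_{avg}$. Each $Z_uZ_v$ is traceless, so its average over all $2^n$ bitstrings (equivalently its expectation in $\ket{+}^{\otimes n}$) vanishes. This gives $C_{avg}=|E|/2$.

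Third, we compute the denominator. Distinct edges give distinct Pauli strings, so they are orthogonal under the trace inner product, and we get $\sqrt{\sum_\nu \alpha_\nu^2}=\sqrt{|E|/4}=\sqrt{|E|}/2$. Multiplying by $4\pi$ yields $2\pi\sqrt{|E|}$. Substituting into Theorem~\ref{theorem:grover-local-cost}, with $\ket{\psi_0}=\ket{+}^{\otimes n}$ as the uniform superposition over the unconstrained feasible set, gives the stated bound directly.

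There is no real obstacle here. The only point needing care is the orthogonality of distinct $Z_uZ_v$ terms. This is what makes $\sigma_C^2$ equal to the plain sum of squared weights, and it fails only if the graph has multi-edges. In that case we would merge parallel edges into a single term with weight $k/2$; the bound would then read with $\sum_\nu \alpha_\nu^2$ in place of $|E|/4$, so we assume a simple graph as is implicit in the statement.
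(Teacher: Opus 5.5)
Your proof is correct and is essentially the paper's argument. The paper applies Theorem~\ref{theorem:grover-objective-bound} directly, getting $C_{avg}=|E|/2$ and $\sigma_C=\sqrt{|E|}/2$ from each edge being cut with probability $1/2$ and pairwise independence of distinct edges' cut indicators; you route the same computation through Theorem~\ref{theorem:grover-local-cost}, where orthogonality of the distinct $Z_uZ_v$ strings plays the role of that pairwise independence, and your multi-edge remark correctly flags the simple-graph assumption the paper's independence argument also relies on.
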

\begin{proof}
Applying the lower bound of Theorem~\ref{theorem:grover-objective-bound} to an instance of Max-Cut with the number of edges denoted by $|E|$.
Since there are 4 possible ways to assign a cut partition to an edge, it is easy to see that the expected cut value for each edge is $1/2$, and since any pair of edges are independent of each other, the expected cut values then equals $|E|/2$.
We can use a similar line of argument to calculate the standard deviation of the cuts of a graph, which evaluates to $\sqrt{|E|}/2$.
\end{proof}

We know that for any bipartite graph, the maximum number of cut edges will be equal to the number of edges in the graph.
Using the above corollary, we get
\begin{align}
p &\geq \frac{
        1 - \left|\braket{\psi_0|\psi_{p}}\right|^2 + 
      \lambda |E| - |E|/2}
    { 2 \pi \sqrt{|E|} } \\
  &\geq \frac{(2\lambda - 1)}{4\pi}\sqrt{|E|}.
\end{align}
Since $|E|$ is at least linear in the number of vertices $n$ when the graph is connected, this shows that there exist problem instances in the class of Max-Cut that QAOA with Grover-mixer cannot get any approximation ratio guarantee with constant rounds and would require the number of rounds to be at least a polynomial in $n$ to obtain a constant approximation ratio.
We note that this conclusion is different and cannot be compared directly to results in~\cite{bravyi-rqaoa}, where it is shown that constant rounds QAOA with transverse field mixer can obtain constant approximation ratio for a certain families of regular bipartite expander graphs due to different assumptions and choice of mixer.

Although we have only shown this formally with the Max-Cut problems, similar calculations can be made for most combinatorial optimization problems.

This is a significant finding, as it demonstrates that a constant-round QAOA with this mixer is not sufficient to solve many common optimization problems with a guarantee of quality.

\subsection{Discussion on the tightness of the bound}

A natural question is whether this polynomial lower bound is tight.
While our analysis indicates a polynomial requirement, some empirical studies have suggested that finding optimal angles for QAOA on hard problem instances might require an exponential number of rounds~\cite{john-numerical-speedup-constrained, john-sat-qaoa}.
This apparent discrepancy may arise because the heuristic, round-by-round methods used to find angles in numerical simulations might not achieve the true, globally optimal parameters for a given $p$.
Furthermore, intuition from the low-depth ($p=1$) regime suggests that each round of this QAOA variant should only contribute a Grover-like progress, which does not fully capture the complex dynamics at higher depths~\cite{mcclean-low-depth-quantum-optimization}.
Therefore, while our lower bound is rigorously proven, the possibility remains that this type of QAOA could offer a significant advantage over a simple Grover adaptive search, and determining its true upper-bound complexity remains an important open question.

\section{Lower bounds applied to search problems}
\label{sec:search-lower-bound}

Having analyzed the lower bounds for general optimization problems, we now turn to the special case of search.
A search problem on $n$ qubits can be described as finding a bitstring $z$ from a set $S = \{ z | C_{search}(z) = 1 \}$, where the objective function $C_{search}(x)$ maps inputs to $\{0, 1\}$.
While using the hybrid QAOA framework to solve search is not practically efficient, analyzing this case provides a crucial test for the tightness and generality of our derived bounds.
The well-established lower bound for unstructured search is $\Omega(\sqrt{N/m})$ queries to an oracle, where $m$ is the number of marked items.
Our goal is to see if our QAOA bounds can recover this fundamental limit.

This analysis is also motivated by a desire to understand the behavior of our bounds when the contribution from the mixer Hamiltonian, $\braket{H_0}_p$, dominates, a scenario not fully explored in the previous section.
We will first show that our energy-based bound (Theorem~\ref{theorem:qaoa-objective-bound}) is indeed tight for the Grover mixer.
We will then show that for the transverse-field mixer, this same bound reveals a dependency on the problem's structure but is ultimately loose.
Finally, we will derive and apply a new, more general bound based on state overlap that correctly captures the unstructured search limit for a broad class of mixers.

\subsection{Performance with a Grover mixer: a tight bound}
\label{subsec:search-lower-bound-grover-mixer}

We begin by showing that the lower bound from Theorem~\ref{theorem:grover-objective-bound} is tight when applied to search problems using the Grover mixer.
Since Grover's algorithm can be viewed as a QAOA with fixed angles, our bound is expected to recover the well-known scaling.

\begin{corollary}
If a QAOA protocol solves a search problem with $p$ rounds, finding a marked state from $m$ marked states, with success probability $\lambda$ and is driven by the objective value phase separator $H_1$ (phase oracle with adjustable phase) and the Grover-mixer $H_{Grover}$ that starts in the state $\ket{\psi_0} = \ket{+}^{\otimes n}$, then
\begin{equation}
p  \geq \frac{\lambda}{2\pi} \sqrt{\frac{N-m}{m}} - \frac{1}{2\pi} \sqrt{\lambda (1 - \lambda)}.
\label{eq:grover-search-bound}
\end{equation}
\label{corollary:grover-search-bound}
\end{corollary}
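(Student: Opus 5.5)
The plan is to specialize Theorem~\ref{theorem:grover-objective-bound} to the search objective $C_{search}$, which takes values in $\{0,1\}$. Then I will lower-bound the mixer term $1-|\braket{\psi_0|\psi_p}|^2$ using only the success probability $\lambda$. The first step is routine: compute the statistics of $C_{search}$ under the uniform distribution. We have $C_{max}=1$ and $C_{avg}=m/N$. Since $C_{search}$ is an indicator, $\sigma_C^2 = \frac{m}{N}\left(1-\frac{m}{N}\right)$, so $\sigma_C=\sqrt{m(N-m)}/N$. Also, because $C_{max}=1$, the approximation ratio $\lambda=\braket{H_C}_p$ is exactly the probability of measuring a marked state. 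This justifies calling $\lambda$ the success probability.

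The main step is the overlap term. I will split the Hilbert space into the span of the marked states and the span of the unmarked states. Write $\ket{\psi_0}=\sqrt{m/N}\,\ket{g}+\sqrt{(N-m)/N}\,\ket{b}$, where $\ket{g},\ket{b}$ are the normalized uniform superpositions over marked and unmarked strings. Write $\ket{\psi_p}=\sqrt{\lambda}\,\ket{g'}+\sqrt{1-\lambda}\,\ket{b'}$, where $\ket{g'},\ket{b'}$ are normalized states supported in the respective subspaces. The weights $\lambda$ and $1-\lambda$ are forced by the definition of $\lambda$. Applying the triangle inequality and then Cauchy--Schwarz on each subspace gives
\begin{equation}
|\braket{\psi_0|\psi_p}| \le \sqrt{\lambda m/N}+\sqrt{(1-\lambda)(N-m)/N}.
\end{equation}
Squaring this gives an upper bound on $|\braket{\psi_0|\psi_p}|^2$, and hence a lower bound on $1-|\braket{\psi_0|\psi_p}|^2$. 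The bound contains a cross term $2\sqrt{\lambda(1-\lambda)m(N-m)}/N$. I expect this step to be the only real obstacle. The overlap is not determined by $\lambda$, so we must take the worst case, and this worst case is precisely the source of the negative $\sqrt{\lambda(1-\lambda)}$ correction in the statement.

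Finally, I will substitute into the numerator $1-|\braket{\psi_0|\psi_p}|^2+\lambda C_{max}-C_{avg}$. The terms collect into
\begin{equation}
\frac{2\lambda(N-m)-2\sqrt{\lambda(1-\lambda)m(N-m)}}{N}.
\end{equation}
Dividing by $4\pi\sigma_C=4\pi\sqrt{m(N-m)}/N$ then gives $\frac{\lambda}{2\pi}\sqrt{(N-m)/m}-\frac{1}{2\pi}\sqrt{\lambda(1-\lambda)}$, which is the claimed bound.

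I will close by noting what this shows about tightness. For constant $\lambda$, the bound scales as $\Omega(\sqrt{N/m})$. This matches Grover's algorithm, which is the fixed-angle instance of this QAOA.
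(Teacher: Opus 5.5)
Your proposal is correct and follows the paper's proof. The paper also specializes Theorem~\ref{theorem:grover-objective-bound} with $C_{max}=1$, $C_{avg}=m/N$, $\sigma_C=\sqrt{m(N-m)}/N$ and splits the states into marked and unmarked components, but it writes $|\braket{\psi_0|\psi_p}|^2=\frac{1}{N}\bigl(\sqrt{\lambda m}+\sqrt{(1-\lambda)(N-m)}\bigr)^2$ as an equality, whereas your triangle-inequality and Cauchy--Schwarz step correctly treats it as an upper bound, which is exactly the direction the lower bound on $p$ requires and holds for any final state with success probability $\lambda$.
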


\begin{proof}
We can see that, by rewriting the states as linear combinations of solution and non-solution states, we get
\begin{gather}
    \left|\braket{\psi_0|\psi_p}\right|^2 = \frac{1}{N}\left( \sqrt{\lambda m} + \sqrt{(1 - \lambda)(N-m)} \right)^2, \label{eq:final-state-overlap-search}\\
    C_{max} = 1, \text{ and } C_{avg} = m/N.
\end{gather}
The standard deviation for the search cost function is
\begin{equation}
\sigma_{C}
    = \frac{\sqrt{m(N-m)}}{N}.
\label{eq:standard-deviation-search}
\end{equation}
Plugging these into the lower bound in Theorem~\ref{theorem:grover-objective-bound} gives this expression, thus completing the proof.
\end{proof}

As expected, the expression in eq.~\eqref{eq:grover-search-bound} recovers Grover's search scaling of order $\sqrt{N/m}$ when $m \ll N$.
This result confirms that our energy-based lower bound is tight and consistent with the optimal query complexity for unstructured search when the appropriate mixer is used.

\subsection{Performance with a transverse-field mixer: a structure-dependent bound}
\label{subsec:search-lower-bound-tf-mixer}

Next, we examine the search problem with the transverse-field ($H_{TF}$) mixer.
Unlike the Grover mixer, $H_{TF}$ is not invariant under input permutations, and its performance depends on the structure of the marked states.
To illustrate this, we analyze two different search problems with the same number of solutions but different structures.

First, we consider a set of marked states $S_{dist-3}$ where the Hamming distance between any two marked states is at least 3.
For this problem, the lower bound evaluates to
\begin{equation}
p  \geq \frac{n(1 - 2\sqrt{\lambda(1-\lambda)}) + 2\lambda - 2m/N}{4\pi \sqrt{n}},
\end{equation}
which is on the order of $\sqrt{n}$, regardless of the number of marked states $m$.

Second, we consider a set $S_{Hamming-k}$ where all states with a fixed Hamming weight $k$ are marked.
In this case, the lower bound becomes
\begin{equation}
p \geq \frac{n(1 - 2\sqrt{\lambda(1-\lambda)}) + 2\lambda - 2m/N}{4\pi \sqrt{2k(n-k) + n}}.
\end{equation}
This bound depends on $k$; for instance, when $k=n/\log^2 n$, the lower bound is on the order of $\log n$.
These two examples show that for the same number of marked states, the lower bound can yield different scaling ($\sqrt{n}$ vs. $\log n$), revealing that the energy-based bound for the transverse-field mixer is sensitive to the problem's structure.
However, since neither recovers the expected $\Omega(\sqrt{N/m})$ worst-case scaling, we conclude that this bound is loose for the general case of unstructured search with $H_{TF}$.

\subsection{A general and tighter bound via state overlap}
\label{subsec:search-overlap}

To resolve the looseness of the energy-based bound for certain mixers, we can derive a new bound based on a different principle: the rate of change of state overlap.
This leads to the following lemma.
\begin{lemma}
Given a QAOA protocol driven by Hamiltonians $H_0$ and $H_1$, where $H_1$ is $2\pi$ periodic, and a projector $P_0$ that commutes with $H_0$, the number of rounds $p$ is bounded by
\begin{equation}
p \geq \frac{ \left| \braket{P_0}_{p} - \braket{P_0}_{0} \right| }{2\pi \left\| [P_0,H_1] \right\|}.
\end{equation}
\label{lemma:probability-based-bound-annealing}
\end{lemma}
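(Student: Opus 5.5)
We track how the expectation of $P_0$ changes along the QAOA circuit, round by round. Write $\ket{\psi_j}$ for the state after $j$ rounds, so that
\begin{equation}
\ket{\psi_j} = e^{-i\beta_j H_0} e^{-i\gamma_j H_1}\ket{\psi_{j-1}}.
\end{equation}
Telescoping gives
\begin{equation}
\braket{P_0}_p - \braket{P_0}_0 = \sum_{j=1}^p \left(\braket{P_0}_j - \braket{P_0}_{j-1}\right),
\end{equation}
so it suffices to bound each increment by $2\pi\|[P_0,H_1]\|$ and then apply the triangle inequality.

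\textbf{The mixer step.} Since $[P_0,H_0]=0$, the unitary $e^{-i\beta_j H_0}$ commutes with $P_0$. Hence the mixer leaves $\braket{P_0}$ exactly unchanged, whatever $\beta_j$ is. This is why only $[P_0,H_1]$ enters the bound and no mixer angle does.

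\textbf{The phase-separator step.} Within round $j$, define the continuous evolution $\ket{\phi(t)} = e^{-itH_1}\ket{\psi_{j-1}}$ for $t\in[0,\gamma_j]$, taking $\gamma_j\ge 0$ without loss of generality. Its derivative is
\begin{equation}
\frac{d}{dt}\braket{\phi(t)|P_0|\phi(t)} = i\braket{\phi(t)|[H_1,P_0]|\phi(t)}.
\end{equation}
For a normalized state this is bounded in absolute value by $\|[P_0,H_1]\|$. Integrating over $t$ gives an increment of at most $|\gamma_j|\,\|[P_0,H_1]\|$.

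\textbf{The main obstacle: bounding the angle.} The remaining step is to justify $|\gamma_j|\le 2\pi$. Because $H_1$ is $2\pi$-periodic, $e^{-i\gamma H_1}$ depends only on $\gamma \bmod 2\pi$. So $\gamma_j$ can be replaced by a representative in $[0,2\pi)$, or even $(-\pi,\pi]$, without changing the state. The integration argument, however, is applied to the chosen representative and not to the original parameter. This order matters, and the proof must say so explicitly: the path integral is taken over the reduced angle.

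With that in place, each round changes $\braket{P_0}$ by at most $2\pi\|[P_0,H_1]\|$. Summing over the $p$ rounds gives
\begin{equation}
\left|\braket{P_0}_p-\braket{P_0}_0\right| \le 2\pi p\,\|[P_0,H_1]\|,
\end{equation}
which rearranges to the claimed bound. The argument mirrors Lemma~\ref{lemma:qaoa-ground-state-bound}, with the quantum speed limit on $\braket{P_0}$ playing the role that Theorem~\ref{theorem:annealing-lower-bound} played there.
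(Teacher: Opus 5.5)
Your proof is correct and follows essentially the same route as the paper: you use $[P_0,H_0]=0$ to kill the mixer's contribution, bound the rate of change of $\braket{P_0}$ under $H_1$ by $\|[P_0,H_1]\|$, integrate over the phase-separator time, and cap each $|\gamma_j|$ by $2\pi$ via periodicity. The differences are cosmetic: the paper writes it as one continuous bang-bang evolution with $g(t)\in\{0,1\}$ rather than round by round, and it simply assumes $|\gamma_j|<2\pi$, whereas you state explicitly, and correctly, that the integral is taken over the reduced angle.
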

\begin{proof}
Consider dynamics with a Hamiltonian
\begin{equation}
    H(t) = (1-g(t))H_0 + g(t) H_1,
\end{equation}
where the value of $g(t) = \{0, 1\}$ in the case of QAOA.
If $P_0$ is a projector that commutes with $H_0$, we have that
\begin{align}
\left| \frac{d}{dt}\braket{P_0}_t \right|
    &= \left| -i \tr{[H(t),\rho_t]P_0} \right| \\
    &= \left| \tr{\rho_t [P_0, g(t) H_1]} \right|\\
    &\leq |g(t)| \| [P_0,H_1] \|,
\end{align}
where the last line arrived from the cyclic property of trace and Rayleigh quotient.
Integrating over the duration $t_f$ of the protocol gives
\begin{align}
\left| \braket{P_0}_{t_f} - \braket{P_0}_{0} \right|
    &=  \left| \int_{0}^{t_f} \frac{d}{dt} \braket{P_0}_{t} dt \right| \\
    &\leq \| [P_0,H_1] \| \int_{0}^{t_f} |g(t)| dt \label{eq:probability-based-bounds-integration-line}\\
    &\leq \| [P_0,H_1] \| \sum_{j=1}^{p} |\gamma_j|\\
    &\leq \| [P_0,H_1] \| 2\pi p.
\end{align}
where we assumed $|\gamma_j| < 2\pi$ due to the periodicity.
Then,
\begin{equation}
    p \geq \frac{ \left| \braket{P_0}_{p} - \braket{P_0}_{0} \right| }{2 \pi \left\| [P_0,H_1] \right\|},
\end{equation}
completing the proof.
\end{proof}

While the numerator is always between 0 and 1, this bound proves to be much stronger and more general for search problems.

\subsection{General lower bound for search with QAOA}
\label{subsec:search-qaoa}

By choosing $P_0 = \ket{\psi_0}\bra{\psi_0}$ where $\ket{\psi_0}$ is the uniform superposition state, we can apply the above lemma to form a new, more general theorem for search.

\begin{theorem}
If a QAOA protocol solves a search problem with $p$ rounds, finding one of $m$ marked states with success probability $\lambda$, and is driven by any mixer $H_0$ that has $\ket{\psi_0} = \ket{+}^{\otimes n}$ as its ground state, then
\begin{equation}
p \geq \frac{\lambda(N-2m) + m - 2\sqrt{\lambda(1-\lambda)m(N-m)}}{2 \pi \sqrt{m(N-m)}}.
\end{equation}
\label{theorem:search-bound-uniform-superposition-initial-state}
\end{theorem}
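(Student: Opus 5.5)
Our plan is to apply Lemma~\ref{lemma:probability-based-bound-annealing} with the projector $P_0 = \ket{\psi_0}\bra{\psi_0}$. First, this projector commutes with any mixer $H_0$ that has $\ket{\psi_0}$ as an eigenvector (in particular as its ground state), so the lemma's hypothesis is satisfied. The phase separator $H_1 = \mathbbm{1}C_{max} - H_C$ for the search objective $C_{search}$ is diagonal with integer eigenvalues, hence $2\pi$ periodic. What remains is to evaluate the numerator $|\braket{P_0}_p - \braket{P_0}_0|$ and the denominator $\|[P_0,H_1]\|$.

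For the numerator, $\braket{P_0}_0 = 1$ because the protocol starts in $\ket{\psi_0}$. For $\braket{P_0}_p = |\braket{\psi_0|\psi_p}|^2$ we need an upper bound, since we want a lower bound on $1 - |\braket{\psi_0|\psi_p}|^2$. We write $\ket{\psi_p} = \sqrt{\lambda}\ket{\phi_S} + \sqrt{1-\lambda}\ket{\phi_{\bar S}}$, with $\ket{\phi_S}$ and $\ket{\phi_{\bar S}}$ normalized states supported on the marked and unmarked strings respectively. Similarly $\ket{\psi_0} = \sqrt{m/N}\ket{u_S} + \sqrt{(N-m)/N}\ket{u_{\bar S}}$ with uniform components. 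Cauchy--Schwarz on each block then gives
\begin{equation}
|\braket{\psi_0|\psi_p}|^2 \le \frac{1}{N}\left(\sqrt{\lambda m}+\sqrt{(1-\lambda)(N-m)}\right)^2 ,
\end{equation}
which matches \cref{eq:final-state-overlap-search}, now as an inequality valid for arbitrary mixers. Expanding, the numerator is at least
\begin{equation}
1 - \frac{\lambda m + (1-\lambda)(N-m)}{N} - \frac{2\sqrt{\lambda(1-\lambda)m(N-m)}}{N}
= \frac{\lambda(N-2m)+m-2\sqrt{\lambda(1-\lambda)m(N-m)}}{N}.
\end{equation}

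For the denominator, constants drop out of the commutator, so $[P_0,H_1] = -[P_0,H_C]$ with $H_C$ the projector onto the marked set $S$. The commutator of a rank-one projector with a diagonal matrix has the same rank-at-most-two skew structure used in the proof of Theorem~\ref{theorem:grover-objective-bound}. Its spectral norm therefore equals the standard deviation of $C$ under the uniform distribution, $\sigma_C = \sqrt{m(N-m)}/N$ from \cref{eq:standard-deviation-search}. To confirm this, restrict to the two-dimensional span of $\ket{u_S}$ and $\ket{u_{\bar S}}$. There $P_0$ and $H_C$ are $2\times2$ matrices, and the commutator is $ab\,(\ket{u_S}\bra{u_{\bar S}} - \ket{u_{\bar S}}\bra{u_S})$ up to sign, with $a=\sqrt{m/N}$ and $b=\sqrt{(N-m)/N}$, so its norm is $ab$.

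Dividing, the factors of $N$ cancel. Multiplying numerator and denominator by $N$ yields exactly the stated bound with denominator $2\pi\sqrt{m(N-m)}$. The main point needing care is the overlap step. The success probability $\lambda$ only fixes the weights on $S$ and $\bar S$, not the shape within each block, so an upper bound via Cauchy--Schwarz is required rather than the equality used for the Grover mixer. Everything else reduces to the $2\times2$ computation.
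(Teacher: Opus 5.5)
Your proposal is correct and follows the paper's proof: apply Lemma~\ref{lemma:probability-based-bound-annealing} with $P_0 = \ket{\psi_0}\bra{\psi_0}$, identify $\|[P_0,H_1]\|$ with $\sigma_C = \sqrt{m(N-m)}/N$, and insert the overlap expression \cref{eq:final-state-overlap-search}. Your one refinement is to obtain that expression via Cauchy--Schwarz as an upper bound on $|\braket{\psi_0|\psi_p}|^2$, whereas the paper writes it as an equality, which tacitly assumes a block-uniform final state with no relative phase; this is exactly what justifies the bound for an arbitrary mixer, and your explicit $2\times 2$ commutator computation simply verifies the paper's identification $\|[P_0,H_1]\| = \|[H_1,H_{Grover}]\| = \sigma_C$.
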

\begin{proof}
Consider the lower bound expression in Lemma~\ref{lemma:probability-based-bound-annealing}, when choosing $P_0 = \ket{\psi_0}\bra{\psi_0}$ we see that
\begin{align}
    [P_0, H_1] = -[\mathbbm{1} - P_0, H_1] = [H_1, H_{Grover}].
\end{align}
Then using Lemma~\ref{lemma:probability-based-bound-annealing}, we get
\begin{align}
p \geq \frac{ \left| \braket{P_0}_{p} - \braket{P_0}_{0} \right| }{2 \pi \left\| [P_0, H_1] \right\|} = \frac{1 - \left|\braket{\psi_0|\psi_p}\right|^2}{2\pi \sigma_C}.
\end{align}
By plugging in the values from \cref{eq:final-state-overlap-search} and \cref{eq:standard-deviation-search} into the above inequality, we get the expression as claimed, thus completing the proof.
\end{proof}

This powerful result successfully recovers the optimal Grover scaling for the unstructured search algorithm by providing a lower bound of $\Omega(\sqrt{N/m})$ that matches the known upper bound~\cite{grover-alg}.
Crucially, it holds not only for the Grover mixer but also for the transverse-field mixer and any other mixer that begins in the standard uniform superposition state.
This confirms that while the energy-based bound can reveal problem structure, this overlap-based bound correctly captures the fundamental query complexity for unstructured search within the QAOA framework.
This result can also be adapted to apply to continuous-time quantum walk searches that start in a stationary distribution.

\section{Implications for network architecture and resource management}
\label{sec:network-implications}

The analysis of QAOA runtimes is not merely an exercise in algorithmic complexity theory; it provides fundamental, actionable insights that connect directly to the architectural challenges of designing heterogeneous quantum repeater networks.
The performance characteristics of the applications that will run on a future Quantum Internet are the primary drivers of its architectural requirements.
By establishing lower bounds on the number of rounds for a leading class of optimization algorithms, we can begin to quantitatively model the demands these applications will place on network resources.

\subsection{Computation models, architecture, and traffic implications}

The number of QAOA rounds $p$ is a direct proxy for how long a distributed quantum computation will occupy network resources.
This runtime determines the coherence time required at quantum memories, the operational lifetime of end-to-end entangled states, and the entanglement generation rate needed across elementary links to sustain the computation.

For industrially relevant problems, the number of algorithmic logical qubits often exceeds the capacity of a single device, motivating the distributed architecture central to this thesis.
The resulting traffic model between nodes depends heavily on how qubits or subcircuits are partitioned across the network, which in turn is influenced by the architecture of the quantum devices themselves---including their physical qubit connectivity, the error-correcting codes employed, and the computational model used to realize inter-node operations.

These variables span different abstraction levels and can be managed individually (e.g., via hardware-specific compilers) or through a unified optimization framework that globally schedules physical qubit operations and Bell-pair usage.
The best strategy depends on system scale and use case.
Tightly coupled clusters, such as quantum datacenters, may benefit from holistic optimization.
In contrast, wide-area networks or privacy-sensitive architectures like blind quantum computation require fundamentally different computational models that cannot be globally optimized due to the obfuscation of operational steps and the need for verification.
Since nodes may not be fully trusted, protocols must enforce privacy and correctness, which introduces translation overheads and favors coarser abstractions—significantly altering resource demands.

To better understand these trade-offs, we now focus on the datacenter setting, where high trust and physical proximity simplify coordination and optimization.
The choice of fault-tolerant computational model plays a crucial role in determining the timing, fidelity, and structure of Bell pairs needed for inter-node operations.
Gate-based models typically use Bell pairs to implement nonlocal gates (via telegates) or to teleport qubits (teledata).
Efficient partitioning and scheduling can significantly reduce entanglement consumption~\cite{davisDistributedQuantumComputing2023a,vanmeterArchitectureQuantumMulticomputer2006}.
By contrast, measurement-based models such as one-way MBQC~\cite{brown-roberts-ft-mbqc}, Pauli-based computation~\cite{litinski-game-of-surface-code}, lattice surgery~\cite{fowlerLowOverheadQuantum2019,horsmanSurfaceCodeQuantum2012,litinskiLatticeSurgeryTwist2018}, or CSS surgery~\cite{poirsonEngineeringCSSSurgery2025a} often require large graph or GHZ states as primitives, especially in topologies with dense connectivity.

An alternative approach is the \emph{active volume} model~\cite{litinski-active-volume}, which combines architectural and computational perspectives.
It can reduce overheads by orders of magnitude while requiring only Bell pairs, assuming a high-connectivity interconnect such as proposed in~\cite{sakumaOpticalInterconnectModular2024}.
This paradigm fundamentally alters the traffic model, emphasizing fewer but more strategic Bell-pair uses over graph-state generation.

At a lower level of abstraction, code-specific implementations---such as blocklet-based surface code execution~\cite{litinskiBlockletConcatenationLowoverhead2025a}, qLDPC-based schemes~\cite{yoderTourGrossModular2025b}, or distributed Floquet codes~\cite{sutcliffeDistributedQuantumError2025}---enable further optimization by aligning physical layout and scheduling with the error-correcting code structure.
These models promise reduced resource consumption but demand precise control over how logical operations are mapped and executed.
However, they currently lack efficient compilers capable of translating standard gate-based algorithms without incurring significant overheads, often quadratic in size~\cite{babbushFocusQuadraticSpeedups2021}.
New compilation efforts targeting these models are ongoing~\cite{sitong-substrate-scheduler,robertsonResourceAllocatingCompiler2025,saadatmandFaulttolerantResourceEstimation2024,saadatmandSuperconductingQubitsUtility2025,bowenDesignEfficiencyGraphstate2025}.

Ultimately, accurate traffic modeling must incorporate the underlying error-correcting code (e.g., surface code vs.\ qLDPC~\cite{yoderTourGrossModular2025b}), the nature of communication primitives, and algorithm-specific patterns.
Our findings show that QAOA runtimes span a broad spectrum: from constant-time $O(1)$ algorithms for easy instances, to polynomial-time $\text{poly}(n)$ scaling for structured problems, and even $O(\sqrt{N})$ runtimes for unstructured search-like tasks.
This variability is crucial for classifying network traffic types.
Short tasks generate bursty, lightweight entanglement demands.
In contrast, long-running computations with polynomial complexity require sustained, high-rate delivery of high-fidelity Bell pairs over extended periods.
Capturing this diversity in temporal and resource requirements is essential for building realistic traffic models in network simulators like QuISP.
To accurately predict network performance, a simulator's traffic generator must be able to model this diversity of application demands, which arise directly from the algorithmic properties we have analyzed.

\subsection{Resource allocation strategies informed by algorithmic runtimes}

Understanding these algorithmic runtime profiles directly informs the architectural choices described in \cref{chapter:architecture-memory}, particularly with respect to resource allocation and connection setup strategies.
The different runtime classes of QAOA translate naturally into different network operation modes.

Long-running applications with predictable runtimes and high entanglement demands (e.g., those with polynomial or square-root scaling) are ill-suited to a best-effort, connectionless model.
These tasks require end-to-end guarantees on fidelity and availability, motivating the use of a connection-oriented approach with dedicated resource reservation---akin to circuit switching.
This is the paradigm implemented by our RuleSet protocol and two-pass connection setup mechanism, which allocate and manage network resources over the full duration of such computations.

Conversely, shorter and bursty workloads---such as constant-round algorithms---can be handled more efficiently with opportunistic resource allocation via dynamic, connectionless protocols or statistical multiplexing.
This knowledge allows a network's Connection Manager (CM) within QRSA to make more intelligent decisions, authoring RuleSets that requrest the appropriate resource reservation strategy based on the anticipated runtime of the application.

These architectural insights are especially relevant for advanced scenarios like blind quantum computation, where a cluster of distributed quantum devices must coordinate on a large-scale computation while preserving privacy and correctness.
Such applications impose stringent requirements on entanglement quality, reliability, and scheduling, all of which depend on robust error management protocols like those described in \cref{sec:error-management}.
The runtime lower bounds derived in this work offer concrete, analytical tools to reason about total resource requirements and guide Quality of Service (QoS) decisions for these demanding use cases.

\section{Conclusion}
\label{sec:chapter-conclusion}

This chapter highlights the critical synergy between quantum algorithm analysis and quantum network design.
We have demonstrated that understanding the performance and resource requirements of key applications, such as optimization algorithms like QAOA and the Subdivided Phase Oracle, is fundamental to designing practical and efficient quantum networks.
The analytical lower bounds derived here are not just abstract results in complexity theory; they are essential engineering parameters that inform everything from traffic modeling to the choice of high-level architectural paradigms.

By quantifying the runtime demands of these algorithms, we provide the necessary tools for more effective traffic modeling, resource allocation, and overall architectural planning.
This foundational understanding allows us to intelligently apply the principles of the Quantum Recursive Network Architecture (QRNA) and the RuleSet-based control plane.
It enables a network architect to anticipate the needs of different computational workloads and provision resources accordingly, a critical capability for building the heterogeneous quantum repeater networks that will form the future Quantum Internet.
\clearpage
\chapter{Discussion and conclusion}
\label{chapter:discussion-and-conclusion}

\section{Summary of contributions}

This thesis has addressed some of the most pressing engineering and architectural challenges facing the development of a scalable, heterogeneous Quantum Internet.
By drawing on principles from classical network design and focusing on the practical requirements of real-world implementation, we have developed a cohesive framework that spans from high-level architecture to specific hardware-level protocols and their analysis.
The main ideas and contributions presented can be summarized by four key pillars.

First, building upon foundational concepts from prior research, we introduced a comprehensive network architecture for memory-based quantum repeaters, grounded in the Quantum Recursive Network Architecture (QRNA) and managed by programmable, event-driven RuleSets.
This framework, realized in the Quantum Routing Software Architecture (QRSA), provides a modular and extensible foundation for managing network resources, defining connection semantics, and enabling interoperability between diverse and independently operated quantum networks.

Second, to validate this architecture and explore protocol performance, we utilized and contributed to the development of the QuISP network simulator, a project that began prior to my joining the research group.
We established its credibility through a rigorous cross-validation study against another independent simulator, demonstrating that despite different modeling philosophies, QuISP faithfully reproduces the expected physical behaviors of quantum networks.
This work affirms that well-designed simulation is an indispensable tool for guiding the design and analysis of complex quantum network protocols and architectures.

My primary and novel contributions then extend this architectural model.
Third, we extended our architectural framework to the domain of all-photonic repeaters.
We introduced the half-Repeater Graph State (half-RGS) as a modular building block that resolves key engineering challenges in monolithic RGS generation and timing synchronization.
Building on this, we developed a purification-enhanced RGS scheme, demonstrating for the first time a practical method for non-neighbor purification in a memory-less repeater chain.
This work transforms the all-photonic segment into a robust, high-performance component that can be abstracted as a "virtual link" within the broader QRNA framework.

Finally, to ground our architectural work in the needs of real applications, I led the analysis of the performance of quantum optimization algorithms.
We presented some of the first analytical lower bounds on the number of rounds required for the Quantum Approximate Optimization Algorithm (QAOA) to achieve a guaranteed approximation ratio.
This analysis provides concrete data on the runtime and resource requirements of a key application class, enabling more realistic traffic modeling and informing resource management strategies for networks designed to support distributed quantum computation.

\section{Revisiting the research questions}

Having summarized the contributions, we now evaluate how this work addresses the primary research questions posed in the introduction of this thesis.

The first question asked: \textit{How do we design a unified network architecture that supports heterogeneous quantum repeaters?}
\Cref{chapter:architecture-memory,chapter:architecture-all-photonic} of this thesis directly address this question.
\Cref{chapter:architecture-memory} laid out the QRNA and RuleSet framework as a candidate for a unified architecture, designed specifically to manage different technologies through abstraction.
\Cref{chapter:architecture-all-photonic} then provided a crucial analysis by showing how all-photonic repeaters---a fundamentally different paradigm---can be successfully integrated into this memory-centric architecture via the half-RGS and virtual link abstractions.
While full heterogeneity remains a challenge, this work provides a concrete and viable architectural blueprint for achieving it.

The second question was: \textit{How can simulation guide the design and resource management of scalable quantum networks?}
This thesis answers this question both by example and by providing tools.
The cross-validation study in \Cref{chapter:architecture-memory} established QuISP as a reliable tool for architectural exploration.
Furthermore, the algorithmic analysis in \Cref{chapter:qaoa-and-network-resources} demonstrates how simulation can be used: by taking the runtime bounds of algorithms like QAOA, we can create realistic traffic models that allow simulators like QuISP to study network-wide phenomena like resource contention and scheduling policy performance, which would be intractable to analyze purely analytically.

The third question asked: \textit{How can all-photonic quantum repeaters become practical and interoperable with memory-based designs?}
\Cref{chapter:architecture-all-photonic} provides a partial but significant answer.
By introducing the half-RGS to simplify generation and by solving the long-standing purification problem, we make the RGS scheme substantially more robust and practical for high-error environments.
Furthermore, the ``virtual link'' abstraction provides the explicit mechanism for interoperability, defining how a memory-based control plane using RuleSets can manage and utilize a memory-less all-photonic sub-network.
The full realization of this vision still depends on experimental progress, but our work provides the necessary architectural and protocol-level solutions.

\section{Future work and open questions}

The contributions of this thesis lay the groundwork for numerous avenues of future research, ranging from near-term engineering projects to broader, more fundamental open questions for the field.

\subsection{Near-term research directions}

Several concrete projects can directly extend the work presented here:
\begin{enumerate}
    \item Integrating Segment Routing into QRNA: A promising direction is to adapt segment routing concepts from classical networks to the QRNA framework. This could involve embedding segment identifiers corresponding to recursive virtual links, potentially streamlining route computation and enabling richer, programmable routing policies.
    
    \item Expanding Simulation Capabilities in QuISP: Further work is needed to mature the QuISP simulator by expanding its error modeling and protocol support. Key areas for development include: (1) implementing real-time link monitoring, a responsibility of the Hardware Monitor that could draw on techniques from recent work~\cite{casapao-double-selection-distimator,ananda-distimator-sigmetrics,joshua-distimator}; (2) integrating routing algorithms that explicitly account for link error characteristics; (3) enabling adaptive routing to handle link failures; (4) incorporating the full range of node types defined in \cref{sec:architecture-memory:networking}; (5) adding support for quantum error-correcting codes to allow for the simulation of 2G, 3G, and all-photonic RGS repeaters; (6) implementing diverse multiplexing strategies; (7) completing the resource pool implementation to fully match the RuleSet specification; and (8) developing an advanced, configurable traffic modeling engine.
While not exhaustive, this list represents a selection of the highest-priority areas for future development.
    
    \item Bridging Simulation and Experiment: A crucial task is to develop methodologies for translating real-world experimental testbed parameters into QuISP simulations. Improving this pipeline is essential for using simulation to accurately predict or replicate the behavior of physical hardware.
    
    \item Mapping Architecture to Hardware: A key engineering challenge is to map the abstract software components of the Quantum Routing Software Architecture (QRSA) onto concrete hardware platforms, defining the specific interfaces and control logic required for a practical implementation.
\end{enumerate}

\subsection{Broader open questions for the field}

This research also highlights several deep, open questions that will shape the future of quantum networking:
\begin{enumerate}
    \item In quantum error correction (QEC), a robust theoretical foundation exists for defining fault-tolerance paradigms, which establishes that analyzing Pauli channels is often sufficient for studying both phenomenological and circuit-level noise. However, the direct applicability of this foundation to quantum networking is less clear, especially since the dominant source of error is often photon loss. A key open question is therefore: Can a similarly robust theoretical framework be developed to enable efficient network simulations that accurately represent realistic noise models, including loss, without requiring full density-matrix simulation? Answering this is essential for building confidence in our near-term simulation and design tools.

    \item This thesis has presented engineering advancements for all-photonic quantum repeaters, suggesting their operational characteristics can increasingly resemble those of memory-based networks. We have also proposed methods for managing network heterogeneity using the Quantum Recursive Network Architecture (QRNA) framework. To translate these insights into practice, a concrete research question arises: When comparing 2G and RGS-based repeaters, which technology is better suited for deployment near the network edge versus the backbone? This question requires considering realistic network topologies and the known robustness and fragility trade-offs of the classical Internet to guide optimal deployment strategies.

    \item We have demonstrated that RGS-based segments can be successfully abstracted as virtual links within the QRNA framework. This raises a further architectural question: Are there other repeater models or paradigms that cannot be so readily abstracted and would thus necessitate fundamental modifications to the conceptual levels of QRNA? For instance, how might repeaters based on the Zeno effect, or those utilizing continuous-variable (CV) quantum information or other bosonic encodings, fit within or challenge this recursive architectural model? Fully understanding the integration requirements of these diverse repeater types is necessary to test the universality of the architecture.

    \item A crucial challenge lies at the boundary between the network and its users: What is the optimal interface between the network-level operations managed by QRSA and RuleSets, and the application layer that consumes the generated entanglement? Defining appropriate "quantum socket" abstractions is critical. While recent work offers one vision for this interface by coupling application logic to specific hardware capabilities , it remains an open question whether this direction is the most promising path forward, or if more hardware-agnostic interfaces are required for a truly interoperable application ecosystem.
\end{enumerate}

\section{Perspective}

Ultimately, the journey from isolated laboratory demonstrations to a globally interconnected Quantum Internet will not be defined by a single breakthrough technology.
Instead, progress will come from our ability to compose a diverse ecosystem of specialized solutions---guided by the principle of openness and interoperability, much like the inclusive spirit of the early classical Internet.
The central argument of this thesis is that designing for such heterogeneity is not a secondary concern to be addressed after scalable quantum systems arrive---it is the foundational engineering challenge of our time.
Waiting for the experimental and system to settle is a risky strategy; history shows that once technologies are widely adopted, retrofitting for compatibility becomes prohibitively difficult.
By demonstrating that memory-based and all-photonic repeater paradigms can be unified under a single, recursive architectural framework---and by grounding that framework in the concrete demands of emerging applications---this work offers a practical roadmap for an architecture that is modular, evolvable, and prepared for future innovation.

Realizing this vision will require sustained collaboration across the distinct cultures of theoretical computer science, quantum physics, and network engineering.
The challenges remain formidable, but they are no longer merely conceptual; they are now the concrete engineering problems of interface design, protocol optimization, and resource management.
While many of the core theoretical concepts are in place, progress will be driven as much by clever engineering as by new physics.
As has been said in classical computing by Leiserson et al., there is still ``plenty of room at the top''~\cite{leisersonTheresPlentyRoom2020} for the clever optimizations and smarter implementations that can improve performance by orders of magnitude.
Good engineering, done well, is what will transform today's feasibility into tomorrow's utility.
The architectural principles and systems explored in this thesis are not the final answer, but they are a step toward it.
They offer a foundation for future researchers, developers, and engineers to build on, refine, and extend---toward a future where the Quantum Internet is not just possible, but real.

\printbibliography[heading=bibintoc]

\appendix

\clearpage
\chapter{List of Papers and Presentations}

\section{Peer-Reviewed Journals}

\begin{refsection}
\begin{refcontext}[sorting=ydnt]{}
    \nocite{naphan-engineering-challenges-in-rgs,suksen-cga-quantum-assisted-feasbility-enforcement,naphan-lower-bound-qaoa,joshua-distimator}
    \printbibliography[heading=none, env=appendixpub]
    \citereset
\end{refcontext}
\end{refsection}

\section{Conference Papers (Peer-Reviewed)}
{
    \renewbibmacro*{finentry}{%
    \usebibmacro{orig:finentry}
    \usebibmacro{print:endnote}
  }

    \begin{refsection}
    \begin{refcontext}[sorting=ydnt]{}
        \nocite{cocori-quisp,fittipaldi-sattelite-quisp,joaquin-michal-cross-validation-quantum-network-simulators}
        \nocite{mia-switch-design-paired-egress-bsa-pools,naphan-half-rgs,naphan-spo,poramet-optimizing-unequal-link-icsec-2021,poramet-quanta-qcnc}
        \nocite{rdv-qi-architecture,sitong-substrate-scheduler,soon-heterogeneous-msm-mim,soon-optimizing-msm-protocol-qcnc-2024,suksen-q-cga-experiments-icsec-2021}
        \nocite{ananda-distimator-sigmetrics,naphan-integrating-purification-rgs}
        \nocite{voy-mia-distimation-ibm-q,bank-dqi-circuit,casapao-double-selection-distimator}
        \printbibliography[heading=none, env=appendixpub]
        \citereset
    \end{refcontext}
    \end{refsection}
}

\section{Posters with Proceedings (Peer-Reviewed)}
\begin{refsection}
\begin{refcontext}[sorting=ydnt]{}
    \nocite{naphan-rgs-poster-qce-2023,soon-entanglement-ejection}
    \printbibliography[heading=none, env=appendixpub]
    \citereset
\end{refcontext}
\end{refsection}

\section{Posters without Proceedings (Peer-Reviewed)}
\begin{enumerate}[parsep=0pt, partopsep=0pt]
    \item ``Lower Bounds on Number of QAOA Rounds Required for Guaranteed Approximation Ratios,'' Quantum Information Processing Conference (QIP), Raleigh, NC, 25 February 2025.
    \item ``Flexible Repeater Graph States for Non-uniform End-to-end Paths,'' Workshop for Quantum Repeaters and Networks (WQRN3), Chicago, IL, 21 August 2022.
\end{enumerate}

\section{Manuscripts under Review and In Preparation}
\begin{refsection}
\begin{refcontext}[sorting=ydnt]{}
    \nocite{naphan-rgs-tqe,shingy-dirac-simulation,poramet-quanta-extended}
    \printbibliography[heading=none, env=appendixpub]
    \citereset
\end{refcontext}
\end{refsection}

\section{Talks}

\begin{enumerate}[parsep=0pt, partopsep=0pt]
    \item ``Architecture and Protocols for All-Photonic Quantum Repeaters,'' Invited talk at Mahidol University Summer School on Quantum Computing and Quantum Communication 2025 (SCQC2025), 27 June 2025.
    \item ``Towards a heterogeneous quantum network: bridging all-photonic and memory-based quantum repeaters,'' Quantum Internet Task Force Workshop, 11 June 2025. (online)
    \item ``Architecture and Protocols for All-Photonic Quantum Repeaters,'' Invited talk at Mahidol University special seminar Celebrating 100 Years of Quantum Physics, 21 March 2025.
    \item ``Architecture and Protocols for All-Photonic Quantum Repeaters,'' IEEE International Conference on Quantum Computing and Engineering (QCE24), 15 September 2024.
    \item ``Lower Bound on Number of QAOA Round Required for Guaranteed Approximation Ratios,'' Keio University Yagami Campus, 29 February 2024.
    \item ``Lower Bound on Number of QAOA Round Required for Guaranteed Approximation Ratios,'' Invited talk at Mahidol University, 8 January 2024.
    \item ``Lower Bound on Number of QAOA Round Required for Guaranteed Approximation Ratios,'' Japanese-French Quantum Information workshop (JFQI), 15 December 2023.
    \item ``Amplitude Amplification for Optimization via Subdivided Phase Oracle,'' IEEE International Conference on Quantum Computing and Engineering (QCE22), 19 September 2022.
    \item ``Distributing Multipartite Entangled State over Quantum Internet,'' International Congress on Science, Technology and Technology-based Innovation (STT47), 5 October 2021.
\end{enumerate}



\end{document}